\documentclass[11pt,letterpaper]{article}

\usepackage[utf8]{inputenc}
\usepackage[numbers,sort&compress]{natbib}
\usepackage[margin=1in]{geometry}
\usepackage{amsmath,amssymb,amsthm}
\usepackage{mathtools}
\usepackage{graphicx}
\usepackage{float}
\usepackage[linesnumbered,ruled,vlined]{algorithm2e}
\usepackage{booktabs}
\usepackage{xcolor}
\usepackage[normalem]{ulem}
\usepackage[shortlabels]{enumitem}
\usepackage{thm-restate}
\usepackage{hyperref}
\usepackage[nameinlink]{cleveref}

\hypersetup{
  colorlinks=true,
  linkcolor=blue,
  citecolor=blue,
  urlcolor=blue
}

\newtheorem{theorem}{Theorem}[section]
\newtheorem{lemma}[theorem]{Lemma}
\newtheorem{corollary}[theorem]{Corollary}

\newtheorem{claim}[theorem]{Claim}
\theoremstyle{definition}
\newtheorem{definition}[theorem]{Definition}

\theoremstyle{remark}
\newtheorem{remark}[theorem]{Remark}



\crefname{algocf}{Algorithm}{Algorithms}
\Crefname{algocf}{Algorithm}{Algorithms}

\newcommand{\todo}[2][]{}
\newcommand{\todoqi}[1]{}
\newcommand{\todosimon}[1]{}
\newcommand{\serge}[1]{}
\newcommand{\katie}[1]{}
\newcommand{\simon}[1]{}
\newcommand{\qi}[1]{}

\newcommand{\poly}{\operatorname{poly}}

\title{Faster Exponential-Time Approximate Counting via Bounded Self-Reductions}

\author{
Katie Clinch\\
University of Queensland, Australia\\
\texttt{k.clinch@uq.edu.au}
\and
Serge Gaspers\\
UNSW Sydney, Australia\\
\texttt{serge.gaspers@unsw.edu.au}
\and
Simon Mackenzie\\
UNSW Sydney, Australia\\
\texttt{simon.william.mackenzie@gmail.com}
\and
Qi Wang\\
UNSW Sydney, Australia\\
\texttt{wangqi118010296@outlook.com}
}

\date{}

\begin{document}

\maketitle
\begin{abstract}
We give faster exponential-time randomised approximation algorithms for counting problems where polynomial-time approximation is unavailable and exact exponential-time counting remains expensive.
For general \(n\)-vertex graphs, our independent-set counter runs in \(O^{\ast}(1.1869^{n})\) time, improving the previous \(O^{\ast}(1.2041^{n})\) general-graph bound.
For \(n\)-variable \#\textsc{2-SAT}, we obtain an \(O^{\ast}(1.2373^{n})\)-time approximation algorithm, narrowly below Wahlstr{\"o}m's currently cited \(O^{\ast}(1.2377^{n})\) variable-parameter exact bound.

The new algorithmic point is to take the square root after decomposition.
For a single bounded unweighted self-reduction with \(f(x)\) positive leaves and recursion-compatible upper bound \(b(x)\), an enumerate-or-sample estimator gives an \((\varepsilon,\delta)\)-approximation in
\[
  O^{\ast}\!\left(\sqrt{b(x)}\,\varepsilon^{-2}\log \tfrac1\delta\right)
\]
time.
After preprocessing decomposes an input into many bounded cores, the combined estimator pays
\[
  O^{\ast}\!\left(\sqrt{\sum_i b_i(x_i)}\,\varepsilon^{-2}\log \tfrac1\delta\right),
\]
rather than estimating the cores separately at cost \(\sum_i \sqrt{b_i(x_i)}\).

The same conversion improves the bases for counting maximal cliques, minimal separators, and perfect matchings in subcubic graphs.
Bounded unweighted self-reductions provide the formal language; at the level of counting classes, the resulting unweighted formulation has the same Karp closure as TotP.
With explicit recursion-tree access, the framework yields black-box quantum speed-ups.
\end{abstract}

\section{Introduction}

Counting the solutions to combinatorial problems is a fundamental challenge in algorithms and complexity theory.
The systematic study of counting problems from a computational complexity perspective may be said to have properly started in 1979 when Valiant introduced the complexity class \#P \cite{valiant1979complexity}.

Counting problems arise across numerous domains, from determining the number of satisfying assignments of a logical formula (\#\textsc{SAT} \cite{bulatov2013expressibility, dyer2004relative}, \#\textsc{2-SAT} \cite{dahllof2005counting, ge2018new, wahlstrom2008tighter}, \#\textsc{DNF} \cite{jerrum1986random, karp1989monte}) to counting structures in graphs (independent sets \cite{goldberg2021faster, GaspersLee23}, matchings \cite{dyer2004relative}, maximal cliques \cite{makino2004new}, cycles \cite{alon1997finding}, arbitrary subgraphs \cite{vassilevska2009finding}, colourings \cite{galanis2016approximately}), with applications such as evaluating partition functions in statistical physics \cite{goldberg2012approximating,jerrum1993polynomial}.
Because exact counting is intractable in general (\#P-complete even for 2-CNF formulas \cite{wahlstrom2008tighter}, or computing the volume of a convex body \cite{lovasz1993random}), research has also studied \textit{approximate} counting.
Fully Polynomial Randomised Approximation Schemes (FPRASs) output estimates within specified relative error with high probability in time polynomial in the input size and the inverse error parameters \cite{karp1983monte, jerrum1986random}; FPTASs are deterministic analogues \cite{ibarra1975fast, lawler1982fully}.
Early successes include FPRASs for DNF counting \cite{karp1983monte}, volume estimation for convex bodies \cite{dyer1991random}, and the permanent of a non-negative matrix \cite{jerrum2004polynomial}.

For self-reducible problems, a class that includes many natural counting problems, the existence of an approximation algorithm that achieves a polynomial factor error implies the existence of an FPRAS \cite{SinclairJ89}.
Furthermore, for such problems, almost uniform generation is inter-reducible with approximate counting, indicating that they are of similar complexity \cite{jerrum1986random}.
In spite of this, some counting problems remain intractable even in the approximation regime.
For instance, a multiplicative approximator for \#\textsc{SAT} can distinguish zero satisfying assignments from at least one satisfying assignment, and hence would decide \textsc{SAT}; therefore \#\textsc{SAT} has no FPRAS unless \(\mathrm{NP}=\mathrm{RP}\) \cite{dyer2004relative}.
This inapproximability extends under approximation-preserving reductions to problems such as \#\textsc{2-SAT}, \#\textsc{Independent-Set}, and \#\textsc{Maximal-Clique} \cite{dyer2004relative, goldberg2016approximately}.

For these problems, polynomial-time approximation is not expected, so the fine-grained question becomes: \emph{how fast can an exponential-time approximation algorithm be?}
Even a small reduction in the base of the exponent can substantially increase the accessible instance size.
This paper gives a reusable way to turn search-tree bounds into faster exponential-time approximate counters, and shows that the resulting algorithms improve several standard general-case exponential bases.

Our main applications are \#\textsc{Independent-Set} and \#\textsc{2-SAT}.
For fixed accuracy and confidence, we approximate the number of independent sets in an \(n\)-vertex graph in \(O^{\ast}(1.1869^{n})\) time, improving the previous general-graph \(O^{\ast}(1.2041^{n})\) algorithm of Goldberg, Lapinskas and Richerby \cite{goldberg2021faster}.
For \#\textsc{2-SAT}, we obtain an \(O^{\ast}(1.2373^{n})\)-time approximation algorithm on \(n\) variables, improving on the currently cited \(O^{\ast}(1.2377^{n})\) variable-parameter exact worst-case bound due to Wahlstr{\"o}m~\cite{wahlstrom2008tighter}.
The latter improvement is numerically modest, but it is conceptually useful: it gives a concrete setting where approximation is currently provably faster than the best known exact worst-case algorithm for the same problem.
We also obtain direct improvements for counting maximal cliques, minimal separators, and perfect matchings in subcubic graphs.

The central estimator is deliberately simple.
Suppose that a self-reduction represents \(f(x)\) as the number of positive leaves of a recursion tree, and suppose that we have a computable upper bound \(b(x)\) on that number which is compatible with the recursion.
We first enumerate up to \(\sqrt{b(x)}\) positive leaves.
If the enumeration finishes, the count is exact.
Otherwise \(f(x)\ge \sqrt{b(x)}\), so a sample drawn from the bound succeeds with probability at least \(1/\sqrt{b(x)}\), and standard Chernoff bounds give an \((\varepsilon,\delta)\)-approximation in
\[
O^{\ast}\!\left(\sqrt{b(x)}\,\varepsilon^{-2}\log\tfrac1\delta\right)
\]
time.

The strongest applications require one further composition principle.
A preprocessing step may decompose the original instance into many hard cores \(x_1,\ldots,x_\ell\), where core \(i\) has a bound \(b_i(x_i)\).
Running the estimator separately on each core would cost \(\sum_i \sqrt{b_i(x_i)}\).
Instead, we sample from the disjoint union of the bound masses and pay
\[
O^{\ast}\!\left(\sqrt{\sum_i b_i(x_i)}\,\varepsilon^{-2}\log\tfrac1\delta\right).
\]
This aggregate square-root effect is the mechanism behind the strongest applications.

This is not a formal corollary of the single-core estimator.
Applying \Cref{thm:fullalgo} independently to each core would still pay a sum of square roots, and would therefore miss the strongest applications.
The new point is that bounded-uSR cores provide, uniformly and black-box, the two objects needed to avoid that loss: a small-count enumerator and a bound-ticket sampler over the combined mass.

The framework language in the paper records exactly the structure needed for this estimator.
We use bounded unweighted self-reductions: polynomial-depth recursion trees in which internal nodes sum their children and each leaf contributes \(0\) or \(1\).
The feasibility predicate for subinstances lets us enumerate positive leaves with polynomial delay, while the recursion-compatible bound lets us sample without using feasibility.
From a complexity-class perspective, these forms are related to TotP, the class of functions that count all computation paths of nondeterministic polynomial-time Turing machines \cite{bakali2016self, bakali2017completeness, antonopoulos2022completeness, bakali2024power}.
We introduce the corresponding unweighted formulation uTotP and prove that its Karp closure is TotP.
This equivalence is a scope theorem: it justifies that the unweighted viewpoint does not exclude natural TotP problems, while the running-time gains come from the explicit bounds and decompositions supplied in the applications.

\subsection{The Engine: Extremal Bound + Feasibility Oracle $\Rightarrow$ Enumerate-or-Sample}
\label{sec:overview:engine}

\paragraph{Setup.}
Fix a counting function $f:\Sigma^{\ast}\to\mathbb N$ together with an unweighted self-reduction specified by a branching factor $r(\cdot)$, a child map $h(\cdot,\cdot)$, and a leaf predicate $t(\cdot)\in\{0,1\}$.
For an instance $y$, if $r(y)=0$ then $y$ is a leaf and contributes $t(y)$; otherwise its children are $h(y,1),\ldots,h(y,r(y))$ and the unweighted recurrence is
\[
  f(y)=\sum_{i=1}^{r(y)} f(h(y,i)).
\]
This defines a recursion tree rooted at $x$; a leaf $y$ is \emph{positive} if $t(y)=1$, and $f(x)$ equals the number of positive leaves.

The central idea of the paper is simple.
Assume we have an explicit, polynomial-time computable upper bound $B=b(x)$ on
the number of positive leaves of the recursion tree rooted at $x$, and that
this bound is \emph{recursion compatible} in the sense that for every sub-instance
$y$,
\[
\sum_{i=1}^{r(y)} b(h(y,i)) \le b(y).
\]
(This is the usual recurrence-style upper bound from branching analyses and
extremal counting.)

\medskip
\noindent\textbf{Feasibility for enumeration (implicit in TotP).}
Stage~1 uses a polynomial-time feasibility predicate
\[
\mathsf{Dec}_f(y) \;:=\; \bigl[f(y)>0\bigr],
\]
i.e., a decision procedure that tells whether the subtree rooted at $y$ contains
at least one positive leaf.
In the TotP setting this is \emph{not} an additional oracle assumption for the
bounded cores we use: the relevant self-reducible functions lie in
\(\#\mathrm{PE}\), so deciding whether \(f(y)>0\) is polynomial-time
computable by definition.
We keep $\mathsf{Dec}_f$ explicit because it cleanly exposes where feasibility is
used by our generic polynomial-delay enumerator (\Cref{sec:PolyDelay}); if an
application admits a polynomial-delay enumerator by other means, Stage~1 below
can use it directly.

\medskip
\noindent\textbf{Two-stage estimator.}
Let $k=\lceil\sqrt{B}\rceil$.

\begin{enumerate}[(i)]
\item \emph{Enumerate up to $k$ solutions.}
Run a polynomial-delay enumerator for positive leaves.
In our framework this enumerator is obtained from $\mathsf{Dec}_f$ and outputs leaves in lexicographic order.
If the enumerator halts before outputting $k$ leaves, then $f(x)<k$ and we return the exact count.

\item \emph{Otherwise sample.}
If the enumerator reaches $k$ leaves, then $f(x)\ge k\ge \sqrt{B}$, so $f(x)/B \ge 1/\sqrt{B}$.
We then draw $T=\Theta(\sqrt{B}\,\varepsilon^{-2}\log(1/\delta))$ samples from a sampler derived purely from the bound $b$ and the recursion, and estimate $f(x)$ from the empirical success probability.
Notably, this sampling stage does \emph{not} require $\mathsf{Dec}_f$.
\end{enumerate}

This yields the generic $O^*\!\bigl(\sqrt{b(x)}\,\varepsilon^{-2}\log(1/\delta)\bigr)$ algorithm (\Cref{thm:fullalgo}).
The engine is simple, but packaging it as a reusable interface lets us (i) plug in classical extremal
recurrence bounds directly, (ii) refactor computations to expose an unweighted leaf-counting
\emph{core}, and (iii) decompose an instance into many such cores and exploit a $\sqrt{\sum_i b_i}$
aggregation effect where $b_i$ is the bound on the $i^{th}$ core. This beats the naive $\sum_i \sqrt{b_i}$ and enables new algorithmic strategies for approximate counting.
The interface also exposes local access to the induced recursion tree, enabling black-box quantum speed-ups whenever the same interface is satisfied.

\subsection{Main Results}
\label{sec:main-results}

We now state the main results.  The classical application table below is the
main algorithmic take-away; the generic statements that power these bounds are
the single-core \(\sqrt{b(x)}\) estimator and, for the decomposition-based
applications, the \(\sqrt{\sum_i b_i}\) sum-of-cores theorem.
Throughout, a \emph{bounded uSR form} of a counting function $f$ (\Cref{def:uredp}) is a tuple
$\mathcal F^b=(f,h,t,r,q,b)$ that (i) specifies an \emph{unweighted} self-reduction (a recursion tree
in which internal nodes sum their children and leaves contribute $t(\cdot)\in\{0,1\}$, so that
$f(x)$ equals the number of $t=1$ leaves) and (ii) provides a polynomial-time computable,
\emph{recursion-compatible} upper bound $b(\cdot)$ on that leaf count (\Cref{def:boundedFunc}). Formal definitions appear in
\Cref{sec:uTotP}.

\paragraph{1. Generic $\sqrt{b(x)}$ approximate counting (classical).}
Given a bounded uSR form $\mathcal F^b$ for $f$ and input $x$, we obtain an
$(\varepsilon,\delta)$-approximation to $f(x)$ in time
\[
  O^{\ast}\!\Bigl(\sqrt{b(x)}\,\varepsilon^{-2}\log\tfrac1\delta\Bigr).
  \qquad\text{(\Cref{thm:fullalgo}).}
\]

\paragraph{2. Sums of bounded subinstances and the $\sqrt{\sum b_i}$ effect.}
More generally, suppose a preprocessing step expresses the target quantity as a sum of
$\ell$ bounded-uSR \emph{subinstances} (also called \emph{cores} in \Cref{sec:overview:cores,sec:Framework}), say
$x_1,\ldots,x_\ell$, where zero-valued cores have been discarded and
subinstance $i$ comes with its own bounded uSR presentation and bound
$b_i(x_i)$. (Intuitively, a ``core'' is just a residual subinstance on which the computation reduces to
unweighted leaf-counting in a recursion tree.)
Then we can approximate the sum of their contributions in time
\[
  O^{\ast}\!\Bigl(\sqrt{\sum_{i=1}^{\ell} b_i(x_i)}\,
     \varepsilon^{-2}\log\tfrac1\delta\Bigr).
  \qquad\text{(\Cref{thm:combineuTotP}).}
\]
This composition result is what drives our strongest algorithmic improvements. It enables the decomposition of problems into ``easy'' and ``hard'' counting instances, and allows us to aggregate the ``hard'' instances and get the square root acceleration on the sum rather than individually.

\paragraph{3. Black-box quantum accelerations.}
Our interface composes with standard quantum primitives to yield uniform speed-ups:
\[
  O^{\ast}\!\bigl(b(x)^{1/3}\,\varepsilon^{-1}\log\tfrac1\delta\bigr)
  \quad\text{and}\quad
  O^{\ast}\!\bigl(b(x)^{1/4}\,\varepsilon^{-3/2}\log^{2}\tfrac1\delta\bigr)
  \qquad\text{(\Cref{sec:quantum-aaronson,sec:Ambainis}).}
\]

\paragraph{4. Complexity-theoretic scope.}
We define uTotP as the Karp closure of uSR problems in $\#\mathrm{PE}$ and prove
$\mathrm{uTotP}=\mathrm{TotP}$ (\Cref{thm:uTotpeqTotp}).

\paragraph{5. Applications.}
We instantiate the framework on \#\textsc{maximal-clique}, \#\textsc{minimal-separator},
low-degree \#\textsc{perfect-matching}, \#\textsc{independent-set}, and
\#\textsc{2-SAT}; see
\Cref{tab:apps}.

\begin{table}[t]
\centering
\caption{Classical runtimes for our applications.}
\label{tab:apps}
\renewcommand{\arraystretch}{1.1}
\begin{tabular}{@{}lccc@{}}
\toprule
Problem & This work (classical) & Baseline & Section \\
\midrule
\#\textsc{maximal-clique} & \(O^{\ast}(3^{n/6})\) & \(O^{\ast}(3^{n/3})^{\dagger}\)\cite{bron1973finding} & \ref{sec:maxclique} \\
\#\textsc{Minimal-Separator} & \(O^{\ast}(1.2721^n)\) & \(O^{\ast}(1.6181^n)^{\dagger}\)\cite{takata2010space} & \ref{sec:separators} \\
\#\textsc{perfect-matching} ($\Delta\le 3$) & \(O^{\ast}(1.1611^n)\) & \(O^{\ast}(1.1918^n)^{\dagger}\) \cite{furer2012counting} & \ref{sec:pm-subcubic} \\
\#\textsc{independent-set} & \(O^{\ast}(1.1869^{n})\) & \(O^{\ast}(1.2041^{n})\)\cite{goldberg2021faster} & \ref{sec:IS-basic} \\
\#\textsc{2-SAT}$^{\ddagger}$ & \(O^{\ast}(1.2373^{n})\) & \(O^{\ast}(1.2377^{n})^{\dagger}\) \cite{wahlstrom2008tighter} & \ref{sec:2-SAT} \\
\bottomrule
\multicolumn{4}{@{}l@{}}{\footnotesize ${}^{\dagger}$\,Exact counting algorithm.\quad ${}^{\ddagger}$The \#\textsc{2-SAT} comparison is by variables; an explicit approximation gap is noted in \cite{cardinal2018solving,schmitt2013exploiting,thurley2011approximation}.}

\end{tabular}
\end{table}

We also obtain black-box quantum variants for the bounded-uSR core-estimation step (via \Cref{sec:quantum-aaronson,sec:Ambainis}); the resulting runtimes for the first three applications are summarised in \Cref{tab:quantum-apps}.
For \#\textsc{independent-set} and \#\textsc{2-SAT}, our fastest algorithms are
dominated by the (classical) preprocessing/decomposition tree that generates the family of bounded-uSR
cores; substituting a quantum core estimator alone therefore does not change the leading base
without rebalancing the preprocessing cutoffs. Such a rebalance is possible but yields only a modest
improvement and, to integrate it cleanly into the fully black-box oracle framework, one would
also like to compile the resulting sum-of-cores into a single bounded-uSR instance (cf. \Cref{def:boundedFunc}),
which we do not formalise to keep the exposition focused.

\begin{table}[t]
\centering
\caption{Quantum variants for the first three applications.}
\label{tab:quantum-apps}
\renewcommand{\arraystretch}{1.1}
\begin{tabular}{@{}lccc@{}}
\toprule
Problem & Classical & Quantum ($b^{1/3}$) & Quantum ($b^{1/4}$) \\
\midrule
\#\textsc{maximal-clique}
& \(O^{\ast}(3^{n/6})\)
& \(O^{\ast}(3^{n/9})\)
& \(O^{\ast}(3^{n/12})\) \\
\#\textsc{minimal-separator} & \(O^{\ast}(1.2721^n)\)  & \(O^{\ast}(1.1740^n)\) & \(O^{\ast}(1.1279^n)\) \\
\#\textsc{perfect-matching} ($\Delta\le 3$) & \(O^{\ast}(1.1611^n)\) & \(O^{\ast}(1.1047^n)\) & \(O^{\ast}(1.0776^n)\) \\
\bottomrule
\end{tabular}
\end{table}

\section{Overview and Roadmap}\label{sec:overview}

This section is a reader's guide.
The core estimator (enumerate up to $\lceil\sqrt{b(x)}\rceil$ and otherwise sample) was
introduced informally in \Cref{sec:overview:engine}; here we explain how the
rest of the paper turns that idea into a reusable toolkit.
In particular, we outline (i) the interface that exposes a recursion tree and a
recursion-compatible bound, (ii) how we synthesise the required enumerator and
sampler in a black-box way, (iii) why the additional tree-oracle structure is
needed for genuine plug-and-play quantum speed-ups, and (iv) how to extend the
approach beyond purely unweighted recurrences via decomposition into hard
enumeration cores.
Formal definitions and proofs appear in \Cref{sec:prelim,sec:uTotP,sec:Framework}.

\subsection{The Interface: Unweighted Recursions, Bounds and Feasibility Checks}\label{sec:overview:interface}

The algorithmic setting throughout the paper is a recursion tree of subinstances.
Each internal node contributes only the sum of its children, and each leaf
contributes $0$ or $1$; the answer $f(x)$ is the number of \emph{positive leaves}.
This is formalised as \emph{unweighted self-reducibility} (uSR) in
\Cref{sec:uTotP}.
A \emph{bounding function} $b$ upper-bounds the number of positive leaves and is
\emph{recursion compatible}, meaning that the bound mass assigned to a node
dominates the total mass assigned to its children.
Finally, we need a polynomial-time feasibility predicate for subinstances, i.e., a way to test whether a node has any positive leaf below it.
We write this as $\mathsf{Dec}_f(y)=[f(y)>0]$.
In our uTotP setting (uSR plus the $\#\mathrm{PE}$ condition; see \Cref{sec:uTotP}), this feasibility test is available in polynomial time and is exactly what enables the generic polynomial-delay enumeration routine used in Stage~1 (\Cref{sec:PolyDelay}).
Given this interface, the $\sqrt{b(x)}$ estimator from \Cref{sec:overview:engine}
becomes applicable.

\subsection{Why the Index-Tree Machinery Is Not Cosmetic}\label{sec:overview:oracles}

The index-tree machinery is not needed to obtain the first quantum improvement.
Combining the bound-driven sampler with standard quantum approximate counting already yields a generic $b(x)^{1/3}$ dependence: it only requires evaluating the predicate $m\mapsto t(h(x,\sigma_{\mathcal F^b,x}(m)))$ over the known universe $[b(x)]$, together with a short ``small vs.\ large'' preprocessing phase that can use any polynomial-delay enumerator (no structural access to the recursion tree is required) (\Cref{sec:quantum-aaronson}).

What does require additional structure is the further improvement from $b(x)^{1/3}$ to $b(x)^{1/4}$.  To go beyond marked-item counting we also accelerate the enumeration/lower-bounding step using Ambainis--Kokainis tree-size estimation, which assumes \emph{local oracle access} to a rooted tree (degree queries, $i$-th child queries, and often a parent oracle) in time $\poly(|x|)$.
Providing this access uniformly across problems is exactly why we build a canonical \emph{index tree} (via displacement indices) together with explicit $\mathsf{Child}/\mathsf{Parent}/\mathsf{Sibling}$ oracles.
This is the additional ingredient that makes the $b(x)^{1/4}$ speed-up black-box: once the uSR recursion is fixed, the same oracle interface can be fed into Ambainis--Kokainis and chained with standard quantum approximate counting to obtain the $1/4$ exponent (\Cref{sec:Ambainis}).

\subsection{Beyond Unweighted Problems: Decomposition into Hard Cores}\label{sec:overview:cores}

A single global unweighted recursion can have a trivial bound (e.g.\ $2^n$), even when the instance is mostly easy and only a small part behaves like ``counting $\approx$ enumeration.''
To capture this, we allow a preprocessing step that decomposes an instance into an easy remainder (handled by specialised counting tools) and a collection of hard enumeration cores $x_1,\ldots,x_\ell$, each equipped with its own bounded-uSR form and bound $b_i(x_i)$.

Our key extension is that we can approximate the \emph{sum} of the core contributions in time $O(\sqrt{\sum_i b_i(x_i)})$ (\Cref{thm:combineuTotP}), i.e.\ we pay a square root of the \emph{combined} core mass rather than a sum of square roots.
This mechanism underlies the strongest improvements in the more challenging applications, when the techniques are applied to problems which already have fast counting algorithms.

\subsection{Overview of the Applications}\label{sec:overview:apps}

We organise the applications to illustrate progressively more of the framework's capabilities: from direct ``extremal bound $\Rightarrow$ sampler'' plug-ins, to settings where we deliberately reshape the computation to expose an unweighted core, to settings where the main gain comes from decomposing into many hard enumeration cores and exploiting the $\sqrt{\sum b_i}$ phenomenon.

\begin{enumerate}[(i)]
\item \#\textsc{Maximal-Clique} (\Cref{sec:maxclique})

Here the framework behaves like a compiler for extremal bounds: Moon--Moser \cite{moon1965cliques} gives a tight bound on the number of maximal cliques, which directly serves as $b$, and the generic estimator yields an immediate square-root speed-up (\Cref{sec:maxclique}).
Quantitatively, Moon--Moser gives \(M(G)\le 3^{n/3}\), so the estimator runs in
\[
  O^{\ast}\!\left(\sqrt{3^{n/3}}\right)=O^{\ast}(3^{n/6}).
\]
The Bron--Kerbosch recursion \cite{bron1973finding} supplies the bounded-uSR form, and feasibility remains polynomial-time for the recursive states maintained by its \(R,P,X\) invariant.
Tsukiyama's algorithm \cite{tsukiyama1977new} shows the existence of a structured polynomial-delay enumeration perspective; our translation gives the corresponding sampler from the Moon--Moser recurrence.

\item \#\textsc{minimal-separator}  (\Cref{sec:separators})

The case of minimal separators is our next worked example.
Though as in maximal cliques, counting is not known to be faster than enumeration, this example adds a few complications to the previous one.
In this application, though polynomial delay enumerators are known, they do not provide the tree structure that Tsukiyama's algorithm does for maximal cliques.
By applying our generic interface, which exploits the fact that organising the solutions as a polynomial-depth tree is a consequence of the self-reducibility of the problem, we automatically generate a Tsukiyama-like algorithm for minimal separators.
The second complication compared to maximal cliques is that the feasibility check is not as straightforward.
Indeed, the completion problem is NP-Complete for minimal separators \cite{kenig2023listing} when formulated as an arbitrary subset problem.
This illustrates how it is important for the interface to work that the sub-instances obtained via the branching rule need to remain polynomial time checkable.

\item \#\textsc{Perfect-Matching} for $\Delta\le 3$ (\Cref{sec:pm-subcubic} and Appendix~\ref{sec:ImprovedPerfect})

This application highlights the ``trade weights for a square root'' design principle.
For exact counting algorithms, multiplicative and weighted branches are core to getting the best runtimes~\cite{furer2012counting}.
We restructure the computation to expose an unweighted leaf-counting core equipped with an explicit bound, so that both the classical $\sqrt{b}$ speed-up and the quantum $b^{1/3}/b^{1/4}$ speed-ups apply to the bottleneck step.

In Appendix~\ref{sec:ImprovedPerfect} we show that we can actually improve the branching to get an upper bound that matches the lower bound for the number of solutions.
This implies that the runtime derived by the algorithm is the best possible for our enumerate-or-sample algorithm (\Cref{alg:FullAlgo}).
However, unlike with \#\textsc{Maximal-Clique}, where we also have a tight bound on the number of solutions, there are known exact branching algorithms which use weights to speed up the counting of low-degree perfect matchings.
This suggests that we might be able to speed up approximate counting by separating some of the easy-to-count subinstances from the tree.
At a very high level, the goal is to perform a well-chosen shallow exploration of the tree to detect easy-to-count subinstances, decreasing the bound on what remains before applying our enumerate-or-sample routine.
Though we suspect it would lead to faster runtimes, we do not explore such algorithms for \#\textsc{Perfect-Matching}, but rather showcase the technique on the following two applications, where the gap between the best counting algorithm runtimes and the number of solutions is too large to surmount without decomposition.

\item \#\textsc{Independent-Set} (\Cref{sec:IS-basic})

This example is built around the sum-of-cores theorem (\Cref{thm:combineuTotP}).
A high-degree preprocessing branches only until either (i) the remainder is ``easy'' (handled by specialised
tools), or (ii) a budget is exhausted; the outcome is a decomposition into $\ell$ hard residual graphs
$x_1,\dots,x_\ell$, each equipped with a bounded-uSR form and a bound $b_i(x_i)$.
The key point is that $\ell$ is exponentially large, so a naive application of the single-instance estimator
(\Cref{thm:fullalgo}) \emph{per core} would cost $\sum_i \tilde O(\sqrt{b_i})$ and would not beat the best known bounds.
\Cref{thm:combineuTotP} is what makes the decomposition algorithmically meaningful: it gives a \emph{single} enumerate-or-sample
procedure over the disjoint union of cores, running in
\[
  \tilde O\!\left(\sqrt{\sum_{i=1}^\ell b_i(x_i)}\right),
\]
i.e.\ we pay the square root \emph{after} aggregating the bound mass.
In our analysis the preprocessing produces at most $2^{\alpha n}$ hard cores, each with bound at most $2^{\alpha n}$,
so $\sum_i b_i(x_i)\le 2^{2\alpha n}$ and \Cref{thm:combineuTotP} collapses this to time $2^{\alpha n}$, matching the cost of the
preprocessing itself. This is the quantitative source of the improved base in our \#\textsc{Independent-Set}
approximation algorithm for arbitrary graphs, improving over the general-graph
bound of Goldberg, Lapinskas and Richerby~\cite{goldberg2021faster}.

\item \#\textsc{2-SAT} (\Cref{sec:2-SAT})

The \#\textsc{2-SAT} acceleration is mainly a \emph{qualitative} milestone: the improvement in the
exponential base is modest, but the displayed bound lies below the currently
cited variable-parameter exact worst-case base for \#\textsc{2-SAT}.
This is of interest not because our framework can beat exact counting, as it does so for several
problems, but because it shows how to obtain such a separation for this specific, structurally
intermediate problem.

The status of \#\textsc{2-SAT} is somewhat atypical. For ``harder'' counting problems such as
\#\textsc{3-SAT}, the expressiveness of the constraint language leaves ample room for approximation
algorithms to be substantially faster than exact counting~\cite{thurley2011approximation,schmitt2013exploiting},
whereas for ``easier'' problems with richer combinatorial structure---e.g.\ \#\textsc{Independent-Set},
equivalently the all-negative $2$-CNF restriction with clauses $(\neg x_u \lor \neg x_v)$ (and, up to
complementing variables, also the all-positive restriction)---there are long-standing approximation
toolkits (e.g., correlation-decay/FPRAS regimes on restricted graph classes~\cite{sinclair2017spatial,goldberg2012approximating,weitz2006counting,goldberg2021faster}).
By contrast, general \#\textsc{2-SAT} sits in an intermediate regime where neither source of speed-up
applies cleanly, and it has been unclear how to obtain any provable improvement over the best exact
running time.

Our formulation uses the same high-level template as before: a preprocessing/self-reduction
decomposes the instance into polynomial-time solvable subinstances plus a \emph{sum} of hard
bounded-uSR cores, and we apply the enumeration--sampling engine to this aggregated sum (exploiting
the ``$\sqrt{\sum_i b_i}$'' effect rather than paying a sum of per-core costs).
The remaining ingredient is a problem-specific preprocessing bound: one needs a
sharper persistence argument for the degree-$6$ phase to turn the local branch
vectors into a final worst-case exponent. The proof in \Cref{sec:2-SAT}
supplies this bound and yields the displayed variable-parameter running time.

\end{enumerate}

\subsection{Structure of the Paper}\label{sec:overview:roadmap}

We begin in \Cref{sec:prelim} by fixing notation and the $(\varepsilon,\delta)$ approximation guarantees used throughout.
In \Cref{sec:uTotP} we formalise the unweighted framework (uSR/uTotP) and state the complexity-theoretic equivalence $\mathrm{uTotP}=\mathrm{TotP}$, the proof of which is given in appendix~\ref{sec:DeferredProofs}.
The technical core is developed in \Cref{sec:Framework}, which constructs the generic polynomial-delay enumerator and the bound-driven sampler, and proves both the $\sqrt{b(x)}$ approximate counting theorem and the $\sqrt{\sum_i b_i}$ extension for sums of cores.  We then instantiate the framework in \Cref{sec:Applications} on maximal cliques, minimal separators, subcubic perfect matchings, \#\textsc{Independent-Set}, and \#\textsc{2-SAT}.
Finally, \Cref{sec:Quantum} records the $b^{1/3}$ and $b^{1/4}$ quantum variants obtained by plugging standard quantum approximate-counting and tree-size-estimation primitives into the same interface. In particular, this section gives the details of why the organisation of solutions into a tree, derived in \Cref{sec:PolyDelay}, is essential to get the full quantum acceleration.

\section{Preliminaries}\label{sec:prelim}

\subsection{Notation and Conventions}
\label{sec:notation}

We collect the conventions that are used repeatedly in the technical sections.
The paper otherwise uses standard graph, formula, and complexity-theoretic
notation.

\begin{table}[H]
\centering
\renewcommand{\arraystretch}{1.15}
\small
\begin{tabular}{@{}lp{0.72\linewidth}@{}}
\toprule
\textbf{Symbol} & \textbf{Meaning / comment}\\
\midrule
$\mathbb{N}$              & $\{0,1,2,\ldots\}$, the set of non-negative integers\\
$\mathbb{N}_{+}$          & $\{1,2,3,\ldots\}$, the set of positive integers\\
$\mathbb{N}_{\bot}$       & $\mathbb{N}\cup\{\bot\}$, naturals augmented with a special symbol $\bot$\\
$\mathbb{N}^{\le c}$      & All tuples of elements of $\mathbb{N}$ whose length is \emph{at most} $c$\\
$[m]$                     & The set $\{1,\ldots,m\}$ \hfill($m\in\mathbb{N}_{+}$)\\
$\Sigma^{\ast}$           & All finite words over the implicit alphabet $\Sigma$\\
$\lambda$                 & Reserved dummy invalid word; genuine encodings are prefixed so that \(\lambda\) is never a valid input\\
$|x|$                     & Length of a word or tuple $x$\\
$\mathbf{v}.i$            & Appending $i$ to a tuple $\mathbf{v}$\\
$|\mathbf{v}|$            & Length of the tuple $\mathbf{v}$\\
$h(x,\mathbf{v})$         & Iterated application of $h$, defined in \eqref{eq:hdef}\\
$<_{\mathrm{lex}}$        & Strict lexicographic order on words or tuples\\
$\operatorname{poly}(\cdot)$ & An unspecified polynomial in its argument\\
$O^\ast(\cdot)$           & Running-time notation hiding polynomial factors in the input size\\
$\displaystyle\sum_{i\in\emptyset}(\cdot)$ & Empty-sum convention: the value is $0$\\
\bottomrule
\end{tabular}
\caption{Frequently used notation. Unless stated otherwise, all logarithms are base $2$.}
\label{tab:notation}
\end{table}

\paragraph{Tuples and words.}
Throughout the paper we treat tuples as words: appending an integer to a tuple
is written $\mathbf{v}.i$, and concatenation is associative.  For a tuple
$\mathbf{v}=(v_1,\dots,v_d)$ we use the usual 1-based indexing
$\mathbf{v}_j=v_j$.

\paragraph{Encodings.}
We represent all finite objects (graphs, formulas, tuples, integers, etc.) by
binary strings under a fixed reasonable encoding.  We write $\langle X\rangle$
for the encoding of an object $X$, and we freely identify $X$ with
$\langle X\rangle$ when convenient.  All encoding/decoding and tuple
packing/unpacking used in the paper is computable in time polynomial in the
encoding length.

\paragraph{Runtimes.}
Whenever a family of functions is said to run in $\operatorname{poly}(|x|)$
time, the underlying polynomial is understood to be independent of
approximation parameters $(\varepsilon,\delta)$ unless explicitly stated.

\subsection{($\varepsilon,\delta$)-Approximations for Counting}
\subsubsection{Problem Statement}
\begin{definition}[$(\varepsilon,\delta)$-approximation]\label{def:epsdelta}
    For a given counting problem meeting our criteria, our approximate counting algorithm provides an $(\varepsilon,\delta)$-approximation to the true number of solutions $M$ when:
    \begin{enumerate}[(a)]
        \item The estimate result $\hat{M}$ has at most $\varepsilon \cdot M$ relative error if it succeeds
        \item The probability of failure is at most $\delta$
    \end{enumerate}
\end{definition}

\subsubsection{Chernoff Bound}
Throughout our paper, we apply the multiplicative Chernoff bound~\cite{mitzenmacher2017probability}
to control relative deviations of sums of independent Bernoulli random variables.
Let $X=\sum_{i=1}^T X_i$ be a sum of independent $\{0,1\}$-valued variables and let
$\mu=\mathbb{E}[X]$. For any $0<\varepsilon\le 1$,
\[
\Pr\!\big(|X-\mu|\ge \varepsilon \mu\big)\ \le\ 2\exp\!\left(-\frac{\varepsilon^2\mu}{3}\right).
\]
(We reserve $\delta$ for the failure probability in $(\varepsilon,\delta)$-approximations.)

\subsection{TotP}

We assume the reader is familiar with standard notation in complexity theory.
\subsubsection{Standard Characterisations of TotP}

\begin{definition}
\#P is the class of functions \(f : \{0,1\}^{*} \to \mathbb{N}\) for which there exists a nondeterministic
polynomial-time Turing machine (NPTM) \(M_f\) such that the number of accepting paths of \(M_f\) on
input \(x\) equals \(f(x)\).

\(\mathrm{TotP}\) is the class of functions \(f : \{0,1\}^{*} \to \mathbb{N}\) for which there exists a nondeterministic
polynomial-time Turing machine (NPTM) \(M_f\) such that the number of \emph{all} computation
paths of \(M_f\) on input \(x\) equals \(f(x) + 1\) (the offset by 1 is necessary since a machine cannot have no paths).
\end{definition}

\subsubsection{Characterisation Using Self-Reducibility}
Another useful characterisation of the TotP class, proved in \cite{pagourtzis2006complexity}, is as the Karp-closure of self reducible problems in \#PE, under the following notion of self reducibility.
\#PE is the class of functions \(f\) in \#P for which the decision version, i.e., the problem of deciding
whether \(f(x) > 0\), is in \(\mathsf{P}\).
Self-reducibility is defined as follows:
\begin{definition}[Self-reducibility]\label{def:SR}
A function \( f : \Sigma^{\ast} \to \mathbb{N} \) is said to have a
\emph{polynomial-time self-reducible definition} if there exist polynomial-time
computable functions \(r,q:\Sigma^\ast\to\mathbb N\), maps
\[
  h:\Sigma^\ast\times\mathbb N_+\to\Sigma^\ast,\qquad
  g:\Sigma^\ast\times\mathbb N_+\to\mathbb N,
\]
and \(t:\Sigma^\ast\to\{0,1\}\) such that the following hold.
We reserve \(\lambda\) as a dummy invalid instance; ordinary encodings can be
prefixed so that \(\lambda\) is never a genuine input.

\begin{enumerate}[(i)]
\item\textbf{Polynomial bounds.}
      There is a constant \(c\) such that, for every \(x\),
      \[
        r(x)\le (|x|+1)^c
        \qquad\text{and}\qquad
        q(x)\le (|x|+1)^c .
      \]

\item\textbf{Invalid branches.}
      We have \(r(\lambda)=q(\lambda)=t(\lambda)=0\), and for every
      \(x\) and \(i>r(x)\), \(h(x,i)=\lambda\).

\item\textbf{Recursive decomposition.}
      For every \(x\),
      \[
        f(x)\;=\;t(x)\;+\;\sum_{i=1}^{\,r(x)} g(x,i) f\!\bigl(h(x,i)\bigr).
      \]

\item\textbf{Polynomial depth.}
      Every valid recursive path from \(x\) has length at most \(q(x)\):
      if \(x_0=x\), \(x_{m}=h(x_{m-1},j_m)\), and
      \(1\le j_m\le r(x_{m-1})\) for all \(m\), then
      \(r(x_d)=0\) for every \(d\ge q(x)\).

\item\textbf{Instance size preservation.}
      Along every valid recursive path from \(x\), every reached instance has
      length at most \(\poly(|x|)\).
\end{enumerate}
\end{definition}

\begin{definition}[$\#PE_{SR}$]\label{def:SRPE}
     The set of functions in $\#PE$ that satisfy \Cref{def:SR} is denoted as $\#PE_{SR}$.
\end{definition}

\begin{definition}[TotP]\label{def:TotP}
    TotP is the Karp closure of $\#PE_{SR}$.
\end{definition}

\section{Unweighted TotP (uTotP)} \label{sec:uTotP}
\subsection{Unweighted Self-Reducibility}
In this section we introduce our reframing of the class TotP.

\begin{definition}[Unweighted self-reducibility]
\label{def:uredp}
A function \( f : \Sigma^{\ast} \to \mathbb{N} \) is said to have an
\emph{unweighted polynomial-time self-reducible definition} if there exist
polynomial-time computable functions \(r,q:\Sigma^\ast\to\mathbb N\), a
polynomial-time computable map \(h:\Sigma^\ast\times\mathbb N_+\to\Sigma^\ast\),
and a polynomial-time computable map \(t:\Sigma^\ast\to\{0,1\}\) such that the
following hold. As above, \(\lambda\) is a reserved dummy invalid instance.
\begin{enumerate}[(i)]
\item\textbf{Polynomial bounds.}\label{def:qrfunc}
      There is a constant \(c\) such that, for every \(x\),
      \[
        r(x)\le (|x|+1)^c
        \qquad\text{and}\qquad
        q(x)\le (|x|+1)^c .
      \]

\item\textbf{Invalid branches.}\label{cond:noBranch}
      \[
        r(\lambda)=q(\lambda)=t(\lambda)=0,\qquad
        h(\lambda,i)=\lambda\ \text{for all }i,\qquad
        i>r(x)\Rightarrow h(x,i)=\lambda .
      \]

\item\textbf{Recursive decomposition.}\label{cond:USR}
      \[
        f(x)\;=\;t(x)\;+\;\sum_{i=1}^{\,r(x)} f\!\bigl(h(x,i)\bigr).
      \]

\item\textbf{Polynomial depth.}\label{cond:polyDepth}
      Every valid recursive path from \(x\) has length at most \(q(x)\):
      if \(x_0=x\), \(x_m=h(x_{m-1},j_m)\), and
      \(1\le j_m\le r(x_{m-1})\) for all \(m\), then
      \(r(x_d)=0\) for every \(d\ge q(x)\).

\item\textbf{Instance size preservation.}\label{cond:InstanceSize}
      Along every valid recursive path from \(x\), every reached instance has
      length at most \(\poly(|x|)\).
\item\textbf{Leaf value and unweightedness.}\label{cond:unweighted}
      \[
  \big(q(x) = 0 \;\Rightarrow\; r(x) = 0\big) \quad \text{and} \quad \big(r(x) > 0 \;\Rightarrow\; t(x) = 0\big).
\]

      Thus every non-leaf contributes nothing except the sum of its
      recursive children, while every leaf contributes either $0$ or $1$.
\end{enumerate}

\end{definition}

The maps in a uSR definition specify the recursion tree whose positive leaves
are counted, as formalised below.

\begin{definition}[uSR definition of a function]\label{def:func}
    For the purposes of this paper, the \textit{uSR definition of a function} will be given by a tuple $\mathcal{F}=(f,h,t,r,q)$, where $f,h,t,r$ and $q$ are as defined in \Cref{def:uredp}.
    When meaning is unambiguous we may simply refer to the uSR definition of a function as the definition of a function.
    We may also refer to $\mathcal{F}=(f,h,t,r,q)$ as an \emph{uSR form} of $f$.
\end{definition}

We note that unweighted self-reducibility is a property of the definition rather
than the function. However, when the definition is unambiguous, we also say that
the function is unweighted self-reducible.

\subsection{Our New Characterisation of TotP}\label{sec:NewChar}

First define the following set of functions
\begin{definition}[$\#PE_{uSR}$]
     The set of functions in $\#PE$ that have a definition satisfying the properties \Cref{def:uredp} is denoted as $\#PE_{uSR}$.
\end{definition}

We can now define the class uTotP as such:

\begin{definition}[uTotP]\label{def:uTotP}
    uTotP is the Karp closure of $\#PE_{uSR}$.
\end{definition}

The only difference with the self-reducibility based definition of TotP (\Cref{def:TotP}) is that we are enforcing the stronger property of unweighted self-reducibility, constraining functions $g$ to always be $1$ and $t$ to always be $0$ except in leaves where it may be $0$ or $1$.
Intuitively, a computation tree for an \(f\in\#PE_{uSR}\) simply \emph{counts its leaves}: each internal node passes the sum of its children upward, while each leaf returns \(1\) or \(0\).
\Cref{thm:uTotpeqTotp} shows that the two characterisations are in fact equivalent in terms of the complexity class they define.

\begin{restatable}{theorem}{TOTPUTOTP}\label{thm:uTotpeqTotp}
    uTotP=TotP.
\end{restatable}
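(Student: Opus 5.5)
The plan is to prove the two inclusions separately, using \Cref{def:TotP} (TotP is the Karp closure of $\#PE_{SR}$) and \Cref{def:uTotP} (uTotP is the Karp closure of $\#PE_{uSR}$).

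\textbf{Inclusion uTotP $\subseteq$ TotP.} Every uSR definition is a self-reducible definition in the sense of \Cref{def:SR}. Set $g(x,i)=1$ for all $x,i$ and keep $h,t,r,q$ unchanged. Conditions (i)--(v) of \Cref{def:SR} then follow from conditions (i)--(v) of \Cref{def:uredp}, and the extra unweightedness condition (vi) is simply discarded. Hence $\#PE_{uSR}\subseteq\#PE_{SR}$, and taking Karp closures gives uTotP $\subseteq$ TotP.

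\textbf{Inclusion TotP $\subseteq$ uTotP.} Since uTotP is closed under Karp reductions, it suffices to show $\#PE_{SR}\subseteq$ uTotP. I would not unweight a given self-reducible definition directly, because the weights $g$ and the offsets $t$ at internal nodes need not be polynomially bounded. Instead I would use the machine characterisation.

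Let $f\in\#PE_{SR}\subseteq$ TotP. Then there is an NPTM $M$ whose total number of computation paths on $x$ is $f(x)+1$. Without loss of generality, $M$ makes binary branchings and runs for exactly $p(|x|)$ steps on every path. Define $f'(\langle x,c\rangle)$ to be the number of paths of $M$ on $x$ that extend the partial configuration $c$, minus $1$ if $c$ lies on the lexicographically first path. In particular $f'(\langle x,c_0\rangle)=f(x)$ at the initial configuration, so $f$ Karp-reduces to $f'$.

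The uSR form for $f'$ is as follows.
\begin{itemize}[label={}]
\item \emph{Branching.} $r$ is the number of nondeterministic successors of $c$, and $h$ steps to each successor.
\item \emph{Leaves.} $t(c)=1$ at halting configurations, except on the leftmost path, where $t(c)=0$.
\item \emph{Depth.} $q$ is $p(|x|)$ minus the current step count.
\item \emph{Unweightedness.} Deterministic steps have a single child, so internal nodes contribute nothing beyond the sum of their children. This gives condition (vi).
\end{itemize}
The recursive decomposition (iii) holds by construction, since each leaf is one path and the single leftmost leaf is set to $0$.

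\textbf{The key obstacle} is the $\#PE$ condition: we need $[f'(\langle x,c\rangle)>0]$ to be decidable in polynomial time. A configuration $c$ not on the leftmost path always has $f'\ge 1$. A configuration $c$ on the leftmost path has $f'>0$ exactly when its subtree contains at least two paths, that is, when some configuration reachable from $c$ along the leftmost path branches. Both facts can be checked in polynomial time by simulating the leftmost path, which is deterministic given the choice rule of always taking the first successor. Whether $c$ lies on the leftmost path is checked by the same simulation. Hence $f'\in\#PE_{uSR}$ and $f\in$ uTotP.

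Two technicalities remain. First, the reserved invalid instance $\lambda$ must be handled, which is done by prefixing encodings. Second, the instance sizes must stay polynomial, which holds because configurations of a polynomial-time machine have polynomial size. With these in place, TotP $\subseteq$ uTotP, and together with the first inclusion, uTotP $=$ TotP.
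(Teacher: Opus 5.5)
Your first inclusion is the paper's argument. For $\mathrm{TotP}\subseteq\mathrm{uTotP}$ you take a genuinely different route.

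\textbf{The paper's route.} It never leaves \Cref{def:SR}. Given $f(x)=t(x)+\sum_i g(x,i)f(h(x,i))$, it builds tagged instances $\mathrm{Main}(x)$, $\mathrm{Mult}(m,x)$ and $\mathrm{Bit}(b)$. The node $\mathrm{Main}(x)$ has children $\mathrm{Mult}(g(x,i),h(x,i))$ together with one leaf $\mathrm{Bit}(t(x))$. The node $\mathrm{Mult}(m,y)$ unfolds into $m$ copies of $\mathrm{Main}(y)$ by repeated halving. Hence $f'(\mathrm{Main}(x))=f(x)$, and the zero test is inherited from $f$.

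The obstacle you cite for avoiding this does not arise. In \Cref{def:SR}, $t$ is $\{0,1\}$-valued, and an exponentially large weight $g$ costs only depth $O(\log g)$ in the halving gadget.

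\textbf{Your route.} You import the Pagourtzis--Zachos equivalence to obtain a TotP machine, and read an unweighted self-reduction off its computation tree. You absorb the $+1$ by zeroing the leftmost leaf, and decide positivity by following the forced descent from the node.

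\textbf{What each buys.} The paper's argument is self-contained relative to the definitions actually in force (\Cref{def:SR,def:TotP}). It is also structure-preserving: the problem's own recursion tree survives, with a small gadget on each edge, which suits the paper's algorithmic use of uSR forms. Your argument avoids weights entirely and shows directly that every machine-TotP function is a leaf count of a uSR tree with an easy emptiness test. The price is that it relies on an external equivalence, specifically its easy half, $\#\mathrm{PE}_{SR}\subseteq$ machine-TotP, for the paper's variant of \Cref{def:SR}. It also discards the problem's recursion.

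\textbf{One detail to fix for condition (iii).} An instance must name a node of the computation tree, not a bare configuration. Turing machines can merge computations. So a node $c$ off the leftmost path can have a successor whose configuration equals the leftmost path's configuration one step later. If ``lies on the leftmost path'' is read as a property of configurations, then at such a $c$ you get $f'(\langle x,c\rangle)=N(c)$, the number of paths below $c$. But the sum over its children is $\sum_i f'(\langle x,c_i\rangle)\le N(c)-1$, so the recursive decomposition fails. Any of the following removes the problem and makes the leftmost test trivial:
\begin{itemize}
\item carry the string of nondeterministic choices in the instance;
\item carry just one bit recording whether every choice so far was the first one;
\item assume without loss of generality that $M$ writes its choices on a dedicated tape.
\end{itemize}

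Finally, you should state that $f'\in\#\mathrm{P}$, as membership in $\#\mathrm{PE}$ requires. This is routine: simulate $M$ below the node and accept every halting path except the distinguished leftmost one.
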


\begin{proof}
    See Appendix~\ref{sec:DeferredProofs}.
\end{proof}

\subsection{Bounding Functions}

For the reframing to be exploited algorithmically, we need to introduce the concept of a bounding function.

\begin{definition}[Bounding function]\label{def:bounding}
Let $f:\Sigma^{*}\to\mathbb N$ be a unweighted polynomial time self-reducible function with an appropriate unweighted polynomial time self-reducible definition \(\mathcal{F}=(f,h,t,r,q)\). A \emph{bounding function} for \(\mathcal{F}\)  is a polynomial-time computable map $b:\Sigma^{*}\rightarrow\mathbb N$ such that
\begin{align}
     \forall{x}\in\Sigma^{*},&\quad f(x)\;\le\;b(x)\\
     \forall{x}\in \Sigma^{*},& \quad \displaystyle\sum_{i=1}^{r(x)}b(h(x,i))\leq b(x) \label{eq:recursiveComp}\\
     \forall{x} \in \Sigma^{*}, & \quad (r(x)=0)\implies (b(x)\le 1)\label{eq:smallbound}
     \\
     \forall{x} \in \Sigma^{*}, & \quad t(x)\leq b(x)\label{eq:leafbound}
\end{align}

We specifically call \Cref{eq:recursiveComp} \emph{recursive compatibility}.
\end{definition}

The last two conditions imply the leaf tightness used by the sampler: if \(y\)
is a positive leaf, then \(r(y)=0\) and \(t(y)=1\), so
\(1=t(y)\le b(y)\le 1\), and hence \(b(y)=1\).
The formal algorithms below take \(b\) to be integer-valued so that the sampler
can be written as a uniform draw from \([b(x)]\). This is the finite-ticket
presentation of the bound-mass view used in the main text: real branching
numbers such as \(1.1869\) or \(1.2373\) summarise integer recurrence bounds, or
safe integer roundings of real potentials, with the same \(O^\ast\) exponent.

\begin{definition}[Definition of a uSR bounded function]\label{def:boundedFunc}
    We say that $\mathcal{F}=(f,h,t,r,q,b)$ is an \emph{ uSR bounded definition of a function} if $b$ is a function satisfying \Cref{def:bounding}.
    When clear from context we may use the term definition of a function to mean the definition of a bounded function.
    We may also denote a uSR bounded function with the superscript notation $\mathcal{F}^b$ when $b$ needs to be mentioned explicitly.
\end{definition}

\section{Algorithmic Interface for uTotP}\label{sec:Framework}

This section turns the informal ``enumerate-or-sample'' engine from the introduction
into a reusable algorithmic interface for functions given in bounded uSR form.
Fix a bounded uSR form $\mathcal F^b=(f,h,t,r,q,b)$ and an instance $x$.
The unweighted recursion induces an implicit rooted tree of subinstances:
internal nodes branch to subinstances $h(\cdot,i)$ and leaves contribute $t(\cdot)\in\{0,1\}$.
Thus $f(x)$ is exactly the number of \emph{positive leaves} reachable from $x$.

\paragraph{Motivation for the interface.}
The point of introducing a bounded uSR \emph{interface} is that it makes
counting algorithms modular.
Once a core counting task is presented via
\((h,t,r,q)\) together with (i) a recursion-compatible bound \(b(\cdot)\)
and (ii) feasibility access \(\mathsf{Dec}_f\),
we can treat the recursion tree as a black box and obtain generic procedures:
a polynomial-delay enumerator (used when the count is small),
a bound-driven sampler (used when the count is large),
and, crucially, compositions of these procedures across many cores.
This composability enables the preprocess-and-enumerate-or-sample strategy which results in a \(\sqrt{\text{(total bound)}}\)-time approximation routine and that we use to find faster algorithms for problems which have traditionally been approached using weighted recursive trees.

\paragraph{Interface assumptions.}
We assume (i) a polynomial-time computable upper bound $B=b(x)$ that is
recursion-compatible, and (ii) access to a feasibility predicate
$\mathsf{Dec}_f(y)=[f(y)>0]$ for subinstances $y$ (used only for enumeration).
This feasibility access is not an additional promise beyond uTotP/TotP: uTotP is defined as the Karp closure of unweighted self-reducible functions in $\#\mathrm{PE}$ (\Cref{sec:uTotP}), so $\mathsf{Dec}_f$ is polynomial-time for the cores we work with; and \Cref{thm:uTotpeqTotp} shows that this scope covers all of TotP via Karp reductions.

\paragraph{Roadmap and conceptual arc.}
\textsc{Count} is a two-stage \emph{enumerate-or-sample} estimator.
Stage~1 enumerates up to \(k=\lceil\sqrt{B}\rceil\) positive leaves; if fewer exist we
recover \(f(x)\) exactly.
Otherwise \(f(x)\ge k\), so a uniformly random ``bound ticket'' succeeds with
probability \(f(x)/B\ge 1/\sqrt{B}\), and Stage~2 estimates this probability via sampling
and Chernoff bounds.

The key message is that both stages only use the abstract interface:
Stage~1 needs feasibility (to traverse only solution-bearing branches),
while Stage~2 needs only the bound mass \(b(\cdot)\) (to implement ticket sampling).

The technical work of this section is to show that the above split is not rhetorical.
Given only a bounded uSR form \((h,t,r,q,b)\) and feasibility access \(\mathsf{Dec}_f\),
we can derive both ingredients needed by the estimator, with no further problem-specific design, along with additional structure needed for quantum acceleration.
\Cref{sec:BranchIndices} introduces branch indices and the displacement-index coordinate system,
which is the technical device that lets us navigate the recursion tree while pruning dead subtrees using only \(\mathsf{Dec}_{f}\).
\Cref{sec:PolyDelay} uses \(\mathsf{Dec}_f\) to define a pruned, solution-bearing index tree (via
displacement indices) and to implement local navigation oracles; this yields a
lexicographic polynomial-delay enumerator of positive leaves and supplies exactly the
``small-count'' primitive required by Stage~1.
\Cref{sec:Sampling}, by contrast, never consults \(\mathsf{Dec}_f\): recursion compatibility of \(b\)
alone supports a ticket-unranking sampler \(\sigma_{F^b,x}\) whose success probability is
precisely \(f(x)/b(x)\), giving the Bernoulli trials needed in Stage~2.
\Cref{sec:GeneralAlgo} combines these two derived primitives into \textsc{Count} and
completes the Chernoff analysis.

Conceptually, this abstracts a recurring mismatch in the literature: branching recurrences
and polynomial-delay enumeration are usually developed together (often tailored to a
specific search order), while (almost-)uniform sampling is typically obtained by separate
machinery and does not automatically align with the same recursion tree.
The bounded-uSR interface makes ``one recursion + feasibility oracle \(\Rightarrow\) both a structured enumerator
and a bound-driven sampler'' a black-box implication, under explicit and checkable
assumptions.

This packaging is deliberate: in the classical branching-algorithm literature, a
recurrence bound typically certifies a backtracking enumerator, whereas obtaining
a (near-)uniform sampler (and, even more, one that respects the same search-tree
structure) usually requires additional, problem-specific work.
Here the bounded-uSR interface isolates a clean set of sufficient conditions under which
both objects follow uniformly from the same recursion data.

\Cref{sec:toTotP} introduces an essential extension of the basic algorithm.
Many TotP algorithms naturally decompose an instance into (i) parts that admit
problem-specific counting shortcuts and (ii) a residual collection of hard
enumeration cores on which the computation degenerates to leaf-counting in
a recursion tree.
Our interface makes this decomposition algorithmically actionable by requiring that
each core be exposed as a bounded-uSR instance
\((\mathcal F_i^{b_i},x_i)\), i.e., an explicit unweighted recursion
\((h_i,t_i,r_i,q_i)\) together with a recursion-compatible bound \(b_i\).
This certification is exactly what lets us synthesise, uniformly and black-box,
(i) a polynomial-delay enumerator (for the ``small count'' regime) and
(ii) a bound-driven sampler (for the ``large count'' regime).

Crucially, once the cores are packaged this way, we can approximate their
total contribution without approximating each \(f_i(x_i)\) separately.
A naive approach would run the estimator on every core and sum the outputs,
incurring time \(\sum_i \widetilde O(\sqrt{B_i})\) and requiring nontrivial
allocation of failure probability and relative-error budgets across many terms.
Instead, \Cref{thm:combineuTotP} implements a single enumerate-or-sample
procedure over the disjoint union of cores:
draw one ticket uniformly from the aggregate bound mass \(B=\sum_i B_i\),
route it to core \(i\) using prefix sums, and test whether it hits a positive leaf.
This mixed experiment succeeds with probability \(F/B\), where \(F=\sum_i f_i(x_i)\),
so the Chernoff analysis goes through unchanged and yields an
\((\varepsilon,\delta)\)-approximation in time
\(\widetilde O\!\left(\sqrt{\sum_i B_i}\,\varepsilon^{-2}\log(1/\delta)\right)\).
In this sense, ``decompose into hard cores'' becomes a quantitative black-box
primitive rather than a problem-specific trick.

\paragraph{Notation.}
For convenience, \Cref{tab:framework-notation} summarises the main symbols and oracles used throughout this section.

\begin{table}[t]
\centering
\caption{Main symbols and oracles used in \Cref{sec:Framework}.}
\label{tab:framework-notation}
\renewcommand{\arraystretch}{1.05}
\begin{tabular}{ll}
\toprule
Symbol / oracle & Meaning (informal)\\
\midrule
$h(x,i)$ & $i$-th recursive sub-instance of $x$\\
$r(x)$ & fan-out (number of recursive calls)\\
$t(x)\in\{0,1\}$ & leaf predicate (positive leaf iff $t=1$)\\
$b(x)$ & recursion-compatible upper bound on $f(x)$\\
$\mathsf{Dec}_f(x)$ & feasibility: does subtree at $x$ contain a positive leaf?\\
$\beta_{\mathcal F,x}(\mathbf u)$ & displacement index $\to$ branch index\\
\textsc{DFSEnum}$_{\mathcal F,x}$ & polynomial-delay enumerator of positive leaves\\
$\sigma_{\mathcal F^b,x}(m)$ & sampler: map $m\in[b(x)]$ to a (possibly invalid) branch index\\
\bottomrule
\end{tabular}
\end{table}

\subsection{Branch and Displacement Indices}\label{sec:BranchIndices}

Our algorithms operate on the recursion tree induced by $h$.
A \emph{branch index} names a root-to-node path in the original recursion tree, a \emph{complete branch index} names a leaf of that tree, and a \emph{displacement index} is the corresponding coordinate system after infeasible subtrees have been deleted using \(\mathsf{Dec}_f\).
For enumeration and for later quantum speed-ups, it is convenient to also work in a
second coordinate system that skips branches whose subtrees contain no solutions.
This leads to \emph{displacement indices} and the map
$\beta_{\mathcal F,x}$ converting between the two.
Displacement indices provide a coordinate system for the \emph{pruned} recursion tree
containing only solution-bearing branches, and they support local navigation oracles
(\(\mathsf{Child}\), \(\mathsf{Parent}\), \(\mathsf{Sibling}\)) that we later reuse for quantum speed-ups.

To simplify notation, we define \(h:\Sigma^* \times \mathbb{N}^d \to \Sigma^*\) for all \(d\in \mathbb{N}\) by iterating the map \(h:\Sigma^* \times \mathbb{N}\to \Sigma^*\) from \Cref{def:uredp} as follows:

\begin{equation}\label{eq:hdef}
      h(x,\mathbf{v}) :=
    \begin{cases}
        h(\cdots h(h(x, v_1), v_2), \ldots, v_{d}) &\text{if $d>0$, and}\\
        x &\text{otherwise (i.e. $d=0$ and $\mathbf{v} = \emptyset$)},
    \end{cases}
\end{equation}

for \( \mathbf{v} = [v_1,v_2,\ldots , v_d] \in \mathbb{N}_+^d \).

\begin{definition}[Branch index]\label{def:branchIndex}
    Given a function definition \(\mathcal{F}\) and \(x\in \Sigma^{\ast}\), a \emph{branch index} for $(\mathcal{F},x)$ is a tuple \(v\in \mathbb{N}_+^{\leq q(x)}\).
    When obvious from context or irrelevant what $\mathcal{F}$ and $x$ are, we simply call $v$ a branch index.
    We say that a branch index is \emph{valid} if \(h(x,v)\neq \lambda\) and we say that it is \emph{complete} if it is valid and \(r(h(x,\mathbf{v}))=0\).
\end{definition}

\noindent\emph{Intuition:} $\mathbf v=(v_1,\dots,v_d)$ records which child was taken at each
recursive step; ``complete'' means the recursion has reached a leaf.

It will be useful to introduce another coordinate system, called the displacement index.
Before we define displacement indices formally, we introduce \Cref{alg:disToBranch} to give some context.

\begin{algorithm}[H]
\DontPrintSemicolon
\caption{\(\beta_{\mathcal{F},x}(\mathbf{u})\) - Goes from displacement index to branch index}
\label{alg:disToBranch}
\KwIn{\(\mathbf{u}\in \mathbb{N}^{q(x)}\)}
\KwOut{A branch index for \(\mathcal{F}\) and \(x\).}

\(\mathbf{v} \gets \emptyset\)

\(d \gets 1\)

\While{$r\bigl(h(x,\mathbf{v})\bigr)\neq 0$}{ \label{line:stopCond}
    $i\gets |\mathbf{v}|+1$ \tcp*{Current depth}
    $\mathcal{C} \gets \bigl\{\,j \in [r(h(x,\mathbf{v}))] : \mathsf{Dec}_f(h(x,\mathbf{v}.j))\,\bigr\}$ \tcp*{Find children with solutions}

    \If{$\mathbf{u}_{i} < |\mathcal{C}|$}{
        $\mathbf{v}\gets \mathbf{v}.(\mathcal{C}_{\mathbf{u}_{i}+1})$\label{line:appendChild}
    }
    \Else{
        $\mathbf{v} \gets \mathbf{v}.(r(h(x,\mathbf{v}))+1)$ \tcp*{Branch index $\mathbf{v}$ guarantees $h(x,\mathbf{v})=\lambda$}
    }
    $d \gets d+1$
}
\If(\tcp*[f]{the branch is valid and all unused components of \(\mathbf{u}\) are zero})%
   {$\displaystyle h(x,\mathbf v)\neq\lambda
     \text{ and }\bigl(\forall j\ge d : \mathbf{u}_j=0\bigr)$}%
{%
   \Return{$\mathbf{v}$}\tcp*{Return complete and valid branch index} \label{line:Valid1}
}%
\Else{
   \Return{$(r(x)+1)$}\tcp*{Invalid fallback branch index}\label{line:Invalid1}
}
\end{algorithm}
\begin{lemma}\label{lem:betaRuntime}
    $\beta_{\mathcal{F},x}$ runs in polynomial time.
\end{lemma}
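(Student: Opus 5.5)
The plan is to bound separately the number of iterations of the while-loop in \Cref{alg:disToBranch} and the cost of each iteration, and then the cost of the final validity check. All of these will be polynomial in $|x|$ by the conditions of \Cref{def:uredp} together with the polynomial-time feasibility predicate $\mathsf{Dec}_f$, which is available because we work with functions in $\#\mathrm{PE}$.

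\textbf{Iteration count.} Each iteration appends exactly one entry to $\mathbf v$. I will argue by induction that, at the start of every iteration, either $\mathbf v$ is a valid recursive path from $x$ (each step has index $1\le v_m\le r(\cdot)$), or $h(x,\mathbf v)=\lambda$.

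\begin{itemize}
\item If the else-branch fires, the appended index exceeds $r(h(x,\mathbf v))$. Condition~\ref{cond:noBranch} then gives $h(x,\mathbf v)=\lambda$ and $r(\lambda)=0$, so the loop stops immediately afterwards.
\item Otherwise every appended index comes from $\mathcal C\subseteq[r(h(x,\mathbf v))]$, so the path stays valid. Condition~\ref{cond:polyDepth} then forces $r(h(x,\mathbf v))=0$ once $|\mathbf v|\ge q(x)$.
\end{itemize}

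Hence the loop runs at most $q(x)+1\le(|x|+1)^c+1$ times. The same argument shows that the index $\mathbf u_i$ is only accessed for $i\le q(x)$, so it stays within the input tuple.

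\textbf{Cost per iteration.} Along the way, every instance $h(x,\mathbf v)$ reached by a valid path has length $\poly(|x|)$ by condition~\ref{cond:InstanceSize}. The only other possibility is $\lambda$. I would maintain the current instance $y=h(x,\mathbf v)$ incrementally rather than recomputing the iterated map from scratch, although recomputation also costs only a polynomial factor more.

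Each iteration then does the following, each step polynomial in $|y|$ and hence in $|x|$:
\begin{itemize}
\item compute $r(y)\le(|y|+1)^c$;
\item compute the children $h(y,j)$ for each $j\le r(y)$, each of polynomial size;
\item call $\mathsf{Dec}_f$ on each child to build $\mathcal C$;
\item compare $\mathbf u_i$ with $|\mathcal C|$ and append one index.
\end{itemize}
Since $r(y)$ is polynomially bounded, the iteration takes polynomial time overall. The integer entries of $\mathbf u$ have polynomial bit-length as part of the input.

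\textbf{Final check.} The last step tests $h(x,\mathbf v)\ne\lambda$, which is already held as $y$, and checks that $\mathbf u_j=0$ for $j\ge d$. This is a scan of at most $q(x)$ entries.

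\textbf{Main obstacle.} The delicate point is the bookkeeping in the iteration-count step. The loop condition must be shown to terminate even after an invalid index is appended, which relies on $r(\lambda)=0$ and $h(\lambda,i)=\lambda$. The depth bound of condition~\ref{cond:polyDepth} must also be applied only to genuinely valid paths. Both are covered by the case split above.

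Multiplying the polynomial iteration count by the polynomial per-iteration cost, and adding the polynomial final check, gives the claim of \Cref{lem:betaRuntime}.
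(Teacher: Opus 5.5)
Your proof is correct and follows the same approach as the paper: bound the number of loop iterations by the polynomial depth $q(x)$ from condition~\ref{cond:polyDepth}, and note that each iteration costs polynomial time. Your version is more complete than the paper's one-line argument. It explicitly handles the invalid-index branch via $h(\cdot,r(\cdot)+1)=\lambda$ and $r(\lambda)=0$, and it justifies the per-iteration cost through the polynomial fan-out, instance-size preservation, and the polynomial-time $\mathsf{Dec}_f$ calls, all of which the paper leaves implicit.
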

\begin{proof}
    The while loop always appends an index to $\mathbf{v}$. According to \Cref{cond:polyDepth} and the definition of $q$ (\Cref{def:qrfunc}), after polynomial iterations of the loop we will get $r(h(x,\mathbf v))=0$.
\end{proof}

\begin{definition}[Displacement index]\label{def:disIndex}
    Given a function definition \(\mathcal{F}\) and \(x\in \Sigma^{\ast}\), a \emph{displacement index} for $(\mathcal{F},x)$ is a tuple \(\mathbf{u}\in \mathbb{N}^{q(x)}\).
    When obvious from context or irrelevant what $\mathcal{F}$ and $x$ are, we simply call $\mathbf{u}$ a displacement index.
    We say that a displacement index $\mathbf{u}$ is \emph{valid} when $h(x,\beta_{\mathcal{F},x}(\mathbf{u}))\neq \lambda$.
\end{definition}

\noindent\emph{Intuition:} $\mathbf u$ records, at each depth, the \emph{rank among feasible children}
(those with at least one solution), rather than the raw child number in $[r(\cdot)]$.

Conceptually, $\beta_{\mathcal F,x}(\mathbf u)$ walks down the recursion tree,
and at depth $i$ chooses the $(\mathbf u_i+1)$-st child whose subtree is nonempty.
Once this walk reaches a leaf, every unused suffix coordinate of \(\mathbf u\)
must be zero; otherwise the vector is sent to the invalid fallback branch.

\subsection{Structured Polynomial Delay Algorithm}\label{sec:PolyDelay}

Stage~1 of \Cref{alg:FullAlgo} requires an \emph{output-sensitive} routine:
either enumerate all solutions when $f(x)$ is small, or stop after $\sqrt{B}$ outputs.
We therefore construct a generic polynomial-delay enumerator for positive leaves.
The main idea is to traverse the tree of displacement indices, which contains only
solution-bearing branches and has polynomial depth, giving a polynomial delay algorithm.
Furthermore, the enumeration is represented by a locally navigable tree, a structure that we exploit in \Cref{sec:Quantum}.

Since in this paper we are working with functions $f\in \#PE$, we assume that we have a polynomial time algorithm to determine whether $f>0$.
In the rest of the paper we will make use of the following oracle to represent the decision version of our function.
\[
\mathsf{Dec}_f(x)\;:=\;\begin{cases}
\textsc{true} & \text{if } f(x)>0,\\
\textsc{false} & \text{otherwise.}
\end{cases}
\]

\subsubsection{Child, Parent, and Sibling Oracles}\label{sec:GraphOracles}

\begin{algorithm}[H]
    \DontPrintSemicolon
    \caption{\(\mathsf{Child}_{\mathcal{F},x}(\mathbf{u},i)\)}
    \label{alg:child}
    \KwIn{Displacement index \(\mathbf{u}\), rank $i\ge 1$}
    \KwOut{Displacement index \(\mathbf{u}'\) or \(\bot\) if fewer than $i$ children exist}

    $\mathbf w\gets\mathbf u$                                              \tcp*{mutable working copy}
\If{$\exists j:\mathbf u_j>0$}{
  $r\gets\max\{j\mid\mathbf u_j>0\}$\tcp*{right-most non-zero}
}
\Else{
  $r\gets1$\tcp*{root displacement index}
}
$p\gets|\mathbf w|$\tcp*{current coordinate pointer}

\While{$p\ge r$}{
    $\mathbf w_p\gets\mathbf w_p+1$\tcp*{tentative bump}

    \If{$h\!\bigl(x,\beta_{\mathcal F,x}(\mathbf w)\bigr)\neq\lambda$}{
         $i\gets i-1$\;
         \If{$i=0$}{
         \Return{$\mathbf w$}
         }
    }
    $\mathbf w_p\gets \mathbf w_p-1$\tcp*{revert on failure}
    $p\gets p-1$\tcp*{move one slot left}

}
\Return{$\bot$}\tcp*{fewer than $i$ valid children}

\end{algorithm}

\begin{lemma}\label{lem:polyChild}
\(\mathsf{Child}_{\mathcal F,x}\) runs in polynomial time.
\end{lemma}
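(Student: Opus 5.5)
The plan is a direct count of the work done by \Cref{alg:child}, charging each iteration to a bounded number of calls to previously established polynomial-time routines. There are three parts. The initial computation of $\mathbf w$, of the right-most non-zero coordinate $r$, and of the pointer $p=|\mathbf w|$ takes time linear in $|\mathbf u|$. The displacement index $\mathbf u$ has length $q(x)$, and $q(x)\le(|x|+1)^c$ by \Cref{def:qrfunc}, so this is polynomial in $|x|$. The bulk of the work is the while loop.

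The pointer $p$ starts at $|\mathbf w|=q(x)$, is decremented by exactly one at the end of every iteration that does not return, and the loop stops once $p<r$ with $r\ge 1$. The loop therefore executes at most $q(x)$ iterations, whatever value the input rank $i$ takes; this matters because $i$ is part of the input and could be large, but it is only decremented and never controls termination. Each iteration performs $O(1)$ arithmetic on one coordinate of $\mathbf w$, one call to $\beta_{\mathcal F,x}(\mathbf w)$, one evaluation of $h(x,\cdot)$ along the returned branch index, and a comparison with $\lambda$. Note that $\mathbf w$ differs from $\mathbf u$ in at most one coordinate, by $+1$, and that change is reverted at the end of the iteration. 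Hence the encoding length of every $\mathbf w$ passed to $\beta_{\mathcal F,x}$ is at most $|\mathbf u|$ plus a constant number of bits. By \Cref{lem:betaRuntime}, each call to $\beta_{\mathcal F,x}$ therefore runs in time polynomial in $|x|$ and in the length of its input.

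The step I expect to need the most care is evaluating $h(x,\beta_{\mathcal F,x}(\mathbf w))$ in polynomial time. The returned branch index has length at most $q(x)$, by \Cref{cond:polyDepth}, or is the single fallback entry $r(x)+1$. Iterating $h$ along it therefore takes at most $q(x)$ applications of the polynomial-time map $h$. Each intermediate instance has size $\poly(|x|)$ by \Cref{cond:InstanceSize}; along an invalid branch the iteration hits $\lambda$ and stays there by \Cref{cond:noBranch}, so sizes stay bounded in that case too. The same bounds justify that the $\mathsf{Dec}_f$ calls inside $\beta$ are made on polynomial-size instances, which is what \Cref{lem:betaRuntime} relies on implicitly.

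Multiplying at most $q(x)$ iterations by polynomial cost per iteration, and adding the polynomial set-up cost, gives a polynomial running time in $|x|$, together with the input length $|\mathbf u|+\log i$.
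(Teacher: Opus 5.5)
Your proposal is correct and follows essentially the same argument as the paper: you bound the number of while-loop iterations by $q(x)$ using the decreasing pointer $p$ (the paper states $q(x)+1$), and you charge each iteration a polynomial cost for one call to $\beta_{\mathcal F,x}$ and one evaluation of $h$. Your extra bookkeeping goes beyond the paper's proof but does not change the route; it covers the encoding lengths of $\mathbf w$ and $i$, and the cost of evaluating the iterated $h$ along the returned branch index, both of which the paper folds into ``polynomial-time by construction.''
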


\begin{proof}
The pointer \(p\) starts at \(q(x)=|\mathbf u|\) and is decremented once
per loop, so the \textbf{while}-loop executes at most \(q(x)+1\)
iterations; recall \(q(x)=\operatorname{poly}(|x|)\).

Inside an iteration we (i) update a single coordinate of
\(\mathbf w\), (ii) compute \(\beta_{\mathcal F,x}(\mathbf w)\) and one call to
\(h\), and (iii) adjust \(i\).
Both \(\beta_{\mathcal F,x}\) and \(h\) are polynomial-time by construction, hence
an iteration costs \(\operatorname{poly}(|x|)\).

Multiplying a polynomial bound per iteration by \(O(q(x))\) iterations
gives an overall running time \(\operatorname{poly}(|x|)\).
\end{proof}

\begin{algorithm}[H]
    \DontPrintSemicolon
    \caption{\(\mathsf{Parent}_{\mathcal{F},x}(\mathbf{u})\)}
    \label{alg:parent}
    \KwIn{Displacement index \(\mathbf{u}\)}
    \KwOut{$\mathbf u'$ (parent displacement) or $\bot$ if $\mathbf u$ is the root}
    \If{$\forall j:\;\mathbf u_j = 0$}{
    \Return{$\bot$} \tcp*{root has no parent}
}
$\mathbf w \gets \mathbf u$ \tcp*{working copy}
$r \gets \max\{j \mid \mathbf w_j > 0\}$\;
$\mathbf w_r \gets \mathbf w_r - 1$\;

\Return{$\mathbf w$}\tcp*{this is $\mathbf u'$}
\end{algorithm}

\begin{lemma}\label{lem:polyParent}
\(\mathsf{Parent}_{\mathcal F,x}\) runs in polynomial time.
\end{lemma}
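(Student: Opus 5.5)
The statement concerns \Cref{alg:parent}, which is straight-line code with no loops or recursive calls. The plan is simply to count its operations and bound the cost of each one in terms of $q(x)$. The key fact is that a displacement index $\mathbf u$ for $(\mathcal F,x)$ is a tuple in $\mathbb N^{q(x)}$ by \Cref{def:disIndex}. By \Cref{def:qrfunc}, $q(x)\le(|x|+1)^c$, so $\mathbf u$ has polynomially many coordinates.

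Next, I will go through the algorithm line by line.
\begin{enumerate}[(i)]
\item The root test ``$\forall j:\mathbf u_j=0$'' is a single scan over the $q(x)$ coordinates.
\item Copying $\mathbf u$ into $\mathbf w$ is linear in the encoding length of $\mathbf u$.
\item Computing $r=\max\{j\mid \mathbf w_j>0\}$ is again one scan, from right to left.
\item Decrementing $\mathbf w_r$ is a single arithmetic operation on one integer. Its cost is at most linear in that integer's bit length, which is bounded by the input encoding length.
\end{enumerate}
In contrast to \Cref{lem:polyChild}, no calls to $\beta_{\mathcal F,x}$, $h$ or $\mathsf{Dec}_f$ are made at all. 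So the total cost is $O(q(x))$ elementary operations on numbers no larger than those in the input, which is $\poly(|x|+|\mathbf u|)$.

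The only point needing care is the measure of running time. The entries of $\mathbf u$ could in principle be large integers, so the bound should be stated in terms of the encoding length of the input $(x,\mathbf u)$. I will also note that every displacement index actually produced by $\mathsf{Child}_{\mathcal F,x}$ from the root has entries bounded by the fan-out. By \Cref{def:qrfunc}, the fan-out satisfies $r(\cdot)\le(|x'|+1)^c$ for instances $x'$ that are polynomially bounded by \Cref{cond:InstanceSize}. Hence, on all inputs arising in our algorithms, the cost is $\poly(|x|)$. I expect no real obstacle here; the lemma is essentially immediate once the tuple length $q(x)$ is recalled.
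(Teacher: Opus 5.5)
Your proposal is correct and follows the paper's proof: because the displacement vector has length \(q(x)=\poly(|x|)\), the root test, the scan for the right-most non-zero entry and the single decrement together cost \(O(q(x))\). Your remark that valid displacement indices have entries bounded by the fan-out, which bounds the bit-length cost of the decrement, only refines the paper's ``constant-time decrement'' step and does not change the argument.
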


\begin{proof}
The vector \(\mathbf u\) has length \(q(x)=\poly(|x|)\).
Scanning once to locate the right-most non-zero entry costs
\(O(q(x))\); the subsequent decrement is constant time.
Hence the whole routine is \(O(q(x))=\poly(|x|)\).
\end{proof}

\begin{algorithm}[H]
\DontPrintSemicolon
\caption{\(\mathsf{Sibling}_{\mathcal{F},x}(\mathbf{u})\)}
\label{alg:sibling}
\KwIn{Displacement index \(\mathbf{u}\)}
\KwOut{$\mathbf u'$ (next sibling) or $\bot$ if $\mathbf u$ is the root or $\mathbf{u}$ is the last child}
\If{$\forall j:\;\mathbf u_j = 0$}{
    \Return{$\bot$} \tcp*{root node}
}
$\mathbf p \gets \mathsf{Parent}_{\mathcal F,x}(\mathbf u)$\;
$i \gets 1$\;
\While{True}{
$\mathbf{w}\gets \mathsf{Child}_{\mathcal{F},x}(\mathbf{p},i)$\;
\If{$\mathbf{w}=\bot$}{
\Return{$\bot$} \tcp*{last child}
}
\If{$\mathbf{w}=\mathbf{u}$}{
    $i\gets i+1$\;
    \Return{$\mathsf{Child}_{\mathcal{F},x}(\mathbf{p},i)$}\tcp*{may be bot if no next child}
}
$i\gets i+1$
}
\end{algorithm}

\begin{lemma}\label{lem:polySibling}
\(\mathsf{Sibling}_{\mathcal F,x}\) runs in polynomial time.
\end{lemma}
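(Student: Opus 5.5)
The plan is to bound the number of \textbf{while}-loop iterations in \Cref{alg:sibling} and the cost of each one. The constant-cost steps are the root test, which scans \(\mathbf u\) of length \(q(x)=\poly(|x|)\), and one call to \(\mathsf{Parent}_{\mathcal F,x}\), which is polynomial by \Cref{lem:polyParent}. Every remaining iteration makes one call to \(\mathsf{Child}_{\mathcal F,x}(\mathbf p,i)\), plus possibly one more just before returning, and each such call is polynomial by \Cref{lem:polyChild}. So the whole statement reduces to showing that the loop counter \(i\) never exceeds a polynomial in \(|x|\) before the loop returns.

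To bound \(i\), I would read the structure of \(\mathsf{Child}_{\mathcal F,x}(\mathbf p,\cdot)\) directly off \Cref{alg:child}. Its pointer \(p\) ranges over positions from \(|\mathbf p|=q(x)\) down to the right-most nonzero coordinate \(r\) of \(\mathbf p\). At each position at most one tentative bump is tested, and each successful test decrements the rank counter once. Hence the number of distinct ranks \(i\) for which \(\mathsf{Child}_{\mathcal F,x}(\mathbf p,i)\neq\bot\) is at most \(q(x)-r+1\le q(x)+1\). Moreover, since the procedure is deterministic and scans positions in a fixed order, \(\mathsf{Child}_{\mathcal F,x}(\mathbf p,i)=\bot\) implies \(\mathsf{Child}_{\mathcal F,x}(\mathbf p,i')=\bot\) for all \(i'>i\).

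Given this, the loop in \Cref{alg:sibling} runs at most \(q(x)+2\) times. Either it meets \(\mathbf w=\bot\) and returns, or it finds \(\mathbf w=\mathbf u\) and returns after one further call. The total cost is therefore \(O(q(x))\) calls to \(\mathsf{Child}_{\mathcal F,x}\), each costing \(\poly(|x|)\) by \Cref{lem:polyChild}, together with the polynomial cost of \(\mathsf{Parent}_{\mathcal F,x}\). Multiplying gives \(\poly(|x|)\), which proves the lemma.

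The main obstacle is making the child-count bound rigorous, because it is phrased in terms of the internal behaviour of \(\mathsf{Child}_{\mathcal F,x}\) rather than any semantic property of the tree. One might worry that the loop fails to terminate when \(\mathbf u\) is not among \(\mathbf p\)'s enumerated children. This does not happen: the enumeration is finite, so the \(\bot\) branch eventually fires regardless. For that reason I would state explicitly that termination comes from the \(\bot\) test and not from matching \(\mathbf u\).
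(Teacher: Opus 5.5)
Your proposal is correct and follows essentially the same route as the paper's proof: one polynomial-time call to \(\mathsf{Parent}_{\mathcal F,x}\), then \(O(q(x))\) calls to \(\mathsf{Child}_{\mathcal F,x}\). The call count holds because each valid child of \(\mathbf p\) has a distinct rank \(i\) and the next call after the last one returns \(\bot\), which ends the loop. Your explicit derivation of the child-count bound from the pointer scan in \Cref{alg:child}, together with the observation that \(\bot\) at rank \(i\) persists for all larger ranks, spells out what the paper asserts in one line.
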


\begin{proof}
The length of every displacement vector is \(q(x)=\poly(|x|)\).

\emph{Outside the loop.}
One call to \(\mathsf{Parent}_{\mathcal F,x}\) costs
\(\poly(|x|)\) by \Cref{lem:polyParent}.

\emph{Inside the loop.}
Each pass invokes \(\mathsf{Child}_{\mathcal F,x}\) once, which is
\(\poly(|x|)\) by \Cref{lem:polyChild}, and then increments \(i\).
For a fixed parent displacement vector \(\mathbf p\), each valid child is found
by a distinct value of \(i\), and the next call after the last valid child
returns \(\bot\).  Since displacement vectors have length \(q(x)\), this gives
at most \(q(x)+1\) calls to \(\mathsf{Child}_{\mathcal F,x}\).

Thus the total running time is
\(\poly(|x|)\times O(q(x)) = \poly(|x|)\).
\end{proof}

\begin{definition}[Tree of displacement indices]\label{def:indexTree}
    We denote by $\mathcal{T}_{\mathcal{F},x}$ the tree of displacement indices defined by $\mathsf{Child}_{\mathcal{F},x}$.
\end{definition}

\begin{lemma}\label{lem:depthT}
    The depth of $\mathcal{T}_{\mathcal{F},x}$ is bounded by a polynomial.
\end{lemma}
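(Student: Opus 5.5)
We bound the depth of $\mathcal{T}_{\mathcal F,x}$ by a potential that strictly increases from parent to child. Define $\Phi(\mathbf u)=\sum_{j=1}^{q(x)}\mathbf u_j$. We then show that every edge of $\mathcal{T}_{\mathcal F,x}$ raises $\Phi$ by exactly one, and that $\Phi$ is polynomially bounded on valid displacement indices. A root-to-node path of length $D$ then ends at an index with $\Phi=D$, so $D\le \max\Phi$.

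First, the edges. By \Cref{alg:child}, every child $\mathbf w$ returned by $\mathsf{Child}_{\mathcal F,x}(\mathbf u,i)$ differs from $\mathbf u$ in exactly one coordinate $p$, with $\mathbf w_p=\mathbf u_p+1$. The loop reverts every unsuccessful bump before moving $p$, so the only net change is the final successful bump. Hence $\Phi(\mathbf w)=\Phi(\mathbf u)+1$. Since the root is the all-zero vector with $\Phi=0$, any node at depth $D$ has $\Phi=D$; this matches $\mathsf{Parent}$ undoing a single unit decrement. The tree contains only nodes reachable from the root via $\mathsf{Child}$, and each such node passed the validity test $h(x,\beta_{\mathcal F,x}(\mathbf w))\neq\lambda$.

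Second, the bound on $\Phi$ for a valid $\mathbf u$. In \Cref{alg:disToBranch}, validity forces two things. At each depth $i$ actually walked, the branch taken is the line that appends $\mathcal{C}_{\mathbf u_i+1}$, so $\mathbf u_i<|\mathcal C|\le r(h(x,\mathbf v))$. All coordinates beyond the walked depth are zero; otherwise the fallback at \Cref{line:Invalid1} fires. The walked depth is at most $q(x)$ by \Cref{cond:polyDepth}. Each reached instance has size $\poly(|x|)$ by \Cref{cond:InstanceSize}, so by \Cref{def:qrfunc} each $r(\cdot)$ along the path is at most $\poly(|x|)$. Therefore $\Phi(\mathbf u)\le q(x)\cdot\poly(|x|)=\poly(|x|)$, and the depth of $\mathcal{T}_{\mathcal F,x}$ is at most this quantity.

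The main obstacle is the first step: ensuring a returned child really differs in one coordinate by exactly $+1$, since the bump and revert logic must leave no residue. A careful reading of the loop invariant settles this: at the start of each iteration $\mathbf w=\mathbf u$. A secondary subtlety is the edge case where the walk reaches the fallback branch index $(r(x)+1)$; the invalidity check $h\neq\lambda$ excludes such $\mathbf w$ from the tree, so they do not affect the bound.
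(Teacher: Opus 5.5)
Your proposal is correct and is essentially the paper's argument. The paper also uses the potential $\sum_j \mathbf u_j$, which changes by exactly one per tree edge (the paper phrases this via $\mathsf{Parent}$ rather than $\mathsf{Child}$), and bounds it by $q(x)(R_x-1)$, since each used coordinate is a rank among at most $R_x=\poly(|x|)$ feasible children and unused coordinates must vanish. Your explicit bump-and-revert invariant for $\mathsf{Child}$ is a slightly more direct justification of the per-edge step, because $\mathcal T_{\mathcal F,x}$ is defined through $\mathsf{Child}$.
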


\begin{proof}
Let
\[
  R_x:=\max\{r(y): y \text{ is reachable from } x
       \text{ by a valid recursive path}\}.
\]
By the polynomial fan-out and instance-size assumptions in the uSR definition,
\(R_x\le \poly(|x|)\).
For any valid displacement index \(\mathbf u\), each used coordinate is the
zero-based rank of a feasible child and is therefore at most \(R_x-1\), and
there are at most \(q(x)\) used coordinates.
Along a parent chain in \(\mathcal T_{\mathcal F,x}\), the quantity
\(\sum_j \mathbf u_j\) decreases by one at each step.
Hence every root-to-node path has length at most \(q(x)(R_x-1)\), which is
polynomial in \(|x|\).
\end{proof}

\subsubsection{DFS Polynomial Delay Algorithm}
\begin{algorithm}[H]
\DontPrintSemicolon
\caption{\textsc{DFSEnum}$_{\mathcal{F},x}$ - DFS based polynomial delay algorithm}
    \label{alg:dfs-enum}
    \KwOut{One branch index for each displacement index in $\mathcal{T}_{\mathcal{F},x}$}
    \SetKw{Output}{output}
    \SetKw{Continue}{continue}
    \If{$\mathsf{Dec}_{f}(x)=\textsc{FALSE}$}{
        \Return\tcp*{no solutions, output nothing}
    }
    $\mathbf{u} \gets (0,\ldots,0)$ \tcp*{root displacement index}
    \Output{$\beta_{\mathcal{F},x}(\mathbf u)$}\tcp*{output first solution}\label{line:Output1}
    $i \gets 0$\tcp*{boolean to remember if ascending or descending}
    \While{\textsc{True}}{

    \If{$i=0$\tcp*{try to descend}}{
        $\mathbf{v} \gets \mathsf{Child}_{\mathcal{F},x}(\mathbf{u},1)$\;
            \If{$\mathbf{v}\neq \bot$}{
                $\mathbf{u}\gets \mathbf{v}$\;
                \Output{$\beta_{\mathcal{F},x}(\mathbf u)$}\;\label{line:Output2}
                \Continue\tcp*{restart main loop}
            }
            $i \gets 1$\;
        }

        $\mathbf{v}\gets \mathsf{Sibling}_{\mathcal{F},x}(\mathbf{u})$\tcp*{try to find a sibling}
        \If{$\mathbf{v}\neq \bot$}{
            $\mathbf{u} \gets \mathbf{v}$\;
            $i \gets 0$\;
            \Output{$\beta_{\mathcal{F},x}(\mathbf u)$}\;\label{line:Output3}
            \Continue
        }
        $\mathbf{u}\gets \mathsf{Parent}_{\mathcal{F},x}(\mathbf{u})$\tcp*{no sibling, go back to parent}
        \If{$\mathbf{u}=\bot$}{\Return}
    }
\end{algorithm}

\begin{lemma}\label{lem:polyDelayProof}
     \textsc{DFSEnum}$_{\mathcal F,x}$ is a polynomial-delay algorithm.
\end{lemma}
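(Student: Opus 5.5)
We bound the work done by \textsc{DFSEnum}$_{\mathcal F,x}$ between two consecutive \textbf{output} statements (and before the first and after the last one) by a polynomial in $|x|$. The cost of each basic step is already polynomial: $\mathsf{Dec}_f$ is polynomial-time because $f\in\#\mathrm{PE}$, $\beta_{\mathcal F,x}$ by \Cref{lem:betaRuntime}, and $\mathsf{Child}$, $\mathsf{Parent}$, $\mathsf{Sibling}$ by \Cref{lem:polyChild,lem:polyParent,lem:polySibling}. It therefore suffices to bound the number of main-loop iterations between outputs.

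First I would check how each iteration of the \textbf{while} loop ends. If it descends via $\mathsf{Child}$ or moves to a sibling, it outputs immediately. Otherwise it is an \emph{ascent}: it calls $\mathsf{Parent}$ with $i=1$ and emits nothing. During a run of consecutive ascents the current node moves strictly up the tree $\mathcal T_{\mathcal F,x}$. So a run of ascents between two outputs has length at most the depth of $\mathcal T_{\mathcal F,x}$, which is polynomial by \Cref{lem:depthT}. A run can also end with $\mathsf{Parent}$ returning $\bot$, in which case the algorithm terminates. Hence the delay is at most $(\text{depth}+1)\cdot\poly(|x|)$. The time before the first output is a single $\mathsf{Dec}_f$ call plus one $\beta$ evaluation.

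Second, for the algorithm to be an enumerator in the intended sense, I would argue that it is a standard iterative preorder DFS of the tree defined by $\mathsf{Child}$ (first child via $\mathsf{Child}(\cdot,1)$, next sibling via $\mathsf{Sibling}$, backtrack via $\mathsf{Parent}$). It therefore visits each node of $\mathcal T_{\mathcal F,x}$ exactly once and outputs $\beta_{\mathcal F,x}$ of it, matching the stated output specification. I would need two consistency facts. One is that $\mathsf{Parent}(\mathsf{Child}(\mathbf u,i))=\mathbf u$. The other is that the children of a node are distinct and are exactly the nodes whose $\mathsf{Parent}$ is that node, so that the structure is genuinely a tree and termination follows from its finiteness.

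The main obstacle is the first consistency fact. $\mathsf{Child}$ bumps a coordinate $p\ge r$, where $r$ is the rightmost nonzero coordinate. $\mathsf{Parent}$ decrements the rightmost nonzero coordinate, and after the bump that coordinate is exactly $p$. So the inverse property holds, including the root case $r=1$. Everything else is the routine delay accounting above.
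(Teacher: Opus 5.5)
Your proposal is correct and follows the same route as the paper: every oracle call is polynomial-time (\Cref{lem:polyChild,lem:polyParent,lem:polySibling}), and every non-outputting iteration is a step to the parent, so runs of such steps are bounded by the polynomial depth from \Cref{lem:depthT}. Your second part, verifying $\mathsf{Parent}(\mathsf{Child}(\mathbf u,i))=\mathbf u$ and the preorder-DFS structure, is not needed for the delay bound itself; the paper treats enumeration correctness separately in \Cref{lem:lex-strict,lem:noskip}. It does, however, usefully justify the step ``ascents move strictly up $\mathcal T_{\mathcal F,x}$'', which the paper simply asserts.
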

\begin{proof}
    This follows from every oracle call being polynomial time computable (\Cref{lem:polyChild,lem:polyParent,lem:polySibling}) combined with the fact that the tree has polynomial depth (\Cref{lem:depthT}).
Each iteration of the main while loop either outputs a new index or backtracks to a parent node, which can only happen $\operatorname{depth}\!\bigl(\mathcal{T}_{\mathcal{F},x}\bigr)$ times at most.
\end{proof}

We will now show that the number of indices output by \Cref{alg:dfs-enum} is $f(x)$ and that it never outputs the same index twice.
In order to do so we define the set of complete branch indices for which $t$ evaluates to $1$ on the corresponding instance.
Note that for the $t$ to evaluate to $1$ the branch index has to be complete.
We will sometimes refer to this set as the set of \emph{positive leaves}.
Formally we define

\begin{equation}\label{eq:setofPositiveBranches}
\mathcal{H}_{\mathcal{F},x} \coloneqq
\left\{
  \mathbf{v} \in \mathbb{N}_+^{\leq q(x)} \;\middle|\;
  t\bigl(h(x,\mathbf{v}) \bigr)=1
\right\}
\end{equation}

along with the lexicographically ordered version of the same set

\begin{equation}\label{eq:orderedSetofPositiveBranches}
\mathcal{H}_{\mathcal{F},x}^{\mathrm{lex}} \coloneqq \left( \mathbf{v}_1, \dots, \mathbf{v}_k \right)
\quad \text{where }
\left\{
\begin{array}{l}
\{ \mathbf{v}_1, \dots, \mathbf{v}_k \} = \mathcal{H}_{\mathcal{F},x} \\
\mathbf{v}_1 <_{\mathrm{lex}} \dots <_{\mathrm{lex}} \mathbf{v}_k
\end{array}
\right.
\end{equation}

\begin{lemma}\label{lem:completeIndices}
    $|\mathcal{H}_{\mathcal{F},x}|=f(x)$
\end{lemma}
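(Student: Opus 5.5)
I will prove $|\mathcal{H}_{\mathcal{F},x}|=f(x)$ by strong induction on the height of the recursion tree below an instance, that is, on the maximum length of a valid recursive path starting from it. By \Cref{cond:polyDepth} this height is at most $q(\cdot)$ and therefore finite. To make the induction go through I will prove the more general claim that, for every instance $y$ and every depth budget $D\ge$ (height of $y$),
\[
\bigl|\{\mathbf v\in\mathbb N_+^{\le D} : t(h(y,\mathbf v))=1\}\bigr| = f(y).
\]
Taking $y=x$ and $D=q(x)$ then gives the lemma. The key structural observation is that $t(h(y,\mathbf v))=1$ forces $h(y,\mathbf v)\neq\lambda$, because $t(\lambda)=0$. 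It also forces $r(h(y,\mathbf v))=0$, because $r>0$ implies $t=0$ by \Cref{cond:unweighted}. So every counted tuple is a complete valid branch index. Moreover, each coordinate satisfies $v_m\le r(h(y,v_1..v_{m-1}))$, since otherwise we hit $\lambda$, and $\lambda$ is absorbing because $h(\lambda,i)=\lambda$.

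\emph{Base case} ($r(y)=0$). The empty tuple contributes exactly $t(y)$. Any nonempty $\mathbf v$ has $v_1>r(y)=0$, so $h(y,\mathbf v)=\lambda$ and $t=0$. The count is therefore $t(y)$, which equals $f(y)$ by \Cref{cond:USR} with an empty sum.

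\emph{Inductive step} ($r(y)>0$). The empty tuple contributes $t(y)=0$ by \Cref{cond:unweighted}. Partition the remaining tuples by their first coordinate $i$:
\begin{itemize}
\item If $i>r(y)$, the tuples contribute nothing, since they pass through $\lambda$.
\item If $i\le r(y)$, the map $\mathbf v=i.\mathbf v'\mapsto \mathbf v'$ is a bijection onto tuples of length at most $D-1$ with $t(h(h(y,i),\mathbf v'))=1$. This uses the definitional identity $h(y,i.\mathbf v')=h(h(y,i),\mathbf v')$ from \eqref{eq:hdef}.
\end{itemize}
The height of $h(y,i)$ is strictly smaller than that of $y$ and is at most $D-1$, so the induction hypothesis gives a count of $f(h(y,i))$ for each such $i$. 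Summing over $i$ and applying \Cref{cond:USR} yields $f(y)$.

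The main subtlety I expect is the bookkeeping of the depth budget. $\mathcal H_{\mathcal F,x}$ restricts lengths to at most $q(x)$, not $q$ of the subinstance. I therefore need a budget that strictly dominates heights: the height of $x$ is at most $q(x)$ by \Cref{cond:polyDepth}, and the height decreases by one along each edge. This is exactly why I phrase the induction with a general budget $D$ rather than with $q(y)$. A second point to check is that tuples longer than the height cannot be positive. Once a leaf is reached, $r=0$ sends every further coordinate to $\lambda$, so $t=0$ there and no spurious positive tuples arise.
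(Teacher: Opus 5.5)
Your proof is correct and takes the same route as the paper's proof. Both induct over the recursion tree (you on height, the paper on the well-founded order induced by $h$), use a leaf base case giving $t(y)=f(y)$, and in the inductive step use $t(y)=0$ at internal nodes and partition the positive tuples by first coordinate to obtain $\sum_{i=1}^{r(y)} f(h(y,i))$. Your generalisation to an arbitrary depth budget $D$ at least the height of $y$, and your explicit disposal of first coordinates $i>r(y)$, add rigour the paper lacks: it silently identifies the length-$\le q(x)-1$ continuation sets with $\mathcal{H}_{\mathcal{F},h(x,i)}$ (defined with budget $q(h(x,i))$) and omits the $i>r(x)$ tuples from its union.
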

\begin{proof}
    We prove this via induction on the partial order over $\Sigma^{\ast}$ defined by $h$.

        \noindent\textbf{Base cases.} When $r(x)=0$, then \Cref{cond:noBranch} on uSR function definitions implies that $\forall i\in \mathbb{N}, \; h(x,i)=\lambda$.
        By the invalid-branch convention in \Cref{cond:noBranch}, any non-empty continuation from \(\lambda\) remains at \(\lambda\) and has leaf value \(0\).
        Therefore the only input for which $t(h(x,\mathbf{v}))$ will be non-zero when $r(x)=0$ is the empty sequence, i.e. $t(h(x,\emptyset))=t(x)$.
        If $t(x)=1$, then $f(x)=1$ and $\mathcal{H}_{\mathcal{F},x}=\{\emptyset\}$, if $t(x)=0$ then $f(x)=0$ and $\mathcal{H}_{\mathcal{F},x}=\emptyset$.
        In both cases $|\mathcal{H}_{\mathcal{F},x}|=f(x)$.

        \noindent\textbf{Inductive argument.} We assume the statement holds for all $h(x,i)$ with $i\in [r(x)]$.
        Since \(r(x)>0\), condition~\ref{cond:unweighted} gives \(t(x)=0\), so the empty tuple is not in \(\mathcal H_{\mathcal F,x}\).
        We have
        \begin{align*}
            |\mathcal{H}_{\mathcal{F},x}|&=\bigl|\bigl\{\mathbf{v}\mid t(h(x,\mathbf{v}))=1\bigr\}\bigr|\\
            &=\Bigl|\bigcup_{i=1}^{r(x)}\bigl\{i.\mathbf{v}'\mid t(h(h(x,i), \mathbf{v}'))=1\bigr\}\Bigr|\\
            &=\sum_{i=1}^{r(x)}\Bigl|\bigl\{i.\mathbf{v}'\mid t(h(h(x,i),\mathbf{v}'))=1\bigr\}\Bigr| && \text{($i\neq j\Rightarrow i.\mathbf{v}'\neq j.\mathbf{v'}$)}\\
            &=\sum_{i=1}^{r(x)}f(h(x,i)) && \text{(IH)}\\
            &=f(x)&& \text{(\ref{cond:USR}, \ref{cond:unweighted} in \Cref{def:uredp})}
        \end{align*}

\end{proof}

\begin{lemma}\label{lem:monotonicityB}
    Given displacement indices $\mathbf{u}$ and $\mathbf{u}'$, if $\beta_{\mathcal{F},x}(\mathbf{u})$ and $\beta_{\mathcal{F},x}(\mathbf{u'})$ are valid branch indices, then $\mathbf{u}\leq_{\mathrm{lex}}\mathbf{u}'\Rightarrow\beta_{\mathcal{F},x}(\mathbf{u})\leq_{\mathrm{lex}}\beta_{\mathcal{F},x}(\mathbf{u}')$.
\end{lemma}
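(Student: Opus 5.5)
We prove the claim by comparing the two walks that $\beta_{\mathcal F,x}$ performs on $\mathbf u$ and on $\mathbf u'$. If $\mathbf u=\mathbf u'$ there is nothing to show, so assume $\mathbf u<_{\mathrm{lex}}\mathbf u'$. Let $k$ be the first coordinate with $\mathbf u_k\neq\mathbf u'_k$, so $\mathbf u_j=\mathbf u'_j$ for $j<k$ and $\mathbf u_k<\mathbf u'_k$.

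\emph{Step 1: the walks agree on the common prefix.} In each iteration of the while loop in \Cref{alg:disToBranch}, the current depth is $i=|\mathbf v|+1$. The decision made there depends only on the current prefix $\mathbf v$, through $r(h(x,\mathbf v))$ and the feasible set $\mathcal C$, and on the single coordinate $\mathbf u_i$. An induction on the number of iterations therefore shows the following. As long as the depth stays below $k$, the two runs keep the same prefix $\mathbf v$, and they stop at the same moment if either stops.

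\emph{Step 2: a run that stops before depth $k$.} Suppose the common walk reaches a leaf before depth $k$. Then the final validity test in \Cref{alg:disToBranch} requires all unused coordinates to be zero, in particular $\mathbf u'_k=0$. This contradicts $\mathbf u'_k>\mathbf u_k\ge 0$. So the run on $\mathbf u'$ would hit the invalid fallback (line~\ref{line:Invalid1}), which the hypothesis excludes. Hence both walks are still active at depth $k$, sharing a common prefix $\mathbf v$ with $r(h(x,\mathbf v))>0$ and the same feasible set $\mathcal C$.

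\emph{Step 3: the divergence at depth $k$.} Because $\beta(\mathbf u')$ is valid, we must have $\mathbf u'_k<|\mathcal C|$. Otherwise the appended child would be $r(h(x,\mathbf v))+1$, giving $\lambda$, and since $h(\lambda,\cdot)=\lambda$ the walk would stay at $\lambda$ and end up invalid. The same argument applies to $\mathbf u$. So $\mathbf u$ appends $\mathcal C_{\mathbf u_k+1}$ and $\mathbf u'$ appends $\mathcal C_{\mathbf u'_k+1}$. The set $\mathcal C$ is listed in increasing order as a subset of $[r(h(x,\mathbf v))]$, so this ordering convention is the one point to state explicitly. With it, $\mathcal C_{\mathbf u_k+1}<\mathcal C_{\mathbf u'_k+1}$.

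\emph{Conclusion.} The two branch indices share the prefix $\mathbf v$ and then differ at position $|\mathbf v|+1$, where the first is strictly smaller. By the definition of lexicographic order, $\beta(\mathbf u)<_{\mathrm{lex}}\beta(\mathbf u')$, whatever comes afterwards.

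The main obstacle is the bookkeeping in Steps 1 and 2. One has to justify that the loop depth $i$ lines up with the coordinate index, and that validity rules out both early termination and the out-of-range branch. The rest is immediate.
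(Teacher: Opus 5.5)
Your proof is correct and takes the same route as the paper's. You take the first coordinate where $\mathbf u$ and $\mathbf u'$ differ, note that the two runs of \Cref{alg:disToBranch} coincide before it, and conclude from the increasing order of the feasible-children list $\mathcal C$ where they diverge. Your Steps~2 and~3 use the validity of $\beta_{\mathcal F,x}(\mathbf u')$ to rule out early termination (via the zero-suffix check, since $\mathbf u'_k\ge 1$) and the out-of-range dummy child, which makes explicit two points the paper's proof leaves implicit.
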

\begin{proof}
Let $i$ be the smallest position where $\mathbf u$ and $\mathbf u'$ differ:
\[
  i \;=\; \min\{j \mid \mathbf u_j \neq \mathbf u'_j\}.
\]
Because $\mathbf u<_{\mathrm{lex}}\mathbf u'$, we have $\mathbf u_i < \mathbf u'_i$ and
$\mathbf u_j = \mathbf u'_j$ for every $j<i$.

\Cref{alg:disToBranch} processes the displacement vector from left to right.
For every depth $d<i$, the vectors $\mathbf u$ and $\mathbf u'$ induce
identical control flow and produce the same (partial) branch index,
because their first $i-1$ entries coincide.

At depth $i$ the branch-selection step in \Cref{alg:disToBranch}
appends the $\mathbf{u_i}+1$-th admissible child for input $\mathbf u$
and the $\mathbf{u'_i+1}$-th admissible child for input $\mathbf u'$.
Since $\mathbf u_i<\mathbf u'_i$ and the list $\mathcal C$ of admissible children
is ordered increasingly, the child chosen when processing $\mathbf u$
appears strictly before the child chosen when processing $\mathbf u'$.
Consequently the branch index produced from $\mathbf u$
is lexicographically smaller than that produced from $\mathbf u'$,
regardless of the remaining suffix of either vector.

Formally,
\[
  \beta_{\mathcal F,x}(\mathbf u)
  \;=\;
  (v_1,\dots,v_{i-1},c_{\mathbf u_i},\dots)
  \;<_{\mathrm{lex}}\;
  (v_1,\dots,v_{i-1},c_{\mathbf u'_i},\dots)
  \;=\;
  \beta_{\mathcal F,x}(\mathbf u'),
\]
where $c_j$ denotes the $j$-th element of the admissible-children list
$\mathcal C$ at depth $i$.

\end{proof}

\begin{lemma}
    The outputs of \textsc{DFSEnum} belong to $\mathcal{H}_{\mathcal{F},x}$.
\end{lemma}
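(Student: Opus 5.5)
We must show that every output $\beta_{\mathcal F,x}(\mathbf u)$ of \textsc{DFSEnum} is a branch index $\mathbf v$ with $t(h(x,\mathbf v))=1$. The plan has two steps. First, every displacement index $\mathbf u$ visited by the enumerator is valid. Second, every valid displacement index is mapped by $\beta_{\mathcal F,x}$ to a complete branch index whose leaf is positive.

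\textbf{Validity of visited indices.} The outputs occur at \cref{line:Output1,line:Output2,line:Output3}. At \cref{line:Output1}, $\mathbf u$ is the all-zero root, reached only after $\mathsf{Dec}_f(x)=\textsc{true}$. I will show by induction on the walk that the root has a valid image. Since $f(x)>0$, at each depth the set $\mathcal C$ of feasible children of a node $y$ with $f(y)>0$ and $r(y)>0$ is nonempty. This holds because $f(y)=\sum_i f(h(y,i))$ by \cref{cond:USR,cond:unweighted}. Hence $\mathbf u_i=0<|\mathcal C|$ always selects a feasible child. The walk terminates at a leaf by \cref{cond:polyDepth}, every unused coordinate is zero, and the algorithm returns at \cref{line:Valid1} with $h(x,\mathbf v)\neq\lambda$. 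At \cref{line:Output2}, the index comes from $\mathsf{Child}$, which returns $\mathbf w$ only after checking $h(x,\beta_{\mathcal F,x}(\mathbf w))\neq\lambda$. At \cref{line:Output3}, the index comes from $\mathsf{Sibling}$, whose non-$\bot$ return value is itself a $\mathsf{Child}$ output. So every output is $\beta_{\mathcal F,x}(\mathbf u)$ for a valid $\mathbf u$.

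\textbf{Valid displacement indices give positive leaves.} Let $\mathbf u$ be valid, so $\beta$ exits through \cref{line:Valid1}. That exit is the only way to get $h(x,\mathbf v)\neq\lambda$: the fallback $(r(x)+1)$ maps to $\lambda$ by \cref{cond:noBranch}. The loop exits only when $r(h(x,\mathbf v))=0$, so $\mathbf v$ is complete. I claim the invariant $f(h(x,\mathbf v))>0$ holds throughout the loop.
\begin{itemize}[leftmargin=*]
\item It holds initially.
\item Each append via \cref{line:appendChild} picks $j\in\mathcal C$, so $\mathsf{Dec}_f(h(x,\mathbf v.j))$ is true.
\item The other branch appends $r(\cdot)+1$, making $h(x,\mathbf v)=\lambda$. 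After that the walk stays at $\lambda$ and stops, since $r(\lambda)=0$, and so cannot reach \cref{line:Valid1}.
\end{itemize}
Thus at exit $y=h(x,\mathbf v)$ satisfies $r(y)=0$ and $f(y)>0$. By \cref{cond:USR} with an empty sum, $f(y)=t(y)$, so $t(y)=1$. Also $|\mathbf v|\le q(x)$ by \cref{cond:polyDepth}, so $\mathbf v\in\mathcal H_{\mathcal F,x}$.

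\textbf{Main obstacle.} The delicate point is the case split inside $\beta$: one must show that the fallback and overflow branches can never yield a non-$\lambda$ instance. This uses $h(\lambda,i)=\lambda$ and $r(\lambda)=0$. It also requires confirming that the non-$\bot$ results of $\mathsf{Child}$ and $\mathsf{Sibling}$ are exactly those that passed the validity test. Everything else is direct bookkeeping from the uSR axioms.
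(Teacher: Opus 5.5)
Your proposal is correct and follows the same route as the paper. Every emitted displacement index is valid because of the root $\mathsf{Dec}_f$ test or the validity check inside $\mathsf{Child}$ (which $\mathsf{Sibling}$ inherits), and $\beta_{\mathcal F,x}$ then maps a valid index to a complete branch index ending at a positive leaf; your invariant $f(h(x,\mathbf v))>0$ along the walk of $\beta_{\mathcal F,x}$, combined with $f(y)=t(y)$ when $r(y)=0$, justifies the positive-leaf step that the paper's proof only asserts (the invariant's base case uses $f(x)>0$, which the guard of \textsc{DFSEnum} supplies).
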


\begin{proof}
Each output of \textsc{DFSEnum} is of the form $\beta_{\mathcal{F},x}(\mathbf{u})$, where $\mathbf{u}$ is a displacement index in the tree $\mathcal{T}_{\mathcal{F},x}$.

For every displacement index \(\mathbf u\) actually output by
\textsc{DFSEnum}, validity is guaranteed either by the initial feasibility test
at the root or by the validity test inside \(\mathsf{Child}\).
The suffix check in \(\beta_{\mathcal F,x}\) ensures that coordinates not
consumed by the root-to-leaf walk are zero, so two different accepted
displacement indices cannot encode the same completed branch.
Thus the returned branch index \(\mathbf v=\beta_{\mathcal F,x}(\mathbf u)\)
satisfies \(h(x,\mathbf v)\neq\lambda\) and
\(r(h(x,\mathbf v))=0\).

Moreover, \textsc{DFSEnum} outputs only displacement indices whose corresponding branch index reaches a positive leaf:
$t(h(x,\beta_{\mathcal{F},x}(\mathbf{u}))) = 1$ for every emitted \(\mathbf u\).
Hence every output $\mathbf{v}$ satisfies:
\[
  h(x,\mathbf{v}) \neq \lambda, \quad r(h(x,\mathbf{v})) = 0, \quad \text{and} \quad t(h(x,\mathbf{v})) = 1.
\]

This precisely characterises membership in $\mathcal{H}_{\mathcal{F},x}$ as defined in \Cref{eq:setofPositiveBranches}.
\end{proof}

\begin{lemma}\label{lem:lex-strict}
The output sequence of \textsc{DFSEnum} is strictly increasing in lexicographic order.
\end{lemma}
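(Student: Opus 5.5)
Since \Cref{lem:monotonicityB} already shows that $\beta_{\mathcal F,x}$ is monotone on valid displacement indices, it suffices to prove that the sequence of displacement indices visited by \textsc{DFSEnum} is strictly increasing in $<_{\mathrm{lex}}$. Strictness of the branch indices then follows from injectivity of $\beta_{\mathcal F,x}$ on valid inputs: the suffix-zero check forces the consumed prefix of $\mathbf u$ to be determined by the returned branch index, so distinct valid $\mathbf u$ give distinct $\beta_{\mathcal F,x}(\mathbf u)$. Together with monotonicity, this upgrades $\le_{\mathrm{lex}}$ to $<_{\mathrm{lex}}$.

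For the displacement indices, I would isolate an invariant of the tree $\mathcal T_{\mathcal F,x}$. Every node in the subtree rooted at $\mathbf u$ is obtained from $\mathbf u$ by repeated $\mathsf{Child}$ calls, and each call only \emph{increases} some coordinate at position $\ge$ the rightmost nonzero position of the current vector. Hence every descendant $\mathbf w$ of $\mathbf u$ agrees with $\mathbf u$ on all coordinates left of $\mathbf u$'s rightmost nonzero entry and satisfies $\mathbf u<_{\mathrm{lex}}\mathbf w$.

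Next I would compare siblings. The $i$-th child returned by $\mathsf{Child}(\mathbf p,i)$ bumps position $p$, and positions are scanned from right to left. The $(i{+}1)$-st child therefore bumps a strictly smaller position $p'<p$, so it is lexicographically larger than the $i$-th child and than all of that child's descendants. Those descendants agree with the $i$-th child up to position $p$, and in particular have coordinate $\mathbf p_{p'}$ at $p'$, while the later sibling has $\mathbf p_{p'}+1$ there.

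With these two facts, the DFS order of \textsc{DFSEnum} is a preorder traversal (node, then children in $\mathsf{Child}$ rank order, then the next sibling after the subtree is exhausted). It is then strictly increasing by a standard induction: each output is either the first child of the previous output (larger by the first fact), or the next sibling of some ancestor (larger than everything in the finished subtree by the second fact).

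I expect the sibling comparison to be the main obstacle. It requires checking carefully that the right-to-left scan in $\mathsf{Child}$ really enumerates children in increasing lex order and that no descendant of an earlier sibling can overtake a later one. This rests on the observation that coordinates left of the bumped position are never changed within a subtree.
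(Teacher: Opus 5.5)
Your proof is correct and follows essentially the paper's route: you compare each output with the next through the child step (descendants are lexicographically larger) and the sibling/back-track step (a later sibling exceeds the whole subtree of the earlier one), then transfer the order to branch indices via \Cref{lem:monotonicityB}. Your subtree-dominance argument and your explicit use of injectivity of $\beta_{\mathcal F,x}$ on valid indices make rigorous two points the paper's proof only asserts: the back-track case, and the upgrade from $\le_{\mathrm{lex}}$ to $<_{\mathrm{lex}}$.
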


\begin{proof}
Let $\mathbf v^{(k)}=\beta_{\mathcal F,x}(\mathbf u^{(k)})$ denote the
$k$-th output and its displacement index.
Immediately after outputting $\mathbf u^{(k)}$, the DFS performs exactly one
of the following moves:

\begin{enumerate}[(i)]
  \item \emph{Child step:} increment a coordinate of $\mathbf u^{(k)}$ (non-strictly) to the right of the rightmost non-zero coordinate;
  \item \emph{Sibling step:} delete the right-most non-zero coordinate of
        $\mathbf u^{(k)}$ but increment a coordinate to its left;
  \item \emph{Back-track step:} Does not output, deletes rightmost coordinate but guarantees that next output if it exists is from a sibling call that increments coordinate to the left of the rightmost coordinate.
\end{enumerate}

The first two moves produce a displacement index that is lexicographically larger than
$\mathbf u^{(k)}$ and the third guarantees that after a sequence of backtracking the next move produces a displacement index that is lexicographically larger than
$\mathbf u^{(k)}$
This implies that
$\mathbf u^{(k)}<_{\mathrm{lex}}\mathbf u^{(k+1)}$.
\Cref{lem:monotonicityB} transfers this strict increase from displacement
indices to the branch indices themselves, yielding
$\mathbf v^{(k)}<_{\mathrm{lex}}\mathbf v^{(k+1)}$.
\end{proof}

\begin{lemma}\label{lem:noskip}
    All elements of $\mathcal{H}_{\mathcal{F},x}$ are output by \textsc{DFSEnum}$_{\mathcal F,x}$.
\end{lemma}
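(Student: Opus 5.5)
We will show surjectivity by constructing, for each positive leaf, a displacement index that lies in $\mathcal T_{\mathcal F,x}$, and then arguing that \textsc{DFSEnum} visits every node of $\mathcal T_{\mathcal F,x}$.

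\textbf{Step 1: the inverse map.} Fix $\mathbf v=(v_1,\dots,v_d)\in\mathcal H_{\mathcal F,x}$. For each $i\le d$, set $\mathbf u_i$ to the zero-based rank of $v_i$ among the feasible children
\[
\mathcal C_i=\{j\in[r(h(x,(v_1,\dots,v_{i-1})))]:\mathsf{Dec}_f(h(x,(v_1,\dots,v_{i-1}).j))\}.
\]
This rank is well defined because $h(x,(v_1,\dots,v_i))$ lies on a path to a positive leaf. We pad with zeros up to length $q(x)$, which is possible since $d\le q(x)$ by polynomial depth. By induction on $i$, \Cref{alg:disToBranch} on input $\mathbf u$ reproduces the prefix $(v_1,\dots,v_i)$ at every step. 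The walk stops exactly at depth $d$, because $r(h(x,\mathbf v))=0$. The suffix is zero and $h(x,\mathbf v)\neq\lambda$, so $\beta_{\mathcal F,x}(\mathbf u)=\mathbf v$ and $\mathbf u$ is a valid displacement index.

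\textbf{Step 2: $\mathbf u$ is a node of $\mathcal T_{\mathcal F,x}$.} We need a chain from the root $(0,\dots,0)$ to $\mathbf u$ in which each step is a $\mathsf{Child}$ move. By \Cref{alg:parent}, $\mathsf{Parent}$ decrements the rightmost nonzero coordinate. So we repeatedly decrement the rightmost nonzero coordinate of $\mathbf u$ and must check two things at each intermediate vector $\mathbf w$:
\begin{enumerate}[(a)]
\item $\mathbf w$ is valid;
\item $\mathbf w$ is returned by $\mathsf{Child}(\mathsf{Parent}(\mathbf w),i)$ for some $i$.
\end{enumerate}
For (a), decrementing $\mathbf u_p$ at position $p$ selects an earlier feasible child at depth $p$, with zeros afterwards. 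Zeros mean ``first feasible child'' at each later level. Since every chosen node is feasible, each has at least one feasible child until a leaf is reached. Feasibility at the leaf means $t=1$, so the walk is valid, provided it stops before consuming nonzero later coordinates. Those coordinates are all zero after the rightmost nonzero position. The same argument shows that intermediate vectors with a nonzero coordinate at depth beyond the walk's length cannot occur along this chain. This needs care: a walk through an earlier child may be shorter than the walk for $\mathbf u$. But the coordinates to the right of $p$ are zero in $\mathbf w$, so no leftover nonzero coordinate remains.

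For (b), $\mathsf{Child}$ scans positions $p'\ge r$ (the rightmost nonzero of the parent) from right to left, bumping each by one. Our $\mathbf w$ arises from its parent by bumping the position $p\ge r$, which the scan tries, and $\mathbf w$ is valid by (a). So $\mathbf w$ is counted among the children for some rank $i$.

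\textbf{Step 3: DFS completeness.} We argue by induction on the depth in $\mathcal T_{\mathcal F,x}$ (finite by \Cref{lem:depthT}) that \textsc{DFSEnum} outputs every node. It first descends via $\mathsf{Child}(\cdot,1)$ and then moves through $\mathsf{Sibling}$. Since $\mathsf{Sibling}$ returns $\mathsf{Child}(\mathbf p,i+1)$ after locating $\mathbf u$ as the $i$-th child, every child rank of every visited node is eventually reached. Backtracking via $\mathsf{Parent}$ only occurs after all siblings are exhausted. A subtle point is that $\mathsf{Parent}(\mathsf{Child}(\mathbf p,i))=\mathbf p$. This holds because the bumped position is at or after the rightmost nonzero coordinate of $\mathbf p$, so the tree is consistent. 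The DFS therefore outputs $\beta(\mathbf u)=\mathbf v$.

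The main obstacle is Step 2(a), showing that every intermediate vector on the parent chain is valid. Decrementing a coordinate changes which child is chosen and hence the shape of the walk, and the suffix-zero check in \Cref{alg:disToBranch} must still pass. I would handle this by the observation above: the prefix up to $p$ is feasible and all coordinates after $p$ are zero, so the walk greedily follows first feasible children to a positive leaf.
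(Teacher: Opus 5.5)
Your proof is correct, but it follows a different route from the paper's. The paper argues by minimal counterexample. It takes the lexicographically smallest $\mathbf v^\ast\in\mathcal H_{\mathcal F,x}$ that is never output and the last output $\mathbf v<_{\mathrm{lex}}\mathbf v^\ast$. A descend/sibling/backtrack case split then claims that the next move of \textsc{DFSEnum} must produce the global lexicographic successor, namely $\mathbf v^\ast$. So the paper derives completeness from the ordering behaviour of the traversal, the same picture as \Cref{lem:lex-strict}. That case analysis is informal: it reasons as though the displacement-index tree and the recursion tree coincide (e.g.\ a leaf $\mathbf v$ ``having $\mathbf v^\ast$ below it''), and its backtrack description does not literally match the $\mathsf{Parent}$ oracle.

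You instead separate completeness from order, in three steps:
\begin{enumerate}
\item You construct the inverse of $\beta_{\mathcal F,x}$ explicitly.
\item You show that validity is preserved by ``decrement the rightmost nonzero coordinate''. Hence every valid displacement index lies in $\mathcal T_{\mathcal F,x}$ and is a $\mathsf{Child}$ of its $\mathsf{Parent}$.
\item You appeal to the generic fact that a first-child/next-sibling/parent traversal of a finite, consistently navigated tree visits every node.
\end{enumerate}
This is more robust. It also gives, as a by-product, the bijection between nodes of $\mathcal T_{\mathcal F,x}$ and $\mathcal H_{\mathcal F,x}$ that the paper later needs in \Cref{sec:Ambainis} ($|\mathcal T_{\mathcal F,x}|=f(x)$). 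What it does not give by itself is the lexicographic order of the outputs; for \Cref{lem:treeAlgoCorr} you would still take that from \Cref{lem:lex-strict}.

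One tightening for Step~3. The natural induction is on subtree height, not depth. State it as: once \textsc{DFSEnum} enters $\mathbf u$ with the descend flag $i=0$, it outputs every node of the subtree of $\mathbf u$, and then reaches the state with current node $\mathbf u$ and $i=1$, about to call $\mathsf{Sibling}(\mathbf u)$. Induction on depth alone does not directly show that the traversal comes back from child $i$'s subtree to take child $i+1$. With the height invariant, ``every child of a visited node is visited'' follows immediately.
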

\begin{proof}
Assume some $\mathbf v^{\ast}\in\mathcal H_{\mathcal F,x}$ is never produced.
Pick $\mathbf v^{\ast}$ to be the \emph{lexicographically smallest} such
missing element and let
\(
  \mathbf v
\)
be the last element that \emph{is} output with
$\mathbf v<_{\mathrm{lex}}\mathbf v^{\ast}$
(the strict lexicographic order of previous outputs ensures that \(\mathbf v\) exists and is unique).

Right after printing $\mathbf v$, the algorithm examines the displacement
index $\mathbf u$ that maps to $\mathbf v$ and performs exactly one of the
following moves:

\begin{enumerate}[(i)]
  \item \emph{Descend.}\;
        If $\mathbf v^{\ast}$ has $\mathbf v$ as a strict prefix, then the
        subtree rooted at the first child of $\mathbf v$ contains
        $\mathbf v^{\ast}$ and, by minimality, no smaller positive leaf.
        The first branch index discovered there is therefore
        $\mathbf v^{\ast}$.

  \item \emph{Sibling.}\;
        If $\mathbf v^{\ast}$ shares the parent of $\mathbf v$ but lies to
        its right, the sibling routine increments the last coordinate of
        $\mathbf u$.  The first sibling whose branch index is positive is
        exactly $\mathbf v^{\ast}$.

  \item \emph{Back-track.}\;
        Otherwise $\mathbf v^{\ast}$ is outside the current sibling block.
        Back-tracking deletes a suffix of zeros from $\mathbf u$ and then
        increments the first remaining coordinate, yielding the global
        lexicographic successor of $\mathbf v$, which must be
        $\mathbf v^{\ast}$.
\end{enumerate}

In every exhaustive case the algorithm's next output would be
$\mathbf v^{\ast}$, contradicting the assumption that it is skipped.
Hence no element of $\mathcal H_{\mathcal F,x}$ is omitted.
\end{proof}

\begin{lemma}\label{lem:treeAlgoCorr}
\textsc{DFSEnum} outputs the elements of \(\mathcal{H}_{\mathcal{F},x}\) one by one in lexicographic order; that is, its output sequence is exactly \(\mathcal{H}_{\mathcal{F},x}^{\mathrm{lex}}\).
\end{lemma}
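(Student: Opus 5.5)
This follows by combining the three preceding lemmas; no new combinatorial argument is needed beyond a short set-theoretic bookkeeping step. Let $(\mathbf w^{(1)},\mathbf w^{(2)},\dots)$ be the sequence of branch indices output by \textsc{DFSEnum}$_{\mathcal F,x}$. The first step is membership: by the lemma stating that the outputs of \textsc{DFSEnum} belong to $\mathcal H_{\mathcal F,x}$, every $\mathbf w^{(k)}$ lies in $\mathcal H_{\mathcal F,x}$. This set is finite, of size $f(x)$, by \Cref{lem:completeIndices}. Since \Cref{lem:polyDelayProof} gives polynomial delay and each non-outputting iteration is a backtrack (at most the depth of $\mathcal T_{\mathcal F,x}$ many before the next output or termination), the output sequence is finite and the algorithm halts.

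The second step uses \Cref{lem:lex-strict}. The outputs satisfy $\mathbf w^{(1)}<_{\mathrm{lex}}\mathbf w^{(2)}<_{\mathrm{lex}}\cdots$. In particular they are pairwise distinct, so the map $k\mapsto\mathbf w^{(k)}$ is an injection into $\mathcal H_{\mathcal F,x}$. The third step invokes \Cref{lem:noskip}: every element of $\mathcal H_{\mathcal F,x}$ appears among the outputs, so this map is also surjective. Hence the multiset of outputs equals $\mathcal H_{\mathcal F,x}$ exactly, each element appearing once, and the number of outputs is $|\mathcal H_{\mathcal F,x}|=f(x)$.

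Finally, a finite strictly increasing enumeration of a finite totally ordered set is unique. Because $<_{\mathrm{lex}}$ is a strict total order on tuples, there is only one strictly increasing listing of $\mathcal H_{\mathcal F,x}$, namely $\mathcal H^{\mathrm{lex}}_{\mathcal F,x}=(\mathbf v_1,\dots,\mathbf v_k)$ from \eqref{eq:orderedSetofPositiveBranches}. This gives $\mathbf w^{(j)}=\mathbf v_j$ for all $j$, which is the statement.

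The main obstacle is making sure the cited ingredients genuinely apply to the sequence as a whole, in particular that the emptiness case is handled. If $\mathsf{Dec}_f(x)$ is false, then $f(x)=0$, so $\mathcal H_{\mathcal F,x}=\emptyset$ and the empty output sequence matches. Otherwise the root output on line~\ref{line:Output1} is the lexicographically smallest positive leaf. I would check this last fact explicitly, since \Cref{lem:noskip} takes the existence of a previous output for granted. The check is direct: $\beta_{\mathcal F,x}(0,\dots,0)$ always takes the first feasible child at every depth. That path is the lexicographically least one reaching a positive leaf, because any positive leaf must pass through feasible children only.
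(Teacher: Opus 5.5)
Your argument is correct and is essentially the paper's route. The paper's proof just says the lemma is a straightforward consequence of \Cref{lem:lex-strict,lem:noskip}, and you spell out the bookkeeping it leaves implicit: outputs lie in $\mathcal H_{\mathcal F,x}$, strict increase gives injectivity, no-skip gives surjectivity, and a finite strictly increasing listing is unique. Your extra check that the first output $\beta_{\mathcal F,x}(0,\ldots,0)$ is the lexicographically least positive leaf is a worthwhile addition, since the proof of \Cref{lem:noskip} assumes a preceding output exists and so does not cover the minimum of $\mathcal H_{\mathcal F,x}$.
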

\begin{proof}
    This is a straightforward consequence of \Cref{lem:lex-strict,lem:noskip}.
\end{proof}
\begin{theorem}[Correctness, completeness, and complexity]\label{thm:dfs-correct}
For every input \(x \in \Sigma^{\ast}\) with uSR definition \(\mathcal{F}\),

\begin{enumerate}
  \item \textsc{DFSEnum} outputs exactly \(f(x)\) times;
  \item its \(i\)-th output is the \(i\)-th element of
        \(\mathcal{H}_{\mathcal{F},x}^{\mathrm{lex}}\), i.e.\
        \[
           \bigl(\text{output sequence of \textsc{DFSEnum}}\bigr)
           \;=\;
           \mathcal{H}_{\mathcal{F},x}^{\mathrm{lex}};
        \]
  \item hence every complete branch index \(\mathbf{v}\) with
        \(t\!\bigl(h(x,\mathbf{v})\bigr)=1\) appears once and only once, and the
        outputs are strictly increasing in lexicographic order.
    \item the time between two successive outputs is bounded by
      \(\operatorname{poly}(|x|)\).
      The total runtime of the algorithm therefore is \(\mathcal{O}(\operatorname{poly}(|x|) \cdot |\mathcal{H}_{\mathcal{F},x}|)\) or equivalently  \(\mathcal{O}(\operatorname{poly}(|x|) \cdot f(x)) \).

\end{enumerate}
\end{theorem}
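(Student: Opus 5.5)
The theorem is essentially an assembly of the lemmas already established in this section, so the plan is to derive each of the four items from them in turn, taking care that the ordering and counting statements mesh. We begin with item~2, since it is the structural core. By \Cref{lem:treeAlgoCorr}, the output sequence of \textsc{DFSEnum} is exactly $\mathcal{H}_{\mathcal{F},x}^{\mathrm{lex}}$. That lemma in turn combines three facts: every output lies in $\mathcal{H}_{\mathcal{F},x}$ (the membership lemma preceding \Cref{lem:lex-strict}); the outputs are strictly lex-increasing (\Cref{lem:lex-strict}); and nothing is skipped (\Cref{lem:noskip}). A strictly increasing sequence whose elements are precisely the elements of a finite totally ordered set must list that set in order. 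Hence the $i$-th output is the $i$-th element of $\mathcal{H}_{\mathcal{F},x}^{\mathrm{lex}}$.

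Items~1 and~3 then follow directly. The number of outputs equals $|\mathcal{H}_{\mathcal{F},x}^{\mathrm{lex}}| = |\mathcal{H}_{\mathcal{F},x}|$, and this equals $f(x)$ by \Cref{lem:completeIndices}. Strict monotonicity gives the ``once and only once'' clause. We will also note that membership in $\mathcal{H}_{\mathcal{F},x}$ forces completeness: by condition~\ref{cond:unweighted}, $t=1$ implies $r=0$, and $t(\lambda)=0$ implies validity. So $\mathcal{H}_{\mathcal{F},x}$ coincides with the set of complete branch indices having $t=1$, as stated. The degenerate case $\mathsf{Dec}_f(x)=\textsc{false}$ is handled separately: nothing is output and $f(x)=0$, consistent with all three items.

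For item~4, the delay bound is \Cref{lem:polyDelayProof}. Between two outputs, each loop iteration either outputs a node or performs a Parent move. Consecutive Parent moves strictly reduce depth in $\mathcal{T}_{\mathcal{F},x}$, so there are at most $\operatorname{depth}(\mathcal{T}_{\mathcal{F},x})=\poly(|x|)$ of them in a row (\Cref{lem:depthT}). Each oracle call and each evaluation of $\beta_{\mathcal{F},x}$ costs $\poly(|x|)$ by \Cref{lem:betaRuntime,lem:polyChild,lem:polyParent,lem:polySibling}. The same argument bounds the time after the last output before the final return, and the preprocessing step (one $\mathsf{Dec}_f$ call). Multiplying the delay by the $f(x)$ outputs, and adding one extra delay term, gives $O(\poly(|x|)\cdot(f(x)+1))$. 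This is the stated bound, reading $f(x)$ as $\max(f(x),1)$ in the $O(\cdot)$.

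The main obstacle is a subtle mismatch inside item~2. \textsc{DFSEnum} outputs one branch index per node of $\mathcal{T}_{\mathcal{F},x}$, including internal nodes, while the counting claim requires a bijection between tree nodes and positive leaves. I would make this explicit by showing that each valid displacement index maps under $\beta_{\mathcal{F},x}$ to a complete positive branch index; this uses the suffix-zero check and the fact that $\beta$ descends only through feasible children, so it reaches a leaf with $f=t=1$. Injectivity then comes from strict monotonicity via \Cref{lem:monotonicityB}, and surjectivity from \Cref{lem:noskip}. I would verify injectivity carefully before relying on \Cref{lem:lex-strict}, since the lemma only transfers a non-strict inequality. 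Strictness comes from the observation that distinct valid displacement indices diverge at a consumed coordinate, where they choose different feasible children.
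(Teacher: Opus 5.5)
Your proposal is correct and takes the same route as the paper, whose proof simply assembles \Cref{lem:completeIndices,lem:lex-strict,lem:noskip,lem:treeAlgoCorr,lem:polyDelayProof}. Your additions only make explicit points the paper leaves implicit: invoking the membership lemma, strengthening \Cref{lem:monotonicityB} to strict monotonicity via the suffix-zero check, and writing the total time as $O(\operatorname{poly}(|x|)\cdot(f(x)+1))$ so that the case $f(x)=0$ is covered.
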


\begin{proof}
Items (1)--(4) follow from
\Cref{lem:completeIndices,lem:lex-strict,lem:noskip,lem:treeAlgoCorr,lem:polyDelayProof}.

\end{proof}

\subsection{The Sample Oracle}\label{sec:Sampling}

Stage~2 estimates $f(x)$ via Bernoulli trials.
We build a sampler $\sigma_{\mathcal F^b,x}$ that maps a uniformly random
$m\in[B]$ to a leaf so that
$\Pr[t(h(x,\sigma_{\mathcal F^b,x}(m)))=1]=f(x)/B$.
Conceptually, $\sigma$ implements the standard ``unranking'' procedure using the bound mass $b$:
it treats each subtree $h(x,i)$ as owning a block of size $b(h(x,i))$ and descends into the block
containing $m$.
You can think of \(b(x)\) as assigning \(b(\cdot)\) many abstract ``tickets'' to each subtree,
so that sampling \(m\in[b(x)]\) uniformly corresponds to choosing a ticket uniformly.
The oracle \(\sigma\) implements unranking with respect to these ticket blocks.
A draw succeeds exactly when the reached leaf is positive, so the success probability is
the fraction of tickets associated with positive leaves.

\begin{algorithm}[H]
    \DontPrintSemicolon
    \caption{$\sigma_{\mathcal{F}^b,x}(m)$ - Sample oracle}
    \label{alg:sampleOracle}
    \KwIn{$m\in [b(x)]$}
    \KwOut{Branch index $\mathbf{v}$}
    \SetKw{Return}{return}
    $\mathbf{v} \gets \emptyset$\;
    $m'\gets m$\;
    \While{$r(h(x,\mathbf{v}))>0$}{\label{line:sigmaOuterWhile}
        $i \gets 1$\;
        \While{$i \le r(h(x,\mathbf{v})) \textbf{ and } m'>b(h(x,\mathbf{v}.i))$}{\label{line:sigmainnerwhileloop}
            $m' \gets m'-b(h(x,\mathbf{v}.i))$\;
            $i\gets i+1$\;
            \If{$i=r(h(x,\mathbf{v}))+1$}{
                \Return $\mathbf{v}.(r(h(x,\mathbf{v}))+1)$
            }
        }
        $\mathbf{v} \gets \mathbf{v}.i$\;\label{line:appendv}
    }
    \Return $\mathbf{v}$
\end{algorithm}

\begin{lemma}\label{lem:sampleRuntime}
    $\sigma_{\mathcal{F}^b,x}$ runs in polynomial time.
\end{lemma}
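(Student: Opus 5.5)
The plan is to bound separately the number of iterations of the outer loop, the number of iterations of the inner loop, and the cost of each primitive operation.

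\emph{Outer loop.} Each iteration of the outer \textbf{while} loop (line~\ref{line:sigmaOuterWhile}) ends in one of two ways. Either it returns, or it appends exactly one coordinate to $\mathbf v$ at line~\ref{line:appendv}. When a coordinate $i$ is appended, the inner loop exited with $i\le r(h(x,\mathbf v))$, so the new $\mathbf v.i$ is a valid step of a recursive path from $x$. Consequently $\mathbf v$ always names a valid branch index along a valid recursive path. By the polynomial-depth condition (\Cref{cond:polyDepth}), $r(h(x,\mathbf v))=0$ once $|\mathbf v|\ge q(x)$. Hence the outer loop runs at most $q(x)+1\le(|x|+1)^c+1$ times.

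\emph{Inner loop.} In each outer iteration, the inner loop (line~\ref{line:sigmainnerwhileloop}) increments $i$ at every step and stops once $i>r(h(x,\mathbf v))$. It therefore runs at most $r(h(x,\mathbf v))+1$ times. The instance $h(x,\mathbf v)$ lies on a valid recursive path from $x$, so by instance-size preservation (\Cref{cond:InstanceSize}) its length is $\poly(|x|)$. The polynomial bound on $r$ (\Cref{def:qrfunc}) then gives $r(h(x,\mathbf v))\le\poly(|x|)$.

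\emph{Cost per step.} Each step evaluates $h$, $r$, and $b$ on instances of length $\poly(|x|)$. To compute $h(x,\mathbf v)$ and $h(x,\mathbf v.i)$ efficiently, I would keep the current instance $y=h(x,\mathbf v)$ cached and compute $h(y,i)$, rather than recomposing from $x$ each time. Even recomputing from scratch costs only $|\mathbf v|\le q(x)$ polynomial-time applications, each on a polynomially bounded instance. The maps $h$, $r$, and $b$ are polynomial-time computable, with $b$ polynomial-time by \Cref{def:bounding}.

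\emph{Arithmetic.} This is the main subtlety: $m'$ and the values $b(\cdot)$ may be exponentially large numbers. However, $b$ is polynomial-time computable, so its binary output has $\poly(|x|)$ bits, and $m'\le m\le b(x)$ also has polynomially many bits. Each comparison and subtraction therefore takes $\poly(|x|)$ time.

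\emph{Conclusion.} Multiplying the polynomially many outer iterations by the polynomially many inner iterations, and each by the polynomial per-step cost, shows that $\sigma_{\mathcal F^b,x}$ runs in $\poly(|x|)$ time.
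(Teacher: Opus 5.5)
Your proof is correct and follows the same route as the paper. The inner loop is bounded by the fan-out $r(h(x,\mathbf v))$, which is polynomial by the bound on $r$ together with instance-size preservation, and the outer loop is bounded by the polynomial-depth condition. You also make explicit two points the paper leaves implicit: the appended index always satisfies $i\le r(h(x,\mathbf v))$, so $\mathbf v$ stays on a valid recursive path where those conditions apply; and the arithmetic on $m'$ and the values $b(\cdot)$ uses only polynomially many bits. These tighten the paper's argument rather than changing it.
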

\begin{proof}
    The inner while loop (\Cref{line:sigmainnerwhileloop}) increments $i$ by $1$ and terminates if $i$ becomes larger than $r(h(x,\mathbf{v}))\in \operatorname{poly}(|x|)$ (\Cref{def:qrfunc} and \Cref{cond:InstanceSize}).
    The outer while loop appends an index to $\mathbf{v}$ and terminates when $r(h(x,\mathbf{v}))=0$, which must be the case when $|\mathbf v|>q(x)$ (\Cref{cond:polyDepth}).
\end{proof}

\begin{remark}[Slack tickets and the dummy child]
Recursion compatibility allows slack:
$s(y) := b(y) - \sum_{i=1}^{r(y)} b(h(y,i)) \ge 0$.
\Cref{alg:sampleOracle} implicitly treats this slack as a block of ``dummy tickets'': whenever the running
index $m'$ exceeds the cumulative child masses, the sampler returns the dummy child
$r(y)+1$, which (by convention) routes to $\lambda$ and hence contributes a failure
($t(\lambda)=0$).
This makes the success probability exactly $f(x)/b(x)$ without requiring equality in the
bound recurrence.
\end{remark}

\begin{lemma}[Ticket count]\label{lem:sigma-ticket-count}
For every instance \(y\), define
\[
  A(y):=
  \{\,m\in \mathbb N:1\le m\le b(y)
      \text{ and }
      t(h(y,\sigma_{\mathcal F^b,y}(m)))=1\,\}.
\]
Then \(|A(y)|=f(y)\).
\end{lemma}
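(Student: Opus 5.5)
We argue by well-founded induction on the recursion tree rooted at \(y\), using the same partial order induced by \(h\) as in \Cref{lem:completeIndices}. This order is well founded by the polynomial-depth condition~\ref{cond:polyDepth}. The key structural fact is that \(\sigma_{\mathcal F^b,y}(m)\) decomposes as a first step followed by the sampler on a child. If the first-level loop selects child \(i\) with residual index \(m'\), then the remaining descent is exactly the run of \(\sigma_{\mathcal F^b,h(y,i)}(m')\). Hence \(\sigma_{\mathcal F^b,y}(m)=i.\sigma_{\mathcal F^b,h(y,i)}(m')\), and \(h(y,\sigma_{\mathcal F^b,y}(m))=h(h(y,i),\sigma_{\mathcal F^b,h(y,i)}(m'))\). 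We will verify this first, by comparing \Cref{alg:sampleOracle} on \(y\) after one outer iteration with its run on \(h(y,i)\) started fresh. The only thing that changes is the prefix \(i\) on \(\mathbf v\).

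\emph{Base case} (\(r(y)=0\)). The outer loop does not execute, so \(\sigma_{\mathcal F^b,y}(m)=\emptyset\) for every \(m\), and \(t(h(y,\emptyset))=t(y)\). Thus \(|A(y)|=b(y)\) if \(t(y)=1\), and \(|A(y)|=0\) otherwise. By \eqref{eq:smallbound} and \eqref{eq:leafbound}, \(t(y)=1\) forces \(b(y)=1\), so in both cases \(|A(y)|=t(y)=f(y)\).

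\emph{Inductive step} (\(r(y)>0\)). Let \(S_i=\sum_{j<i} b(h(y,j))\). The inner loop partitions \([b(y)]\) into consecutive blocks \((S_{i-1}+1)..S_i\)-style intervals \(I_i=\{S_i+1,\dots,S_i+b(h(y,i))\}\) for \(i\in[r(y)]\), followed by a slack block of size \(b(y)-S_{r(y)+1}\), which is nonnegative by recursive compatibility \eqref{eq:recursiveComp}. For \(m\in I_i\), the sampler appends \(i\) and continues with residual \(m'=m-S_i\). The map \(m\mapsto m'\) is a bijection \(I_i\to[b(h(y,i))]\), so by the decomposition above and the induction hypothesis, \(|A(y)\cap I_i|=|A(h(y,i))|=f(h(y,i))\). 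For \(m\) in the slack block, the sampler returns the branch \(r(h(y,\mathbf v))+1\) at the current node \(y\) (here \(\mathbf v=\emptyset\)). This leads to \(\lambda\), and since \(t(\lambda)=0\) these tickets contribute nothing. Summing over the blocks and using \(t(y)=0\) (condition~\ref{cond:unweighted}) together with the recursive decomposition gives \(|A(y)|=\sum_i f(h(y,i))=f(y)\).

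\textbf{Main obstacle.} The delicate point is bookkeeping in the inner while loop: making sure the block boundaries are exactly the \(I_i\) above, and that the early return fires precisely on the slack tickets. Two edge cases need checking. First, when some \(b(h(y,i))=0\), the block \(I_i\) is empty and child \(i\) is skipped, which is consistent because \(f(h(y,i))\le b(h(y,i))=0\). Second, when \(i\) reaches \(r+1\) inside the loop, the procedure must return the dummy branch rather than appending \(r+1\). I would handle both by proving a small invariant for the inner loop: at the start of each iteration, \(m'=m-S_i\) and \(m>S_i\).
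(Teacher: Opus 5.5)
Your proposal is correct and follows essentially the same argument as the paper: induction on the remaining depth below $y$, with $[b(y)]$ split into consecutive child blocks of sizes $b(h(y,i))$ plus a slack block routed to $\lambda$. Both arguments close the base case with $t(y)\le b(y)\le 1$ and the inductive step with $t(y)=0$. Your explicit check that $\sigma_{\mathcal F^b,y}(m)=i.\sigma_{\mathcal F^b,h(y,i)}(m-S_i)$, together with the inner-loop invariant, spells out a step the paper reads directly off the pseudocode.
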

\begin{proof}
For a node \(y\) and \(1\le i\le r(y)\), put
\[
  L_i(y):=\sum_{j<i} b(h(y,j)),
  \qquad
  I_i(y):=\{L_i(y)+1,\ldots,L_i(y)+b(h(y,i))\}.
\]
The remaining tickets
\[
  I_\bot(y):=
  \left\{\sum_{j=1}^{r(y)} b(h(y,j))+1,\ldots,b(y)\right\}
\]
form the slack block, which may be empty.  By the inner loop of
\Cref{alg:sampleOracle}, a ticket in \(I_i(y)\) descends to child \(i\), with
local ticket \(m-L_i(y)\); a ticket in \(I_\bot(y)\) returns the dummy child
\(r(y)+1\), which reaches \(\lambda\) and contributes \(0\).

We prove the claim by induction on the maximum remaining length of a valid
recursive path below \(y\). If \(r(y)=0\), then the sampler immediately
returns the empty branch. Since \(b(y)\le 1\), there are two leaf subcases.
If \(b(y)=0\), then \(A(y)=\varnothing\) and \(f(y)=0\). If \(b(y)=1\), then
there is exactly one ticket, and it is successful exactly when \(t(y)=1\).
If \(t(y)=0\), both sides are \(0\); if \(t(y)=1\), then
\(|A(y)|=t(y)=f(y)\).

Now suppose \(r(y)>0\).  The child blocks \(I_1(y),\ldots,I_{r(y)}(y)\) are
disjoint, and slack tickets do not succeed.  Therefore
\[
  |A(y)|
  =
  \sum_{i=1}^{r(y)} |A(h(y,i))|.
\]
By the induction hypothesis this equals
\[
  \sum_{i=1}^{r(y)} f(h(y,i)).
\]
Since \(y\) is an internal node of a uSR definition, \(t(y)=0\), so the uSR
recurrence gives
\[
  f(y)=t(y)+\sum_{i=1}^{r(y)} f(h(y,i))
      =\sum_{i=1}^{r(y)} f(h(y,i)).
\]
Thus \(|A(y)|=f(y)\), completing the induction.
\end{proof}

\begin{theorem}[Correctness of the sample oracle]\label{thm:sampling-correct}
Assume \(b(x)>0\).  Let \(m\) be drawn uniformly at random from \([b(x)]\) and let
\(\mathbf v=\sigma_{\mathcal F^{b},x}(m)\).
Then
\[
  \Pr\bigl[t\bigl(h(x,\mathbf v)\bigr)=1\bigr]
  \;=\;
  \frac{f(x)}{b(x)}.
\]
\end{theorem}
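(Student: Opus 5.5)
This follows almost immediately from the Ticket count lemma (\Cref{lem:sigma-ticket-count}) applied at the root \(y=x\). That lemma says the set
\[
A(x)=\{\,m\in[b(x)] : t(h(x,\sigma_{\mathcal F^b,x}(m)))=1\,\}
\]
has cardinality exactly \(f(x)\). Since \(m\) is uniform on \([b(x)]\), which is nonempty because \(b(x)>0\), the probability that the returned branch index \(\mathbf v=\sigma_{\mathcal F^b,x}(m)\) hits a positive leaf is \(|A(x)|/b(x)=f(x)/b(x)\). This is the entire argument, and the plan is to write it in essentially that form.

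Before invoking the lemma, I would make two routine bookkeeping points explicit.

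\begin{itemize}
\item \emph{The sampler is well defined on every input \(m\in[b(x)]\).} By \Cref{lem:sampleRuntime} it terminates and returns a branch index.
\item \emph{The event being measured is the same one the lemma counts.} The event \(t(h(x,\mathbf v))=1\) is exactly the membership condition \(m\in A(x)\). This holds including when \(\mathbf v\) is the dummy index \(\mathbf v'.(r(h(x,\mathbf v'))+1)\). In that case \(h(x,\mathbf v)=\lambda\) by the invalid-branch convention (\Cref{cond:noBranch}), and \(t(\lambda)=0\), so such tickets count as failures on both sides.
\end{itemize}

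The only point that might need care is the local-ticket recursion hidden inside \Cref{lem:sigma-ticket-count}. Running \(\sigma_{\mathcal F^b,x}\) on a ticket \(m\in I_i(x)\) must coincide with running \(\sigma_{\mathcal F^b,h(x,i)}\) on \(m-L_i(x)\), prefixed by \(i\). Here \(I_i(x)\) is the block of tickets assigned to child \(i\) and \(L_i(x)\) is the total mass of the preceding children. The identity follows from the structure of the inner loop: after subtracting the earlier child masses it continues from the child's instance with the reduced counter. Recursion compatibility \eqref{eq:recursiveComp} guarantees that the blocks \(I_i(x)\) fit inside \([b(x)]\).

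Since the lemma is already proved, I would simply cite it and conclude with the probability computation above.
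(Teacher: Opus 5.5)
Your proposal is correct and matches the paper's proof: the paper also applies \Cref{lem:sigma-ticket-count} at $y=x$ to get $|A(x)|=f(x)$, then divides by the $b(x)$ equally likely tickets. Your extra bookkeeping is accurate and makes explicit what the lemma's induction already uses: termination of the sampler, dummy tickets routing to $\lambda$ with $t(\lambda)=0$, and the prefix/local-ticket identity for tickets in the block $I_i(x)$.
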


\begin{proof}
By \Cref{lem:sigma-ticket-count}, exactly \(f(x)\) of the \(b(x)\) possible
tickets reach a positive leaf.  A uniformly random ticket therefore succeeds
with probability \(f(x)/b(x)\).
\end{proof}

\subsection{A General $(\varepsilon,\delta)$-Approximation in $\sqrt{b}$ Time}\label{sec:GeneralAlgo}

We can finally describe our full algorithm:

\paragraph{Pre-processing.}
Fix an input instance $x\in\Sigma^{\ast}$ and let
$B:=b(x)$ be the polynomially computable upper bound supplied by the
uSR bounded definition $\mathcal{F}^b$.
Write $f:=f(x)$ for the (unknown) solution count.
The algorithm receives parameters
$\varepsilon\in(0,1)$ and $\delta\in(0,1)$ and must output an
$(\varepsilon,\delta)$-approximation $\hat f$ to~$f$.

\begin{algorithm}[]
\DontPrintSemicolon
\caption{\textsc{Count}$_{\mathcal{F}^b,x}(\varepsilon,\delta)$}
\label{alg:FullAlgo}
\KwIn{instance $x$;\ accuracy $\varepsilon$;\ failure prob.\ $\delta$}
\KwOut{$\hat f$ with $\Pr[(1-\varepsilon)f\le\hat f\le(1+\varepsilon)f]\ge 1-\delta$}

\BlankLine
$B\leftarrow b(x)$\;
$k\leftarrow\lceil\sqrt{B}\rceil$\;

\If{$k=0$}{
\Return $\hat f\leftarrow 0$}

\BlankLine
\textbf{/* Stage 1: enumerate up to $k$ solutions */}\;
$c\leftarrow 0$\;
\For{$\mathbf v$ \emph{produced by} \textsc{DFSEnum}$_{\mathcal{F},x}$\tcp*{polynomial-delay enumeration}}{
  $c\leftarrow c+1$\;
  \lIf{$c=k$}{\textbf{break}}
}
\BlankLine
\If{$c<k$}{
\Return $\hat f\leftarrow c$ \tcp*[r]{\emph{exact}}}

\BlankLine
\textbf{/* Stage 2: sampling when $f\ge k$ */}\;

$T \leftarrow \left\lceil \frac{3\sqrt{B}}{\varepsilon^{2}} \ln\!\left(\frac{2}{\delta}\right)\right\rceil
$

$S\leftarrow 0$\;
\For{$i\leftarrow 1$ \KwTo $T$}{
   pick $m$ uniformly from $[B]$\;
   $\mathbf v\;\leftarrow\;\sigma_{\mathcal{F}^b,x}(m)$\;
   \lIf{$t\!\bigl(h(x,\mathbf v)\bigr)=1$}{$S\leftarrow S+1$}
}
$\hat p\leftarrow S/T$\;
\Return $\hat f\leftarrow \hat p\cdot B$
\end{algorithm}

\begin{theorem}\label{thm:fullalgo}
For every $x\in\Sigma^{\ast}$, $\varepsilon\in(0,1)$ and
$\delta\in(0,1)$, \textsc{Count}$_{\mathcal F^b,x}(\varepsilon,\delta)$ returns a value
$\hat f$ satisfying
\[
\Pr\!\bigl[(1-\varepsilon)f(x)\le\hat f\le(1+\varepsilon)f(x)\bigr]
\;\ge\;1-\delta .
\]
Its running time is
\(
\operatorname{poly}(|x|)
      \cdot\bigl(
          \sqrt{b(x)} + \tfrac{\sqrt{b(x)}}{\varepsilon^{2}}
          \log\tfrac{1}{\delta}
        \bigr)
      =O^\ast\bigl(\tfrac{\sqrt{b(x)}}{\varepsilon^{2}}\log\tfrac{1}{\delta}\bigr).
\)
\end{theorem}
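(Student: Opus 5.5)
The proof splits according to which stage of \textsc{Count} returns. Write $f=f(x)$ and $B=b(x)$.

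\emph{Degenerate and Stage~1 cases.} If $k=0$ then $B=0$. Since $f\le B$, we get $f=0$, so returning $0$ is exact. Otherwise Stage~1 runs \textsc{DFSEnum}$_{\mathcal F,x}$. By \Cref{thm:dfs-correct} it outputs exactly $f$ distinct elements of $\mathcal H_{\mathcal F,x}$ with polynomial delay. Hence if the loop ends with $c<k$, the enumerator exhausted all outputs, $c=f$, and the returned value is exact with probability $1$. Reaching $k$ outputs costs at most $k\cdot\poly(|x|)=O(\sqrt{B}\,\poly(|x|))$ time. This accounts for the $\sqrt{b(x)}$ term in the running time, and the same bound holds in both subcases because we break at $c=k$.

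\emph{Stage~2 correctness.} If $c=k$, then $f\ge k\ge\sqrt B$, so $p:=f/B\ge 1/\sqrt B$. Each trial draws $m$ uniformly from $[B]$ and tests $t(h(x,\sigma_{\mathcal F^b,x}(m)))$. By \Cref{thm:sampling-correct} each trial is an independent Bernoulli$(p)$ variable, so $S$ is binomial with mean $\mu=Tp\ge T/\sqrt B\ge 3\varepsilon^{-2}\ln(2/\delta)$. The multiplicative Chernoff bound from the preliminaries gives
\[
\Pr\bigl(|S-\mu|\ge\varepsilon\mu\bigr)\le 2\exp\!\bigl(-\varepsilon^2\mu/3\bigr)\le\delta .
\]
On the complementary event, $\hat f=SB/T$ lies in $[(1-\varepsilon)f,(1+\varepsilon)f]$, because $\mu B/T=f$.

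\emph{Runtime.} Stage~2 runs $T=O(\sqrt B\,\varepsilon^{-2}\log(1/\delta))$ iterations. Each iteration costs $\poly(|x|)$: drawing $m$ uses $O(\log B)$ random bits, and $\log B=\poly(|x|)$ because $b$ is polynomial-time computable. Calling $\sigma$ costs $\poly(|x|)$ by \Cref{lem:sampleRuntime}, and evaluating $h$ along the branch and then $t$ is polynomial by the uSR definition and instance-size preservation. Adding the Stage~1 cost gives the claimed bound.

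The only delicate step is checking that every hypothesis of the Chernoff bound actually holds. The trials must be independent, which holds because each iteration draws a fresh $m$. The sampler's success probability must be exactly $f/B$ even when $b$ has slack, and \Cref{thm:sampling-correct} handles this through the dummy child. Finally, the lower bound $p\ge 1/\sqrt B$ comes only from the event $c=k$. That event is itself deterministic given $x$, so no conditioning issues arise.
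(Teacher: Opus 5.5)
Your proof is correct and follows the paper's argument essentially step for step. Both handle the $k=0$ and early-termination cases as exact returns, derive $p=f/B\ge 1/\sqrt{B}$ from \Cref{thm:sampling-correct}, apply the multiplicative Chernoff bound with the chosen $T$, and bound the runtime by $k$ polynomial-delay outputs plus $T$ polynomial-time sampler calls; your extra remarks on independence of the trials and on Stage~1 being deterministic just make explicit what the paper leaves implicit.
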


\begin{proof}
\emph{Correctness.}
If \(k=0\), then \(B=0\), and the bound \(f(x)\le b(x)\) implies \(f(x)=0\);
the algorithm returns the exact value.  Assume now that \(k>0\).
If enumeration halts after emitting \(c<k\) leaves, then \(c=f\) and the
algorithm outputs the exact count.

Otherwise $f\ge k\ge\sqrt B$, so the success
probability of a single call to the sampling oracle is, by
\Cref{thm:sampling-correct},
$p=f/B\ge 1/\sqrt B$.
The \(T\) calls to the sample oracle use independent uniformly random tickets
from \([B]\), so their success indicators are independent Bernoulli variables
with mean \(p\).
With $T$ samples the multiplicative Chernoff bound gives
\[
  \Pr\bigl[\;|p-\hat p|\;>\;\varepsilon p\bigr]
  \;\le\;2\exp\!\bigl(-\tfrac{\varepsilon^{2}pT}{3}\bigr)
  \;\le\;2\exp\!\bigl(-\tfrac{\varepsilon^{2}T}{3\sqrt B}\bigr)
  \;\le\;\delta .
\]
Because $\hat f=\hat p\,B$ and $f=pB$, the same inequality implies
$(1-\varepsilon)f\le\hat f\le(1+\varepsilon)f$.

\emph{Running time.}
Stage 1 outputs at most $k=\lceil\sqrt B\rceil$ leaves, each after
$\operatorname{poly}(|x|)$ delay by \Cref{lem:polyDelayProof},
hence costs $\operatorname{poly}(|x|)\sqrt B$.

Stage 2 executes $T$ oracle calls, each again polynomial in $|x|$.
Substituting the chosen~$T$ yields the stated bound.
\end{proof}

\subsection{Extending to TotP}\label{sec:toTotP}

The framework is most directly applicable for problems where the distinction between enumeration and counting is minimal.
Despite this, the idea is nonetheless applicable to a broader set of TotP problems as the bottleneck for fast exponential algorithms is often a core where the counting appears to not be distinguishable from the enumeration.
Although every problem in TotP can be expressed in uTotP form as we proved in \Cref{sec:NewChar}, doing so generically typically would not result in effective bounding functions.
For many problems, counting is faster than enumeration but they nonetheless may be bottlenecked by hard enumeration cores, where the problem is too constrained to exploit the symmetries useful in counting, but too little to branch effectively.
This is where our framework can be leveraged.
In order to do so we extend our theory so that we may derive a generic algorithm for sums of functions given in bounded uSR form.

\begin{theorem}\label{thm:combineuTotP}
  Let
  \((\mathcal{F}_1,x_1),(\mathcal{F}_2,x_2),\ldots,(\mathcal{F}_\ell,x_\ell)\)
  be\/ $\ell$ pairs of bounded uSR forms and instances, each
  defining a counting function \(f_i\) with bounding function \(b_i\) and
  such that \(f_i(x_i)>0\) for every \(i\).
  Suppose that a preprocessing step constructs this explicit list, computes the
  values \(B_i=b_i(x_i)\), and builds the prefix sums
  \(P_i=\sum_{j\le i}B_j\) in time \(S\), with
  \(\ell\le 2^{\operatorname{poly}(n)}\) and \(\max_i |x_i|\le \operatorname{poly}(n)\).
  Applications that naturally generate zero-valued cores should discard them
  before invoking the theorem and charge this filtering to \(S\).
  Then one can compute an $(\varepsilon,\delta)$-approximation to
  \(\displaystyle F:=\sum_{i=1}^{\ell} f_i(x_i)\) in time
  \[
    S+
    O^{\ast}\!\bigl(
      \sqrt{\textstyle\sum_{i=1}^{\ell} b_{i}(x_{i})}\;
      \varepsilon^{-2}\,
      \log\!\tfrac{1}{\delta}
    \bigr).
  \]
\end{theorem}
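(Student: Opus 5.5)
The plan is to mirror \textsc{Count} (\Cref{alg:FullAlgo}) on the disjoint union of the $\ell$ recursion trees, with $B:=P_\ell=\sum_i B_i$ and $k:=\lceil\sqrt B\rceil$. Conceptually we glue the cores under a virtual root whose children are $x_1,\dots,x_\ell$. We do not build this as a uSR form, since $\ell$ may be exponential and the fan-out would violate \Cref{def:qrfunc}. Instead we use only the two derived primitives, the enumerator \textsc{DFSEnum}$_{\mathcal F_i,x_i}$ and the sampler $\sigma_{\mathcal F_i^{b_i},x_i}$, and pass the index $i$ explicitly. Since $\ell\le 2^{\poly(n)}$, every index, ticket and prefix sum has $\poly(n)$ bits, so arithmetic and binary search over the $P_i$ cost $\poly(n)$.

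\emph{Stage 1.} We run \textsc{DFSEnum}$_{\mathcal F_1,x_1}$, then \textsc{DFSEnum}$_{\mathcal F_2,x_2}$, and so on, counting outputs and stopping as soon as the count reaches $k$. If all $\ell$ enumerations finish with count $c<k$, then by \Cref{thm:dfs-correct} $c=\sum_i f_i(x_i)=F$ exactly. The key point for the running time is the hypothesis $f_i(x_i)>0$: each core we start emits at least one leaf. Hence at most $k$ cores are ever touched, and by \Cref{thm:dfs-correct} (item 4) together with $|x_i|\le\poly(n)$, the total Stage~1 cost is $\poly(n)\cdot k=O^\ast(\sqrt B)$. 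Without the filtering of zero-valued cores, Stage~1 could spend $\ell\gg\sqrt B$ steps finding nothing. This is why zero-valued cores are discarded in preprocessing and charged to $S$, and it is the only delicate point in the time bound.

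\emph{Stage 2.} If the count reaches $k$, then $F\ge\sqrt B$. A trial draws $m$ uniformly from $[B]$, finds by binary search on the prefix sums the unique $i$ with $P_{i-1}<m\le P_i$, sets $m'=m-P_{i-1}\in[B_i]$, and succeeds iff $t_i(h_i(x_i,\sigma_{\mathcal F_i^{b_i},x_i}(m')))=1$. Conditioned on landing in block $i$, $m'$ is uniform on $[B_i]$. By \Cref{lem:sigma-ticket-count}, exactly $f_i(x_i)$ tickets in block $i$ succeed (blocks with $B_i=0$ are empty and have $f_i=0$). Summing over the disjoint blocks, exactly $F$ of the $B$ tickets succeed, so $p=F/B\ge 1/\sqrt B$. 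Each trial costs $\poly(n)$ by \Cref{lem:sampleRuntime} plus $O(\log\ell)=\poly(n)$ for the search.

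With $T=\lceil 3\sqrt B\,\varepsilon^{-2}\ln(2/\delta)\rceil$ independent trials, the Chernoff computation in the proof of \Cref{thm:fullalgo} applies verbatim and gives $\Pr[|\hat p-p|>\varepsilon p]\le\delta$. The estimate $\hat F=\hat p B$ is then an $(\varepsilon,\delta)$-approximation. Adding $S$ for preprocessing gives the claimed bound $S+O^\ast(\sqrt{\sum_i b_i(x_i)}\,\varepsilon^{-2}\log\frac1\delta)$.
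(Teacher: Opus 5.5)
Your proposal is correct and matches the paper's proof essentially step for step. Like the paper, you run Stage~1 enumeration sequentially across the cores and use the positivity hypothesis to bound the number of touched cores by $k$. Stage~2 is likewise the same: prefix-sum block routing of a uniform ticket, success probability $F/B\ge 1/\sqrt{B}$ via the per-core ticket count, and the same Chernoff choice of $T$. The only thing the paper makes explicit that you leave implicit is a guard returning $0$ when $B=0$, which under your hypotheses can only occur when $\ell=0$.
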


The estimator used in the proof is given in \Cref{alg:sum-count}.  It is the
aggregate analogue of \Cref{alg:FullAlgo}: Stage~1 enumerates positive leaves
across the listed cores until the global cut-off is met, while Stage~2 samples
from the disjoint union of the intervals of sizes \(B_i=b_i(x_i)\).  In the
displayed pseudocode we recompute the arrays \((B_i)\) and \((P_i)\) for
self-containment; in the theorem statement these values are prepared during
the setup phase and their construction is charged to \(S\).

\begin{algorithm}[H]
\DontPrintSemicolon
\caption{\textsc{SumCount} for an explicit list of bounded-uSR cores}
\label{alg:sum-count}
\KwIn{bounded uSR form--instance pairs \((\mathcal{F}_i,x_i)\);
      accuracy \(\varepsilon\in(0,1)\); failure probability \(\delta\in(0,1)\)}
\KwOut{$\hat F$ s.t.\ $\Pr[(1-\varepsilon)F\le\hat F\le(1+\varepsilon)F]\ge1-\delta$}

\vspace{.2em}
\textbf{/* Global parameters */}\\
\For{$i\leftarrow1$ \KwTo \(\ell\)}{ $B_i\leftarrow b_i(x_i)$ }
$B\leftarrow\sum_{i=1}^{\ell} B_i$\;
\If{$B=0$}{\Return $\hat F\leftarrow 0$}
$k\leftarrow\lceil\sqrt{B}\rceil$\tcp*{enumeration cut-off}

\vspace{.2em}
\textbf{/* Stage 1 - enumerate at most $k$ positive leaves */}\\
$C\leftarrow0$\tcp*{solutions counted so far}
\(i\leftarrow1\)\;
\While{$i\le \ell$ and $C<k$}{
  initialise the polynomial-delay enumerator \textsc{DFSEnum}$_{\mathcal F_i,x_i}$\;
  \While{$C<k$ and the enumerator for instance \(i\) has another output \(\mathbf v\)}{
     $C\leftarrow C+1$\;
  }
  \(i\leftarrow i+1\)\;
}
\If{$C<k$}{\Return $\hat F\leftarrow C$ \tcp*[r]{exact count}}

\vspace{.2em}
\textbf{/* Stage 2 - combined sampler */}\\
$P_0\leftarrow0$\;
\For{$i\leftarrow1$ \KwTo \(\ell\)}{ $P_i\leftarrow P_{i-1}+B_i$ }
$T\leftarrow\bigl\lceil 3\sqrt{B}\,\varepsilon^{-2}\ln(2/\delta)\bigr\rceil$\;
$Y\leftarrow0$\;

\For{$s\leftarrow1$ \KwTo $T$}{
  pick $m$ uniformly at random from $[B]$\;
  \textbf{find} $j$ with $P_{j-1}<m\le P_j$ (binary search)\;
  $m'\leftarrow m-P_{j-1}$\tcp*{\(m'\in[B_j]\)}
  $\mathbf v\leftarrow\sigma_{\mathcal F_j^{b_j},x_j}(m')$\;
  \lIf{$t_j\!\bigl(h_j(x_j,\mathbf v)\bigr)=1$}{$Y\leftarrow Y+1$}
}
$p\leftarrow Y/T$\;
\Return $\hat F\leftarrow p\cdot B$
\end{algorithm}

\begin{proof}
The setup term \(S\) accounts for constructing the list of cores, evaluating
the \(B_i\), and preparing the prefix sums used for block selection.  We now
condition on this prepared data.  Fix the notation
\(f_i:=f_i(x_i),\;
  B_i:=b_i(x_i),\;
  F:=\sum_i f_i,\;
  B:=\sum_i B_i\).
If \(B=0\), then every \(f_i\le B_i\) is zero and the guard in
\Cref{alg:sum-count} returns the exact value.  We therefore assume \(B>0\).
The positivity assumption gives \(B_i\ge f_i>0\) for every listed core.

\paragraph{Stage 1.}
The sequential loop visits the instances in order while \(C<k\).
Whenever \textsc{DFSEnum}\(_{\mathcal F_i,x_i}\) outputs a positive
leaf we increment \(C\).
Stage 1 stops as soon as \(C\) reaches
\(k=\lceil\sqrt{B}\rceil\), so it performs at most \(k\) successful outputs.
If an enumerator is exhausted earlier, we simply move to the next core.
Because every listed core is positive, each exhausted enumerator contributes at
least one successful output before termination. Hence the total Stage 1 work is
at most \(k\) output-producing enumerator traversals, each with polynomial
delay by \Cref{lem:polyDelayProof}, together with polynomial initialisation of
the touched enumerators. This costs
\(O\!\bigl(\poly(\max_i|x_i|)\sqrt{B}\bigr)\) time.

If the loop exhausts \emph{all} enumerators before reaching \(k\)
leaves, then \(F=C<k\) and the algorithm returns
\(\hat F=F\) exactly, meeting the \((\varepsilon,\delta)\) requirement
trivially.

Henceforth assume \(F\ge k\ge\sqrt{B}\).

\paragraph{Stage 2: success probability of a single draw.}
Partition \([B]\) into disjoint blocks
\([P_{i-1}+1,P_i]\) of size \(B_i\)
with prefix sums \(P_i:=\sum_{t=1}^{i}B_t\) (\(P_0:=0\)).
When a random \(m\in[B]\) falls into block \(i\),
 the probability of choosing that block is \(B_i/B\);
  conditional on being in block \(i\), the local sampler
  \(\sigma_{\mathcal F_i^{b_i},x_i}\) succeeds with probability
  \(f_i/B_i\).

By the law of total probability

\[
  p:=\Pr[\text{one draw succeeds}]
     =\sum_{i=1}^{\ell}\frac{B_i}{B}\cdot\frac{f_i}{B_i}
     =\frac{F}{B}
     \;\ge\;\frac1{\sqrt{B}}.
\]

\paragraph{Number of samples.}
Let \(T:=\bigl\lceil 3\sqrt{B}\,\varepsilon^{-2}\ln(2/\delta)\bigr\rceil\)
and \(\hat p:=Y/T\).
A multiplicative Chernoff bound gives

\[
  \Pr[\,|\hat p-p|>\varepsilon p\,]\le
  2\exp\!\bigl(-\tfrac{\varepsilon^{2}pT}{3}\bigr)
  \le 2\exp\!\bigl(-\tfrac{\varepsilon^{2}T}{3\sqrt{B}}\bigr)
  \le\delta.
\]

Because \(\hat F=\hat p\,B\) and \(F=p\,B\),
the same inequality implies
\((1-\varepsilon)F\le\hat F\le(1+\varepsilon)F\)
with probability at least \(1-\delta\).

\paragraph{Total running time.}
Stage 2 performs \(T\) iterations. Binary search over the prepared \(P_i\) determines which \(\sigma_{\mathcal{F}_i,x_i}\) to call, and takes \(\log \ell\) time.
Each iteration therefore takes
\(\log\ell\cdot\poly(\max_i|x_i|)\) time, so

\[
  \text{time}(2)=
  O\!\Bigl(
     \log\ell\cdot\poly\bigl(\textstyle\max_i|x_i|\bigr)
     \sqrt{B}\,\varepsilon^{-2}\log\tfrac1\delta
  \Bigr)
  =O^{\ast}\!\bigl(\sqrt{B}\,\varepsilon^{-2}\log\tfrac1\delta\bigr).
\]

Adding Stage 1 and the setup term \(S\) yields the claimed overall bound
\(S+O^{\ast}(\sqrt{B}\,\varepsilon^{-2}\log(1/\delta))\).

\end{proof}

With \Cref{thm:combineuTotP}, we can adopt the following meta-strategy to find faster exponential algorithms for many problems in TotP:

\begin{enumerate}
    \item \textbf{Decompose the problem:} For a given function $f\in \text{TotP}$, decompose the problem into the easier-to-count parts and the hard enumeration cores, while recording the setup time \(S\).
    \item \textbf{Approximate the sum of hard cores:} We can use \Cref{thm:combineuTotP} to compute an $(\varepsilon, \delta)$-approximation for $\sum_{i=1}^{\ell}f_i(x_i)$ in time \(S+O^{\ast}\!\left(\sqrt{\sum_{i=1}^{\ell}b_i(x_i)}\,\varepsilon^{-2}\log(1/\delta)\right)\).
    \item \textbf{Compute the final result:} The final approximation for the original function $f$ is simply the sum of the approximated sum of cores and (approximated) offset from the easier instances.
\end{enumerate}

\begin{remark}[Compiling a decomposition into a single bounded-uSR instance]\label{rem:compile-sum}
\Cref{thm:combineuTotP} deliberately treats the family of cores
\(\{(\mathcal F_i^{b_i},x_i)\}_{i=1}^{\ell}\) as black boxes:
after the decomposition step has produced per-core bounded-uSR data,
the estimator only needs the induced primitives: per-core bounds \(B_i=b_i(x_i)\),
local samplers \(\sigma_{\mathcal F_i^{b_i},x_i}\), and (for Stage~1) feasibility access
used by polynomial-delay enumeration.
This keeps the sum-of-cores argument agnostic to how the decomposition was obtained.

If desired, one can further compile the family into a single bounded-uSR
definition by adding a top-level branching gadget that selects an index \(i\)
(e.g.\ via the binary encoding of \(i\)), routes subsequent recursion to
\((\mathcal F_i^{b_i},x_i)\), and uses the global bound \(B=\sum_i B_i\)
(implemented via prefix sums).
In that compiled view, feasibility for the top-level branches and the auxiliary
bound-navigation data structures are naturally constructed during the
decomposition/preprocessing phase.

We do not formalise this compilation here: it adds an additional layer of
encoding and oracle bookkeeping without changing the estimator or the
\(\sqrt{\sum_i B_i}\) phenomenon of \Cref{thm:combineuTotP},
and would significantly lengthen an already notation-heavy development.
\end{remark}

\section{Applications}\label{sec:Applications}

This section instantiates the bounded-uSR interface on five canonical problems.
Each application follows the same template:
(i) expose a concrete unweighted self-reduction (the maps $h,t,r,q$),
(ii) certify a recursion-compatible upper bound $b(\cdot)$ on the number of positive leaves,
and (iii) invoke the generic estimator (\Cref{thm:fullalgo}), and when relevant the
sum-of-cores extension (\Cref{thm:combineuTotP}).
The intent is not only to state improved bases, but to serve as a collection of worked patterns
for turning ``recurrence-style upper bounds'' into approximate counting algorithms.

\paragraph{What kinds of problems fit best.}
The framework is most direct for \emph{extremal subset problems}, where solutions are
inclusion-wise maximal/minimal objects over an $n$-element ground set and extremal combinatorics
already provides a nontrivial bound on how many such objects can exist.
In this regime, counting often appears \emph{intrinsically bottlenecked by output size}:
the best known counting algorithms do not asymptotically beat output-sensitive enumeration.
This is exactly the setting where a bounded uSR form is essentially ``for free'':
a classical branching analysis already gives a recurrence-compatible $b(\cdot)$, and the
enumerate--or--sample engine automatically yields a square-root improvement from $b(x)$ to
$\sqrt{b(x)}$.

\paragraph{Two levers: refactor to unweighted, or decompose into cores.}
Outside the pure ``counting $\approx$ enumeration'' regime, fast counting algorithms typically
exploit additional structure (e.g.\ multiplicative recurrences, component factorisations, or more
global combinatorial cancellations).  In such cases, forcing \emph{the entire computation} into a
single unweighted recursion usually leads to a weak global bound and hence little benefit.
Our approach suggests two practical levers:
\begin{enumerate}[(a)]
\item \textbf{Expose an unweighted core.}
Sometimes a problem admits a natural solution-enumerating search tree, but the \emph{best}
exact counting algorithms are phrased in a weighted/multiplicative way.  If one can refactor the
computation so that the remaining ambiguity is an \emph{additive leaf-counting tree} with an
explicit bound, then the generic $\sqrt{b}$ (and quantum) accelerations apply to the bottleneck
step directly.
\item \textbf{Decompose into hard enumeration cores.}
When a large portion of the instance can be handled by specialised routines (exact or approximate),
the right abstraction is to peel off the ``easy'' part and isolate only the residual ambiguity as a
collection of bounded-uSR cores.  \Cref{thm:combineuTotP} is then the key quantitative
payoff: we can estimate the \emph{total} contribution of all cores in time
$O^\ast\!\bigl(\sqrt{\sum_i b_i}\bigr)$ rather than summing per-core costs.
\end{enumerate}
A useful rule of thumb is: if the best known counting algorithm is already far faster than
enumerating all solutions, then any improvement from our framework will likely require either
(a careful unweighting/refactoring) or (an explicit decomposition that isolates the part where
counting collapses back to enumeration).

\paragraph{How the examples ramp up in ``richness.''}
As previewed in \Cref{sec:overview:apps}, we order the applications by how many of the above
ingredients must be engineered rather than taken off the shelf.
Maximal clique counting is the cleanest ``all basic ingredients'' case: a tight extremal bound,
a canonical Tsukiyama-style search organisation, and no known exponential-time counting method
that substantially outperforms output-sensitive enumeration.
Minimal separators keep the same high-level philosophy (counting not known to beat enumeration)
but the known enumeration routines are less canonical and the extremal bound is no longer tight
for our particular recursion.
Subcubic perfect matchings then move to a different regime: exact counting uses more
structured (and typically weighted/multiplicative) viewpoints, yet by refactoring to an unweighted
core we can still beat the fastest known exact running time without any decomposition.
Independent set counting and \#\textsc{2-SAT} finally illustrate the setting where decomposition is central:
existing counting algorithms are already significantly faster than naive enumeration, and the gain
comes from isolating and aggregating the remaining hard cores; for \#\textsc{2-SAT} the improvement in
the base is modest, but it is notable that the displayed approximation bound lies
below the currently cited variable-parameter exact worst-case base.

\paragraph{Other candidate problems.}
Beyond the five case studies, the same template applies widely to problems where one can
(i) maintain tractable feasibility for the subinstances generated by the chosen branching rule, and
(ii) certify a strong recurrence-style upper bound.
Natural candidates include many other locally-optimal/extremal structures in graphs and
hypergraphs, e.g.\ counting maximal independent sets / maximal matchings, minimal dominating sets (though challenging due to this being a major open problem),
minimal transversals (hitting sets) in structured set systems, and related ``extension''/local-optimal
counting problems.
Problems with sophisticated exact/approximate counters remain candidates as well, provided one can
identify the residual part where the computation degenerates to an enumeration core and then
apply \Cref{thm:combineuTotP} to exploit the $\sqrt{\sum_i b_i}$ aggregation effect.

\subsection{Counting Maximal Cliques}
\label{sec:maxclique}

Let $M(G)$ denote the number of maximal cliques of a graph $G$.
Moon and Moser proved the tight upper bound
$M(G)\le 3^{|V(G)|/3}$~\cite{moon1965cliques}.
We now show that the generic sampling framework delivers an
$(\varepsilon,\delta)$-approximation of $M(G)$ in time
$O^{\ast}\!\bigl(3^{n/6}\bigr)=O^{\ast}\!\bigl(1.2009^{\,n}\bigr)$,
shaving a square root off the enumeration bound.

Let \(\mathrm{MM}(t)\) denote the exact Moon--Moser extremal function:
\[
\mathrm{MM}(t):=
\begin{cases}
1, & t\in\{0,1\},\\
3^{t/3}, & t\equiv 0 \pmod 3,\\
4\cdot 3^{(t-4)/3}, & t\equiv 1 \pmod 3 \text{ and } t\ge 4,\\
2\cdot 3^{(t-2)/3}, & t\equiv 2 \pmod 3.
\end{cases}
\]
Thus every graph on \(t\) vertices has at most \(\mathrm{MM}(t)\) maximal
cliques, and \(\mathrm{MM}(t)\le 3^{t/3}\) for all \(t\).

\paragraph{Moon--Moser based bounded uSR form.}

\begin{definition}[$\mathcal{F}_{\mathrm{MC}}^{b}$ (bounded uSR form for counting maximal cliques)]
\label{def:mc-uSR-clean}
Fix a simple graph $G=(V,E)$ and a deterministic total order $\prec$ on $V$.
For a vertex $v\in V$, write $N_G(v)$ for its (open) neighbourhood.

An instance is a 4-tuple
\[
x=\langle R,P,X,G\rangle,
\]
where $R,P,X\subseteq V$, $P\cap X=\emptyset$, $R$ is a clique in $G$, and every
vertex in $P\cup X$ is adjacent to every vertex in $R$ (i.e.\ $P\cup X \subseteq \bigcap_{u\in R}N_G(u)$).

\begin{enumerate}[(a)]
\item \textbf{Pivot and candidate list.}
If $P\cup X\neq\emptyset$, let $H:=G[P\cup X]$ and choose
$u=u(x)\in P\cup X$ deterministically so that $\deg_H(u)$ is maximum,
breaking ties by $\prec$.
Define the candidate set
\[
C(x):=P\setminus N_H(u),
\]
and list it in increasing $\prec$-order as
\[
C(x)=(v_1,\ldots,v_k),
\qquad k:=|C(x)|.
\]
If $P\cup X=\emptyset$, set $k:=0$.

\item \textbf{Fan-out and depth.}
Set
\[
r_{\mathrm{MC}}(x):=k,
\qquad
q_{\mathrm{MC}}(x):=|V|.
\]

\item \textbf{Branch map $h_{\mathrm{MC}}$.}
For $1\le i\le r_{\mathrm{MC}}(x)$ define
\[
h_{\mathrm{MC}}(x,i)
:=
\Bigl\langle
R\cup\{v_i\},\;
\bigl(P\setminus\{v_1,\ldots,v_{i-1}\}\bigr)\cap N_G(v_i),\;
\bigl(X\cup\{v_1,\ldots,v_{i-1}\}\bigr)\cap N_G(v_i),\;
G
\Bigr\rangle.
\]
(For $i>r_{\mathrm{MC}}(x)$, we follow the global convention $h_{\mathrm{MC}}(x,i):=\lambda$.)

\item \textbf{Leaf predicate $t_{\mathrm{MC}}$.}
Set
\[
t_{\mathrm{MC}}(x):=
\begin{cases}
1, & \text{if } P=\emptyset \text{ and } X=\emptyset,\\
0, & \text{otherwise.}
\end{cases}
\]
Thus a leaf contributes $1$ exactly when $R$ is maximal (standard BK condition).

\item \textbf{Counting function.}
Define $f_{\mathrm{MC}}$ via the uSR recursion
\[
f_{\mathrm{MC}}(x)
=
t_{\mathrm{MC}}(x)+\sum_{i=1}^{r_{\mathrm{MC}}(x)} f_{\mathrm{MC}}\bigl(h_{\mathrm{MC}}(x,i)\bigr).
\]
In particular, the number of maximal cliques of $G$ is
\[
M(G)=f_{\mathrm{MC}}\bigl(\langle \emptyset, V, \emptyset, G\rangle\bigr).
\]

\item \textbf{Bounding function.}
Define (Moon--Moser style)
\[
b_{\mathrm{MC}}(\langle R,P,X,G\rangle):=
\begin{cases}
1, & \text{if } r_{\mathrm{MC}}(\langle R,P,X,G\rangle)=0,\\[2pt]
\mathrm{MM}(|P|+|X|), & \text{otherwise.}
\end{cases}
\]
\end{enumerate}
\end{definition}

\begin{lemma}\label{lem:maxcliqueEquiv}
For every Bron--Kerbosch state \(x=\langle R,P,X,G\rangle\) satisfying the
invariant in \Cref{def:mc-uSR-clean}, \(f_{\mathrm{MC}}(x)\) is the number of
maximal cliques \(K\) of \(G\) such that
\[
  R\subseteq K\subseteq R\cup P
  \qquad\text{and}\qquad
  K\cap X=\emptyset .
\]
In particular,
\[
  M(G)=f_{\mathrm{MC}}\bigl(\langle \emptyset,V,\emptyset,G\rangle\bigr).
\]
\end{lemma}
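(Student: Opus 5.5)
The plan is to prove a slightly stronger statement by induction on \(|P|\), and then read off the lemma. For a state \(x=\langle R,P,X,G\rangle\), let \(\mathcal S(x)\) denote the set of maximal cliques \(K\) of \(G\) with \(R\subseteq K\subseteq R\cup P\) and \(K\cap X=\emptyset\). The goal is to show \(f_{\mathrm{MC}}(x)=|\mathcal S(x)|\) by checking that \(\mathcal S\) satisfies the same recursion as \(f_{\mathrm{MC}}\).

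\textbf{Invariant.} The stated invariant \(P\cup X\subseteq\bigcap_{u\in R}N_G(u)\) is too weak at leaves. For example, take \(R=\{a\}\) and \(P=X=\emptyset\) while \(a\) has a neighbour: then \(t_{\mathrm{MC}}=1\) but \(R\) is not maximal. So I will work with the strengthened invariant \(P\cup X=\bigcap_{u\in R}N_G(u)\), where the empty intersection is read as \(V\). This holds at the root \(\langle\emptyset,V,\emptyset,G\rangle\). It is preserved by \(h_{\mathrm{MC}}\), because the common neighbourhood of \(R\cup\{v_i\}\) equals \((P\cup X)\cap N_G(v_i)\), and this is exactly the union of the new \(P\) and \(X\): the vertices \(v_1,\dots,v_{i-1}\) merely move from \(P\) to \(X\). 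Hence every state reachable from the root satisfies it, and the lemma is proved for these states, which are the only ones the algorithm visits. The induction is on \(|P|\), which strictly decreases along each branch since \(v_i\in P\) is removed from \(P\).

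\textbf{Leaves.} Suppose \(r_{\mathrm{MC}}(x)=0\).
\begin{itemize}
\item If \(P\cup X=\emptyset\), then \(R\) has no common neighbour, so \(R\) is maximal and \(\mathcal S(x)=\{R\}\). This matches \(t_{\mathrm{MC}}=1\).
\item If \(P\cup X\neq\emptyset\) but \(C(x)=\emptyset\), the pivot \(u\) cannot lie in \(P\), since \(u\notin N_H(u)\) would put \(u\in C(x)\). So \(u\in X\) and \(P\subseteq N_H(u)\). For any \(K\in\mathcal S(x)\), \(u\) is adjacent to all of \(R\) and to all of \(K\setminus R\subseteq P\). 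Since \(u\notin K\), the set \(K\cup\{u\}\) is a larger clique, contradicting maximality. Hence \(\mathcal S(x)=\emptyset\), matching \(t_{\mathrm{MC}}=0\).
\end{itemize}

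\textbf{Internal nodes.} Suppose \(k\ge1\). Then \(t_{\mathrm{MC}}=0\), and I will show \(\mathcal S(x)=\bigsqcup_i\mathcal S(h_{\mathrm{MC}}(x,i))\).

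First, every \(K\in\mathcal S(x)\) meets \(C(x)\). Otherwise \(K\setminus R\subseteq N_H(u)\). If \(u\in K\), then \(u\in P\), and as above \(u\in C(x)\), a contradiction. So \(u\notin K\), and \(K\cup\{u\}\) is a clique, contradicting maximality. This is the standard Tomita pivot argument, and it is the step I expect to need the most care.

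Let \(i\) be the least index with \(v_i\in K\). Then \(R\cup\{v_i\}\subseteq K\). Also \(K\setminus(R\cup\{v_i\})\) lies in \(P\cap N_G(v_i)\) and avoids \(\{v_1,\dots,v_{i-1}\}\cup X\). So \(K\in\mathcal S(h_{\mathrm{MC}}(x,i))\).

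Conversely, every \(K'\in\mathcal S(h_{\mathrm{MC}}(x,i))\) lies in \(\mathcal S(x)\) and has least candidate index exactly \(i\). Hence the child sets are pairwise disjoint and their union is \(\mathcal S(x)\).

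Summing over \(i\), applying the induction hypothesis to each child, and using the uSR recurrence gives \(f_{\mathrm{MC}}(x)=|\mathcal S(x)|\). At the root \(\mathcal S=\) all maximal cliques, so \(M(G)=f_{\mathrm{MC}}(\langle\emptyset,V,\emptyset,G\rangle)\).
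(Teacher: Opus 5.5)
Your proof is correct and takes the same route as the paper's: the pivot argument forces every counted clique to meet $C(x)$, cliques are partitioned by their first candidate $v_i$, and the leaves are checked directly. The paper leaves these details to the Bron--Kerbosch citation, and you write them out. Your counterexample is also right: under the stated invariant $P\cup X\subseteq\bigcap_{u\in R}N_G(u)$ the leaf case $P=X=\emptyset$ can fail, so the paper's claim that $R$ is then maximal tacitly uses the equality form, and restricting to states reachable from $\langle\emptyset,V,\emptyset,G\rangle$ (all the root identity needs) is the right repair.
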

\begin{proof}
This is the standard correctness invariant for the Bron--Kerbosch recursion
with pivoting~\cite{bron1973finding}.  At a state \(\langle R,P,X,G\rangle\),
the set \(P\) contains the vertices that may still extend \(R\), while \(X\)
contains vertices whose corresponding extensions have already been assigned to
earlier branches.  The pivot changes only the organisation of the search tree;
it does not change the family of maximal cliques counted by the state.  Since
\(N_H(u)\cap P=N_G(u)\cap P\), the branch for \(v_i\in P\setminus N_H(u)\)
moves to
\[
\left\langle
R\cup\{v_i\},
(P\setminus\{v_1,\ldots,v_{i-1}\})\cap N_G(v_i),
(X\cup\{v_1,\ldots,v_{i-1}\})\cap N_G(v_i),
G
\right\rangle,
\]
which is exactly the subproblem of maximal cliques containing \(R\cup\{v_i\}\)
and not previously handled.  These branches are disjoint and cover all maximal
cliques extending \(R\) that avoid \(X\).  When \(P=X=\emptyset\), the current
clique \(R\) is maximal and contributes one; otherwise a leaf contributes zero.
The root state has \(R=X=\emptyset\) and \(P=V\), giving the stated root case.
\end{proof}

\begin{lemma}\label{lem:mm-properties}
The function \(\mathrm{MM}\) is nondecreasing, satisfies \(t\le \mathrm{MM}(t)\)
for every \(t\in\mathbb N\), and obeys
\[
  \mathrm{MM}(a)\mathrm{MM}(b)\le \mathrm{MM}(a+b)
  \qquad\text{for all }a,b\in\mathbb N.
\]
\end{lemma}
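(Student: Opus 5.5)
The proof is a direct case analysis on residues modulo $3$. First I will record the closed form: writing $t=3s+\rho$ with $\rho\in\{0,1,2\}$, $\mathrm{MM}(t)=3^s$ for $\rho=0$, $\mathrm{MM}(t)=4\cdot 3^{s-1}$ for $\rho=1$, $s\ge1$, and $\mathrm{MM}(t)=2\cdot 3^s$ for $\rho=2$. The small values $\mathrm{MM}(0),\dots,\mathrm{MM}(7)=1,1,2,3,4,6,9,12$ can then be checked by hand.

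\textbf{Monotonicity.} I will verify the successive ratios $\mathrm{MM}(t+1)/\mathrm{MM}(t)$ for $t\ge 2$ according to the residue of $t$:
\begin{itemize}
\item from $3s$ to $3s+1$ the ratio is $4/3$;
\item from $3s+1$ to $3s+2$ it is $3/2$;
\item from $3s+2$ to $3s+3$ it is $3/2$.
\end{itemize}
All three are at least $1$. The initial steps $1\to1\to2$ are checked directly.

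\textbf{The bound $t\le \mathrm{MM}(t)$.} This follows by induction. The base cases $t\le 4$ are in the table above. For the inductive step, $\mathrm{MM}(t+3)=3\,\mathrm{MM}(t)$ whenever $t\ge 2$, and also for $t=1$, since $\mathrm{MM}(4)=4$ can be checked directly. So $\mathrm{MM}(t+3)\ge 3t\ge t+3$ once $t\ge 2$, which covers everything beyond the base range $t\le 4$.

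\textbf{Supermultiplicativity.} This is the main step. If $a\le1$ or $b\le1$, then one factor equals $1$ and the claim reduces to monotonicity. So assume $a,b\ge 2$. The key identity is $\mathrm{MM}(t+3)=3\,\mathrm{MM}(t)$ for all $t\ge 2$, which holds in every residue class. Using it, I can peel off threes from $a$ and from $b$ while keeping both at least $2$. Each time $a\mapsto a-3$, both sides of the inequality scale by exactly $3$. This reduces the claim to $a,b\in\{2,3,4\}$: nine pairs, six up to symmetry. I will check these against the table:
\begin{itemize}
\item $\mathrm{MM}(2)^2=4\le \mathrm{MM}(4)=4$;
\item $2\cdot3\le 6$;
\item $2\cdot4\le 9$;
\item $3\cdot3\le 9$;
\item $3\cdot4\le 12$;
\item $4\cdot4=16\le \mathrm{MM}(8)=18$.
\end{itemize}

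The main obstacle is bookkeeping rather than any idea. The irregular small cases ($t\in\{0,1\}$ and the $\rho=1$ formula starting at $t=4$) must be absorbed correctly, and the reduction must only be applied with the hypotheses $a-3\ge2$ and $a+b-3\ge 2$ so that the scaling identity is valid on both sides.
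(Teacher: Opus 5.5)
Your proof is correct, but it takes a different route from the paper. The paper identifies $\mathrm{MM}(t)$, for $t\ge 2$, with the largest product over partitions of $t$ into parts $2$ and $3$. It calls monotonicity and $t\le\mathrm{MM}(t)$ immediate from the formula, and gets supermultiplicativity in one line by concatenating optimal partitions of $a$ and $b$.

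You work only from the closed form:
\begin{itemize}
\item the successive ratios $4/3,3/2,3/2$ give monotonicity;
\item the scaling identity $\mathrm{MM}(t+3)=3\,\mathrm{MM}(t)$ for $t\ge 2$ drives both the induction for $t\le\mathrm{MM}(t)$ and the reduction of supermultiplicativity to the six pairs in $\{2,3,4\}^2$;
\item monotonicity disposes of the cases $a\le 1$ or $b\le 1$.
\end{itemize}

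The paper's argument is shorter and explains \emph{why} the inequality holds. However, it relies on the identification of the formula with the partition maximum, which is standard but not proved in the paper. Also, the concatenation step does not literally cover $a=1$ or $b=1$, since $1$ has no partition into $2$s and $3$s; those cases need monotonicity, exactly as you treat them. Your argument is self-contained and checks every edge case against the definition, at the cost of a small case table.

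One slip needs fixing. You claim $\mathrm{MM}(t+3)=3\,\mathrm{MM}(t)$ also holds at $t=1$, but $\mathrm{MM}(4)=4\neq 3=3\,\mathrm{MM}(1)$; only the inequality $\mathrm{MM}(4)\ge 3\,\mathrm{MM}(1)$ holds. This does no damage: $t+3=4$ is already in your base range, and your supermultiplicativity reduction applies the identity only at arguments $\ge 2$. Still, delete that clause or weaken it to an inequality.
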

\begin{proof}
This is the exact Moon--Moser extremal function. Equivalently,
\(\mathrm{MM}(t)\) is the largest product obtainable by partitioning \(t\)
into summands \(2\) and \(3\), with the base values
\(\mathrm{MM}(0)=\mathrm{MM}(1)=1\). The explicit formula therefore makes
monotonicity and \(t\le \mathrm{MM}(t)\) immediate. Concatenating optimal
partitions for \(a\) and \(b\) gives a valid partition of \(a+b\), so
\(\mathrm{MM}(a)\mathrm{MM}(b)\le \mathrm{MM}(a+b)\).
\end{proof}

\begin{lemma}\label{lem:mc-bounding}
The map \(b_{\mathrm{MC}}\) is a bounding function for
\(\mathcal{F}_{\mathrm{MC}}\).
\end{lemma}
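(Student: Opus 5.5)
The plan is to check the four conditions of \Cref{def:bounding} one at a time. Recursion compatibility \eqref{eq:recursiveComp} carries all the content, and the other three conditions then follow quickly. I will use the properties of $\mathrm{MM}$ from \Cref{lem:mm-properties}: monotonicity, $t\le\mathrm{MM}(t)$, supermultiplicativity, and also $\mathrm{MM}(t)\ge 1$, which holds by the explicit formula.

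\emph{Leaf conditions.} If $r_{\mathrm{MC}}(x)=0$, then $b_{\mathrm{MC}}(x)=1$ by definition. So \eqref{eq:smallbound} holds, and \eqref{eq:leafbound} holds because $t_{\mathrm{MC}}\in\{0,1\}$. If $r_{\mathrm{MC}}(x)>0$, then $t_{\mathrm{MC}}(x)=0$ by the unweighted condition, and $b_{\mathrm{MC}}(x)\ge 1$. The bound $f_{\mathrm{MC}}\le b_{\mathrm{MC}}$ follows by induction along the (polynomial-depth) recursion once \eqref{eq:recursiveComp} is known. At a leaf, $f=t\le 1=b$. At an internal node,
\[
f(x)=\sum_i f(h(x,i))\le\sum_i b(h(x,i))\le b(x).
\]

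\emph{Recursion compatibility.} Fix $x=\langle R,P,X,G\rangle$ with $k=r_{\mathrm{MC}}(x)\ge 1$. Set $m:=|P\cup X|=|P|+|X|$, so that $b_{\mathrm{MC}}(x)=\mathrm{MM}(m)$, and let $\Delta:=\deg_H(u)$ be the maximum degree in $H=G[P\cup X]$. I carry out three steps.

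(1) \emph{Each child bound is at most $\mathrm{MM}(\Delta)$.} The child $h_{\mathrm{MC}}(x,i)$ has $P_i\cup X_i\subseteq (P\cup X)\cap N_G(v_i)=N_H(v_i)$. Hence $|P_i|+|X_i|\le\deg_H(v_i)\le\Delta$. If the child is internal, monotonicity gives a bound of at most $\mathrm{MM}(\Delta)$. If the child is a leaf, its bound is $1\le\mathrm{MM}(\Delta)$.

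(2) \emph{The number of children is at most $m-\Delta$.} The candidates $C(x)=P\setminus N_H(u)$ lie inside $(P\cup X)\setminus N_H(u)$, and this set has size $m-\Delta$. This holds whether or not $u$ itself is a candidate.

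(3) \emph{Combine.} Using $t\le\mathrm{MM}(t)$ and then supermultiplicativity,
\[
\sum_{i=1}^k b_{\mathrm{MC}}(h_{\mathrm{MC}}(x,i))\le (m-\Delta)\,\mathrm{MM}(\Delta)\le\mathrm{MM}(m-\Delta)\,\mathrm{MM}(\Delta)\le\mathrm{MM}(m)=b_{\mathrm{MC}}(x).
\]

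The main point needing care is step (2), namely that pivoting on a maximum-degree vertex caps the fan-out by the non-neighbourhood size $m-\Delta$. This pivot choice is exactly what makes the Moon--Moser product argument go through; a bound counting children by $|P|$ alone would be too weak. Polynomial-time computability of $b_{\mathrm{MC}}$ is immediate, since it requires computing $r_{\mathrm{MC}}$ and one evaluation of the closed form of $\mathrm{MM}$.
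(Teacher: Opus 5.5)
Your proof is correct and follows the paper's argument for recursion compatibility essentially step for step: each child has $|P_i|+|X_i|\le\deg_H(v_i)\le\Delta$ by the maximum-degree pivot, the fan-out is at most $m-\Delta$, and then $k\,\mathrm{MM}(\Delta)\le\mathrm{MM}(k)\,\mathrm{MM}(\Delta)\le\mathrm{MM}(m)$ by the properties of $\mathrm{MM}$. The one minor deviation is how you get $f_{\mathrm{MC}}\le b_{\mathrm{MC}}$: you derive it by induction from recursion compatibility and the leaf values, whereas the paper applies Moon--Moser directly, noting that each counted clique $R\cup S$ has $S$ maximal in $G[P\cup X]$. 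Both routes are valid, and yours additionally makes explicit the leaf-child case $1\le\mathrm{MM}(\Delta)$, which the paper leaves implicit.
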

\begin{proof}
Let \(x=\langle R,P,X,G\rangle\) be a Bron--Kerbosch state satisfying the
invariant of \Cref{def:mc-uSR-clean}, and let \(H:=G[P\cup X]\).

We first bound the number of solutions below \(x\). By
\Cref{lem:maxcliqueEquiv}, every clique counted by \(x\) has the form
\(K=R\cup S\), where \(S\subseteq P\), \(S\cap X=\emptyset\), and \(K\) is a
maximal clique of \(G\). If some \(y\in P\cup X\) were adjacent in \(H\) to
every vertex of \(S\), then \(y\) would also be adjacent to every vertex of
\(R\) by the Bron--Kerbosch invariant, so \(K\cup\{y\}\) would be a larger
clique of \(G\), contradicting maximality. Hence \(S\) is a maximal clique of
\(H\), and therefore
\[
  f_{\mathrm{MC}}(x)\le M(H)\le \mathrm{MM}(|P|+|X|).
\]
If \(r_{\mathrm{MC}}(x)=0\), then \(b_{\mathrm{MC}}(x)=1\), so the leaf
condition and \(f_{\mathrm{MC}}(x)\le b_{\mathrm{MC}}(x)\) are immediate.

It remains to prove recursive compatibility for non-leaf states. Assume
\(r_{\mathrm{MC}}(x)>0\), let \(u\) be the pivot chosen from \(P\cup X\), let
\(\Delta:=\deg_H(u)\), and write \(C(x)=(v_1,\dots,v_k)\) for the candidate
list. Because \(u\) has maximum degree in \(H\),
\[
  k=|P\setminus N_H(u)|\le |P\cup X|-\Delta.
\]
For the \(i\)-th child
\[
  x_i=
  \Bigl\langle
  R\cup\{v_i\},\;
  \bigl(P\setminus\{v_1,\ldots,v_{i-1}\}\bigr)\cap N_G(v_i),\;
  \bigl(X\cup\{v_1,\ldots,v_{i-1}\}\bigr)\cap N_G(v_i),\;
  G
  \Bigr\rangle,
\]
every vertex in \(P_i\cup X_i\) lies in \(N_H(v_i)\). Hence
\[
  |P_i|+|X_i|\le \deg_H(v_i)\le \Delta.
\]
Therefore every child satisfies
\[
  b_{\mathrm{MC}}(x_i)\le \mathrm{MM}(\Delta).
\]
Summing over the \(k\) children and applying \Cref{lem:mm-properties} gives
\[
  \sum_{i=1}^{r_{\mathrm{MC}}(x)} b_{\mathrm{MC}}(x_i)
  \le k\,\mathrm{MM}(\Delta)
  \le \mathrm{MM}(k)\,\mathrm{MM}(\Delta)
  \le \mathrm{MM}(k+\Delta)
  \le \mathrm{MM}(|P|+|X|)
  = b_{\mathrm{MC}}(x).
\]
Thus \(b_{\mathrm{MC}}\) is recursion compatible, completing the proof.
\end{proof}

\begin{theorem}\label{thm:MC}
    We can compute a $(\varepsilon,\delta)$-approximation for $M(G)$ in time
    \[\boxed{
  O^{\ast}\!\bigl(
      3^{\,|V|/6}\;
      \varepsilon^{-2}\log\tfrac1\delta
    \bigr)
  \;=\;
  O^{\ast}\!\bigl(1.2009^{\,n}\,
      \varepsilon^{-2}\log\tfrac1\delta
    \bigr).}
\]
\end{theorem}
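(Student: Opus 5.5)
The plan is to apply \Cref{thm:fullalgo} directly to the bounded uSR form $\mathcal{F}_{\mathrm{MC}}^{b}$ of \Cref{def:mc-uSR-clean} at the root instance $x_0=\langle\emptyset,V,\emptyset,G\rangle$. By \Cref{lem:maxcliqueEquiv}, $f_{\mathrm{MC}}(x_0)=M(G)$, so an $(\varepsilon,\delta)$-approximation of $f_{\mathrm{MC}}(x_0)$ is one for $M(G)$. The running time of \Cref{thm:fullalgo} is $O^\ast(\sqrt{b(x_0)}\,\varepsilon^{-2}\log\frac1\delta)$. Moreover $b_{\mathrm{MC}}(x_0)\le \mathrm{MM}(n)\le 3^{n/3}$, or $b_{\mathrm{MC}}(x_0)=1$ if the root is a leaf. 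Hence $\sqrt{b(x_0)}\le 3^{n/6}\approx 1.2009^n$, which is the boxed bound.

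What remains is to check that $\mathcal{F}_{\mathrm{MC}}^{b}$ really meets all hypotheses used by \Cref{thm:fullalgo}. I would verify these in order.

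\begin{enumerate}[(i)]
\item \emph{uSR axioms (\Cref{def:uredp}).} The maps $r_{\mathrm{MC}},h_{\mathrm{MC}},t_{\mathrm{MC}}$ are polynomial-time computable, since pivot selection is a degree computation. The fan-out satisfies $r\le |V|$. The depth is at most $|V|$, because each branch strictly enlarges $R$ and the instance size stays fixed. The invalid-branch convention holds by fiat.
\item \emph{Unweightedness.} If $r>0$ then $C(x)\neq\emptyset$, so $P\neq\emptyset$ and $t=0$. When $q=|V|$ might be $0$ (empty graph), we have $r=0$ trivially.
\item \emph{Bounding function.} This is exactly \Cref{lem:mc-bounding}, together with the leaf conditions \eqref{eq:smallbound} and \eqref{eq:leafbound}, which hold because $b=1$ whenever $r=0$.
\end{enumerate}

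One consistency point needs care. A state with $P\cup X\neq\emptyset$ but $C(x)=\emptyset$ is a leaf with $t=0$. This is consistent with the recursion, because such a state counts no maximal cliques: $P\subseteq N(u)$ means any extension could be enlarged by $u$, or is blocked by $u\in X$.

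The main obstacle I expect is the feasibility predicate $\mathsf{Dec}_f$ needed by Stage~1 (\textsc{DFSEnum}). We must decide in polynomial time whether a state $\langle R,P,X,G\rangle$ has a maximal clique $K$ with $R\subseteq K\subseteq R\cup P$ and $K\cap X=\emptyset$. I would argue this by a greedy criterion. Such a $K$ exists iff there is a maximal clique $S$ of $G[P\cup X]$ with $S\subseteq P$, i.e.\ a clique $S\subseteq P$ that no $y\in P\cup X$ fully dominates.

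There are two fallback routes. The first is to invoke the paper's remark that any polynomial-delay enumerator can replace \textsc{DFSEnum} in Stage~1, and use the Tsukiyama-style polynomial-delay enumeration of maximal cliques of $G$. Stage~1 only counts up to $k$ solutions at the root, so it needs no subtree feasibility. The second is to decide extendibility through the Bron--Kerbosch invariant, using the standard polynomial-time extension test for maximal cliques avoiding a set. I would favour the first route, since it keeps Stage~1 within $O^\ast(\sqrt{b})$, while Stage~2 (the sampler of \Cref{thm:sampling-correct}) uses only $b_{\mathrm{MC}}$ and needs no feasibility at all.
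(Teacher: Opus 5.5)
Your argument has the same skeleton as the paper's. \Cref{lem:maxcliqueEquiv} identifies $M(G)$ with $f_{\mathrm{MC}}$ at the root, \Cref{lem:mc-bounding} certifies $b_{\mathrm{MC}}$, the root bound is $\mathrm{MM}(n)\le 3^{n/3}$, and \Cref{thm:fullalgo} finishes.

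The real difference is Stage~1. The paper just invokes \Cref{thm:fullalgo} as stated, so its Stage~1 is \textsc{DFSEnum} driven by $\mathsf{Dec}_{f_{\mathrm{MC}}}$ on Bron--Kerbosch states. This relies on the overview's unproved assertion that such states have polynomial-time feasibility. In principle that route buys more, namely the displacement-index tree later reused for the quantum $b^{1/4}$ variant. You instead run an off-the-shelf polynomial-delay enumerator of the maximal cliques of $G$ at the root, which the framework explicitly permits. This is enough here. The correctness proof of \Cref{thm:fullalgo} only uses that Stage~1 emits exactly $f_{\mathrm{MC}}(\langle\emptyset,V,\emptyset,G\rangle)=M(G)$ distinct items with polynomial delay. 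Stage~2 uses only $\sigma_{\mathcal F^b,x}$ and $b_{\mathrm{MC}}$.

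Your choice is in fact necessary, so drop the ``greedy criterion'' and your second fallback. Your characterisation (a clique $S\subseteq P$ that is maximal in $G[P\cup X]$) is correct. However, it cannot be decided in polynomial time unless $\mathrm{P}=\mathrm{NP}$, and no standard polynomial extension test exists for it.

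Deciding whether a graph $H$ has a maximal clique disjoint from a given $X'\subseteq V(H)$ is NP-complete. In the complement graph it asks for an independent dominating set inside $V(H)\setminus X'$. This encodes SAT: use one edge per variable joining its two literals, and one vertex of $X'$ per clause, adjacent to its literals.

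Such questions already arise at depth one of the recursion. Add to $H$ a vertex $w$ adjacent to all of $V(H)$, and a vertex $u$ adjacent to $V(H)\setminus X'$ and to many pendant dummy vertices. Then $u$ is the root pivot and the candidate list is $\{u,w\}\cup X'$. Ordering $X'$ before $w$ makes the $w$-child equal to $\langle\{w\},V(H)\setminus X',X',G\rangle$. This state has a positive leaf iff $H$ has a maximal clique avoiding $X'$.

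Hence $\mathsf{Dec}_{f_{\mathrm{MC}}}$ is NP-hard on reachable states, and the paper's bare appeal to \Cref{thm:fullalgo} is not justified as written. Your substitution of an external enumerator in Stage~1 is what actually makes the stated bound go through.
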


\begin{proof}
     By \Cref{lem:maxcliqueEquiv},
     \(M(G)=f_{\mathrm{MC}}\bigl(\langle \emptyset, V, \emptyset, G\rangle\bigr)\),
     and by \Cref{lem:mc-bounding},
     \(b_{\mathrm{MC}}\) is a valid bounding function for this recursion.
     At the root,
     \[
       b_{\mathrm{MC}}\bigl(\langle \emptyset, V, \emptyset, G\rangle\bigr)
       =\mathrm{MM}(|V|)
       \le 3^{|V|/3}.
     \]
    Therefore applying \Cref{thm:fullalgo} to
    \(\mathcal{F}_{\mathrm{MC}}^b\) gives the desired runtime.
\end{proof}

For graphs that have no anti-triangle (an independent set on $3$ vertices), a tighter upper bound $M(G)\le 2^{|V(G)|/2}$ is known~\cite{HujterT93}. As a consequence, we obtain an improved running time for anti-triangle-free graphs.

\begin{theorem}
    For anti-triangle-free graphs, we can compute a $(\varepsilon,\delta)$-approximation for $M(G)$ in time
    \[\boxed{
  O^{\ast}\!\bigl(
      2^{\,|V|/4}\;
      \varepsilon^{-2}\log\tfrac1\delta
    \bigr)
  \;=\;
  O^{\ast}\!\bigl(1.1893^{\,n}\,
      \varepsilon^{-2}\log\tfrac1\delta
    \bigr).}
\]
\end{theorem}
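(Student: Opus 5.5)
The plan is to reuse the Bron--Kerbosch bounded uSR form $\mathcal F_{\mathrm{MC}}$ of \Cref{def:mc-uSR-clean} verbatim (same $h_{\mathrm{MC}},t_{\mathrm{MC}},r_{\mathrm{MC}},q_{\mathrm{MC}}$, so \Cref{lem:maxcliqueEquiv} and polynomial-time feasibility carry over unchanged). I would replace only the bounding function: $b(\langle R,P,X,G\rangle):=1$ at leaves and $\mathrm{HT}(|P|+|X|)$ otherwise. Here $\mathrm{HT}(t)$ is the integer Hujter--Tuza extremal function for anti-triangle-free graphs: $2^{t/2}$ for even $t$, with the appropriate odd correction (e.g.\ $5\cdot 2^{(t-5)/2}$), and small base values. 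Then I would apply \Cref{thm:fullalgo}. At the root the bound is $\mathrm{HT}(n)\le 2^{n/2}$, which gives $O^\ast(2^{n/4}\varepsilon^{-2}\log\frac1\delta)$.

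The key steps, in order, are as follows. (1) Heredity: every $H=G[P\cup X]$ is an induced subgraph of $G$, hence anti-triangle-free. Exactly as in the proof of \Cref{lem:mc-bounding}, every solution below $x$ restricts to a maximal clique of $H$, so $f_{\mathrm{MC}}(x)\le M(H)\le \mathrm{HT}(|P|+|X|)$. The leaf conditions are immediate. (2) Structural fact: in an anti-triangle-free graph, the non-neighbours of any vertex $u$ form a clique. Consequently the candidate set $C(x)=P\setminus N_H(u)$ consists of $u$ itself (if $u\in P$) together with pairwise adjacent non-neighbours of $u$. (3) Recursive compatibility: each child satisfies $|P_i|+|X_i|\le\deg_H(v_i)\le\Delta$. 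It therefore suffices to show $k\,\mathrm{HT}(\Delta)\le\mathrm{HT}(t)$, where $t=|P|+|X|$ and $k=|C(x)|$. I would split on whether the pivot lies in $X$ or in $P$. If $u\in X$, all $k$ candidates are non-neighbours of $u$, so $\Delta\le t-1-k$. The required inequality then reduces to $k\le 2^{(k+1)/2}$ up to parity corrections, which holds for every $k\ge1$. If $u\in P$, only $k-1$ candidates are non-neighbours, so $\Delta\le t-k$, and we need $k\le 2^{k/2}$. This holds for $k\in\{1,2\}$ and for $k\ge4$.

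The main obstacle is the single case $u\in P$ with $k=3$, where $3\cdot 2^{(t-3)/2}>2^{t/2}$. My first attempt would exploit that the three candidates $u,v_2,v_3$ form a clique and are ordered. The children for $v_2$ and $v_3$ are taken after earlier candidates; since those earlier candidates are adjacent to $v_i$, they move into $X_i$ rather than being removed. So I would look for a tighter degree bound instead, arguing that $\deg_H(v_i)\le\Delta$ cannot be tight for all three simultaneously. Alternatively, I would use that the non-neighbourhood of each $v_i$ is again a clique, so that two of the children have size at most $t-4$. If that fails, the fallback is to change the pivot rule: choose $u\in X$ whenever a maximum-degree vertex exists there, and otherwise prefer a pivot with $k\ne3$. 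In the worst case, I would absorb the $k=3$ case by multiplying $b$ by a factor that is polynomial in $t$ and suitably rounded; this only costs a polynomial factor hidden by $O^\ast$. Once compatibility is settled, \Cref{thm:fullalgo} finishes the proof, with $2^{1/4}\approx1.1893$.
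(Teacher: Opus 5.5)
There is a genuine gap, and it is the one you flag yourself. The case $u\in P$, $k=3$ is never closed, and none of your proposed repairs works. The paper does not close it either: its proof is your steps (1)--(3) compressed into $\sum_i 2^{(|P_i|+|X_i|)/2}\le k\,2^{\Delta/2}\le 2^{k/2}2^{\Delta/2}\le 2^{(|P|+|X|)/2}$. The middle step there is $k\le 2^{k/2}$, which is false at $k=3$.

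The failure is real, not an artefact of the estimate. Take $G=C_5$, which is anti-triangle-free. Every pivot gives $k=3$ at the root, and all three children have $|P_i|+|X_i|=2$. So the children carry mass $3\cdot 2=6>2^{5/2}$, and also $6>5=\mathrm{HT}(5)$. This example rules out each of your repairs:
\begin{itemize}
\item Both of your first two ideas fail, because all three candidates are tight and no child has size $t-4$.
\item Changing the pivot rule does not help, because every vertex of $C_5$ gives the same configuration.
\item The polynomial-factor fallback fails, because the loss $3\cdot 2^{-3/2}\approx 1.06$ recurs $\Theta(n)$ times along one path.
\end{itemize}
To see the last point, let $G_m$ be the join of $m$ copies of $C_5$, so $n=5m$. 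The root has three children of measure $5m-3$. The first of these has $k=2$, with two children of measure $5m-5$ whose residual graph is again $G_{m-1}$ with $X=\emptyset$. Hence any recursion-compatible bound of the form $\phi(|P|+|X|)$ must satisfy $\phi(5m)\ge 3\phi(5m-3)\ge 6\phi(5m-5)$, with $\phi(2)\ge 2$. This forces $b(\text{root})\ge 6^{n/5}$, so the running time is only $O^\ast(6^{n/10})\approx O^\ast(1.1962^n)$, above the claimed $2^{n/4}$.

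The missing idea is to stop charging for $X$. Your first attempt already contains the saving: earlier candidates adjacent to $v_i$ land in $X_i$. The measure $|P|+|X|$ simply cannot see it. Since $X$ only removes solutions, measure by $|P|$ alone: $K\setminus R$ is maximal in $G[P]$, so $f(x)\le M(G[P])$. Choose the pivot $u\in P\cup X$ to maximise $D:=|P\cap N(u)|$ (Tomita's rule; this does not affect correctness of the recursion). Then:
\begin{itemize}
\item $k=|P|-D$, and every candidate satisfies $|P\cap N(v_i)|\le D$.
\item By anti-triangle-freeness, $P\setminus N[u]$ is a clique.
\item If $u\in X$, the candidates form a clique, and the $j$-th one ($j\ge 0$) has $|P_i|\le D-j$.
\item If $u\in P$, the child of $u$ has $|P_i|\le D$, and the clique candidates in order have $|P_i|\le D, D-1,\dots,D-(k-2)$.
\end{itemize}
Recursion compatibility of $2^{|P|/2}$ therefore reduces to $\sum_{j=0}^{k-1}2^{-j/2}\le 2^{k/2}$ and $1+\sum_{j=0}^{k-2}2^{-j/2}\le 2^{k/2}$. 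Both hold for every $k\ge 1$; at $k=3$ the second reads $2+2^{-1/2}\approx 2.71<2.83$. For $C_5$ the children now have $|P_i|=2,2,1$ and total mass $4+\sqrt2<2^{5/2}$.

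For integrality, take $b=\lfloor 2^{|P|/2}\rfloor$ at internal nodes and $b=1$ at leaves. This preserves compatibility because $\sum_i\lfloor y_i\rfloor\le\lfloor\sum_i y_i\rfloor$. The root value $\lfloor 2^{n/2}\rfloor$ then gives $O^\ast(2^{n/4}\varepsilon^{-2}\log\frac1\delta)$ via \Cref{thm:fullalgo}.
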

\begin{proof}
For anti-triangle-free graphs, every induced subgraph \(H\) also satisfies the
bound \(M(H)\le 2^{|V(H)|/2}\) of Hujter and Tuza~\cite{HujterT93}. Replacing
\(\mathrm{MM}(t)\) by \(2^{t/2}\) in the proof of \Cref{lem:mc-bounding}
therefore gives a valid state potential, because
\[
  \sum_i 2^{(|P_i|+|X_i|)/2}
  \le k\,2^{\Delta/2}
  \le 2^{k/2}2^{\Delta/2}
  =2^{(k+\Delta)/2}
  \le 2^{(|P|+|X|)/2}.
\]
At the root this yields \(b(\langle \emptyset,V,\emptyset,G\rangle)\le 2^{n/2}\),
and \Cref{thm:fullalgo} gives the stated runtime.
\end{proof}

By taking the complement graph, this result also applies to approximately computing the number of maximal independent sets in triangle-free graphs.

\subsection{Counting Minimal $(a,b)$-Separators}\label{sec:separators}

Let $G=(V,E)$ be a graph on $n:=|V|$ vertices and fix distinct terminals
$a,b\in V$.
A set $S\subseteq V\setminus\{a,b\}$ is an \emph{$(a,b)$-separator} if $a$ and $b$
lie in different connected components of $G-S$, and it is \emph{minimal} if no
proper subset of $S$ is an $(a,b)$-separator.
Write $\mathrm{MS}_{a,b}(G)$ for the number of minimal $(a,b)$-separators.

A standard characterisation is that $S$ is a minimal $(a,b)$-separator iff in $G-S$ the components $C_a,C_b$ containing $a$ and $b$ satisfy $N(C_a)=N(C_b)=S$ (equivalently, $G-S$ has two full components with neighbourhood exactly~$S$).
In particular, $\mathrm{MS}_{a,b}(G)\le \mathrm{MS}(G)$, and thus the extremal
bound of Gaspers and Mackenzie \cite{gaspers2015numberminimalseparatorsgraphs} implies $\mathrm{MS}_{a,b}(G)\le O^{*}(\tau^n)$, where $\tau=\frac{1+\sqrt 5}{2}$ is the golden ratio.

\paragraph{Enumeration vs.\ oracle access.}
There are polynomial-delay backtracking algorithms to list all minimal
$(a,b)$-separators (e.g.\cite{takata2010space}), and output-sensitive algorithms that list all
minimal separators in total time $O(n^3\cdot \mathrm{MS}(G))$ (e.g.\cite{berry2000generating}).
These are sufficient for classical output-sensitive enumeration, but they are
not phrased as local oracle access to a rooted search tree (degree, parent,
$i$-th child), which is what our generic approximate counting and quantum
speed-ups require.
We therefore refactor the backtracking itself into a bounded-uSR definition.

\paragraph{Orientation.}
To obtain a $\tau^n$ recursion bound from a single-component branching, we count
\emph{oriented} minimal separators and then sum two orientations.
Given a minimal $(a,b)$-separator $S$, let $C_a(S)$ and $C_b(S)$ be the components
of $a$ and $b$ in $G-S$.
We say $S$ is \emph{$a$-oriented} if $|C_a(S)|<|C_b(S)|$, or
$|C_a(S)|=|C_b(S)|$ and $a<b$ under a fixed deterministic total order on $V$.
Let $\mathrm{MS}^{\rightarrow}_{a,b}(G)$ be the number of $a$-oriented minimal
$(a,b)$-separators.
Then
\[
  \mathrm{MS}_{a,b}(G)
  \;=\;
  \mathrm{MS}^{\rightarrow}_{a,b}(G)
  \;+\;
  \mathrm{MS}^{\rightarrow}_{b,a}(G),
\]
and the two summands form a partition of all minimal $(a,b)$-separators.

\begin{definition}[$\mathcal F_{\mathrm{MS}}^{\rightarrow,b}$]\label{def:MinSep}
\leavevmode
Fix a deterministic total order on $V$ and write $\min(\cdot)$ for the minimum
under this order.
An instance is an encoding
\(
  x=\langle G,a,b,C,X\rangle
\)
where $a,b\in V$ are distinct, $C\subseteq V$ is connected with $a\in C$, and
$X\subseteq N_G(C)$ with $X\cap (C\cup\{b\})=\emptyset$.
Let $n:=|V|$, and define the \emph{frontier} $\partial(C,X):=N_G(C)\setminus X$ and
the \emph{measure}
\[
  \mu(x)\;:=\; n-(2|C|+|X|).
\]
When $\partial(C,X)\neq\emptyset$, define the pivot vertex
$v(x):=\min \partial(C,X)$.

\begin{enumerate}[(a)]
\item \textbf{Instance transformation $h_{\mathrm{MS}}$.}
If $r_{\mathrm{MS}}(x)=2$ (defined below), then for $i\in\{1,2\}$ set
\[
  h_{\mathrm{MS}}(x,1)\;:=\;\langle G,a,b, C\cup\{v(x)\}, X\rangle,\qquad
  h_{\mathrm{MS}}(x,2)\;:=\;\langle G,a,b, C, X\cup\{v(x)\}\rangle.
\]
If $r_{\mathrm{MS}}(x)=0$, set $h_{\mathrm{MS}}(x,i):=\lambda$ for all $i$.

\item \textbf{Leaf predicate $t_{\mathrm{MS}}$.}
If $r_{\mathrm{MS}}(x)>0$ then $t_{\mathrm{MS}}(x):=0$.
Otherwise, let $S:=X$.
Set $t_{\mathrm{MS}}(x):=1$ iff all of the following hold:
\begin{enumerate}[(i)]
\item $\partial(C,X)=\emptyset$ (equivalently, $S=N_G(C)$),
\item in $G-S$ the vertices $a$ and $b$ lie in distinct components $C_a,C_b$,
\item $N_G(C_a)=N_G(C_b)=S$ (i.e.\ $C_a,C_b$ are full components), and
\item (\emph{$a$-orientation}) $|C_a|<|C_b|$, or $|C_a|=|C_b|$ and $a<b$.
\end{enumerate}
If any condition fails, set $t_{\mathrm{MS}}(x):=0$.

\item \textbf{Fan-out $r_{\mathrm{MS}}$ and depth $q_{\mathrm{MS}}$.}
Define
\[
  r_{\mathrm{MS}}(x)\;:=\;
  \begin{cases}
    2 & \text{if } \mu(x)\ge 1,\; b\notin N_G[C],\text{ and }\partial(C,X)\neq\emptyset,\\
    0 & \text{otherwise,}
  \end{cases}
\]
and set $q_{\mathrm{MS}}(x):=n$.

\item \textbf{Bounding function $b_{\mathrm{MS}}$.}
Let $F_k$ denote the Fibonacci numbers with $F_0=0,F_1=1$.
Define
\[
  b_{\mathrm{MS}}(x)\;:=\;
  \begin{cases}
    1 & \text{if } r_{\mathrm{MS}}(x)=0,\\
    F_{\mu(x)+2} & \text{if } r_{\mathrm{MS}}(x)=2.
  \end{cases}
\]
\end{enumerate}

Finally, define the uSR function $f_{\mathrm{MS}}$ induced by
$(h_{\mathrm{MS}},t_{\mathrm{MS}},r_{\mathrm{MS}},q_{\mathrm{MS}})$, and set
\[
  \mathrm{MS}^{\rightarrow}_{a,b}(G)\;:=\;
  f_{\mathrm{MS}}\!\left(\langle G,a,b,\{a\},\emptyset\rangle\right).
\]
\end{definition}

\begin{lemma}[Correctness]\label{lem:MinSep-correct}
For every graph $G$ and terminals $a\neq b$,
$\mathrm{MS}^{\rightarrow}_{a,b}(G)$ equals the number of $a$-oriented minimal
$(a,b)$-separators of $G$.
Consequently,
$\mathrm{MS}_{a,b}(G)=\mathrm{MS}^{\rightarrow}_{a,b}(G)+\mathrm{MS}^{\rightarrow}_{b,a}(G)$.
\end{lemma}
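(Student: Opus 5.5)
The plan is to build an explicit bijection between the positive leaves of the recursion tree rooted at $x_0=\langle G,a,b,\{a\},\emptyset\rangle$ and the $a$-oriented minimal $(a,b)$-separators. By \Cref{lem:completeIndices}, $f_{\mathrm{MS}}(x_0)$ equals the number of positive leaves, so such a bijection gives the first claim. The identity $\mathrm{MS}_{a,b}(G)=\mathrm{MS}^{\rightarrow}_{a,b}(G)+\mathrm{MS}^{\rightarrow}_{b,a}(G)$ then follows. The reason is that a minimal $(a,b)$-separator is also a minimal $(b,a)$-separator with the roles of $C_a,C_b$ swapped, and exactly one of the two orientation conditions holds, since the tie-break by the total order is antisymmetric. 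Throughout I use the full-component characterisation recalled before \Cref{def:MinSep}.

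\emph{Leaves to separators.} First I check that every state reachable from $x_0$ is a well-formed instance. Branch~1 adds a frontier vertex $v\in N_G(C)$ to $C$, so $C$ stays connected. Branch~2 adds $v$ to $X$, so $X\subseteq N_G(C)$ is kept. Moreover $v\neq b$ because $b\notin N_G[C]$ whenever $r_{\mathrm{MS}}=2$. Now let $y=\langle G,a,b,C,X\rangle$ be a leaf with $t_{\mathrm{MS}}(y)=1$. Conditions (i)--(iv) say that $S:=X=N_G(C)$ separates $a$ from $b$, that $C_a,C_b$ are full components, and that the orientation holds. So $S$ is an $a$-oriented minimal separator. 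Since $C$ is connected, contains $a$, and has $N_G(C)=S$, we get $C=C_a(S)$. Hence the leaf is determined by $S$ as the state $(C_a(S),S)$. Two distinct leaves of the tree are distinct states, because at their last common ancestor the pivot went into $C$ in one branch and into $X$ in the other, and $C\cap X=\emptyset$. Therefore the map from positive leaves to separators is injective.

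\emph{Separators to leaves.} Fix an $a$-oriented minimal separator $S$, and write $C_a=C_a(S)$, $C_b=C_b(S)$. I follow the branch determined by $S$: at a state $(C,X)$ with $C\subseteq C_a$ and $X\subseteq S$, the pivot $v\in N_G(C)\setminus X$ lies in $C_a\cup S$, because $C$ sits inside the component $C_a$ of $G-S$. I take branch~1 if $v\in C_a$ and branch~2 if $v\in S$. This keeps the invariant $C\subseteq C_a$, $X\subseteq S$. Along this path I must show that $r_{\mathrm{MS}}=2$ holds until $(C,X)=(C_a,S)$. There are three checks.
\begin{itemize}
\item $b\notin N_G[C]$, because $C\subseteq C_a$ and $b\in C_b$, which is not adjacent to $C_a$.
\item If the frontier is empty, then $N_G(C)=X\subseteq S$. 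So $C$ is a connected set containing $a$ with no neighbours outside $S$, which forces $C=C_a$. Fullness then gives $X=N_G(C_a)=S$.
\item $\mu\ge 1$ whenever $(C,X)\neq(C_a,S)$.
\end{itemize}
The depth bound is automatic, since each step increases $2|C|+|X|$. At $(C_a,S)$ the frontier is empty, so this state is a leaf, and conditions (i)--(iv) hold by the assumptions on $S$. So $S$ is hit, which gives surjectivity.

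The main obstacle, and the only place where orientation is genuinely needed, is the measure condition $\mu\ge 1$ at every non-final state on the path. Before the path ends we have $(C,X)\neq(C_a,S)$. Since $C\subseteq C_a$ and $X\subseteq S$, this gives $2|C|+|X|<2|C_a|+|S|$. Orientation gives $|C_a|\le|C_b|$, so
\[
2|C|+|X|<|C_a|+|C_b|+|S|\le n,
\]
that is, $\mu\ge 1$. Without orientation the measure could hit $0$ before the target separator is reached and the branch would be cut off early. This is exactly why the count is split into two oriented halves.
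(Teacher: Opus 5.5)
Your proposal is correct and follows the same route as the paper. Both arguments build a bijection between positive leaves and $a$-oriented minimal $(a,b)$-separators, using the full-component characterisation and the unique root-to-leaf path that sends $C_a(S)$ into $C$ and $S$ into $X$. Your version is more careful than the paper's sketch: you explicitly check that $r_{\mathrm{MS}}=2$ persists along that path, and in particular that $\mu\ge 1$ holds there, which is exactly where $|C_a|\le|C_b|$ is needed and which the paper leaves implicit.
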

\begin{proof}
Along a root-to-leaf path, the set \(C\) is the current component containing
\(a\), and \(X\subseteq N_G(C)\) is the set of vertices already committed to
the separator.  At each branching step the pivot
\(v\in N_G(C)\setminus X\) is assigned to exactly one of two roles: either
\(v\) joins the \(a\)-side component, giving \(C\cup\{v\}\), or \(v\) is put
into the separator candidate, giving \(X\cup\{v\}\).  Thus a leaf with
\(\partial(C,X)=\emptyset\) determines the candidate separator
\(S=X=N_G(C)\).

The leaf predicate checks precisely the standard full-component
characterisation of minimal \((a,b)\)-separators: in \(G-S\), the components
containing \(a\) and \(b\) must have neighbourhood exactly \(S\).  Conversely,
given an \(a\)-oriented minimal \((a,b)\)-separator \(S\), follow the unique
path that sends to \(C\) exactly the vertices of the \(a\)-component of
\(G-S\), and sends to \(X\) exactly the vertices of \(S\) when they first enter
the frontier.  The deterministic pivot order makes this path unique, and the
orientation test accepts it exactly in the \(a\)-oriented recursion.  The two
orientations therefore partition the minimal \((a,b)\)-separators.
\end{proof}

\begin{lemma}[Recursion-compatible bound]\label{lem:MinSep-bound}
The function $b_{\mathrm{MS}}$ is recursion-compatible for
$\mathcal F_{\mathrm{MS}}^{\rightarrow,b}$, and for the root instance
$x_0=\langle G,a,b,\{a\},\emptyset\rangle$ we have
$b_{\mathrm{MS}}(x_0)=F_n=O^{*}(\tau^n)$.
\end{lemma}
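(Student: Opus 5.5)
The plan is to verify the conditions of \Cref{def:bounding} directly, with recursive compatibility at branching states as the only substantive step; the remaining conditions then follow from it by induction. A state with $r_{\mathrm{MS}}(x)=0$ has no children, and there $b_{\mathrm{MS}}(x)=1\ge t_{\mathrm{MS}}(x)$, which settles \eqref{eq:smallbound} and \eqref{eq:leafbound}. Once compatibility is established, $f_{\mathrm{MS}}(x)\le b_{\mathrm{MS}}(x)$ follows by induction along the recursion, which has depth at most $n$. At leaves it holds because $f=t\le 1=b$. At internal states it holds because $f(x)=\sum_i f(h(x,i))\le\sum_i b(h(x,i))\le b(x)$.

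First I would check that the children of a branching state are legal instances, so that $b_{\mathrm{MS}}$ is evaluated on states satisfying the invariants. Let $x$ have $r_{\mathrm{MS}}(x)=2$ and pivot $v=v(x)\in N_G(C)\setminus X$. Since $b\notin N_G[C]$, we have $v\neq b$. Consider the first child, with $C\cup\{v\}$. This set is connected because $v\in N_G(C)$. It still satisfies $X\subseteq N_G(C)\subseteq N_G(C\cup\{v\})\cup\{v\}$, and $v\notin X$ gives $X\subseteq N_G(C\cup\{v\})$. For the second child, $X\cup\{v\}\subseteq N_G(C)$ holds and the new set avoids $C\cup\{b\}$. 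The key measure computation is
\[
  \mu(h(x,1))=\mu(x)-2,\qquad \mu(h(x,2))=\mu(x)-1 .
\]

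For compatibility, fix $x$ with $r_{\mathrm{MS}}(x)=2$, so $\mu:=\mu(x)\ge 1$ and $b_{\mathrm{MS}}(x)=F_{\mu+2}$. For each child $y$, I claim $b_{\mathrm{MS}}(y)\le F_{\mu(y)+2}$.
\begin{itemize}
\item If $y$ is branching, this holds with equality by definition.
\item If $y$ is a leaf, then $b_{\mathrm{MS}}(y)=1$. Here $\mu(y)\ge \mu-2\ge -1$, so $F_{\mu(y)+2}\ge F_1=1$.
\end{itemize}
Hence
\[
  b_{\mathrm{MS}}(h(x,1))+b_{\mathrm{MS}}(h(x,2))\le F_{\mu}+F_{\mu+1}=F_{\mu+2}=b_{\mathrm{MS}}(x).
\]
This leaf case, where the bound $1$ must fit under a Fibonacci value with possibly tiny index, is the only delicate point. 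It is exactly where the guard $\mu(x)\ge1$ in $r_{\mathrm{MS}}$ is used, so that the indices never drop below $F_1$.

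Finally, at the root $x_0=\langle G,a,b,\{a\},\emptyset\rangle$ we have $\mu(x_0)=n-2$. If $x_0$ branches, then $b_{\mathrm{MS}}(x_0)=F_n$. Otherwise $b_{\mathrm{MS}}(x_0)=1\le F_n$ for $n\ge2$, which suffices for the $O^*$ claim. In that degenerate case, either $b$ is adjacent to $a$, so no separator exists, or $N(a)=\emptyset$, so the answer is trivial. Binet's formula then gives $F_n\le\tau^n$, hence $b_{\mathrm{MS}}(x_0)=O^*(\tau^n)$.
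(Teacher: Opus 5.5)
Your proof is correct and follows the paper's argument: the children have measures exactly $\mu(x)-2$ and $\mu(x)-1$, and $F_{\mu(x)}+F_{\mu(x)+1}=F_{\mu(x)+2}$ closes the recursion-compatibility inequality. Your extra checks fill in points the paper's short proof skips: leaf children fit under the Fibonacci bound precisely because of the guard $\mu(x)\ge 1$; $f\le b$ then follows by induction; and $b_{\mathrm{MS}}(x_0)=F_n$ holds only when the root branches, while otherwise $b_{\mathrm{MS}}(x_0)=1\le F_n$, which still gives the $O^{*}(\tau^n)$ bound.
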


\begin{proof}
If $r_{\mathrm{MS}}(x)=2$ then the two children satisfy
$\mu(h_{\mathrm{MS}}(x,1))=\mu(x)-2$ and $\mu(h_{\mathrm{MS}}(x,2))=\mu(x)-1$.
Thus
\(
  b_{\mathrm{MS}}(h_{\mathrm{MS}}(x,1))+b_{\mathrm{MS}}(h_{\mathrm{MS}}(x,2))
  \le F_{\mu(x)}+F_{\mu(x)+1}=F_{\mu(x)+2}=b_{\mathrm{MS}}(x)
\)
by the Fibonacci recurrence.
Leaves have $b_{\mathrm{MS}}(x)=1$ by definition.
\end{proof}

\begin{theorem}[Approximate counting]\label{thm:MinSep-count}
There is an algorithm for $\mathrm{MS}_{a,b}(G)$ with running time
\[
  O^{*}\!\left(\tau^{n/2}\cdot \varepsilon^{-2}\cdot \log(1/\delta)\right),
\]
obtained by applying \textsc{Count} to $\mathcal F_{\mathrm{MS}}^{\rightarrow,b}$
on $(G,a,b)$ and $(G,b,a)$ and summing the two estimates.
The corresponding quantum instantiations run in
$O^{*}(\tau^{n/3})$ and $O^{*}(\tau^{n/4})$ up to polynomial factors in
$n,\varepsilon^{-1}$ or $\varepsilon^{-\frac{3}{2}},\log(1/\delta)$.
\end{theorem}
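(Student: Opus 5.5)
The proof combines the results just established with the generic estimator. First, by \Cref{lem:MinSep-correct}, $\mathrm{MS}_{a,b}(G)=\mathrm{MS}^{\rightarrow}_{a,b}(G)+\mathrm{MS}^{\rightarrow}_{b,a}(G)$, with both summands given as values of $f_{\mathrm{MS}}$ on root instances. We then check that $\mathcal F_{\mathrm{MS}}^{\rightarrow,b}$ is a bounded uSR form.

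\emph{The uSR conditions of \Cref{def:uredp}.} Fan-out is at most $2$. The measure $\mu$ drops by at least one per step, so depth is at most $n$. Instances are subsets of $V$, so size is preserved. Internal nodes have $t=0$ by definition.

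\emph{The remaining conditions of \Cref{def:bounding}.} Leaves have $b=1\ge t$, and recursive compatibility is \Cref{lem:MinSep-bound}. The bound $f\le b$ then follows by induction from these two facts. One point needs care: a child produced by a branch with $\mu\ge1$ may have $\mu\le 0$, where $F_{\mu+2}$ must be read as at least $1$. I would handle this by noting that the child is then a leaf with $b=1\le F_{\mu+2}$ for $\mu\ge -1$. If necessary I would replace $F_{\mu+2}$ by $\max(1,F_{\mu+2})$ and recheck compatibility, since $F_{k}+F_{k+1}$ still dominates when the smaller terms are clamped at $1$ for $k\ge1$.

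\emph{Feasibility.} \textsc{DFSEnum} needs a polynomial-time test $\mathsf{Dec}_f(y)=[f(y)>0]$. I expect this to be the main obstacle, because extending an arbitrary partial separator is NP-hard. Here, however, the state is a connected $C\ni a$ with forbidden set $X\subseteq N(C)$, and we need to know whether some $a$-oriented minimal separator $S\supseteq X$ has $C$ inside its $a$-component, with $b$ reachable.

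\begin{itemize}
\item Without the orientation constraint, this can be tested by closure. Take the component $D$ of $b$ in $G-N[C]$, after requiring $X\cap N(D)=X$ and checking reachability. Then $S=N(D)$ relative to the component of $a$ is the canonical candidate, in the style of the close-to-$b$ minimal separator construction of \cite{berry2000generating}.
\item The orientation constraint is harder. I would first try to avoid it by defining $\mathsf{Dec}$ only for the unoriented count and observing that DFSEnum needs only a polynomial-delay enumerator. The alternative is to run Stage~1 on the unoriented tree for both orientations jointly, i.e.\ to use \Cref{thm:combineuTotP} with two cores, which also yields the sum directly without combining two estimates.
\end{itemize}

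\emph{Running time.} With the form in place, \Cref{thm:fullalgo} applied to each oriented root instance, with $b=F_n=O^*(\tau^n)$ by \Cref{lem:MinSep-bound}, costs $O^*(\tau^{n/2}\varepsilon^{-2}\log(1/\delta))$. Summing two $(\varepsilon,\delta/2)$-approximations of nonnegative quantities gives an $(\varepsilon,\delta)$-approximation of the total by a union bound. For the quantum variants, I would plug the same form and root bound into the black-box results of \Cref{sec:quantum-aaronson,sec:Ambainis}, obtaining $b^{1/3}=\tau^{n/3}$ and $b^{1/4}=\tau^{n/4}$.
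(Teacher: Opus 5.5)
Your route is the paper's: its argument is \Cref{lem:MinSep-correct} and \Cref{lem:MinSep-bound} fed into \Cref{thm:fullalgo} (and into the variants of \Cref{sec:Quantum}) for the two oriented roots. Your checks of the uSR conditions, of the leaf case $\mu\in\{-1,0\}$ (where indeed $b=1\le F_{\mu+2}$), and of the union bound for the sum are fine.

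The paper does not spell out $\mathsf{Dec}_f$ for these states, so you are right to flag it. However, your proposal leaves the oriented case open, and your first workaround fails. \textsc{DFSEnum} (via $\beta_{\mathcal F,x}$) needs $\mathsf{Dec}_f$ to be exact. If you substitute the unoriented test, $\beta_{\mathcal F,x}$ descends into subtrees whose only leaves violate condition~(iv), or are cut off by the $\mu\ge 1$ rule while $\partial(C,X)\neq\emptyset$. These $t=0$ leaves are emitted as outputs, so Stage~1 overcounts and the inference $f(x)\ge k$ that justifies $T$ is lost. Filtering an unoriented polynomial-delay enumerator by orientation instead may spend time proportional to the number of $b$-oriented separators before finding $k$ oriented ones.

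Your second workaround (Stage~1 on an unoriented tree without the $\mu$-cutoff, Stage~2 on the two oriented cores) can be made to work. But it requires re-proving \Cref{thm:combineuTotP} with a Stage~1 enumerator other than the cores' own \textsc{DFSEnum}. The quantum variants also run the per-core enumerator or index tree, so they would need the same rework.

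No workaround is needed, because the oriented test is as easy as the unoriented one.

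First, for an internal state $x=\langle G,a,b,C,X\rangle$, $f_{\mathrm{MS}}(x)$ counts exactly the $a$-oriented minimal $(a,b)$-separators $S$ with $C\subseteq C_a(S)$ and $X\subseteq S$. Along the path of such an $S$, every intermediate state $(C',X')\neq(C_a(S),S)$ has $\partial(C',X')\neq\emptyset$, $b\notin N[C']$, and $2|C'|+|X'|<2|C_a(S)|+|S|\le |C_a(S)|+|C_b(S)|+|S|\le n$. So the $\mu$-cutoff never fires early.

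Now let $D$ be the component of $b$ in $G-N[C]$. Since $N(D)\subseteq N(C)$, every vertex of $N(D)$ is adjacent both to $D$ and to the component of $a$ in $G-N(D)$. Hence $N(D)$ is a minimal $(a,b)$-separator with $C_b(N(D))=D$ and $C\subseteq C_a(N(D))$.

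For any $S$ as above, $C_b(S)$ avoids $N[C]$, so $C_b(S)\subseteq D$. Therefore $S=N(C_b(S))\subseteq D\cup N(D)$ misses the connected set $C_a(N(D))\ni a$, which gives $C_a(N(D))\subseteq C_a(S)$. Thus $N(D)$ simultaneously maximises $|C_b|$ and minimises $|C_a|$. Moreover, any $w\in X\setminus N(D)$, being adjacent to $C$, would lie in $C_a(N(D))\subseteq C_a(S)$, contradicting $X\subseteq S$.

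So on internal states $\mathsf{Dec}_f(x)$ holds iff $X\subseteq N(D)$ and $N(D)$ is $a$-oriented (with the same tie rule), and on leaves $\mathsf{Dec}_f=t_{\mathrm{MS}}$. Both are polynomial-time, so \Cref{thm:fullalgo} and its quantum variants apply verbatim to each orientation, exactly as the statement says.
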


\begin{corollary}[Counting all minimal separators]\label{cor:all-minsep}
There is an
\[
  O^{*}\!\left(\tau^{n/2}\varepsilon^{-2}\log(1/\delta)\right)
\]
time \((\varepsilon,\delta)\)-approximation algorithm for the number of
distinct minimal separators of \(G\).
\end{corollary}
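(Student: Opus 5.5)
Fix a vertex \(a\) and use the fact that every minimal separator \(S\) of \(G\) is a minimal \((u,w)\)-separator for some non-adjacent pair \(u,w\): take \(u\) in one full component and \(w\) in another. For \(O^\ast\) purposes the obvious reduction is to sum over pairs, but then separators would be counted with multiplicity. To avoid overcounting, I will assign each minimal separator \(S\) a canonical witness pair. Let \(C_1(S)\) be the full component of \(G-S\) whose minimum vertex (under the fixed order) is smallest, and let \(C_2(S)\) be the full component with the next smallest minimum vertex. Set \(u(S):=\min C_1(S)\) and \(w(S):=\min C_2(S)\). Then \(S\) is counted exactly once, namely under the pair \((u(S),w(S))\).

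To implement this, for each ordered pair \((u,w)\) I will modify the leaf predicate \(t_{\mathrm{MS}}\) of \Cref{def:MinSep} and drop the orientation condition. The new leaf accepts iff \(S=X=N_G(C)\), \(w\) lies in a full component \(C_w\ne C\) of \(G-S\), \(u=\min C\), \(w=\min C_w\), and no other full component has minimum smaller than \(w\). All of these checks are polynomial time. I start the recursion from \(\langle G,u,w,\{u\},\emptyset\rangle\). The branching, the measure \(\mu\), and the Fibonacci bound \(b_{\mathrm{MS}}\) are unchanged, so \Cref{lem:MinSep-bound} still gives recursion compatibility and root bound \(F_n\). The leaf conditions \eqref{eq:smallbound} and \eqref{eq:leafbound} hold trivially. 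Correctness then follows as in \Cref{lem:MinSep-correct}: the unique path that sends the component of \(u\) to \(C\) and its neighbourhood to \(X\) reaches \(S\), and canonicity makes the pairs partition the set of all minimal separators. The decision oracle \(\mathsf{Dec}_f\) for these subinstances is needed for Stage~1, and I intend to supply it the same way the per-pair \(a\)-oriented version does. If that is awkward, I can instead move the canonicity test to leaves only and argue that a feasibility check is still polynomial, because a partial state \((C,X)\) extends to a valid leaf iff a suitable closure is a minimal separator. This oracle is the step I expect to be the main obstacle.

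Next I combine the pairs using \Cref{thm:combineuTotP}. There are \(\ell\le n^2\) cores, each with \(B_i\le F_n\). Discarding cores with value zero costs one \(\mathsf{Dec}\) call each, and computing the \(B_i\) and prefix sums is polynomial, so \(S=\poly(n)\). The total mass satisfies \(\sum_i B_i\le n^2F_n=O^\ast(\tau^n)\). The theorem therefore yields an \((\varepsilon,\delta)\)-approximation in time \(O^\ast(\tau^{n/2}\varepsilon^{-2}\log(1/\delta))\).

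Alternatively, I could run \Cref{thm:fullalgo} on each pair and sum the estimates, using accuracy \(\varepsilon\) and failure probability \(\delta/n^2\) per pair. A union bound makes this also valid, since relative errors add up to at most \(\varepsilon\) times the sum, and it only costs a polynomial factor. That gives a fallback route if the aggregate argument causes trouble.
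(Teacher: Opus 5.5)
You use the same canonical pair $\kappa(S)=(\min C_1(S),\min C_2(S))$ as the paper, but dropping the orientation condition is a genuine error, not a simplification. The fan-out rule of \Cref{def:MinSep} only branches while $\mu(x)=n-(2|C|+|X|)\ge 1$. This cutoff is what makes $b_{\mathrm{MS}}=F_{\mu+2}$ a valid recursion-compatible bound: without it $\mu$ goes negative and \Cref{lem:MinSep-bound} no longer applies.

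Along the path that grows $C$ into $C_1(S)$ and puts $S$ into $X$, $\mu$ drops by $1$ or $2$ per step. So the leaf $(C_1(S),S)$ can only be reached if $2|C_1(S)|+|S|\le n+1$. In the paper the $a$-orientation guarantees this, because one always grows the smaller full side, giving $2|C_a|+|S|\le |C_a|+|C_b|+|S|\le n$. You always grow the side containing the smallest vertex, and that may be the large side.

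For example, take the path on vertices $1,2,3,4,5$ (edges $\{i,i+1\}$, natural order). Then $S=\{4\}$ has $\kappa(S)=(1,5)$ and $C_1(S)=\{1,2,3\}$. Your run from $\langle G,1,5,\{1\},\emptyset\rangle$ reaches $C=\{1,2,3\}$, $X=\emptyset$ with $\mu=-1$. It stops there with $\partial(C,X)=\{4\}\neq\emptyset$ and never accepts $S$. So your step ``the unique path that sends the component of $u$ to $C$ \ldots\ reaches $S$'' fails, and your cores do not partition the minimal separators.

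The repair is the paper's argument: keep the orientation and decouple it from canonicity. For each ordered pair $(a,b)$, run the $a$-oriented recursion on $(G,a,b)$ and the $b$-oriented recursion on $(G,b,a)$, each with the extra leaf test $\kappa(S)=(a,b)$. This only deletes accepting leaves, so the Fibonacci bound is untouched. For the canonical pair of $S$, exactly one of the two runs reaches and accepts $S$: the one growing the smaller of $C_1(S),C_2(S)$, with ties broken by the terminal order.

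Your leaf predicate as written also omits the condition $u<w$. So, for example, $S=\{3\}$ in the same path is accepted under both $(1,4)$ and $(4,1)$; you should test $\kappa(S)=(u,w)$ directly.

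With these corrections, your aggregation via \Cref{thm:combineuTotP}, or the per-pair union-bound fallback, gives the stated running time. The $\mathsf{Dec}_f$ question you single out as the main obstacle is not treated separately in the paper's proof either, so it is not where your argument breaks.
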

\begin{proof}
Fix the deterministic vertex order used above.  For a minimal separator \(S\),
let \(\mathcal C_{\mathrm{full}}(S)\) be the set of full components of
\(G-S\), that is, components \(C\) with \(N_G(C)=S\).  This set has size at
least two.  Let \(C_1(S)\) and \(C_2(S)\) be the two full components whose
minimum vertices are first and second under the induced lexicographic order,
and set
\[
  \kappa(S):=(\min C_1(S),\min C_2(S)).
\]
For each ordered pair \((a,b)\), compute the number of minimal
\((a,b)\)-separators \(S\) with \(\kappa(S)=(a,b)\) by the same
two-orientation decomposition as in \Cref{thm:MinSep-count}: run the oriented
recursion on \((G,a,b)\) and on \((G,b,a)\), but in both runs add to the leaf
predicate the polynomial-time check \(\kappa(S)=(a,b)\), where \(S=X\) is the
separator candidate at the leaf and \(\kappa\) is computed in the original
graph.  This only deletes accepting leaves, so the same recursion-compatible
bound applies.  Each minimal separator has exactly one canonical ordered pair,
and for that pair exactly one of the two orientations accepts it.  Summing over
all ordered terminal pairs introduces only a polynomial factor, absorbed by
\(O^{*}(\cdot)\).
\end{proof}

\subsection{Counting Perfect Matchings in Subcubic Graphs}
\label{sec:pm-subcubic}

We instantiate the framework on \textsc{\#Perfect-Matching} restricted to
graphs of maximum degree~$3$.
This application is meant to highlight two themes that recur throughout the
paper:

\begin{enumerate}[(i)]
\item \textbf{Expose an unweighted core.}
Perfect matchings admit many ``weighted'' or multiplicative viewpoints
(e.g.\ via decompositions across components),
but our generic \(\sqrt{b}\) and quantum speed-ups require an \emph{unweighted}
leaf-counting recursion.  We therefore refactor the computation into a
canonical search tree whose leaves correspond to perfect matchings.

\item \textbf{Quantum-ready structure.}
The branching rule below is deterministic and local (degree / path queries), and the branches are merged additively in an unweighed manner.
Hence one can build a tree of solutions which supports the local oracles
(\(\mathsf{Child}\), \(\mathsf{Parent}\), degree queries) required by our
black-box quantum accelerations in \Cref{sec:quantum-aaronson,sec:Ambainis}.
This contrasts with existing algorithms for exact counting, as they do not straightforwardly benefit from access to quantum resources.
\end{enumerate}

\paragraph{Problem.}
Given a simple graph \(G=(V,E)\) with \(\Delta(G)\le 3\),
let \(\mathrm{PM}(G)\) denote the number of perfect matchings of~\(G\).
We write \(n:=|V|\) and \(V_3(G):=\{v\in V:\deg_G(v)=3\}\) with
\(n_3:=|V_3(G)|\).

\subsubsection{A Canonical Reduction to Degree $\{2,3\}$ Cores}
\label{sec:pm:reduce}

Our recursion will operate on \emph{reduced} instances, obtained by applying
simple forced rules.  This is the mechanism that makes the ``path forcing''
in the branching analysis precise.

\begin{definition}[\(\mathrm{Reduce}_{\mathrm{PM}}\)]
\label{def:pm-reduce}
On input a graph \(G\), repeatedly apply the following rules until none
applies:

\begin{enumerate}[(R1)]
\item \textbf{Parity filter.} If any connected component of \(G\) has an odd
number of vertices, return a special terminal instance \(\bot\).
(This is safe since a perfect matching cannot exist on an odd component.)

\item \textbf{Isolated vertex.} If \(G\) has a vertex of degree \(0\),
return \(\bot\).

\item \textbf{Forced degree-1.} If \(G\) has a degree-$1$ vertex \(u\) with
unique neighbour \(v\), then edge \(uv\) is forced in every perfect matching.
Delete \(u\) and \(v\) (and all incident edges) and continue.
\end{enumerate}

If all vertices are deleted, return the empty graph \(\varnothing\).
\end{definition}

\begin{definition}[Core graphs]\label{def:coregraphs}
A graph \(H\) is called a \emph{\(\{2,3\}\)-core} if it is a nonempty output of
\(\mathrm{Reduce}_{\mathrm{PM}}\) and hence satisfies:
\[
  \delta(H)\ge 2,\qquad \Delta(H)\le 3,\qquad
  \text{and every connected component of \(H\) has even size.}
\]
Equivalently, every vertex of \(H\) has degree \(2\) or~\(3\).
\end{definition}

\paragraph{Decision oracle.}
Our framework assumes feasibility in polynomial time:
\[
  \mathsf{Dec}_{\mathrm{PM}}(G)\;:=\;\bigl[\mathrm{PM}(G)>0\bigr].
\]
Perfect matching \emph{existence} is in \(\mathsf{P}\) (e.g.\ via blossom-style
matching algorithms), so \(\mathsf{Dec}_{\mathrm{PM}}\in\mathsf{P}\).
We will not use \(\mathsf{Dec}_{\mathrm{PM}}\) inside the branching rule below,
but it is available to our generic polynomial-delay enumerator
(\Cref{sec:PolyDelay}).

\subsubsection{2-path Structure Between Degree-3 Vertices}
\label{sec:pm:2paths}

\begin{lemma}[Even number of degree-3 vertices]\label{lem:pm-even-n3}
If every vertex of \(H\) has degree \(2\) or \(3\), then \(|V_3(H)|\) is even.
\end{lemma}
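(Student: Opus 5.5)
This follows from the handshake lemma, so I will use a direct parity argument with no appeal to the earlier framework. First, I sum the degrees of \(H\). Since every edge has two endpoints,
\[
  \sum_{v\in V(H)} \deg_H(v) \;=\; 2|E(H)|,
\]
which is even. Next, I split the vertex set as \(V(H)=V_2(H)\,\dot\cup\,V_3(H)\), where \(V_2(H)\) is the set of degree-2 vertices; the hypothesis guarantees these two classes exhaust \(V(H)\). The degree sum therefore becomes \(2|V_2(H)|+3|V_3(H)|\).

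Reducing modulo \(2\), the left-hand side and the term \(2|V_2(H)|\) both vanish. This leaves \(3|V_3(H)|\equiv 0 \pmod 2\), so \(|V_3(H)|\equiv 0\pmod 2\), which is the claim. Equivalently, this is the standard fact that every graph has an even number of odd-degree vertices, specialised to the case where the odd degrees are exactly the degree-3 vertices.

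There is no real obstacle. The only point needing care is that the hypothesis excludes every degree other than \(2\) and \(3\), so the odd-degree vertices coincide exactly with \(V_3(H)\). In particular, the lemma applies to any \(\{2,3\}\)-core in the sense of \Cref{def:coregraphs}, and it also holds componentwise, since each connected component is itself a graph with all degrees in \(\{2,3\}\).
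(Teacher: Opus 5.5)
Your proposal is correct and is essentially the paper's proof. Both arguments use the handshake identity \(\sum_v \deg(v)=2|E(H)|\) and reduce modulo \(2\), so that degree-\(2\) vertices drop out and the degree sum is congruent to \(|V_3(H)|\).
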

\begin{proof}
\(\sum_{v\in V(H)}\deg(v)=2|E(H)|\) is even.
Modulo \(2\), degree-$2$ vertices contribute \(0\) and degree-$3$ vertices
contribute \(1\), hence the sum is congruent to \(|V_3(H)|\pmod 2\).
\end{proof}

\begin{definition}[Maximal 2-path]\label{def:pm-2path}
Let \(H\) be a \(\{2,3\}\)-core.
A \emph{2-path} is a simple path
\(P=(v_1,v_2,\ldots,v_k)\) such that
\(\deg(v_1)=\deg(v_k)=3\) and \(\deg(v_i)=2\) for all \(1<i<k\).
It is \emph{maximal} if it cannot be extended at either endpoint while keeping
internal vertices of degree~\(2\).
\end{definition}

\begin{lemma}[Degree-3 vertices lie on 2-paths]\label{lem:pm-2path-exists}
Let \(H\) be a \(\{2,3\}\)-core with \(n_3>0\).
Then every degree-$3$ vertex \(v\in V_3(H)\) is an endpoint of some maximal
2-path \(P\) whose other endpoint is a (possibly different) degree-$3$ vertex.
Moreover, one can find such a path in time \(\poly(|V(H)|)\) by walking along
degree-$2$ chains.
\end{lemma}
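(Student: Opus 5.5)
We walk from $v$ along a degree-$2$ chain and argue that the walk must stop at a degree-$3$ vertex. Since $\deg_H(v)=3$, $v$ has a neighbour $w$. If $\deg_H(w)=3$, then $(v,w)$ is already a 2-path with both endpoints of degree $3$. It is maximal, because its endpoints have degree $3$ and extending at either end would make an internal vertex of degree $3$. Otherwise $\deg_H(w)=2$, and we walk: set $u_0=v$, $u_1=w$, and while $\deg_H(u_j)=2$ let $u_{j+1}$ be the unique neighbour of $u_j$ other than $u_{j-1}$. Here we use that $H$ is a simple graph, so $u_j$ has exactly two distinct neighbours.

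The key step is to show the walk never revisits a vertex, and hence stops at a degree-$3$ vertex $u_k\neq$ any earlier $u_j$ except possibly $u_k=v$. Suppose the walk first reaches an already visited vertex, $u_{j+1}=u_i$ with $i<j$. If $i\ge 1$, then $u_i$ has degree $2$ and its two neighbours $u_{i-1},u_{i+1}$ are already on the walk. The new neighbour $u_j$ would then have to be one of them. It cannot be $u_{i-1}$ or $u_{i+1}$ unless $j=i\pm1$, which contradicts both the choice of $u_{j+1}\neq u_{j-1}$ and the fact that $i<j$. Hence any repeat must be $u_{j+1}=u_0=v$, which has degree $3$, so the walk stops there. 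Since $V(H)$ is finite, the walk terminates within $|V(H)|$ steps at a vertex of degree $3$, namely either $v$ itself (a cycle through $v$) or some other degree-$3$ vertex $v'$.

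The main obstacle is the case where the walk closes up at $v$. Then $(v,u_1,\dots,u_{k-1},v)$ is a closed walk, not a \emph{simple} path as \Cref{def:pm-2path} requires. The clean fix is to treat the statement's ``possibly different'' endpoint as allowing $v'=v$, reading the 2-path as a cycle through $v$ all of whose other vertices have degree $2$. If a genuinely simple path is demanded, we instead restart the walk from another neighbour of $v$. Since $v$ has three neighbours and such a loop uses exactly two of them, at most one loop can occur at $v$. The walk from the third neighbour therefore cannot return to $v$ via degree-$2$ vertices already used. It might return to $v$ directly, but the entry edge would then have to be that third edge, which would form a second loop using the same third neighbour twice. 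That makes the third neighbour's degree-$2$ chain return along itself, which the no-repeat argument rules out. So this walk ends at a degree-$3$ vertex $v'\neq v$ and yields a simple 2-path.

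Maximality is immediate because both endpoints have degree $3$. For the running time, each step inspects a constant number of neighbours, and there are at most $|V(H)|$ steps over at most three starting neighbours. The whole procedure is therefore polynomial, in fact linear, in $|V(H)|$.
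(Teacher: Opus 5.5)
Your argument is correct and is essentially the paper's: the paper contracts each maximal degree-$2$ chain into a super-edge of a cubic multigraph on $V_3(H)$ and reads off a 2-path from any super-edge at $v$, and your no-repeat walk is exactly what shows this contraction is well defined. Your handling of a chain that closes up at $v$ is more careful than the paper's proof, which does not address loop super-edges. Keep only the restart argument, since reading the cycle as a 2-path violates the simple-path requirement of \Cref{def:pm-2path} (and the alternation and branching analysis needs $v_1\neq v_k$); the restart is cleanest as a half-edge count: distinct maximal chains are internally disjoint because a degree-$2$ vertex determines its chain, so a loop uses two of the three edges at $v$ and the chain through the third edge is not a loop, hence ends at some $v'\neq v$.
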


\begin{proof}
Contract every maximal chain of degree-$2$ vertices into a single ``super-edge''.
The resulting multigraph has vertex set \(V_3(H)\), and every vertex has degree
\(3\) (each incident edge of the original degree-$3$ vertex exits along exactly
one maximal degree-$2$ chain).  Hence every vertex is incident to some
super-edge, which corresponds to a maximal 2-path in~\(H\).
\end{proof}

\subsubsection{Branching Rule: Choose the First Edge of a Maximal 2-path}
\label{sec:pm:branch}

We use the standard perfect-matching self-reduction on an edge:
for any edge \(e=\{u,v\}\in E(G)\),
\[
  \mathrm{PM}(G)\;=\;\mathrm{PM}(G-e)\;+\;\mathrm{PM}(G-\{u,v\}),
\]
since every perfect matching either excludes \(e\) or includes it
(and then must cover \(u\) and \(v\) by~\(e\)).

\paragraph{Canonical choice of the branching edge.}
Let \(H:=\mathrm{Reduce}_{\mathrm{PM}}(G)\).
If \(H\in\{\bot,\varnothing\}\) we are at a leaf.
Otherwise \(H\) is a \(\{2,3\}\)-core, and we define a deterministic branching
edge \(e(H)\) as follows:

\begin{enumerate}[(B1)]
\item If \(V_3(H)=\emptyset\), then every vertex has degree \(2\), so \(H\) is a
disjoint union of even cycles.  Pick the lexicographically first edge on the
lexicographically first cycle.

\item If \(V_3(H)\neq\emptyset\), let \(v_1\) be the lexicographically smallest
degree-$3$ vertex.  Among the (at most three) maximal 2-paths starting at \(v_1\),
choose one \(P=(v_1,v_2,\ldots,v_k)\) whose other endpoint is degree-$3$
(break ties deterministically), and set \(e(H):=\{v_1,v_2\}\).
\end{enumerate}

This rule is computable in \(\poly(|V(H)|)\) time.

\subsubsection{Forced Alternation on 2-Paths}
\label{sec:pm:forcing}

The reason 2-paths are the correct branching primitive is that once we decide
whether the \emph{first} edge is in the perfect matching, the matching behaviour
on the entire path is forced.

\begin{lemma}[Alternation on a 2-path]\label{lem:pm-alternation}
Let \(H\) be a \(\{2,3\}\)-core and let
\(P=(v_1,v_2,\ldots,v_k)\) be a 2-path in \(H\).
Fix whether the first edge \(\{v_1,v_2\}\) is included in the perfect matching.
Then the matching status of every edge of \(P\) is uniquely determined:
the matching must alternate along the path.

In particular, writing \(\ell:=k-1\) for the length:
\begin{enumerate}[(a)]
\item if \(\ell\) is even, then exactly one endpoint (\(v_1\) or \(v_k\)) is
matched \emph{inside} \(P\), and the other must be matched outside \(P\);
\item if \(\ell\) is odd, then either both endpoints are matched inside \(P\)
(include-branch) or both are matched outside \(P\) (exclude-branch).
\end{enumerate}
\end{lemma}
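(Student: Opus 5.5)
The proof is a forward induction along the path, driven by the degree-$2$ condition on the internal vertices. In any perfect matching $M$ of $H$, each internal vertex $v_i$ with $1<i<k$ has exactly the two incident edges $\{v_{i-1},v_i\}$ and $\{v_i,v_{i+1}\}$, both on $P$. Since $v_i$ must be covered by exactly one edge of $M$, exactly one of these two consecutive path edges lies in $M$. Write $e_j:=\{v_j,v_{j+1}\}$ for $j=1,\dots,\ell$. The local constraint at $v_{j+1}$ then gives $[e_{j+1}\in M]=1-[e_j\in M]$ for $1\le j<\ell$. By induction on $j$, fixing the status of $e_1$ determines the status of every $e_j$: the edge $e_j$ is in $M$ iff $[e_1\in M]$ holds and $j$ is odd, or fails and $j$ is even. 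This is the claimed alternation and uniqueness.

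For the endpoint statements, $v_1$ is matched inside $P$ iff $e_1\in M$, and $v_k$ is matched inside $P$ iff $e_\ell\in M$. By the formula above, $[e_\ell\in M]=[e_1\in M]$ when $\ell$ is odd, and $[e_\ell\in M]=1-[e_1\in M]$ when $\ell$ is even.

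\begin{itemize}
\item If $\ell$ is even, exactly one of $v_1,v_k$ is matched inside $P$. The other endpoint must still be covered, and it is not covered by a path edge, so it is matched by an edge outside $P$. This is (a).
\item If $\ell$ is odd, both endpoints are matched inside $P$ (the include-branch, $e_1\in M$) or both are matched outside $P$ (the exclude-branch, $e_1\notin M$). This is (b).
\end{itemize}

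Some care is needed at the endpoints. The endpoints have degree $3$, so their third edge lies off $P$, and the phrase ``matched outside $P$'' is well defined. The case $v_1=v_k$, a cycle closing at a single degree-$3$ vertex, is excluded because Definition~\ref{def:pm-2path} requires $P$ to be a simple path. The degenerate case $k=2$ ($\ell=1$, a single edge joining two degree-$3$ vertices) is covered by (b) with no internal vertices.

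I expect no real obstacle here. The one point to handle carefully is that the lemma is a statement about every perfect matching consistent with the choice on $e_1$, not about existence: if a local constraint forces something contradictory outside $P$, the statement holds vacuously.
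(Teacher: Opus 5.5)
Your proof is correct and follows the same route as the paper's own proof. Each internal vertex has degree $2$ with both of its edges on $P$, so exactly one of them is in the matching; the status of the first edge therefore propagates along the path, and the endpoint claims follow from the parity of $\ell$. You simply write out the induction and the parity bookkeeping that the paper leaves implicit.

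One small wording slip: each endpoint has one edge on $P$ and two edges off it, not a single ``third edge'' off $P$. This does not affect the argument.
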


\begin{proof}
Every internal vertex \(v_i\) has degree \(2\) and is incident to exactly two
edges of \(P\); in any perfect matching it must be matched by exactly one of
them.  Thus the choice at the first edge propagates uniquely along the path.
The parity statement follows by checking which edges are selected by the
alternation.
\end{proof}

\subsubsection{Measure-and-Conquer Analysis}
\label{sec:pm:measure}

Define the measure on reduced instances
\[
  \mu(H)\;:=\;|V(H)| + |V_3(H)|,
\]
and set \(\mu(\bot)=\mu(\varnothing)=0\).
Note that for any \(\{2,3\}\)-core \(H\) we have \(|V_3(H)|\le |V(H)|\), hence
\(\mu(H)\le 2|V(H)|\).

\begin{lemma}[Branching vectors]\label{lem:pm-branch-vectors}
Let \(H\) be a \(\{2,3\}\)-core and let \(e(H)\) be the branching edge chosen
above.  Consider the two reduced children
\[
  H_0 := \mathrm{Reduce}_{\mathrm{PM}}(H-e(H)),\qquad
  H_1 := \mathrm{Reduce}_{\mathrm{PM}}(H-\{u,v\}) \ \text{ where } e(H)=\{u,v\}.
\]
Then one of the following holds:

\begin{enumerate}[(i)]
\item \textbf{Cycle case or even 2-path:} \(\mu(H_0)\le \mu(H)-4\) and
\(\mu(H_1)\le \mu(H)-4\), i.e.\ branching vector \((4,4)\).

\item \textbf{Odd 2-path:} \(\mu(H_0)\le \mu(H)-2\) and
\(\mu(H_1)\le \mu(H)-8\), i.e.\ branching vector \((8,2)\) up to reordering.
\end{enumerate}
\end{lemma}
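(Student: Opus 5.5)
The plan is a case analysis on the branching rule (B1)/(B2), tracking how much measure $\mu=|V|+|V_3|$ is lost in each child. Two counting facts do most of the work. Deleting a vertex removes $1$ from $|V|$, and a further $1$ if it had degree $3$. Lowering a neighbour's degree from $3$ to $2$ removes $1$ from $|V_3|$. The subsequent reduction $\mathrm{Reduce}_{\mathrm{PM}}$ only deletes vertices and never raises degrees, so it never increases $\mu$. If reduction returns $\bot$ or $\varnothing$, then $\mu=0$, and I will argue each claimed drop is still dominated by $\mu(H)$; the bound is then trivial provided $\mu(H)\ge 4$, resp.\ $8$, which should follow from $H$ being a nonempty even core.

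\emph{Cycle case (B1).} Every component is an even cycle. Removing $e$ leaves a path on that cycle, and (R3) then peels it completely. Including $e$ deletes $u,v$ and leaves an even path, which is also peeled. In both children the whole cycle, with at least $4$ vertices (an even cycle in a simple graph), disappears, so both drops are at least $4$.

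\emph{Case (B2).} Let $P=(v_1,\dots,v_k)$ with $\deg v_1=\deg v_k=3$, and $e=v_1v_2$. By \Cref{lem:pm-alternation}, once the status of $e$ is fixed, forced degree-1 reductions propagate along the whole path.
\begin{itemize}
\item In the exclude branch, $v_1$ drops to degree $2$ (a $\mu$-loss of $1$), and $v_2$ becomes degree $1$ (if $k>2$), forcing the alternation along $P$.
\item In the include branch, $v_1$ and $v_2$ are deleted, which already loses $2+1$ (more if $v_2=v_k$). Each other neighbour of $v_1$ also loses a degree.
\end{itemize}
I then split by the parity of $\ell=k-1$ and use \Cref{lem:pm-alternation}(a),(b) to see which endpoints are removed by the forcing. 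If $\ell$ is even, one endpoint is matched inside $P$ and deleted in each branch, while the other endpoint's degree drops to $2$; this should give at least $4$ in both branches. If $\ell$ is odd, the exclude branch may only delete interior vertices plus lower $v_1,v_k$ to degree $2$; since $k\ge 2$ with both endpoints of degree $3$, this yields a loss of at least $2$. The include branch deletes both endpoints (loss $4$), the interior vertices, and lowers the degrees of the four outside neighbours of $v_1$ and $v_k$, for a total of $8$.

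The main obstacle is the degenerate configurations. These include $k=2$ (a direct edge between degree-3 vertices), $v_1=v_k$ (a loop-like path), and the outside neighbours of $v_1,v_k$ coinciding or being degree-2 vertices rather than degree-3 ones, in which case the $|V_3|$ decrements may not materialise. My first attempt is to argue that whenever a neighbour's degree drop does not reduce $|V_3|$, that neighbour becomes degree $1$ or $0$. Reduction then deletes it, costing at least $1$ from $|V|$, so the accounting still reaches $8$. Coinciding neighbours are handled the same way, since a vertex losing two degrees falls to degree $\le 1$ and is deleted. If this charging fails in some tiny case, I would fall back to checking that the remaining bounded-size configurations are whole components and vanish entirely.
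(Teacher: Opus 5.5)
Your plan follows the paper's proof. It uses the same split into (B1)/(B2) and the parity of $\ell$, the same use of \Cref{lem:pm-alternation}, and the same charging of the four outside edges of $v_1,v_k$ in the odd include branch, with a shared outside neighbour dropping to degree $\le 1$ and being reduced away. In the even case, say explicitly, as the paper does, that $\ell\ge 2$ gives an interior vertex. That vertex supplies the fourth unit beyond the deleted endpoint and the downgraded endpoint; your ``this should give at least $4$'' leaves it implicit.

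\textbf{The gap.} One step genuinely fails: your claim that $\mu(H)\ge 8$ ``should follow from $H$ being a nonempty even core'', used to dispose of terminal children. The paper's proof has the same hole in its remark ``or the instance becomes terminal, in which case the drop is even larger''.
\begin{itemize}
\item Take $H=K_4-e$, with degree-$3$ vertices $x,y$ and degree-$2$ vertices $a,b$ (the missing edge is $ab$). Then $\mu(H)=6$.
\item The single edge $(x,y)$ is a maximal 2-path of odd length, and (B2) may select it, since its tie-breaking is left unspecified.
\item Including $xy$ leaves $a$ and $b$ isolated, so $H_1=\bot$ and $\mu(H_1)=0>\mu(H)-8$.
\item Your fallback (``the configuration vanishes entirely'') does not rescue this, because vanishing only yields a drop of $6$.
\end{itemize}
This is the only core with a degree-$3$ vertex and $\mu<8$, so the damage is confined to one graph. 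It is still real: with $b(\bot)=B_0=1$ it breaks recursion compatibility, since $B_0+B_4=3>B_6=2$. The clean repair is either to give $\bot$ bound $0$, as the appendix does, so terminal children need no $\mu$-drop, or to make (B2) prefer an even 2-path whenever one exists.

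\textbf{A smaller hole, also shared with the paper.} In the odd include branch, $v_1v_k$ may be an edge not on $P$. Then two of the four ``outside'' edge-ends land on the deleted endpoints and contribute nothing new. In that case $P$ has length at least $3$, so its at least two interior vertices cover the missing $2$. Your total already lists ``the interior vertices'', so it absorbs this once you say so; the paper's bare $4+4$ count does not.
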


\begin{proof}
We distinguish cases by the structure that defines \(e(H)\).

\paragraph{(i) \(V_3(H)=\emptyset\): disjoint union of even cycles.}
Branch on an edge \(e\) of a cycle \(C\).
Since every component of \(H\) has even size, every cycle has even length, so
\(|C|\ge 4\).
In either branch (include or exclude \(e\)), the degree-$1$ forcing rule in
\(\mathrm{Reduce}_{\mathrm{PM}}\) propagates around the cycle and deletes the
entire cycle, hence removes at least \(4\) vertices.  Therefore
\(\mu\) drops by at least \(4\) in both branches.

\paragraph{(ii) \(V_3(H)\neq\emptyset\): branch on a maximal 2-path
\(P=(v_1,\ldots,v_k)\) and the first edge \(e=\{v_1,v_2\}\).}
Let \(\ell=k-1\).

\emph{Even length \(\ell\).}
By \Cref{lem:pm-alternation}, in each branch exactly one endpoint is
matched inside the path (and is deleted), while the other endpoint loses one
incident edge and becomes degree \(2\) (so \(n_3\) decreases by \(1\)).
Moreover, \(\ell\ge 2\), so at least one internal degree-$2$ vertex is deleted
by the forced alternation.  Thus in each branch we have a drop of at least
\begin{align*}
  &2 \quad\text{(delete one degree-3 vertex)}\\
  +&\,1 \quad\text{(delete at least one degree-2 vertex)}\\
  +&\,1 \quad\text{(downgrade the other endpoint from degree 3 to 2)}\\
  =&\, 4,
\end{align*}
giving \((4,4)\).

\emph{Odd length \(\ell\).}
In the \textbf{exclude} branch, \Cref{lem:pm-alternation} says both endpoints
must be matched outside \(P\), so excluding \(e=\{v_1,v_2\}\) reduces the degree
of \(v_1\) from \(3\) to \(2\); additionally, the forced alternation implies
that the other endpoint \(v_k\) also cannot be matched inside the path, and its
incident path edge is excluded, reducing its degree from \(3\) to \(2\).
Hence \(n_3\) drops by at least \(2\), i.e.\ \(\Delta\mu\ge 2\).

In the \textbf{include} branch, \Cref{lem:pm-alternation} matches both
endpoints inside \(P\), so both degree-$3$ endpoints are deleted, contributing
a drop \(2+2=4\) to \(\mu\).  Each deleted endpoint has two remaining incident
edges \emph{outside} the path, so in total four edges are removed from the rest
of the core.  Each such removed edge decreases \(\mu\) by at least \(1\) after
full reduction: if its outside endpoint had degree \(3\), it is downgraded to
degree \(2\) (decreasing \(n_3\) by \(1\)); if it had degree \(2\), it becomes
degree \(1\) and triggers forced deletions (decreasing \(n\) by at least \(1\)).
Even if two removed edges hit the same outside vertex, that vertex's degree
drops by \(2\) and then either becomes degree \(1\) (forcing deletions) or degree
\(0\) (terminal), so the \emph{combined} contribution of the two edges is at
least \(2\).  Therefore the four removed edges contribute at least \(4\) more
drop in \(\mu\) (or the instance becomes terminal, in which case the drop is
even larger).  Hence \(\Delta\mu\ge 4+4=8\) in the include branch.

This yields the \((8,2)\) vector.
\end{proof}

\subsubsection{The Bounded uSR Definition}
\label{sec:pm:bound}

\Cref{lem:pm-branch-vectors} implies that every recursive step decreases
\(\mu\) by either \((4,4)\) or \((8,2)\).  Hence the standard measure bound
\(\,2^{\mu/4}\,\) is recursion-compatible up to integer rounding.
For the sampler we take the integer-valued bound
\begin{align}
  B_\ell \;:=\; \bigl\lfloor 2^{\ell/4}\bigr\rfloor
  \qquad(\ell\ge 0),\label{def:BoundingPM}
\end{align}
and define \(b_{\mathrm{PM}}(\langle G\rangle):=B_{\mu(\mathrm{Reduce}_{\mathrm{PM}}(G))}\).
Rounding down preserves the required inequalities:
\(2B_{\ell-4}\le B_\ell\) is immediate, and
\(B_{\ell-8}+B_{\ell-2}\le B_\ell\) holds since
\(2^{(\ell-8)/4}+2^{(\ell-2)/4}<2^{\ell/4}\).
Thus \(b_{\mathrm{PM}}\) is a valid bounding function in the sense of
\Cref{def:bounding}.

\begin{definition}[\(\mathcal{F}_{\mathrm{PM}}^{b}\): bounded uSR form for \textsc{\#PM}]\label{def:pm-uSR}
Instances are encodings \(\langle G\rangle\) of graphs with \(\Delta(G)\le 3\).
Let \(H:=\mathrm{Reduce}_{\mathrm{PM}}(G)\).

\begin{enumerate}[(a)]
\item \textbf{Leaf function.}
\[
  t_{\mathrm{PM}}(\langle G\rangle)
  \;:=\;
  \begin{cases}
    1, & \text{if } H=\varnothing,\\
    0, & \text{otherwise.}
  \end{cases}
\]

\item \textbf{Fan-out.}
\[
  r_{\mathrm{PM}}(\langle G\rangle)
  \;:=\;
  \begin{cases}
    0, & \text{if } H\in\{\bot,\varnothing\},\\
    2, & \text{otherwise.}
  \end{cases}
\]

\item \textbf{Branch map.}
If \(r_{\mathrm{PM}}(\langle G\rangle)=2\), let \(e(H)=\{u,v\}\) be the
deterministically chosen branching edge in the reduced core \(H\), and define
\[
  h_{\mathrm{PM}}(\langle G\rangle,1)\;:=\;\bigl\langle \mathrm{Reduce}_{\mathrm{PM}}(H-e(H))\bigr\rangle,
  \qquad
  h_{\mathrm{PM}}(\langle G\rangle,2)\;:=\;\bigl\langle \mathrm{Reduce}_{\mathrm{PM}}(H-\{u,v\})\bigr\rangle.
\]
(If \(r_{\mathrm{PM}}(\langle G\rangle)=0\), \(h_{\mathrm{PM}}\) is irrelevant by
the uSR conventions of \Cref{sec:uTotP}.)

\item \textbf{Depth bound.}
Set \(q_{\mathrm{PM}}(\langle G\rangle):=|V(G)|\).

\item \textbf{Bounding function.}
Let \(\mu(H):=|V(H)|+|V_3(H)|\) with \(\mu(\bot)=\mu(\varnothing)=0\), and define
\[
  b_{\mathrm{PM}}(\langle G\rangle)\;:=\;B_{\mu(H)}.
\]
In particular, \(b_{\mathrm{PM}}(\langle G\rangle)=1\) on accepting leaves,
which is exactly the positive-leaf tightness used in
\Cref{lem:sigma-ticket-count}.
\end{enumerate}
\end{definition}

\begin{lemma}[Recursion compatibility]\label{lem:pm-bound-compatible}
The map \(b_{\mathrm{PM}}\) is a bounding function for
\(\mathcal{F}_{\mathrm{PM}}\) in the sense of \Cref{def:bounding}.
\end{lemma}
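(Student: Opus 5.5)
We verify the four conditions of \Cref{def:bounding} in turn, with $b_{\mathrm{PM}}(\langle G\rangle)=B_{\mu(H)}$ and $H=\mathrm{Reduce}_{\mathrm{PM}}(G)$. Polynomial-time computability is immediate: $\mathrm{Reduce}_{\mathrm{PM}}$ applies at most $|V(G)|$ polynomial-time rules, $\mu(H)$ is a vertex count, and $\lfloor 2^{\ell/4}\rfloor$ for $\ell\le 2|V(G)|$ has polynomially many bits.

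\textbf{Leaf conditions.} If $r_{\mathrm{PM}}(\langle G\rangle)=0$ then $H\in\{\bot,\varnothing\}$, so $\mu(H)=0$ and $b_{\mathrm{PM}}=B_0=1$. This gives \eqref{eq:smallbound}, and $t_{\mathrm{PM}}\le 1=b_{\mathrm{PM}}$ gives \eqref{eq:leafbound}. At internal nodes $t_{\mathrm{PM}}=0\le b_{\mathrm{PM}}$ trivially.

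\textbf{Recursive compatibility.} For an internal node, $H$ is a $\{2,3\}$-core. The children are $\langle H_0\rangle$ and $\langle H_1\rangle$ with $H_0,H_1$ as in \Cref{lem:pm-branch-vectors}. These are already reduced, and $\mathrm{Reduce}_{\mathrm{PM}}$ is idempotent on its outputs, so $b_{\mathrm{PM}}(\langle H_j\rangle)=B_{\mu(H_j)}$. I would note this idempotence explicitly: a reduced nonempty graph has no odd component, no isolated vertex and no degree-$1$ vertex, so no rule fires. By \Cref{lem:pm-branch-vectors}, either both $\mu(H_j)\le\mu(H)-4$, or the pair is bounded by $(\mu(H)-8,\mu(H)-2)$. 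Since $B_\ell$ is nondecreasing in $\ell$ (take $B_\ell:=0$ for $\ell<0$, which is harmless since children have $\mu\ge0$ anyway), the sum of the children's bounds is at most $2B_{\ell-4}$ or $B_{\ell-8}+B_{\ell-2}$, where $\ell=\mu(H)$. The two floor inequalities then follow from one observation. If $a+c\le 2^{\ell/4}$ for reals $a,c\ge0$, then $\lfloor a\rfloor+\lfloor c\rfloor$ is an integer at most $2^{\ell/4}$, hence at most $\lfloor 2^{\ell/4}\rfloor$. For the first vector, $2\cdot 2^{(\ell-4)/4}=2^{\ell/4}$. For the second, $2^{(\ell-8)/4}+2^{(\ell-2)/4}=2^{\ell/4}(1/4+2^{-1/2})<2^{\ell/4}$.

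\textbf{Upper bound $f\le b$.} This follows from compatibility by induction on depth, as in \Cref{lem:sigma-ticket-count}. At a leaf, $f=t\le b$. At an internal node, $f(x)=\sum_i f(h(x,i))\le\sum_i b(h(x,i))\le b(x)$. The induction is well-founded because $\mu$ strictly decreases, giving depth at most $2|V(G)|$. I should double-check $q_{\mathrm{PM}}=|V(G)|$ here, since each branch deletes at least one vertex via forcing; this matters only for the depth condition, not for the bound.

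The main obstacle is entirely delegated to \Cref{lem:pm-branch-vectors}. The only delicate point in this lemma is the rounding argument, and ensuring that the child's bound is computed on an already-reduced graph so that $\mu$ matches the lemma's $H_j$.
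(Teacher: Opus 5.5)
\textbf{Where the argument breaks.} Your route is the paper's own: cite \Cref{lem:pm-branch-vectors}, use monotonicity of $B_\ell$ and the floor inequalities, then induct for $f\le b$. You are more explicit than the paper about the leaf conditions and the idempotence of $\mathrm{Reduce}_{\mathrm{PM}}$. But the step you call ``harmless'' (setting $B_\ell:=0$ for $\ell<0$) is exactly where the argument fails, and the paper's proof fails at the same place. The $(8,2)$ case can occur with $\mu(H)<8$. Then the include child is $\bot$ with $\mu=0$ and $b_{\mathrm{PM}}=B_0=1$, not $0$.

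\textbf{Counterexample.} Let $H$ be the diamond $K_4-e$ on $\{1,2,3,4\}$ with edges $12,13,14,23,24$.
\begin{itemize}
\item $H$ is a reduced $\{2,3\}$-core with $\mu(H)=4+2=6$, so $b_{\mathrm{PM}}(H)=B_6=\lfloor 2^{3/2}\rfloor=2$.
\item Rule (B2) starts at $v_1=1$, where $(1,2)$, $(1,3,2)$ and $(1,4,2)$ are all admissible. A deterministic tie-break such as ``smallest second vertex'' picks the odd path $(1,2)$, so $e(H)=\{1,2\}$.
\item The exclude child is the $4$-cycle $1\,3\,2\,4$, with $b=B_4=2$.
\item The include child leaves two isolated vertices, so $\mathrm{Reduce}_{\mathrm{PM}}$ returns $\bot$, with $b=B_0=1$.
\end{itemize}
Hence $\sum_i b(h(H,i))=3>2=b(H)$. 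This violates \eqref{eq:recursiveComp}, so the statement itself is false for this input under that tie-break. The lemma's claim $\mu(H_1)\le\mu(H)-8$ is false here. Its remark that a terminal child only makes the drop ``even larger'' overlooks that a $\bot$ child still carries bound mass $1$.

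\textbf{Scope and repair.} The diamond is the only core containing a degree-$3$ vertex with $\mu<8$. For every other such core, a $\bot$ child has mass $B_0\le B_{\mu(H)-8}$, so the terminal case is harmless and your argument goes through as written. The cleanest repair is to set $b_{\mathrm{PM}}(\langle G\rangle):=0$ whenever $\mathrm{Reduce}_{\mathrm{PM}}(G)=\bot$, as the appendix does for $\widetilde b$. Then the leaf conditions still hold ($t=0\le b=0\le 1$), $\bot$ children drop out of every sum, and the diamond gives $B_4+0\le B_6$. Alternatively, make (B2) prefer an even $2$-path.

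\textbf{Minor aside on depth.} Excluding an edge between two adjacent degree-$3$ vertices need not delete any vertex, so your stated reason for $q_{\mathrm{PM}}=|V(G)|$ is wrong. The right justification is that $\mu\le 2|V(G)|$ and $\mu$ drops by at least $2$ per level.
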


\begin{proof}
Let \(H\) be a \(\{2,3\}\)-core with measure \(\mu=\mu(H)\), and let
\(H_0,H_1\) be its reduced children.
By \Cref{lem:pm-branch-vectors}, either
\((\mu(H_0),\mu(H_1))\le (\mu-4,\mu-4)\) or
\((\mu(H_0),\mu(H_1))\le (\mu-8,\mu-2)\) up to order.
We have
\[
  b(H_0)+b(H_1)\;\le\;
  \begin{cases}
    2B_{\mu-4}\le B_\mu,\\
    B_{\mu-8}+B_{\mu-2}\le B_\mu,
  \end{cases}
\]
so \(\sum_i b(h(H,i))\le b(H)\).
The inequality \(f(H)\le b(H)\) follows by the usual induction on the recursion
tree using the same inequalities.
\end{proof}

\subsubsection{Resulting Approximate Counting Runtimes}
\label{sec:pm:runtime}

\begin{theorem}\label{thm:pm-basic-classical}
Let \(G\) be an \(n\)-vertex graph with \(\Delta(G)\le 3\), and let
\(\mu:=\mu(\mathrm{Reduce}_{\mathrm{PM}}(G))\le 2n\).
Then Algorithm~\textsc{Count} (\Cref{thm:fullalgo}) applied to
\(\mathcal{F}_{\mathrm{PM}}^{b}\) returns an
\((\varepsilon,\delta)\)-approximation to \(\mathrm{PM}(G)\) in time
\[
\boxed{
  O^{\ast}\!\Bigl(
      2^{\,\mu/8}\;
      \varepsilon^{-2}\log\tfrac1\delta
    \Bigr)
  \;\le\;
  O^{\ast}\!\Bigl(
      2^{\,n/4}\;
      \varepsilon^{-2}\log\tfrac1\delta
    \Bigr)
  \;=\;
  O^{\ast}\!\Bigl(
      1.1893^{\,n}\;
      \varepsilon^{-2}\log\tfrac1\delta
    \Bigr).}
\]
In particular, for cubic graphs (\(n_3=n\), hence \(\mu=2n\)) the exponent is
exactly \(n/4\).
\end{theorem}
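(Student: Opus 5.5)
The plan is to apply \Cref{thm:fullalgo} directly to the bounded uSR form \(\mathcal{F}_{\mathrm{PM}}^{b}\) of \Cref{def:pm-uSR}, evaluated at the root instance \(\langle G\rangle\). The argument has three parts: verify the hypotheses, check correctness of the counted quantity, and read off the root bound.

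\textbf{Hypotheses of \Cref{thm:fullalgo}.} First I confirm that \(\mathcal{F}_{\mathrm{PM}}=(f,h_{\mathrm{PM}},t_{\mathrm{PM}},r_{\mathrm{PM}},q_{\mathrm{PM}})\) is a uSR definition in the sense of \Cref{def:uredp}.
\begin{itemize}
\item The fan-out is at most \(2\), and \(\mathrm{Reduce}_{\mathrm{PM}}\) and the edge choice \(e(H)\) run in polynomial time.
\item Each branch deletes at least one edge or vertex, and reduction only deletes. Hence every path has length at most \(|E(G)|\le 3n/2\). If needed I would take \(q_{\mathrm{PM}}:=|E(G)|+|V(G)|\) to be safe; this changes nothing asymptotically. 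Instance sizes never grow.
\item Internal nodes have \(t=0\), and leaves are exactly \(H\in\{\bot,\varnothing\}\).
\item Feasibility \(\mathsf{Dec}_{\mathrm{PM}}\) is polynomial-time via blossom, so \(f\in\#\mathrm{PE}\), as noted in the paper. This gives the polynomial-delay enumerator of \Cref{thm:dfs-correct}.
\end{itemize}

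\textbf{Correctness of the count.} The recurrence \(\mathrm{PM}(H)=\mathrm{PM}(H-e)+\mathrm{PM}(H-\{u,v\})\) holds. The reduction rules preserve \(\mathrm{PM}\) exactly: forced edges contribute a factor \(1\), and odd components or isolated vertices give \(0\). So by induction on depth, \(f(\langle G\rangle)=\mathrm{PM}(G)\), with accepting leaves \(H=\varnothing\) corresponding bijectively to perfect matchings. \Cref{lem:pm-bound-compatible} supplies that \(b_{\mathrm{PM}}\) is a bounding function. I would also check its leaf conditions: \(B_0=\lfloor 2^0\rfloor=1\) for \(\mu=0\) at leaves, so \eqref{eq:smallbound} and \eqref{eq:leafbound} hold.

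\textbf{Root bound and runtime.} At the root, \(b(x)=B_\mu\le 2^{\mu/4}\), where \(\mu=|V(H)|+|V_3(H)|\). Since reduction only deletes vertices, \(|V(H)|\le n\) and \(|V_3(H)|\le|V(H)|\), so \(\mu\le 2n\). Then \(\sqrt{b(x)}\le 2^{\mu/8}\le 2^{n/4}\approx 1.1893^n\), and \Cref{thm:fullalgo} yields the boxed bound. For cubic graphs with no forced reductions, \(\mu=2n\) exactly.

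\textbf{Main obstacle.} The only step needing real care is whether recursion compatibility from \Cref{lem:pm-branch-vectors} survives the floor rounding. The paper asserts \(B_{\ell-8}+B_{\ell-2}\le B_\ell\) because \(2^{-2}+2^{-1/2}<1\); I would verify it as follows. Write the left side as \(\lfloor a\rfloor+\lfloor c\rfloor\le\lfloor a+c\rfloor\le\lfloor 2^{\ell/4}\rfloor\), using superadditivity of the floor and monotonicity. The same argument handles \(2B_{\ell-4}\le B_\ell\). I would also handle small \(\ell\), where children have measure \(\le 0\), by treating negative measures as \(B=0\) or \(1\). If a child is the leaf \(\varnothing\) with \(b=1\) while \(B_\ell\) is small, a short finite check is needed, so the small-\(\mu\) base cases are where I expect the fiddly work.
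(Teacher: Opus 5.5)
Your argument is the paper's own: validity of $b_{\mathrm{PM}}$ from \Cref{lem:pm-bound-compatible}, the root value $B_\mu\le 2^{\mu/4}$, \Cref{thm:fullalgo}, and $\mu\le 2n$. Your floor-superadditivity check of $2B_{\ell-4}\le B_\ell$ and $B_{\ell-8}+B_{\ell-2}\le B_\ell$ is correct. The depth bound via $|E(G)|$ is also harmless; the paper's $q=|V(G)|$ works too, because $\mu$ drops by at least $2$ per level.

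The small-$\mu$ check you deferred does genuinely fail, though. Your proposed handling (``treat negative measures as $B=0$ or $1$'') does not match the form: a leaf child's bound is $B_{\mu(\bot)}=B_0=1$ whatever measure the branching lemma predicts, and the troublesome leaf is $\bot$, not $\varnothing$.

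Take $H=K_4$ minus the edge $cd$ on $\{a,b,c,d\}$. It is a $\{2,3\}$-core with $\mu(H)=6$ and $b_{\mathrm{PM}}(H)=B_6=2$. All three maximal $2$-paths from $a$ end at $b$. Suppose the deterministic tie-break picks the single edge $(a,b)$. Then the exclude child is $C_4$, with $b=B_4=2$. The include child $H-\{a,b\}$ is two isolated vertices, i.e.\ $\bot$, with $b=1$. So the children sum to $3>2$.

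Recursion compatibility therefore fails at this node, even for the real potential $2^{\mu/4}$, since $2+1>2^{3/2}$. The $(8,2)$ vector of \Cref{lem:pm-branch-vectors} is also false here: the drop is only $6$, despite the ``terminal'' remark. The paper's one-line proof inherits this gap through \Cref{lem:pm-bound-compatible}.

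The repair is easy. Set $b(\bot):=0$, which \Cref{def:bounding} permits since $f=t=0$ there.
\begin{itemize}
\item Since $\mu$ is even and $n_3\ge 2$ forces $|V|\ge 4$, the only cores with $\mu<8$ are $C_4$, $C_6$ and $K_4-e$.
\item These now give $1+1\le 2$, $1+1\le 2$, and either $1+1\le 2$ or $2+0\le 2$.
\item For $\mu\ge 8$, any leaf child has $b\le 1\le B_{\mu-8}$, so it is absorbed by your floor inequality.
\end{itemize}

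Alternatively, you can keep $b(\bot)=1$. The only overflows then occur at $K_4-e$ nodes, and they truncate only the block of a $\bot$ child; the first child's block $B_4=2$ fits. So the conclusion of \Cref{lem:sigma-ticket-count}, and hence the correctness of \textsc{Count}, survives. Either way the stated running time is unaffected.
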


\begin{proof}
By \Cref{lem:pm-bound-compatible}, \(b_{\mathrm{PM}}\) is a valid bounding
function.  By \Cref{def:BoundingPM} and the branching analysis,
\(b_{\mathrm{PM}}(\langle G\rangle)=B_{\mu}=O^{\ast}(2^{\mu/4})\).
\Cref{thm:fullalgo} then yields runtime
\(O^{\ast}(\sqrt{b_{\mathrm{PM}}}\,\varepsilon^{-2}\log(1/\delta))
     =O^{\ast}(2^{\mu/8}\varepsilon^{-2}\log(1/\delta))\).
Finally \(\mu\le 2n\) implies \(2^{\mu/8}\le 2^{n/4}\).
\end{proof}

\begin{corollary}[Black-box quantum variants]\label{cor:pm-quantum}
With the same \(b_{\mathrm{PM}}\), the quantum variants of
\Cref{sec:quantum-aaronson,sec:Ambainis} yield
\[
  O^{\ast}\!\Bigl(
     2^{\,\mu/12}\;
     \varepsilon^{-1}\log\tfrac1\delta
  \Bigr)
  \qquad\text{and}\qquad
  O^{\ast}\!\Bigl(
     2^{\,\mu/16}\;
     \varepsilon^{-3/2}\log^{2}\tfrac1\delta
  \Bigr),
\]
and hence, in cubic graphs, \(O^{\ast}(2^{n/6})\) and \(O^{\ast}(2^{n/8})\),
respectively (up to the stated \(\varepsilon,\delta\) factors).
\end{corollary}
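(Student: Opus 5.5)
The corollary follows by feeding the bounded uSR form \(\mathcal F^b_{\mathrm{PM}}\) of \Cref{def:pm-uSR} into the two generic quantum estimators of \Cref{sec:quantum-aaronson,sec:Ambainis}. They run in \(O^{\ast}(b(x)^{1/3}\varepsilon^{-1}\log\tfrac1\delta)\) and \(O^{\ast}(b(x)^{1/4}\varepsilon^{-3/2}\log^{2}\tfrac1\delta)\) respectively, as announced in the main results. Nothing problem-specific is needed beyond what \Cref{thm:pm-basic-classical} already used. So I would first check that \(\mathcal F^b_{\mathrm{PM}}\) meets the interface those theorems require, and then substitute \(b_{\mathrm{PM}}(\langle G\rangle)=B_\mu\le 2^{\mu/4}\).

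\textbf{Interface check.} The first estimator (the \(b^{1/3}\) one) needs three things:
\begin{enumerate}[(i)]
\item a polynomial-time sampler \(m\mapsto t(h(x,\sigma_{\mathcal F^b,x}(m)))\) over the universe \([b(x)]\);
\item correctness of that sampler, namely success probability exactly \(f(x)/b(x)\);
\item a polynomial-delay enumerator for the small-versus-large preprocessing phase.
\end{enumerate}
Item (i) is \Cref{lem:sampleRuntime}, since \(h_{\mathrm{PM}},r_{\mathrm{PM}},t_{\mathrm{PM}}\) and \(b_{\mathrm{PM}}\) are polynomial-time computable. Item (ii) is \Cref{thm:sampling-correct}, which applies because \Cref{lem:pm-bound-compatible} shows \(b_{\mathrm{PM}}\) is a bounding function. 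Item (iii) is \textsc{DFSEnum} from \Cref{thm:dfs-correct}, using \(\mathsf{Dec}_{\mathrm{PM}}\in\mathsf P\) via blossom.

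The second estimator (the \(b^{1/4}\) one) additionally needs local oracle access to a polynomial-depth tree for Ambainis--Kokainis tree-size estimation. This is provided generically by \(\mathsf{Child},\mathsf{Parent},\mathsf{Sibling}\) on \(\mathcal T_{\mathcal F,x}\) (\Cref{lem:polyChild,lem:polyParent,lem:polySibling,lem:depthT}). These are built only from the uSR data and \(\mathsf{Dec}_f\), so they apply verbatim here.

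\textbf{Substitution.} From \(B_\mu=\lfloor 2^{\mu/4}\rfloor\) we get \(b^{1/3}\le 2^{\mu/12}\) and \(b^{1/4}\le 2^{\mu/16}\), which gives the two displayed bounds. For cubic graphs, \(\mathrm{Reduce}_{\mathrm{PM}}\) leaves \(G\) unchanged: there are no degree-\(1\) vertices, and \(\bot\) outputs make the count trivially \(0\). Hence \(n_3=n\) and \(\mu=2n\), giving \(2^{n/6}\) and \(2^{n/8}\).

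\textbf{Main obstacle.} The substantive step lies in the generic quantum theorems, which are stated after this point; the corollary itself inherits them black-box. The only point needing care is the preprocessing phase of the \(b^{1/4}\) algorithm. Its run time must depend on \(b(x)\) only through \(\sqrt{\,\cdot\,}\)-type tree-size queries on \(\mathcal T_{\mathcal F,x}\), whose node count is at most polynomially larger than \(f(x)\le b(x)\). This holds because every node of the displacement tree is output by \textsc{DFSEnum} and corresponds to a positive leaf. I would verify this node-count relation explicitly and otherwise treat the quantum statements as given.
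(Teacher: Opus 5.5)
Your proposal is correct and matches the paper's implicit argument: substitute $b_{\mathrm{PM}}=B_\mu\le 2^{\mu/4}$ into the generic $b^{1/3}$ and $b^{1/4}$ quantum bounds, and use $\mu=2n$ for cubic graphs. (In a cubic graph every component has even order, so $\mathrm{Reduce}_{\mathrm{PM}}$ never returns $\bot$ and your $\bot$ remark is unnecessary.) One sharpening on your ``main obstacle'': the $b^{1/4}$ route needs the exact identity $|\mathcal T_{\mathcal F,x}|=f(x)$, not just a polynomial relation, because in case (1) of \Cref{sec:Ambainis} the tree-size estimate is returned directly as $\hat f$; your bijection argument via \textsc{DFSEnum} already gives this equality.
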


\paragraph{Take-away for the reader.}
This section is a worked example of the ``unweighted core'' design principle:
we engineer a deterministic recursion where every internal node contributes only
a \emph{sum} of two reduced subinstances, and the combinatorial forced structure
(2-path alternation + degree-$1$ propagation) is captured by an explicit
recursion-compatible bound \(b_{\mathrm{PM}}\).
Once this interface is exposed, the generic \(\sqrt{b}\), \(b^{1/3}\), and
\(b^{1/4}\) accelerations apply without further problem-specific probability or
quantum reasoning.

\paragraph{Improving on our basic algorithm and analysis.}
It is possible to get a much faster algorithm for approximate counting of perfect matching in subcubic graph.
Since the simpler algorithm demonstrates the main idea of this paper better, the fastest running algorithm is presented in Appendix~\ref{sec:ImprovedPerfect}.
The improved runtime is stated below:

\begin{restatable}{theorem}{CUBICMATCHING}\label{thm:pm-classical}
Let $G$ be an $n$-vertex subcubic graph.  Let $H=\mathrm{Reduce}_{\mathrm{PM}}(G)$ and write
$n_2\coloneqq n_2(H)$ and $n_3\coloneqq n_3(H)$.  For every $\varepsilon,\delta\in(0,1)$ there is a
randomised algorithm that returns an $(\varepsilon,\delta)$-approximation to $\#\mathrm{PM}(G)$ in time
\[
O^*\!\Bigl(\varepsilon^{-2}\log(\delta^{-1})\cdot \sqrt{b(H)}\cdot \mathrm{poly}(n)\Bigr)
\;=\;
O^*\!\Bigl(\varepsilon^{-2}\log(\delta^{-1})\cdot 2^{n_2/8}\,6^{n_3/12}\cdot \mathrm{poly}(n)\Bigr).
\]
In particular, since $n_2+n_3\le n$, we have the worst-case bound
\[
\boxed{
O^*\!\Bigl(\varepsilon^{-2}\log(\delta^{-1})\cdot 6^{n/12}\cdot \mathrm{poly}(n)\Bigr)
\;\le\;
O^*\!\Bigl(\varepsilon^{-2}\log(\delta^{-1})\cdot (1.1611)^n\Bigr).
}
\]
\end{restatable}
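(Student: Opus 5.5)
Since \Cref{thm:fullalgo} already gives running time \(O^*(\sqrt{b(x)}\,\varepsilon^{-2}\log\frac1\delta)\) for any bounded uSR form, the whole task is to build a bounded uSR form \(\mathcal F^{b}\) for \(\#\mathrm{PM}\) on subcubic graphs whose root bound satisfies \(b(H)=O^*(2^{n_2/4}6^{n_3/6})\). Taking the square root then gives \(2^{n_2/8}6^{n_3/12}\). The worst case is \(6^{n/12}\approx 1.1610^n\), because \(6^{1/12}>2^{1/8}\) and \(n_2+n_3\le n\).

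For the bound I will use the potential \(b(H)=\lfloor 2^{n_2(H)/4}\,6^{n_3(H)/6}\rfloor\) on reduced cores, with value \(1\) at the leaves \(\varnothing\) and \(\bot\). This is consistent with the extremal examples: a 4-cycle has \(2\) perfect matchings and \(2^{4/4}=2\), and \(K_4\) or \(K_{3,3}\) gives \(6^{n_3/6}\) with \(K_{3,3}\) having exactly \(6\) perfect matchings. The bound is therefore tight, matching the remark that the branching attains the lower bound. The recursion stays the unweighted edge branching \(\mathrm{PM}(G)=\mathrm{PM}(G-e)+\mathrm{PM}(G-\{u,v\})\), followed by \(\mathrm{Reduce}_{\mathrm{PM}}\). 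To meet the weights I will allow a wider branching: at a degree-3 vertex \(v\), branch over which of its (at most three) incident edges is in the matching, and delete the matched pair in each branch. With fan-out \(\le 3\) the uSR conditions still hold, and feasibility is still decided by blossom matching.

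The key step is a case analysis showing \(\sum_i w(H_i)\le w(H)\), where \(w=2^{n_2/4}6^{n_3/6}\). Write \(\alpha=2^{1/4}\) and \(\beta=6^{1/6}\). The steps, in order:
\begin{enumerate}[(i)]
\item \textbf{Cycle components.} An even cycle of length \(m\ge4\) gives two children, each of weight \(\alpha^{-m}\cdot w\). This is compatible since \(2\alpha^{-m}\le 1\) for \(m\ge 4\).
\item \textbf{Three-way branching at a degree-3 vertex \(v\).} Each branch deletes \(v\) and one neighbour, and every other neighbour of the deleted pair loses degree. Degree-3 neighbours become degree 2, contributing a factor \(\alpha/\beta\) each; degree-2 neighbours become degree 1 and trigger forced deletions, contributing at least a factor \(\alpha^{-1}\) and usually more. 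I will enumerate the local configurations: all neighbours of degree 3, with or without shared neighbours or triangles, and mixtures with degree-2 vertices on 2-paths. For each I will verify an inequality of the form \(\sum_{i=1}^{3}\beta^{-a_i}\alpha^{-c_i}\le1\).
\item \textbf{Degree-2 chains.} Where the three-way rule is weak, namely a vertex adjacent to long degree-2 chains, fall back to the 2-path branching of \Cref{lem:pm-alternation}. Alternation forces the whole path, so each chain vertex removed gains \(\alpha\).
\end{enumerate}

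The main obstacle is the tight case of \((ii)\), a cubic neighbourhood with no local reductions. Deleting \(v\) and a neighbour \(u\) removes two degree-3 vertices and downgrades up to four others, so each branch has weight ratio \(\beta^{-2}(\alpha/\beta)^4\cdot\)(further gains). The three branches must sum to at most \(1\), i.e. \(3\beta^{-6}\alpha^{4}=3\cdot 2/6=1\), which is exactly tight. This confirms the choice \(6^{1/6}\), but it means every degenerate configuration (shared neighbours, triangles, components of size 4 or 6) must be checked individually. For those I will either use the extra degree-1 forcing, or handle small components exactly in polynomial time as leaves. Component-wise exact counting is not allowed, since the form must stay unweighted, so instead a tiny component is branched exhaustively, with \(b\) still dominating its leaf count. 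Once compatibility is established, \(f\le b\) follows by induction, and \Cref{thm:fullalgo} yields the boxed bound.
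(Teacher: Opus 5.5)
Your construction is the paper's: three-way branching on the matched neighbour of a degree-3 vertex (two-way on even cycles), the potential $\lfloor 2^{n_2/4}6^{n_3/6}\rfloor$, the tight all-cubic computation $3\cdot 6^{-1}\cdot 2=1$, and then \Cref{thm:fullalgo}. The difference is in how you propose to do the degree-3 case analysis, and the paper's organisation settles what your plan leaves open.

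A list of local configurations is not finite, because 2-paths have unbounded length. The paper instead follows each of the three maximal 2-paths at $v$ to its far endpoint. By alternation, every branch forces the whole path. The far endpoint is then deleted or downgraded, depending only on the path's parity and on whether $v$ is matched along that path. With $E$ the number of even paths, this gives $k_i=E-1$ or $E+1$ removed far endpoints and the uniform bound $E\sqrt2\,6^{-(E+1)/3}+(3-E)\,2\cdot 6^{-(E+3)/3}\le 1$, with equality only at $E=0$. Coincident far endpoints are handled by a short separate table.

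This shows your step (iii) rests on a misdiagnosis: the three-way rule is never weak next to long chains, because longer chains only delete more. What is weak is your local credit of $\alpha^{-1}$ per degree-2 neighbour. Three such neighbours give $3\beta^{-1}\alpha^{-3}\approx 1.32>1$, so the accounting has to be carried to the far endpoints. If you keep a two-way fallback anyway, restrict it to 2-paths of length at least $2$, where it gives about $0.86$. On a single edge between two degree-3 vertices it gives $\tfrac13+\sqrt2\,6^{-1/3}\approx 1.11>1$, so edge branching is not compatible with this potential there.

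Two smaller points. First, the paper sets $b(\bot)=0$, and its coincident-endpoint table relies on infeasible branches carrying no mass; with your $b(\bot)=1$ those cases must be rechecked. Second, tiny components need no special handling: $K_4$ gives $3\le\lfloor 6^{2/3}\rfloor$ and $K_{3,3}$ gives $3\cdot 2=6$. In any case, ``$b$ dominates the leaf count'' would not be enough there, since the sampler needs recursion compatibility at every node.
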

\begin{proof}
    See Appendix~\ref{sec:ImprovedPerfect}.
\end{proof}
\begin{corollary}
    Applying the quantum variant of the algorithm (\Cref{sec:Ambainis}), we get a runtime of

    \[
\boxed{
  O^{\ast}\!\Bigl(
      1.0776^{\,n}\;
      \varepsilon^{-\frac32}\log^{2}\tfrac1\delta
    \Bigr).}
\]
\end{corollary}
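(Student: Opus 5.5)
The plan is to feed the improved bounded uSR form from Appendix~\ref{sec:ImprovedPerfect} (the one behind \Cref{thm:pm-classical}) into the black-box $b(x)^{1/4}$ quantum estimator of \Cref{sec:Ambainis}. That estimator runs in $O^{\ast}\bigl(b(x)^{1/4}\varepsilon^{-3/2}\log^{2}\tfrac1\delta\bigr)$ time, given a bounded uSR form together with polynomial-time feasibility access. The main-results paragraph on black-box quantum accelerations already asserts this bound for every such form. So the corollary reduces to three checks: the improved recursion meets the interface hypotheses, its root bound has the right size, and a short numerical evaluation gives the stated base.

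\textbf{Step 1: interface hypotheses.} I will use from the appendix that the improved branching is unweighted, so internal nodes only sum their children. It has polynomial depth, bounded by $|V(G)|$, and polynomial fan-out. Its bound $b(H)$ is integer-valued and recursion compatible, with $b=1$ on accepting leaves, since this is exactly what \Cref{thm:pm-classical} consumes through \Cref{thm:fullalgo}. Feasibility $\mathsf{Dec}_{\mathrm{PM}}$ is in $\mathsf P$ by blossom-type matching, as noted for the basic form. Hence the displacement-index tree and the $\mathsf{Child}/\mathsf{Parent}/\mathsf{Sibling}$ oracles of \Cref{sec:GraphOracles} are available in $\poly(n)$ time, which is what the Ambainis--Kokainis step needs. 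The sampler $\sigma$ of \Cref{sec:Sampling} supplies the marked-item predicate on the universe $[b(H)]$.

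\textbf{Step 2: root bound and the arithmetic.} From \Cref{thm:pm-classical}, $\sqrt{b(H)}=O^{\ast}(2^{n_2/8}6^{n_3/12})$, so $b(H)=O^{\ast}(2^{n_2/4}6^{n_3/6})$. Since $2^{1/4}\le 6^{1/6}$ and $n_2+n_3\le n$, this gives $b(H)=O^{\ast}(6^{n/6})$. Taking the fourth root gives $b(H)^{1/4}=O^{\ast}(6^{n/24})$. Numerically, $\ln 6/24\approx 0.07466$ and $e^{0.07466}\approx 1.0775$, so the running time is $O^{\ast}(1.0776^{n}\varepsilon^{-3/2}\log^{2}\tfrac1\delta)$. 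Polynomial factors, including the cost of $\mathrm{Reduce}_{\mathrm{PM}}$ at the root, are absorbed by $O^{\ast}$.

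\textbf{Main obstacle.} The delicate point is Step 1. The $b^{1/4}$ route needs genuine local tree access, not only the enumerate-or-sample ingredients. So I must confirm that the improved appendix recursion keeps a deterministic, locally computable branching rule, with no global weighting or multiplicative component split. This ensures the oracles of \Cref{sec:GraphOracles} apply unchanged. If the appendix recursion used any multiplicative step, I would first rewrite that step as an additive branching. I would then re-check recursion compatibility of the integer rounding of $2^{n_2/4}6^{n_3/6}$, in the same way as was done for $B_\ell=\lfloor 2^{\ell/4}\rfloor$ in the basic form.
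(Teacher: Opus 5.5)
Your proposal is correct and follows the paper's (implicit) argument: the corollary is obtained by plugging the appendix bound $b(H)=\lfloor 2^{n_2/4}6^{n_3/6}\rfloor\le 6^{n/6}$ into the $b(x)^{1/4}$ estimator of \Cref{sec:Ambainis}, giving $6^{n/24}\approx 1.0775^{n}\le 1.0776^{n}$. The contingency in your ``main obstacle'' paragraph is unnecessary, because the appendix recursion is already purely additive with a fixed-order (hence deterministic) pivot, so the oracles of \Cref{sec:GraphOracles} apply without modification.
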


\paragraph{Further directions.} We expect that this line of work can be further pursued to obtain good bounds for higher-degree graphs, though the runtime will decay very aggressively based on $\Delta$.
Nonetheless, we believe this would give faster than the state-of-the-art for degree 4 or 5.
An interesting further direction, which also seems promising, is to adapt the algorithms for low-average degree graphs, rather than bounded degree graphs.

As stated in Appendix~\ref{sec:matchingTightness}, the upper bound linked to the branching rule of the algorithm is tight.
This means that we cannot hope to do better without further algorithmic techniques.
This is where decomposition of the problem becomes necessary: to beat the bound on the number of solutions, one must separate the easy-to-count components of the problem, and apply the enumerate-or-sample technique on the hard cores that are left.
Leveraging this more sophisticated technique, which we will illustrate in the next two examples, could also help tackle more general classes of graphs for perfect matching.

\subsection{Counting Independent Sets}\label{sec:IS-basic}

We demonstrate how the generic framework applies to the classical
\textsc{\#Independent-Set} problem.  After recalling a naive bounded-uSR
encoding, we give a simple high-degree branching routine that peels off
the hard part of the instance.  The routine produces a search tree of size
\(2^{0.249n}\), whose hard leaves are aggregated by \Cref{thm:combineuTotP}
while the easy leaves are handled by known FPRAS regimes for bounded-degree
or low-connective-constant instances~\cite{sinclair2017spatial,goldberg2021faster}.
The basic version gives an \(O^{\ast}(1.1884^n)\) approximation, and the
aligned refined stopping rule improves this to \(O^{\ast}(1.1869^n)\).

\subsubsection{A Naive Bounded-uSR Definition}\label{sec:IS-uSR}

\begin{definition}[\(\mathcal{F}_{\mathrm{IS}}\) - naive uSR definition for \textsc{\#IS}]\label{def:IS-branch-f}
Instances are encodings \(\langle G\rangle\) of simple graphs \(G=(V,E)\).
Assume the encoding lists the vertices of the \emph{current} graph \(G\) in a deterministic order
\(v_1,\dots,v_{|V|}\).
For \(v\in V\), let \(N_G(v)\) denote the open neighbourhood of \(v\), and write
\(N_G[v]:=\{v\}\cup N_G(v)\) for the closed neighbourhood.

\begin{enumerate}[(a)]
\item \textbf{Instance transformation (branch map).}
      If \(V(G)\neq\varnothing\), define
      \[
        h_{\mathrm{IS}}(\langle G\rangle,1)=\bigl\langle G\setminus\{v_1\}\bigr\rangle,\qquad
        h_{\mathrm{IS}}(\langle G\rangle,2)=\bigl\langle G\setminus N_G[v_1]\bigr\rangle.
      \]
      Thus branch \(1\) \emph{excludes} \(v_1\), while branch \(2\) \emph{includes} \(v_1\) and deletes \(v_1\) together with its neighbours.

\item \textbf{Leaf function.}
      \[
        t_{\mathrm{IS}}(\langle G\rangle)=
        \begin{cases}
          1 &\text{if } V(G)=\varnothing,\\
          0 &\text{otherwise.}
        \end{cases}
      \]

\item \textbf{Counting function (induced uSR recurrence).}
      The function \(f_{\mathrm{IS}}\) is defined by the unweighted self-reduction:
      \[
        f_{\mathrm{IS}}(\langle G\rangle)=
        \begin{cases}
          t_{\mathrm{IS}}(\langle G\rangle) & \text{if } r_{\mathrm{IS}}(\langle G\rangle)=0,\\[2mm]
          f_{\mathrm{IS}}\!\bigl(h_{\mathrm{IS}}(\langle G\rangle,1)\bigr)+
          f_{\mathrm{IS}}\!\bigl(h_{\mathrm{IS}}(\langle G\rangle,2)\bigr)
          & \text{otherwise.}
        \end{cases}
      \]
      Equivalently,
      \[
        f_{\mathrm{IS}}(\langle G\rangle)=
        \begin{cases}
          1 & \text{if } V(G)=\varnothing,\\
          f_{\mathrm{IS}}\!\bigl(\langle G\setminus\{v_1\}\rangle\bigr)+
          f_{\mathrm{IS}}\!\bigl(\langle G\setminus N_G[v_1]\rangle\bigr)
          & \text{otherwise.}
        \end{cases}
      \]

\item \textbf{Depth and fan-out.}\;
      \[
        r_{\mathrm{IS}}(\langle G\rangle)=
        \begin{cases}
          0 & \text{if } V(G)=\varnothing,\\
          2 & \text{otherwise,}
        \end{cases}
        \qquad
        q_{\mathrm{IS}}(\langle G\rangle)=|V(G)|.
      \]

\item \textbf{Bounding function.}\;
      \[
        b_{\mathrm{IS}}(\langle G\rangle)=2^{|V(G)|}.
      \]
\end{enumerate}
\end{definition}

\begin{lemma}
For every graph \(G\), \(f_{\mathrm{IS}}(\langle G\rangle)=\#\mathrm{IS}(G)\).
\end{lemma}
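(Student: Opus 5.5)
We argue by strong induction on \(|V(G)|\), using the recurrence in \Cref{def:IS-branch-f}(c). The base case \(V(G)=\varnothing\) is immediate. There \(r_{\mathrm{IS}}=0\), so \(f_{\mathrm{IS}}(\langle G\rangle)=t_{\mathrm{IS}}(\langle G\rangle)=1\). The empty graph has exactly one independent set, namely \(\varnothing\), so \(\#\mathrm{IS}(G)=1\).

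For the inductive step, let \(V(G)\neq\varnothing\) and let \(v_1\) be the first vertex in the encoding order. We partition the independent sets \(I\) of \(G\) according to whether \(v_1\in I\).
\begin{itemize}
\item If \(v_1\notin I\), then \(I\) is an independent set of \(G\setminus\{v_1\}\). Conversely, every independent set of \(G\setminus\{v_1\}\) is an independent set of \(G\) avoiding \(v_1\).
\item If \(v_1\in I\), then independence forces \(I\cap N_G(v_1)=\varnothing\), so \(I\setminus\{v_1\}\) is an independent set of \(G\setminus N_G[v_1]\). Conversely, for any independent set \(J\) of \(G\setminus N_G[v_1]\), the set \(J\cup\{v_1\}\) is independent in \(G\), since no vertex of \(J\) is adjacent to \(v_1\). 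These two maps are mutually inverse.
\end{itemize}
This gives \(\#\mathrm{IS}(G)=\#\mathrm{IS}(G\setminus\{v_1\})+\#\mathrm{IS}(G\setminus N_G[v_1])\).

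Both subgraphs have strictly fewer vertices than \(G\), because each removes at least \(v_1\). The induction hypothesis therefore gives \(f_{\mathrm{IS}}(h_{\mathrm{IS}}(\langle G\rangle,i))=\#\mathrm{IS}(\cdot)\) for \(i=1,2\). Summing these and comparing with the recurrence \(f_{\mathrm{IS}}(\langle G\rangle)=f_{\mathrm{IS}}(h_{\mathrm{IS}}(\langle G\rangle,1))+f_{\mathrm{IS}}(h_{\mathrm{IS}}(\langle G\rangle,2))\) closes the induction.

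One point needs care: the vertex order is that of the current encoding. The identity above holds for any vertex \(v_1\), so the choice of order does not matter, and the children are again encoded graphs, so the hypothesis applies to them. The argument also shows that the recursion terminates within \(q_{\mathrm{IS}}=|V(G)|\) levels, since every branch deletes at least one vertex.
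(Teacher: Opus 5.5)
Your proof is correct and follows the same route as the paper. Both partition the independent sets of \(G\) by whether the first vertex \(v_1\) lies in them, matching the branches \(G\setminus\{v_1\}\) and \(G\setminus N_G[v_1]\), with the empty graph contributing its single independent set at each positive leaf. The difference is only presentational: you make the strong induction on \(|V(G)|\) explicit, whereas the paper phrases the same argument as a bijection between positive leaves and independent sets.
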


\begin{proof}
Let \(v_1\) be the first vertex in the deterministic order of \(G\).
Every independent set \(S\subseteq V(G)\) falls into exactly one of two disjoint cases:
either \(v_1\notin S\), in which case \(S\) is an independent set of \(G\setminus\{v_1\}\);
or \(v_1\in S\), in which case \(S\setminus\{v_1\}\) is an independent set of \(G\setminus N_G[v_1]\).
Thus the recursion partitions independent sets bijectively across the two branches.
The recursion terminates exactly when all vertices have been deleted, and each such leaf contributes
\(1\) via \(t_{\mathrm{IS}}(\langle G\rangle)=1\), including the leaf corresponding to the empty independent set.
Hence \(f_{\mathrm{IS}}(\langle G\rangle)\) equals the number of independent sets of \(G\).
\end{proof}

\subsubsection{A Basic High-Degree Branching Procedure}\label{sec:ISbasic}

The routine below recursively branches on a vertex of maximum
degree \(\Delta\ge6\) until either:
(i)~the remaining instance has maximum degree \(\le5\), or
(ii)~a \emph{budget} of \(\beta n\) deleted vertices is exhausted.  The
output is a list \(\mathcal L\) of residual graphs, each on at most
\((1-\beta)n\) vertices.

\begin{algorithm}[H]
\DontPrintSemicolon
\caption{\textsc{IS-Preprocess}\((G,\beta)\)}\label{alg:IS-pre}

\KwIn{graph \(G=(V,E)\), \(|V|=n\);\qquad budget fraction \(\beta\in(0,1)\)}
\KwOut{list \(\mathcal L\) of instances with $|V|\leq (1-\beta)n$, list \(\mathcal{E}\) with $\Delta \leq 5$}

\vspace{.2em}
\SetKwFunction{Branch}{Branch}
\SetKwProg{Fn}{Function}{:}{}
\Fn{\Branch{$H,\; b$}}{%
 \If(\tcp*[f]{budget exhausted}){$b\le 0$}{
   append \(H\) to \(\mathcal L\); \Return
 }
 \If(\tcp*[f]{easy instance}){$\Delta(H)\le 5$}{
   append \(H\) to \(\mathcal E\); \Return
 }
 choose a vertex \(v\) of maximum degree \(\Delta(H)\ge6\)\;
 \Branch{$H\!-\!v,\; b-1$}\tcp*{exclude \(v\)}
 \Branch{$H\!-\!N_H[v],\; b-(\deg_H(v)+1)$}\tcp*{include \(v\)}
}

\vspace{.4em}
\(\mathcal L\gets\emptyset\);\;
\(\mathcal E\gets\emptyset\);\;
\Branch{$G,\; \beta n$}\;
\Return \({\mathcal L}, {\mathcal E}\)
\end{algorithm}

\paragraph{Size of the branching tree.}
Each branching step on degree \(\Delta\ge6\) yields the recurrence
\(T(n)=T(n-1)+T(n-7)\) with characteristic root
\(\alpha\approx1.25543\), with \(\log_2\alpha\le0.328174\).  Therefore
\[
  |\text{search tree}|\;\le\;2^{0.328174\beta n}.
\]
Here the recurrence is in the deleted-vertex budget used by the preprocessing
tree, not directly in the number of vertices of the residual graph.

\paragraph{Balancing budget and residual size.}
Set \(\beta:=0.757\).
Then
\(0.328174\beta<0.2485\) and \(1-\beta=0.243\).

We make the following observations:

\begin{enumerate}
\item \emph{Pre-processing:}\; at most \(2^{0.249n}\) leaves.\label{obs:Observation1}
\item \emph{Hard graphs}(\(\mathcal{L}\)):\;  \(\le0.243n\) vertices each. Therefore for $x\in \mathcal{L}$, $b(x)\leq 2^{0.243n}$. From \Cref{obs:Observation1} we also have $|\mathcal{L}|\le 2^{0.249n}$.\label{obs:Observation2}
\item \emph{Easy graphs} (\(\mathcal{E}\)): Degree $\le 5$. From \Cref{obs:Observation1}, $|\mathcal{E}|\le 2^{0.249n}$.\label{obs:Observation3}
\end{enumerate}

\paragraph{Summing the functions.}

Let $x_1,\ldots,x_{|\mathcal{L}|}$ be the elements of $\mathcal{L}$.  The
preprocessing tree explicitly constructs these cores in
\(S=O^{\ast}(2^{0.249n})\) time.  Since \Cref{obs:Observation2} gives at most
\(2^{0.249n}\) hard instances, each with at most \(0.243n\) vertices, we have
\begin{align*}
\sum_{i=1}^{|\mathcal L|}b_{\mathrm{IS}}(x_i)&\le \sum_{i=1}^{|\mathcal L|}2^{0.243n}\\
&\le 2^{0.249n}\times2^{0.243n}\\
&=2^{0.492n}.
\end{align*}
\Cref{thm:combineuTotP}, run with failure probability \(\delta/2\), then delivers an
\((\varepsilon,\delta/2)\)-approximation in
\(S+O^{\ast}\!\bigl(\sqrt B\,\varepsilon^{-2}\log\tfrac1\delta\bigr)
 = O^{\ast}\!\bigl(2^{0.249n}\,\varepsilon^{-2}\log\tfrac1\delta\bigr)\).

\medskip\noindent
\textbf{Summing easy instances.} By \Cref{obs:Observation3} there are at most
\(2^{0.249n}\) easy instances, and each has maximum degree at most \(5\).  The
known FPRAS for this regime~\cite{goldberg2021faster,sinclair2017spatial}
counts each easy instance in polynomial time in its size, \(1/\varepsilon\),
and \(\log(1/\delta')\).  Run the FPRAS on each easy instance with failure
probability \(\delta'=\delta/(2|\mathcal E|)\) and relative error \(\varepsilon\).
A union bound gives simultaneous success with probability at least \(1-\delta/2\);
conditioned on this event, summing the returned non-negative estimates gives a
\((1\pm\varepsilon)\)-approximation to the easy contribution, since termwise
lower and upper bounds are preserved under summation.  The total easy
cost is
\[
  O^{\ast}\!\left(2^{0.249n}\operatorname{poly}\!\left(\frac1\varepsilon,\log\frac1\delta\right)\right).
\]

\medskip\noindent
\textbf{Overall running time.}\;
The computation has a \(2^{0.249n}\) preprocessing phase, and two \(2^{0.249n}\) (approximate) counting phases,
so for fixed \(\varepsilon,\delta\)

\[
  \boxed{\;T_{\text{basic}}(n)
         \;=\;
         O^{\ast}\!\bigl(2^{0.249n}\bigr)
         \;=\;
         O^{\ast}\!\bigl(1.1884^{\,n}\bigr).}
\]

\qed

\subsubsection{Refined Branching with a Measure-Based Easy Test}\label{sec:IS-advanced}

We now tighten the running time by stopping the high-degree branching as soon
as the instance enters a known easy regime.  The algorithm exploits the fact
that graphs where every vertex with degree at least $6$ has 2-degree at most
$26$ admit an FPRAS for independent set counting
\cite{goldberg2021faster}. Here, the 2-degree of a vertex $v$, denoted
$\deg^2(v)$, is the sum of its neighbours' degrees, and the 2-degree of a
graph is the maximum 2-degree taken over all its vertices.

\paragraph{The measure.}
Let \(\gamma=1/6\).
For a graph \(H\) define

\[
    \mu(H)\;=\;
    \sum_{v\in V(H)}\!w_H(v),
    \qquad
    w_H(v)\;=\;
    \begin{cases}
        1, & \deg_H(v)\ge 6 \text{ or } N_{H}(v) \cap V_{\ge 7}(H)\neq \varnothing, \\[2pt]
        1-(6-\deg_H(v))\gamma, & \text{otherwise.}
    \end{cases}
\]
\paragraph{Branching rules.}
\textsc{FastIS-Preprocess} applies two branching rules:

\begin{enumerate}[(1)]
  \item If a vertex \(v\) of degree \(\ge 7\) exists,
        branch on \(v\) (include/exclude).
  \item Otherwise, pick vertex $v$ with degree at least $6$ and  $2$-degree at least $27$, branch on $v$ (include/exclude).
\end{enumerate}

For the second rule, when excluding $v$, we also decrease the degree of its 6 neighbours, then we get an extra drop of $6\gamma$.
When including $v$, the neighbours of $v$ have weight at least $6-(36-27)\gamma = 6-9\gamma$. Therefore the measure drops by at least
\[
(\Delta \mu_{\text{excl}}, \; \Delta \mu_{\text{incl}})=
\begin{cases}
    (1,\;8), & \text{rule 1},\\
    (1+6\gamma,7-9\gamma), &\text{rule 2}
\end{cases}
\]
Substituting $\gamma=1/6$ gives branching vectors
\[
(\Delta \mu_{\text{excl}}, \; \Delta \mu_{\text{incl}})=
\begin{cases}
    (1,\;8), & \text{rule 1},\\
    (2,5.5), &\text{rule 2}
\end{cases}
\]
These are genuine measure drops, but the routine below is still stopped by a
deleted-vertex budget.  The tree-size proof therefore cannot use the measure
recurrence alone.  We instead analyse the recursion with a two-resource
potential that includes both the remaining deletion budget and the measure.
For the present pair of branching rules the valid choice used below puts all
weight on the deletion budget; this is enough to align the proof, but it also
explains why the final exponent below is weaker than the measure-only heuristic
would suggest.

\begin{algorithm}[H]
\DontPrintSemicolon
\caption{\textsc{FastIS-Preprocess}\((G,\beta)\)}
\label{alg:fastIS}
\KwIn{graph \(G=(V,E)\), \(|V|=n\);\qquad budget fraction \(\beta\in(0,1)\)}
\KwOut{list \(\mathcal L\) of instances with $|V|\leq (1-\beta)n$, list \(\mathcal{E}\) of easy instances}

\vspace{.2em}
\SetKwFunction{Branch}{Branch}
\SetKwProg{Fn}{Function}{:}{}
\Fn{\Branch{$H,\; b$}}{%
 \If(\tcp*[f]{budget exhausted}){$b\le 0$}{
   append \(H\) to \(\mathcal L\); \Return
 }
 \If(\tcp*[f]{measure/easy stopping condition}){each \(v\in V(H)\) with \(\deg_H(v)\ge 6\) has \(\deg_H^{2}(v)\leq 26\)}{
   append \(H\) to \(\mathcal E\); \Return
  }
 \If{$\Delta(H)\ge 7$}{
    choose a vertex \(v\) of maximum degree \(\Delta(H)\ge7\)\;
    \Branch{$H\!-\!v,\; b-1$}\tcp*{exclude \(v\)}
    \Branch{$H\!-\!N_H[v],\; b-(\deg_H(v)+1)$}\tcp*{include \(v\)}
    \Return
 }
 choose a vertex $v$ with $\deg(v)\geq 6$ and $\deg^2(v)\geq 27$\;\label{line:lastBranch}
 \Branch{$H\!-\!v,\; b-1$}\tcp*{exclude \(v\)}
    \Branch{$H\!-\!N_H[v],\; b-(\deg_H(v)+1)$}\tcp*{include \(v\)}
}
\vspace{.4em}
\(\mathcal L\gets\emptyset\);\;
\(\mathcal E\gets\emptyset\);\;
\Branch{$G,\; \beta n$}\;
\Return \({\mathcal L},{\mathcal E}\)
\end{algorithm}

\begin{lemma}[Preprocessing guarantee for \textsc{FastIS-Preprocess}]
\label{lem:fastIS-preprocess-guarantee}
For \(\beta=0.7529\), \textsc{FastIS-Preprocess} outputs lists
\(\mathcal L,\mathcal E\) such that
\[
  |\mathcal L|+|\mathcal E|\le 2^{0.247083n},
\]
every \(H\in\mathcal L\) has at most \(0.2471n\) vertices, and every
\(H\in\mathcal E\) satisfies the hard-core FPRAS easy-instance condition
used in the stopping rule.
\end{lemma}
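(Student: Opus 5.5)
The plan is to bound the preprocessing tree through the deleted-vertex budget alone, in the same way as the basic analysis of \textsc{IS-Preprocess}. The ``two-resource potential'' remark says the valid choice puts all weight on the budget. So I will use the potential $\Phi(H,b)=\alpha^{\max(b,0)}$, where $\alpha$ is the branching number of the worst budget recurrence produced by the two rules. I will then show by induction that the number of leaves below a call \Branch$(H,b)$ is at most $\Phi(H,b)$, up to a constant factor.

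\textbf{Branching vectors in the budget.} In rule~1 the chosen vertex has degree $\ge 7$. The exclude branch spends $1$ unit of budget and the include branch spends $\deg_H(v)+1\ge 8$, so the budget vector is $(1,8)$. In rule~2 the vertex has degree $\ge 6$, giving the vector $(1,7)$. Since $(1,7)$ is dominated by neither alternative, the worst case is $T(b)=T(b-1)+T(b-7)$, with root $\alpha\approx 1.25543$ and $\log_2\alpha\le 0.328174$.

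\textbf{The two output guarantees.}
\begin{itemize}
\item \emph{Leaf count.} A leaf is created either when $b\le 0$ or when the easy test fires, and the two branches of a call are disjoint. The induction therefore gives $|\mathcal L|+|\mathcal E|\le \alpha^{\beta n+O(1)}$. With $\beta=0.7529$ this is $0.328174\cdot 0.7529\,n<0.247083\,n$, up to polynomial or constant slack; I will check this numerically.
\item \emph{Size of hard leaves.} Along every root-to-leaf path, the budget spent equals the number of vertices deleted, because each branch deletes exactly $1$ or $\deg_H(v)+1$ vertices. A leaf lands in $\mathcal L$ only if the budget is exhausted. Hence at least $\beta n$ vertices have been deleted, leaving at most $(1-\beta)n=0.2471n$ vertices.
\item \emph{Easy leaves.} Every $H\in\mathcal E$ satisfies the easy condition, since this is precisely the test that places it there.
\end{itemize}
I also need rule~2 to be applicable whenever it is reached. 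If the easy test fails and $\Delta(H)\le 6$, then some vertex has degree $\ge 6$ and 2-degree $\ge 27$, so line~\ref{line:lastBranch} always finds a vertex.

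\textbf{Main obstacle.} The main difficulty is confirming that the measure $\mu$ cannot be relied upon and that the constants still close. The budget-only analysis above ignores the better measure drops, so I expect the stated exponent to be exactly this budget recurrence evaluated at $\beta=0.7529$. The only delicate point is the arithmetic: $0.328174\times0.7529\approx0.24708$ must fall below $0.247083$. I would verify this using a tighter value of $\log_2\alpha$ if the margin proves too thin.
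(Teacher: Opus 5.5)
Your proposal is correct and follows the same route as the paper. Like the paper, you use a budget-only potential (the paper's $\Phi=ab$ with $c=0$) and take the worst budget vector $(1,7)$, so that $2^{-a}+2^{-7a}\le 1$ with $a=0.328174$; you observe that budget spent equals vertices deleted, which caps hard residuals at $(1-\beta)n=0.2471n$ vertices; and you check $a\beta\approx 0.2470822\le 0.247083$.

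Your explicit $O(1)$ slack in the exponent is slightly more careful than the paper's literal claim of at most $2^{a\beta n}$ leaves. The exact induction needs a constant factor: a single branching at remaining budget $b\in(0,1]$ already yields two leaves, while $2^{ab}<2$. This factor is harmless inside $O^\ast$.
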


\begin{proof}
The recursion is truncated as soon as one of two conditions is met.  If the
deleted-vertex budget \(b\) is exhausted, the current instance is emitted as a
hard core.  Since the initial budget is \(\beta n\), every such hard core has
at most \((1-\beta)n=0.2471n\) vertices.

If the measure/easy stopping condition is met first, the current instance is
emitted into \(\mathcal E\) and is handled by the cited hard-core FPRAS machinery.
Thus the only remaining question is the size of the frontier exposed by this
two-cutoff recursion.

Let \(a=0.328174\).  More generally, one may use the potential
\(\Phi_{a,c}(H,b)=ab+c\mu(H)\); here we take the valid special case
\(c=0\), so \(\Phi(H,b)=ab\).  Rule~1 decreases \(b\) by
at least \(1\) and \(8\) in its two branches, and Rule~2 decreases \(b\) by at
least \(1\) and \(7\).  The choice of \(a\) satisfies
\[
  2^{-a}+2^{-7a}\le 1,
\]
and therefore also \(2^{-a}+2^{-8a}\le1\).  The standard induction on
\(\Phi\) gives at most \(2^{a\beta n}\) frontier nodes.  With
\(\beta=0.7529\),
\[
  a\beta \le 0.247083,\qquad 1-\beta=0.2471.
\]
Thus the frontier has size at most \(2^{0.247083n}\), and the deletion-budget
cutoff leaves hard residuals on at most \(0.2471n\) vertices.
\end{proof}

\paragraph{Overall runtime.}
By \Cref{lem:fastIS-preprocess-guarantee}, the preprocessing tree has size at
most \(2^{0.247083n}\), so its setup cost is
\(S=O^{\ast}(2^{0.247083n})\).  Every hard residual has at most
\(0.2471n\) vertices, and hence
\[
  \sum_{x_i\in\mathcal L} b_{\mathrm{IS}}(x_i)
  \le
  2^{0.247083n}\cdot 2^{0.2471n}
  =
  2^{0.494183n}.
\]
Thus \Cref{thm:combineuTotP}, again run with failure probability \(\delta/2\),
handles the hard leaves in
\[
  S+O^{\ast}\!\left(2^{0.247092n}\varepsilon^{-2}\log\tfrac1\delta\right)
  =
  O^{\ast}\!\left(2^{0.247092n}\varepsilon^{-2}\log\tfrac1\delta\right).
\]
The easy leaves are exactly the instances certified by the
hard-core FPRAS condition used in the stopping rule, and their total
contribution is approximated with the same union-bound allocation as in the
basic analysis.  This gives:
\begin{theorem}\label{thm:IS}
For fixed $(\varepsilon,\delta)$, we can approximately count independent sets
in time
\[
  \boxed{\;T_{\text{fast}}(n)
         \;=\;
         O^{\ast}\!\bigl(2^{0.247092n}\bigr)
         \;=\;
         O^{\ast}\!\bigl(1.1869^{\,n}\bigr).}
\]
\end{theorem}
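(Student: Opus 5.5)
The argument assembles \Cref{lem:fastIS-preprocess-guarantee} and \Cref{thm:combineuTotP}, plus an FPRAS for the easy leaves, with a union bound. First run \textsc{FastIS-Preprocess}$(G,0.7529)$. By \Cref{lem:fastIS-preprocess-guarantee} the frontier has at most $2^{0.247083n}$ leaves. The search tree is binary, so its internal nodes are also $O(2^{0.247083n})$. Each node costs polynomial time: checking maximum degree and 2-degrees, and constructing $H-v$ or $H-N_H[v]$. Hence the lists $\mathcal L,\mathcal E$ are built in $S=O^\ast(2^{0.247083n})$.

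Correctness of the decomposition comes from the include/exclude identity $\#\mathrm{IS}(H)=\#\mathrm{IS}(H-v)+\#\mathrm{IS}(H-N_H[v])$. Induction on the recursion gives $\#\mathrm{IS}(G)=\sum_{H\in\mathcal L}\#\mathrm{IS}(H)+\sum_{H\in\mathcal E}\#\mathrm{IS}(H)$. Every graph has at least one independent set (the empty set), so no core is zero-valued and the positivity hypothesis of \Cref{thm:combineuTotP} holds without filtering.

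For the hard part, give each $H\in\mathcal L$ the naive bounded-uSR form $\mathcal F_{\mathrm{IS}}$ with $b_{\mathrm{IS}}(H)=2^{|V(H)|}$. This is a valid bounding function:
\begin{itemize}
\item recursion compatibility: $2^{k-1}+2^{k-|N[v_1]|}\le 2^k$;
\item leaf conditions: the empty graph has $b=1$;
\item feasibility: $\mathsf{Dec}$ is trivially true.
\end{itemize}
Compute the $B_i$ and prefix sums during preprocessing, charging them to $S$. Since $|V(H)|\le 0.2471n$, we get $\sum_i B_i\le 2^{0.247083n+0.2471n}$. \Cref{thm:combineuTotP}, run with failure probability $\delta/2$, then gives a $(1\pm\varepsilon)$-estimate in $S+O^\ast(2^{0.2470915n}\varepsilon^{-2}\log(1/\delta))$. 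The exponent is at most $0.247092$.

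For the easy part, each $H\in\mathcal E$ satisfies the stopping condition: every vertex of degree at least $6$ has 2-degree at most $26$. By \cite{goldberg2021faster} such graphs admit an FPRAS. Run it on each $H\in\mathcal E$ with relative error $\varepsilon$ and failure probability $\delta/(2|\mathcal E|)$. The cost is $|\mathcal E|\cdot\mathrm{poly}(n,1/\varepsilon,\log(|\mathcal E|/\delta))=O^\ast(2^{0.247083n}\mathrm{poly}(1/\varepsilon,\log 1/\delta))$, since $\log|\mathcal E|=O(n)$.

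A union bound over the two parts gives total failure probability at most $\delta$. On success, summing the nonnegative termwise $(1\pm\varepsilon)$-estimates yields a $(1\pm\varepsilon)$-estimate of $\#\mathrm{IS}(G)$. The total time is dominated by $2^{0.247092n}$, and $2^{0.247092}<1.1869$.

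The main obstacle is the easy-leaf step. It requires that the cited FPRAS applies exactly under the stopping condition ``each vertex of degree $\ge6$ has 2-degree $\le 26$'', including residual graphs that may contain isolated or low-degree vertices. It also requires that the FPRAS's $\log(1/\delta')$ dependence stays polynomial when $\delta'$ is exponentially small. I would check this condition directly against the hypothesis of the hard-core result in \cite{goldberg2021faster}.
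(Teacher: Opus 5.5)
Your proposal is correct and follows the paper's own argument. You run \textsc{FastIS-Preprocess} with $\beta=0.7529$, invoke \Cref{lem:fastIS-preprocess-guarantee}, aggregate the hard residuals under the naive bound $2^{|V(H)|}$ via \Cref{thm:combineuTotP} at confidence $\delta/2$, and handle the easy leaves with the Goldberg--Lapinskas--Richerby algorithm plus a union bound; your extra checks (binary-tree node count, positivity and trivially true feasibility of the IS cores, validity of $b_{\mathrm{IS}}$) only make explicit what the paper leaves implicit, and the obstacle you flag is harmless, since the cited result for graphs with no degree-$\ge 6$ vertex of 2-degree $\ge 27$ is a deterministic FPTAS (to my recollection), and in any case a $\log(1/\delta')$ dependence with $\delta'=\delta/(2|\mathcal E|)$ costs only $\mathrm{poly}(n)$.
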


This improves the previously cited general-graph \(O^{\ast}(1.2041^{\,n})\) bound, and is
also below the \(O^{\ast}(1.1884^{\,n})\) bound from the basic high-degree
decomposition above.
The algorithm is generic once the decomposition has been laid out: it exploits
the bounded-uSR formulation together with the existence of an FPRAS for the
low-connective-constant instances.
This suggests broad applicability of the approach and the possibility of further refinements.

\subsection{Counting \texorpdfstring{$2$}{2}-SAT Solutions}
\label{sec:2-SAT}
We show how the generic framework applies to \#\textsc{2-SAT} through a
bounded-uSR formulation and a preprocessing decomposition into easy and hard
residuals. The bounded-uSR state and the hard-core aggregation are fully
formalised below, and the specialised preprocessing bound is proved explicitly.
Throughout, \(n\) denotes the number of variables of the input formula.

This is conceptually interesting because \#\textsc{2-SAT} occupies a unique
position where neither of the two familiar routes to faster approximation
applies cleanly: it lacks the FPRAS regimes available for more structured graph
problems, yet it is not expressive enough for the hashing-based techniques that
apply to \#\textsc{3-SAT} and beyond. This may help explain the gap in the
literature regarding exponential-time approximation for \#\textsc{2-SAT}
\cite{cardinal2018solving, thurley2011approximation}. Our unweighted
self-reducibility framework isolates a third mechanism: decompose into
polynomial-time solvable pieces plus a sum of bounded hard cores, and then pay
the square root after aggregation rather than before.

\subsubsection{A Naive Bounded-uSR Definition}
\label{sec:2-SAT-uSR}

Fix an encoding \(\langle F\rangle\) for residual \(2\)-CNF instances that
records both the simplified clause set and the set of still-unassigned
variables in a deterministic order. Thus \(\#\operatorname{vars}(F)\) denotes
the number of still-unassigned variables, not merely the number of variables
that still appear in clauses. Variables that disappear from all clauses but
have not been fixed remain in this free-variable set and may still be branched
on later. When \(F\) has at least one remaining variable, let \(x(F)\) be the
first still-unassigned variable in this order. For a literal \(\ell\), write
\(F[\ell]\) for the residual instance obtained by setting \(\ell\) to
\textsc{true}, performing unit propagation, deleting every variable fixed by
this process from the free-variable set, and simplifying the clause set.

\begin{definition}[\(\mathcal{F}_{\mathrm{2-SAT}}\)]
\label{def:2-SAT-uSR}
\leavevmode
\begin{enumerate}[(a)]
\item \textbf{Instance transformation.}
      If \(F\) is satisfiable and \(\#\operatorname{vars}(F)>0\), define
      \[
         h_{\mathrm{2-SAT}}\bigl(\langle F\rangle,1\bigr):=\langle F[\overline{x(F)}]\rangle,
         \quad
         h_{\mathrm{2-SAT}}\bigl(\langle F\rangle,2\bigr):=\langle F[x(F)]\rangle .
      \]
      In all other cases \(r_{\mathrm{2-SAT}}(\langle F\rangle)=0\), so the
      global invalid-branch convention sends non-empty branch indices to
      \(\lambda\).
      \item \textbf{Leaf function}
       \[
    t_{\mathrm{2-SAT}}\bigl(\langle F \rangle \bigr)=
    \begin{cases}
         1 &\text{if \(F\) is satisfiable and \(\#\operatorname{vars}(F)=0\),}\\
         0 &\text{otherwise.}
    \end{cases}
    \]

\item \textbf{Counting function.}\;
      \[
       f_{\mathrm{2-SAT}}(\langle F\rangle)=
         t_{\mathrm{2-SAT}}(\langle F\rangle)+
            \sum_{i=1}^{r_{\mathrm{2-SAT}}(\langle F\rangle)}
            f_{\mathrm{2-SAT}}\!\bigl(h_{\mathrm{2-SAT}}(\langle F\rangle,i)\bigr).
       \]
\item \textbf{Depth and fan-out.}\;
      \[
      r_{\mathrm{2-SAT}}(\langle F\rangle)=
      \begin{cases}
      2, & \text{if \(F\) is satisfiable and \(\#\operatorname{vars}(F)>0\),}\\
      0, & \text{otherwise,}
      \end{cases}
      \qquad
      q_{\mathrm{2-SAT}}(\langle F\rangle)=\#\operatorname{vars}(F).
      \]
\item \textbf{Bounding function.}\;
      \[
      b_{\mathrm{2-SAT}}(\langle F\rangle)=
      \begin{cases}
      0, & \text{if \(F\) is unsatisfiable,}\\
      1, & \text{if \(F\) is satisfiable and \(\#\operatorname{vars}(F)=0\),}\\
      2^{\#\operatorname{vars}(F)}, & \text{if \(F\) is satisfiable and \(\#\operatorname{vars}(F)>0\).}
      \end{cases}
      \]
\end{enumerate}
\end{definition}
\begin{lemma}
    $f_{\mathrm{2-SAT}}(\langle F\rangle)=\# \mathrm{2-SAT}(F)$
\end{lemma}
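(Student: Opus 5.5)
I will prove that $f_{\mathrm{2-SAT}}(\langle F\rangle)$ equals the number of satisfying assignments of $F$ over its still-unassigned variable set. The proof is by induction on $\#\operatorname{vars}(F)$. The count is understood as total assignments to the free-variable set, including variables that no longer occur in any clause. At the root this free set is all $n$ variables, so the root value is $\#\mathrm{2\text{-}SAT}(F)$. The induction is well founded because each branch fixes at least the branching variable $x(F)$ and removes it from the free set. Hence $\#\operatorname{vars}$ strictly decreases along every edge, which also certifies the depth bound $q_{\mathrm{2-SAT}}$.

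\textbf{Base cases.} First suppose $F$ is unsatisfiable. Then $r_{\mathrm{2-SAT}}=0$ and $t_{\mathrm{2-SAT}}=0$, so $f_{\mathrm{2-SAT}}=0$, which is the true count. An unsatisfiable residual includes one where propagation has produced the empty clause. Next suppose $F$ is satisfiable and $\#\operatorname{vars}(F)=0$. Then the clause set contains no unassigned variables, and satisfiability means it contains no empty clause. The unique empty assignment therefore satisfies $F$, and the count $1$ equals $t_{\mathrm{2-SAT}}$.

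\textbf{Inductive step.} Let $F$ be satisfiable with $\#\operatorname{vars}(F)>0$, and write $x=x(F)$. The satisfying assignments of $F$ split disjointly according to the value of $x$. Fix a literal $\ell\in\{x,\overline{x}\}$. I claim that restricting assignments to the remaining free variables gives a bijection between satisfying assignments of $F$ with $\ell$ true and satisfying assignments of $F[\ell]$. This rests on the soundness of unit propagation. Every literal forced by propagation must take its forced value in any satisfying assignment extending $\ell$. Simplification removes exactly the satisfied clauses and the falsified literals. Variables fixed by propagation are dropped from the free set, since their values are determined. Variables that vanished from the clauses but were not fixed stay free, so they still contribute their factor of $2$; the encoding convention is designed to ensure exactly this. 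If propagation derives a conflict, $F[\ell]$ is unsatisfiable, and both sides of the correspondence are $0$; the base case then covers this branch. Applying the induction hypothesis to each branch and summing gives $f_{\mathrm{2-SAT}}(F)=f(F[\overline{x}])+f(F[x])=\#\mathrm{SAT}(F)$, where $t_{\mathrm{2-SAT}}(F)=0$ because $F$ is an internal node.

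\textbf{Main obstacle.} The only delicate step is the propagation bijection. I would prove it by a short induction over the propagation steps: each unit step preserves the set of satisfying extensions and fixes one more variable. The free-variable bookkeeping must also be consistent, so that no variable is counted twice or silently lost.
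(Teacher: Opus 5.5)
Your proof is correct and follows the same argument as the paper. The paper also notes that unsatisfiable residuals contribute $0$, that a satisfiable residual with no free variables contributes one assignment, and that unassigned variables which no longer occur in any clause stay in the state so their free choices are counted, and it concludes that positive leaves are in bijection with satisfying assignments. Your version packages this as an induction on $\#\operatorname{vars}(F)$ and makes explicit the per-branch bijection given by the soundness of unit propagation, which the paper leaves implicit.
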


\begin{proof}
    The branching function sets the next remaining variable and propagates that
    value. Variables that have disappeared from all clauses but remain
    unassigned stay in the state and are branched on later, so the recursion
    still accounts for their free choices.
    If a residual formula \(F'\) is unsatisfiable, then
    \(r_{\mathrm{2-SAT}}(\langle F'\rangle)=0\) and
    \(t_{\mathrm{2-SAT}}(\langle F'\rangle)=0\), so that branch contributes
    nothing.  If all variables have been set and the residual formula is
    satisfiable, then \(t_{\mathrm{2-SAT}}=1\), and the branch contributes the
    unique satisfying assignment represented by those choices.  Thus the leaves
    with value \(1\) are in bijection with satisfying assignments of \(F\).
\end{proof}

\begin{lemma}\label{lem:2sat-bounding}
The map \(b_{\mathrm{2-SAT}}\) is a bounding function for
\(\mathcal{F}_{\mathrm{2-SAT}}\).
\end{lemma}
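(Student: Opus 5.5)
We check the four conditions of \Cref{def:bounding} in turn, with a case split on the three branches in the definition of \(b_{\mathrm{2-SAT}}\). Polynomial-time computability comes first and is immediate: 2-SAT satisfiability is decidable in polynomial time (implication-graph/SCC), and \(\#\operatorname{vars}(F)\) is read off the encoding, so \(b_{\mathrm{2-SAT}}\) is computable in polynomial time. Its values are integers of polynomial bit-length.

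Next come the leaf conditions \eqref{eq:smallbound} and \eqref{eq:leafbound}. If \(r_{\mathrm{2-SAT}}(\langle F\rangle)=0\), then either \(F\) is unsatisfiable or \(\#\operatorname{vars}(F)=0\). In the first case \(b=0\) and \(t=0\). In the second case \(b=1\), and \(t\in\{0,1\}\). So \(b\le 1\) and \(t\le b\) at every leaf. At internal nodes \(t=0\), so \eqref{eq:leafbound} holds trivially.

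For recursive compatibility \eqref{eq:recursiveComp}, take \(F\) satisfiable with \(k:=\#\operatorname{vars}(F)>0\), so \(b(F)=2^k\). Each child \(F[\ell]\) fixes at least the branching variable \(x(F)\) and removes it from the free-variable set, so \(\#\operatorname{vars}(F[\ell])\le k-1\). Each child's bound is therefore at most \(\max(1,2^{k-1})=2^{k-1}\) when satisfiable and \(0\) when unsatisfiable. The point to verify carefully is that the value \(1\) at zero-variable satisfiable residuals never exceeds \(2^{\#\operatorname{vars}}\) in the parent accounting; this holds because \(2^{k-1}\ge 1\) for \(k\ge 1\). Hence
\[
b(F[\overline{x}])+b(F[x])\le 2\cdot 2^{k-1}=2^k=b(F).
\]
The key monotonicity fact used here is that unit propagation only removes variables. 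It never adds them, since the free-variable set is only ever shrunk by fixing.

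Finally, for \(f\le b\), argue by induction on \(\#\operatorname{vars}\), or directly. If \(F\) is unsatisfiable, \(f=0=b\). Otherwise \(f(F)\) counts satisfying assignments of the free variables extending the current partial assignment, which is at most \(2^{\#\operatorname{vars}(F)}\). The zero-variable case gives exactly \(1\). Alternatively, \(f\le b\) follows from recursive compatibility together with the leaf conditions by the same induction used in \Cref{lem:sigma-ticket-count}.

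The only mildly delicate step is ensuring the children's variable count strictly drops even when propagation fixes nothing beyond \(x(F)\), and that the encoding's free-variable bookkeeping matches \(q_{\mathrm{2-SAT}}\). Both follow from the definition of \(F[\ell]\), which deletes every fixed variable, including \(x(F)\) itself.
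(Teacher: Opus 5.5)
Your proof is correct and follows the same route as the paper. You check the leaf cases (\(b\in\{0,1\}\) with \(t\le b\)), and at a satisfiable internal node with \(k\) free variables you note that each child has bound \(0\) or at most \(2^{k-1}\), so the children sum to at most \(2^k\). You also verify polynomial-time computability and the condition \(f\le b\) explicitly, which the paper leaves implicit in its closing phrase ``recursion compatible and therefore a bounding function.''
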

\begin{proof}
If \(F\) is unsatisfiable, then \(r_{\mathrm{2-SAT}}(\langle F\rangle)=0\) and
\(b_{\mathrm{2-SAT}}(\langle F\rangle)=0\), so the leaf condition holds. If
\(F\) is satisfiable and \(\#\operatorname{vars}(F)=0\), then
\(r_{\mathrm{2-SAT}}(\langle F\rangle)=0\),
\(t_{\mathrm{2-SAT}}(\langle F\rangle)=1\), and
\(b_{\mathrm{2-SAT}}(\langle F\rangle)=1\), so both the leaf condition and
leaf tightness hold.

Now assume that \(F\) is satisfiable and has \(n>0\) still-unassigned
variables. Branching on \(x(F)\) yields the two residual instances
\(F[\overline{x(F)}]\) and \(F[x(F)]\). After assignment and unit propagation,
each residual is either unsatisfiable, contributing bound \(0\), or
satisfiable on at most \(n-1\) still-unassigned variables, contributing at
most \(2^{n-1}\). Hence
\[
  \sum_{i=1}^{r_{\mathrm{2-SAT}}(\langle F\rangle)}
  b_{\mathrm{2-SAT}}\!\bigl(h_{\mathrm{2-SAT}}(\langle F\rangle,i)\bigr)
  \le 2^{n-1}+2^{n-1}
  =2^n
  =b_{\mathrm{2-SAT}}(\langle F\rangle).
\]
Thus \(b_{\mathrm{2-SAT}}\) is recursion compatible and therefore a bounding
function.
\end{proof}

\subsubsection{Pre-processing by High-Degree Branching}
\label{sec:2-SAT-pre}

For a residual instance \(F\), let \(C(F)\) be its simple constraint graph on
the still-unassigned variables, with an edge \(uv\) whenever some clause of
\(F\) contains one literal on \(u\) and one literal on \(v\). For
\(v\in V(C(F))\), write \(N_F(v)\) for its neighbourhood in \(C(F)\) and
\(\deg(v):=|N_F(v)|\).  For each neighbour \(u\in N_F(v)\), fix one
deterministic witness clause of \(F\) containing literals on \(u\) and \(v\)
(for example, the first such clause under the fixed encoding order), and place
\(u\) in \(N_F^+(v)\) if \(v\) appears positively in that witness clause and in
\(N_F^-(v)\) otherwise.  Define
\[
  d_F^+(v):=|N_F^+(v)|,\qquad d_F^-(v):=|N_F^-(v)|,
\]
so that \(N_F^+(v)\) and \(N_F^-(v)\) partition \(N_F(v)\), whence
\(\deg(v)=d_F^+(v)+d_F^-(v)\).  If we branch on \(v\), then setting \(v=0\)
falsifies the witness clause for every \(u\in N_F^+(v)\), producing a unit
clause on \(u\); after unit propagation, each such \(u\) is fixed or the branch
is declared unsatisfiable. The same holds with \(N_F^-(v)\) when setting
\(v=1\). Thus the two branches fix at least \(1+d_F^+(v)\) and
\(1+d_F^-(v)\) still-free variables, respectively, unless they die even
earlier. In Algorithm~\textsc{2-SAT-Preprocess} we track the exact number of
fixed variables via
\[
  s(H):=n-\#\operatorname{vars}(H),
\]
which is computable because the residual state remembers all still-unassigned
variables, including variables that no longer occur in any clause.

\begin{algorithm}[H]
\DontPrintSemicolon
\caption{\textsc{2-SAT-Preprocess}\((F)\)}
\label{alg:2-SAT-pre}
\KwIn{$2$-CNF formula $F$ on $n$ variables}
\KwOut{lists $\mathcal L$ of \emph{hard} residuals and $\mathcal E$ of
       \emph{easy} residuals}
       \SetKwProg{Fn}{Function}{:}{}
$\mathcal L,\mathcal E\gets\emptyset$\;
$\beta\gets0.6999$\;
\Fn{\Branch{$H$}}{%
  $s\gets n-\#\operatorname{vars}(H)$\;
  \If{$s\ge\beta n$}{append $H$ to $\mathcal L$; \Return}
  \If{$\max_{v}\deg(v)\le2$}{append $H$ to $\mathcal E$; \Return}

  \lIf(\tcp*[f]{prefer a degree-$6$ variable}){$\exists v:\deg(v)=6$}{take such $v$}
  \lElse{$v\gets$ variable of max.\ degree $\ge3$}

  \Branch{$H[\bar v]$}\tcp*{set $v=0$}
  \Branch{$H[v]$}\tcp*{set $v=1$}
}
\Branch{$F$}\;
\Return $(\mathcal L,\mathcal E)$
\end{algorithm}
\paragraph{Degree-$6$-heavy regime.}
The specialised analysis below targets the regime in which the simple
constraint graph has maximum degree at most \(6\) and at least
\((1-\eta)n\) variables have degree \(6\).  The global hybrid algorithm handles
variables of degree at least \(7\) separately before this regime is invoked; see
the high-degree normalisation in the proof of \Cref{thm:2-SAT}.  Thus, when the
specialised routine is called, the only complementary maximum-degree-\(6\) case
is that more than \(\eta n\) variables have degree below \(6\), and this case is
handled by Wahlstr{\"o}m's degree-based algorithm.

\paragraph{Size of the search tree.}
Recall that the preprocessing phase stops once at least \(\beta n\) free
variables have been fixed, where we take \(\beta=0.6999\). In the specialised
case, assume that the root constraint graph has maximum degree at most \(6\)
and at least \((1-\eta)n\) variables of degree \(6\). Let \(T(m)\) denote the
maximum number of leaves produced when \(m\) units of fixed-variable budget
remain before the cutoff.

\begin{lemma}\label{lem:2sat-degree6-persistence}
Assume that the root constraint graph has at least \((1-\eta)n\) vertices of
degree \(6\) and maximum degree at most \(6\). Along any root-to-leaf path of
Algorithm~\textsc{2-SAT-Preprocess}, as long as fewer than
\((1-\eta)n/7\) variables have been fixed, the current residual still contains
a degree-\(6\) variable.
\end{lemma}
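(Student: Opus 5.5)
Our plan is a counting argument: fixing one variable can destroy the degree-\(6\) status of at most \(7\) root degree-\(6\) variables, so fewer than \((1-\eta)n/7\) fixings cannot destroy all of them. We first record two monotonicity facts about residual constraint graphs along a path of Algorithm~\textsc{2-SAT-Preprocess}. Setting a literal and performing unit propagation only deletes clauses and shortens clauses to units, which are then propagated. Hence no new edges are ever created in \(C(H)\): if \(H'\) is a descendant of \(H\), then \(C(H')\) is a subgraph of \(C(H)\) restricted to the still-free variables. In particular, degrees never increase, and since the root has maximum degree at most \(6\), a variable has degree \(6\) in the current residual exactly when it is free and has kept all six of its root neighbours, each still free and each still joined to it by some surviving clause.

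Next we charge degree losses to fixed variables. Let \(D_0\) be the set of root degree-\(6\) variables, so \(|D_0|\ge(1-\eta)n\). A variable \(w\in D_0\) that is still free but no longer has degree \(6\) has lost some root edge \(uw\). The only way the last clause on \(\{u,w\}\) disappears while \(w\) stays free is that a literal on \(u\) becomes assigned, since a clause on \(\{u,w\}\) is removed only when it is satisfied or shortened by assigning \(u\) or \(w\). So every \(w\in D_0\) that has stopped being a free degree-\(6\) variable is either itself fixed or adjacent in the root graph \(C(F)\) to a fixed variable. Because every root vertex has degree at most \(6\), each fixed variable \(u\) accounts for at most \(|N_{C(F)}[u]|\le 7\) elements of \(D_0\). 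Thus if \(k\) variables have been fixed, at most \(7k\) elements of \(D_0\) have lost degree \(6\). When \(k<(1-\eta)n/7\), this gives \(7k<(1-\eta)n\le|D_0|\), so some element of \(D_0\) is still free with degree \(6\) in the current residual.

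The step needing the most care is the claim that edges disappear only through assignment of an endpoint. Here we must use the encoding convention: variables fixed by unit propagation are removed from the free set, and \(F[\ell]\) is defined via unit propagation plus simplification. We also need that an unsatisfiable branch simply becomes a leaf, so it never occurs as an internal node of the path and cannot break the argument. Finally, "fixed" must be counted as \(s(H)=n-\#\operatorname{vars}(H)\), which covers variables fixed both by branching and by propagation. With these in place, the counting above applies verbatim to every node of every root-to-leaf path.
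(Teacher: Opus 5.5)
Your proposal is correct and follows the paper's proof: the paper calls a root degree-$6$ vertex ``spoiled'' once it or one of its six root neighbours is fixed, bounds the number of spoiled vertices by $7s$, and notes that every unspoiled vertex keeps all six witness clauses. Your charging of degree losses to fixed variables is the same count stated contrapositively. You also make explicit a point the paper uses only implicitly: no new edges are ever created, so the surviving vertex has degree exactly $6$ rather than merely at least $6$.
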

\begin{proof}
Let \(D\) be the set of vertices of degree \(6\) in the root constraint graph.
Call \(u\in D\) \emph{spoiled} once either \(u\) itself or one of its six root
neighbours has been fixed. Because every variable has root degree at most \(6\),
each fixed variable spoils at most \(7\) vertices of \(D\). Hence after \(s\)
fixed variables, at most \(7s\) vertices of \(D\) are spoiled. If
\(s<(1-\eta)n/7\), then some \(u\in D\) remains unspoiled.

For such a vertex \(u\), neither \(u\) nor any of its six root neighbours has
been assigned. In a \(2\)-CNF formula, assigning variables other than the two
endpoints of a clause cannot delete a binary clause witnessing an edge between
those endpoints. Under the residual convention used here, a binary clause is
removed or subsumed only when one of its endpoint variables is fixed, either
directly or by unit propagation. Therefore all six witness clauses showing the
root neighbours of \(u\) remain present in the current residual, and \(u\)
still has degree \(6\) in the current simple constraint graph.
\end{proof}

\begin{enumerate}[(a)]
\item \textbf{Branch on a degree-$6$ variable.}
If \(\deg(v)=6\), then the two branches have polarity-sensitive decreases
\((1+d_H^{+}(v),\,1+d_H^{-}(v))\). The worst split is the one-polarity case,
namely \((6,0)\) or \((0,6)\), which gives branching vector \((7,1)\); every
mixed polarity split is stronger. We therefore obtain the safe recurrence
\[
  T(m)\ \le\ T(m-7)+T(m-1),
\]
i.e.\ branching vector \((7,1)\).
Let $\alpha_6>1$ be the corresponding branching number, the unique real root of
$\alpha^7=\alpha^6+1$.
Numerically, $\log_2\alpha_6\approx 0.328173397$, and we use the safe bound
$\log_2\alpha_6\le 0.328174$.
By \Cref{lem:2sat-degree6-persistence}, degree-$6$ branching remains available
for the first \(\frac{(1-\eta)n}{7}\) units of fixed-variable budget along
every root-to-leaf path; if it remains available longer, the bound only
improves.

\item \textbf{Branch on a variable of degree at least $3$.}
For the remaining \(\beta n-\frac{(1-\eta)n}{7}\) units of fixed-variable
budget we branch on a variable of degree at least $3$.
This gives the recurrence
\[
  T(m)\ \le\ T(m-4)+T(m-1),
\]
i.e.\ branching vector $(4,1)$.
Let $\alpha_3>1$ be the corresponding branching number, the unique real root of
$\alpha^4=\alpha^3+1$.
Numerically, $\log_2\alpha_3\approx 0.464958417$, and we use the safe bound
$\log_2\alpha_3\le 0.464959$.
\end{enumerate}

Consequently, the preprocessing recursion tree has at most
\[
  L_{\mathrm{pre}}
  \;\le\;
  \alpha_6^{\frac{(1-\eta)n}{7}}\;
  \alpha_3^{\left(\beta-\frac{1-\eta}{7}\right)n}
  \;\le\;
  2^{\,n\left(\frac{1-\eta}{7}\log_2\alpha_6
      +\left(\beta-\frac{1-\eta}{7}\right)\log_2\alpha_3\right)}
  \;\le\;
  2^{\,n\left(0.305885+0.019541\,\eta\right)}.
\]
Here the last line is the direct expansion of
\[
  \frac{1-\eta}{7}\log_2\alpha_6
  +\left(0.6999-\frac{1-\eta}{7}\right)\log_2\alpha_3
\]
using the safe numerical bounds above.
In particular, if $\mathcal{L}$ denotes the set of hard leaves produced by preprocessing, then
$|\mathcal{L}|\le L_{\mathrm{pre}}$.
Moreover, every hard residual has at most \((1-\beta)n=0.3001n\) still-free variables.

\subsubsection{Counting the Easy and Hard Instances}
Before invoking \Cref{thm:combineuTotP}, discard every unsatisfiable hard
residual using the polynomial-time satisfiability test for \(2\)-CNF formulas;
this filtering is charged to the preprocessing term.  Let
\(\mathcal{L}^{+}=\{F_1,\dots,F_\ell\}\) be the remaining satisfiable hard
formulas.  Then \(\ell\le L_{\mathrm{pre}}\), and every \(F_i\in\mathcal L^+\)
has \(f_{\mathrm{2-SAT}}(F_i)>0\), as required by
\Cref{thm:combineuTotP}.  If \(\mathcal L^+=\emptyset\), the hard contribution
is zero.  Otherwise, since each \(F_i\) has at most \(0.3001n\) still-free
variables, we have
\[
  b_{\mathrm{2-SAT}}(F_i)\ \le\ 2^{0.3001n}.
\]
Therefore,
\[
  \sum_{i=1}^{\ell} b_{\mathrm{2-SAT}}(F_i)
  \;\le\;
  \ell\cdot 2^{0.3001n}
  \;\le\;
  2^{(0.605985+0.019541\,\eta)\,n}.
\]
Applying \Cref{thm:combineuTotP} yields an $(\varepsilon,\delta)$-approximation of
$\sum_{i=1}^{\ell} f_{\mathrm{2-SAT}}(F_i)$ in time
\[
  L_{\mathrm{pre}}+
  O^{\ast}\!\Bigl(
    2^{(0.302993+0.009771\,\eta)\,n}\;
    \varepsilon^{-2}\log\tfrac1\delta
  \Bigr).
\]
Since $0.302993+0.009771\,\eta < 0.305885+0.019541\,\eta$, the total cost in this regime is
dominated by constructing the preprocessing tree.  Thus (for fixed $(\varepsilon,\delta)$) we obtain
\[
  T_{\mathrm{spec}}(n)
  \;=\;
  O^{\ast}\!\bigl(2^{(0.305885+0.019541\,\eta)\,n}\bigr).
\]
The easy leaves in $\mathcal{E}$ have maximum degree $2$ and can be counted exactly in polynomial time.

\paragraph{Choice of $\eta^\star$ and global running time.}
We combine the specialised bound above with Wahlstr{\"o}m's analysis.
\Cref{tab:Wahlstrom} records the degree weights from Wahlstr{\"o}m's
measure that are used for this comparison.  Only \(w_5\) and \(w_6\) enter the
threshold calculation below, but the full row makes explicit which imported
measure is being invoked.

\begin{table}[H]
  \centering
  \caption{Wahlstr{\"o}m degree weights used in the complementary
  \#\textsc{2-SAT} regime.}
  \label{tab:Wahlstrom}
  \begin{tabular}{ccccc}
    \toprule
    $w_2$ & $w_3$ & $w_4$ & $w_5$ & $w_6$ \\
    \midrule
    0.115507 & 0.208788 & 0.277001 & 0.301245 & 0.307612 \\
    \bottomrule
  \end{tabular}
\end{table}

Let $w_6=0.307612$ and $w_5=0.301245$ denote Wahlstr{\"o}m's degree-based exponents.
Fix a threshold $\eta^\star\in(0,1)\).  After all degree-at-least-\(7\)
variables have been removed by the high-degree branching in the proof below, we
apply the following split to each residual with \(m\) still-free variables.  If
the simple constraint graph has at most \(\eta^\star m\) variables of degree
below \(6\), run the specialised routine with \(m\) in place of \(n\).  Otherwise
run Wahlstr{\"o}m's algorithm; since the residual now has maximum degree at most
\(6\), its degree measure is at most
\(((1-\eta^\star)w_6+\eta^\star w_5)m\).
The resulting worst-case exponent is
\[
  c(\eta^\star)
  \;=\;
  \max\Bigl\{
      0.305885+0.019541\,\eta^\star,\,
      (1-\eta^\star)w_6+\eta^\star w_5
  \Bigr\}.
\]
Choosing $\eta^\star$ so that the two terms are equal gives
\[
  \eta^\star
  \;=\;
  \frac{w_6-0.305885}{0.019541+(w_6-w_5)}
  \;\approx\;
  0.0667,
  \qquad
  c(\eta^\star)<0.30719.
\]

\begin{theorem}\label{thm:2-SAT}
For fixed $(\varepsilon,\delta)$, the above hybrid algorithm computes an
$(\varepsilon,\delta)$-approximation of \#\textsc{2-SAT} on $n$ variables in
\[
  T_{\mathrm{SAT}}(n)
  \;=\;
  O^{\ast}\!\bigl(2^{0.30719n}\bigr)
  \;=\;
  O^{\ast}\!\bigl(1.2373^{\,n}\bigr)
\]
time.
\end{theorem}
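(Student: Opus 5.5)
The plan is to assemble the hybrid algorithm as a three-layer recursion and bound each layer's cost by $2^{c n}$ with $c=0.30719$.

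\textbf{Layer 1: high-degree normalisation.} While the residual constraint graph has a variable of degree at least $7$, branch on it. By the polarity analysis of \Cref{sec:2-SAT-pre}, the two branches fix at least $1+d^+$ and $1+d^-$ variables with $d^++d^-\ge 7$. The worst case is the vector $(8,1)$, whose branching number $\alpha_7$ satisfies $\log_2\alpha_7<0.3$. Each leaf of this tree, with $m$ free variables, has maximum degree at most $6$. I will charge the whole procedure by the potential $2^{c\cdot\#\mathrm{vars}}$. It then suffices to check two things: $2^{-8c}+2^{-c}\le 1$, which holds since $c\ge\log_2\alpha_7$; and each max-degree-$\le 6$ residual on $m$ variables is solved in $O^\ast(2^{cm})$. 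The standard induction then gives $O^\ast(2^{cn})$ in total.

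\textbf{Layer 2: the split on $\eta^\star$.} For a residual with maximum degree at most $6$ and $m$ free variables, count $D_{<6}$, the number of variables of degree below $6$.

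If $D_{<6}>\eta^\star m$, run Wahlstr{\"o}m's exact algorithm. Its running time is $2^{\sum_v w_{\deg(v)}}$, and since the weights in \Cref{tab:Wahlstrom} are monotone, this is at most $2^{((1-\eta^\star)w_6+\eta^\star w_5)m}$. Exact counts are trivially $(\varepsilon,\delta)$-approximations.

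Otherwise, run \textsc{2-SAT-Preprocess} with $m$ in place of $n$. \Cref{lem:2sat-degree6-persistence} guarantees degree-$6$ branching for the first $(1-\eta^\star)m/7$ units of budget. This yields the tree bound $L_{\mathrm{pre}}\le 2^{(0.305885+0.019541\eta^\star)m}$. The hard leaves, after filtering unsatisfiable ones, are fed to \Cref{thm:combineuTotP}. The easy leaves have maximum degree $2$ and are counted exactly. This gives $T_{\mathrm{spec}}(m)$ as computed above.

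By the choice of $\eta^\star$, both regimes have exponent $c(\eta^\star)<0.30719$.

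\textbf{Layer 3: combining the estimates.} The final estimate sums three kinds of quantities: exact counts, and the combined hard-core estimate from each specialised call. There may be exponentially many specialised calls, one per layer-1 leaf. I will therefore run each call with failure probability $\delta/N$, where $N\le 2^{n}$ bounds the number of layer-1 leaves. This costs only a factor $\log(N/\delta)=O(n+\log(1/\delta))$, which is polynomial. The union bound and termwise preservation of $(1\pm\varepsilon)$ bounds under summation of nonnegative terms then give an $(\varepsilon,\delta)$-approximation. Finally, $2^{0.30719}<1.2373$.

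\textbf{Main obstacle.} The delicate step is justifying the mid-recursion invocation of the specialised routine. Its persistence lemma assumes the degree-$6$ density holds at its own root; this is fine because we apply it with the residual as root, with $m$ replacing $n$. The bound for layer 1 must also absorb these subcalls; the potential argument above handles this, provided $c$ dominates both the degree-$\ge7$ branching number and $c(\eta^\star)$.
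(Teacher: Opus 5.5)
Your proposal is correct and follows essentially the same route as the paper's proof. It uses the same $(1,8)$ high-degree branching charged against $2^{cm}$ with $2^{-c}+2^{-8c}\le 1$, the same $\eta^\star$ split between the specialised preprocessing (persistence lemma plus the sum-of-cores theorem) and Wahlstr{\"o}m's degree measure, and the same union bound over the high-degree leaves, absorbed into $O^{\ast}(\cdot)$. One small numerical slip: the $(8,1)$ branching number satisfies $\log_2\alpha_7\approx 0.301$, not $<0.3$. This does not matter, because it is still below $c=0.30719$, which is all the argument uses.
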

\begin{proof}
Let \(c=0.30719\).  We prove the running-time bound by induction on the
number \(m\) of still-free variables in the current residual.  If the residual
has maximum degree at most \(2\), it is counted exactly in polynomial time.

Suppose first that some variable \(v\) has degree \(d\ge 7\).  Branch on the two
truth assignments to \(v\).  As above, the branches fix at least
\(1+d_F^+(v)\) and \(1+d_F^-(v)\) variables, and
\(d_F^+(v)+d_F^-(v)=d\).  The worst polarity split is therefore the branching
vector \((1,8)\).  By the induction hypothesis the two recursive calls cost at
most
\[
  2^{c(m-1)}+2^{c(m-8)}
  =
  \bigl(2^{-c}+2^{-8c}\bigr)2^{cm}
  \le 2^{cm},
\]
since \(2^{-c}+2^{-8c}<1\).

It remains to consider residuals of maximum degree at most \(6\).  If at most
\(\eta^\star m\) variables have degree below \(6\), then
\Cref{lem:2sat-degree6-persistence} applies with \(\eta=\eta^\star\), and the
specialised preprocessing routine runs in
\[
  O^{\ast}\!\left(2^{(0.305885+0.019541\eta^\star)m}\right)
  \le O^{\ast}(2^{cm}).
\]
Otherwise, more than \(\eta^\star m\) variables have degree at most \(5\).
Wahlstr{\"o}m's degree measure is then at most
\[
  ((1-\eta^\star)w_6+\eta^\star w_5)m
  \le cm,
\]
so the exact-algorithm branch also costs \(O^{\ast}(2^{cm})\).  The two latter
exponents are equal, up to safe rounding, by the definition of \(\eta^\star\).
The failure probability is distributed over the high-degree branching leaves;
the resulting extra \(O(n)\) factor in the logarithmic confidence term is
absorbed by \(O^{\ast}(\cdot)\).
\end{proof}

For comparison, Wahlstr{\"o}m's universal bound gives
\[
  O^{\ast}(2^{w_6 n})
  =
  O^{\ast}(2^{0.307612n})
  =
  O^{\ast}(1.2377^{\,n}).
\]
Our hybrid strategy therefore reduces the variable-parameter worst-case base
from $1.2377$ to $1.2373$.

\begin{remark}
The significance of \Cref{thm:2-SAT} is qualitative rather than quantitative:
it gives an exponential-time approximate-counting upper bound for \#\textsc{2-SAT} that
improves on the best currently cited variable-parameter exact worst-case bound.
Recent clause-parameter algorithms for weighted \#\textsc{2-SAT} and
\#\textsc{3-SAT} \cite{peng2025newalgorithms} are orthogonal to this comparison:
their new bounds are measured by the number of clauses rather than by the number
of variables. They are nevertheless an interesting target for the same
philosophy. If a clause-parameter analysis exposes additive residual recursion,
polynomial-time zero tests, and a compatible bound on residual cores, then the
aggregation theorem can be used to ask for an approximation speed-up in that
parameter as well.
\end{remark}

\section{Quantum Extensions}\label{sec:Quantum}

The same bounded-uSR interface also exposes the tree and membership oracles used
by standard quantum approximate-counting routines.

In this section we show how to plug standard quantum approximate counting
(amplitude estimation / quantum counting~\cite{brassard2000quantum})
and the tree-size estimation algorithm of \citet{ambainis2017quantum}
into our interface.
We follow the oracle conventions and presentation of approximate counting
from \citet{AaronsonR20}.

In doing so we obtain two black-box quantum speed-ups that apply to every problem in uTotP.
As with the classical framework, the running time depends on the specific bounded uSR form chosen for the problem.

\paragraph{Oracle model.}
Throughout the quantum sections we work in the standard quantum query model.
The predicate
\[
\mathsf{Oracle}(m) \;:=\; t\bigl(h(x,\sigma_{\mathcal F^b,x}(m))\bigr)
\]
is computable in time $\operatorname{poly}(|x|)$ and is therefore implemented as a
reversible circuit and used as a unitary oracle.
One quantum query corresponds to one invocation of this unitary; all polynomial
overhead is suppressed in the $O^\ast(\cdot)$ notation.

\paragraph{On oracle access.}
It is important to note that reversibility alone is insufficient for the $b(x)^{1/4}$ speed-up.
A generic reversible implementation of a classical branching or enumeration algorithm does not, in general, expose the rooted-tree oracle interface required by Ambainis--Kokainis, and therefore cannot be used as a black-box substitute.
Our displacement-index construction ensures that this interface is available uniformly for every bounded-uSR definition.

\subsection{A $b(x)^{1/3}$ Quantum Speed-up}\label{sec:quantum-aaronson}

\paragraph{Quantum approximate counting (amplitude estimation / quantum counting).}
Fix an input $x$ and set $B := b(x)$.
We consider the standard \emph{marked-item counting} primitive: given oracle access to a Boolean
predicate $\chi:[B]\to\{0,1\}$ with $M := |\chi^{-1}(1)|$, quantum approximate counting via
amplitude estimation / quantum counting~\cite{brassard2000quantum} outputs an $\varepsilon$-relative
estimate of $M$ with failure probability at most $\delta$ using
\[
  O^{\ast}\!\left(\sqrt{\frac{B}{M}}\cdot \log(\delta^{-1}) \cdot \frac{1}{\varepsilon}\right)
\]
oracle queries. We follow the oracle conventions and presentation of this primitive in
\citet{AaronsonR20}.

In our setting, the relevant membership predicate over $[B]=[b(x)]$ is
\[
  \mathsf{Oracle}(m)\;:=\; t\!\left(h\!\left(x,\sigma_{\mathcal F^b,x}(m)\right)\right),
\]
so the number of marked items is exactly
\[
  M \;=\; \sum_{m=1}^{B} \mathsf{Oracle}(m) \;=\; f(x).
\]
The only requirement for applying approximate counting is that $\mathsf{Oracle}(m)$ be computable
in time $\poly(|x|)$ (so that it can be implemented as a unitary in the quantum query model).
This holds here: by \Cref{lem:sampleRuntime} the map
$\sigma_{\mathcal{F}^b,x}:[b(x)]\rightarrow\mathcal{G}_{\mathcal{F},x}$ runs in polynomial time, and
by definition $h:\Sigma^{\ast}\times\mathbb{N}^d\rightarrow \Sigma^{\ast}$ and
$t:\Sigma^{\ast}\rightarrow \{0,1\}$ run in polynomial time.
Hence each oracle query has only $\poly(|x|)$ overhead, which is suppressed by $O^{\ast}(\cdot)$.

Finally, to obtain a uniform bound in terms of $b(x)$ alone, run the polynomial-delay enumerator
\textsc{DFSEnum}$_{\mathcal{F},x}$ until it either terminates (in which case we have computed $f(x)$
exactly) or outputs $b(x)^{1/3}$ positive leaves. In the latter case we certify $f(x)=M\ge b(x)^{1/3}$,
and the approximate counting routine therefore takes
\[
\boxed{
O^{\ast}\Bigl(\sqrt{{b(x)}/{b(x)^{\frac13}}}\cdot \log(\delta^{-1}) \cdot \frac{1}{\varepsilon}\Bigr)=
O^{\ast}\Bigl(b(x)^{\frac13}\cdot \log(\delta^{-1}) \cdot \frac{1}{\varepsilon}\Bigr)
}
\]

\subsection{A $b(x)^{1/4}$ \,Quantum Speed-up}\label{sec:Ambainis}

\paragraph{Tree size being estimated.}
If \(\mathsf{Dec}_f(x)=\textsc{false}\), then \(f(x)=0\) and we return \(0\)
before invoking the tree-size estimator.  Hence in the sequel we assume that
at least one positive leaf exists, so the displacement-index tree is rooted at
\((0,\ldots,0)\).
In our setting, the Ambainis--Kokainis estimator is applied to the
displacement-index tree \(\mathcal T_{\mathcal F,x}\)
from \Cref{def:indexTree}, whose nodes are displacement indices
reachable via the $\mathsf{Child}$ oracle.
Here \(\mathcal T_{\mathcal F,x}\) denotes the displacement-index tree
constructed in \Cref{sec:PolyDelay}, not the original uSR recursion tree.  Its
nodes are precisely the displacement indices emitted by \textsc{DFSEnum}; by
\Cref{thm:dfs-correct}, these are in bijection with the positive leaves of the
underlying uSR recursion.
Consequently, the size $T$ of $\mathcal T_{\mathcal F,x}$ is exactly
$f(x)$, the number of solutions we wish to approximate.

\paragraph{Oracle requirements (Ambainis--Kokainis).}
Ambainis \& Kokainis~\cite{ambainis2017quantum} give a quantum
\emph{tree--size estimator} that, given a proposed upper bound
$T_0\ge T$ on the number of nodes $T$ in a rooted tree
and error parameters $\varepsilon,\delta\in(0,1)$, runs in
\[
  O^{\ast}\!\bigl(\sqrt{T_0}\,\log^{2}\!\tfrac1\delta\,\varepsilon^{-3/2}\bigr)
\]
and, with probability at least $1-\delta$, either
\begin{enumerate}[(a)]
  \item returns $\hat T$ with
        $\bigl|\hat T-T\bigr|\le\varepsilon T$, or
  \item certifies that $T>(1+\varepsilon)^{-1}T_0$.
\end{enumerate}
The algorithm needs an \emph{oracle} that, in
$\poly(|x|)$ time,
\begin{enumerate}[(i)]
  \item returns the root,
  \item gives the degree $d(v)$ of any node $v$ (bounded by $\poly(|x|)$),
  \item returns the $i$-th child of $v$ for any $i\in[d(v)]$, and
  \item (optionally) returns the parent of $v$.
\end{enumerate}

\paragraph{Our uSR search tree satisfies the oracle.}
For every bounded uSR form
$\mathcal{F}^b=(f,h,t,r,q,b)$ the depth-first, the combination of our oracles $\mathsf{Child}$ and $\mathsf{Parent}$ (\Cref{sec:GraphOracles}) ensures:
\begin{enumerate}[(i)]
  \item a root obtained in $\operatorname{poly}(|x|)$ time;
  \item node degree $d(v)\le\poly(|x|)$ (every node has at most $q(x)$ children) can be found in polytime by finding each indexed child (\Cref{cond:findChild});
  \item return the $i$-th child using the $\mathsf{Child}$ oracle in polynomial time (\Cref{lem:polyChild})\label{cond:findChild}
    \item parent obtained in polytime by the $\mathsf{Parent}$ oracle (\Cref{lem:polyParent})
\end{enumerate}
Hence the tree can be fed, black-box, into the
Ambainis--Kokainis estimator.

\paragraph{Balancing Ambainis-Kokainis with amplitude estimation.}
We first call the Ambainis-Kokainis quantum estimator with
\(
   T_0:=\bigl\lceil b(x)^{1/2}\bigr\rceil
\)
and the oracle above, using failure probability \(\delta/2\).
If the second stage is needed, we run amplitude estimation with failure
probability \(\delta/2\); a union bound gives total failure probability at most
\(\delta\).
There are two cases:

\smallskip
\begin{enumerate}[(1)]
  \item \emph{$T\le T_0$.}
        The algorithm outputs an $(\varepsilon,\delta)$ estimate
        $\hat T$ in
        $O^{\ast}\!\bigl(b(x)^{1/4}
            \log^{2}\!\tfrac1\delta\,\varepsilon^{-3/2}\bigr)$
        time, which we return as $\hat f$.

  \item \emph{$T>(1+\varepsilon)^{-1}T_0$.} In this case the algorithm runs in  $O^{\ast}\!\bigl(b(x)^{1/4}
            \log^{2}\!\tfrac1\delta\,\varepsilon^{-3/2}\bigr)$ and outputs the claim ``$T$ contains more than $T_0$ vertices''.
        From that we can infer $f(x)=T>\frac{1}{1+\varepsilon}b(x)^{1/2}$, so the success
        probability of the \textit{sampling} oracle is
        $p=\frac{f(x)}{b(x)}>\frac{1}{(1+\varepsilon)\,b(x)^{1/2}}$.
        Applying the amplitude estimation step from
        \Cref{sec:quantum-aaronson} now costs
        \[
           O^{\ast}\!\bigl(b(x)^{1/4}
               \log\tfrac1\delta\,\varepsilon^{-1}\bigr),
        \]
        dominated (up to poly-log factors) by case \textbf{(1)}.
\end{enumerate}

\paragraph{Resulting running time.}
Combining the \(T_0=\lceil b(x)^{1/2}\rceil\) Ambainis--Kokainis estimation potentially followed by standard quantum amplitude estimation
gives an overall
\[
\boxed{
  O^{\ast}\!\Bigl(
      b(x)^{1/4}\,
      \log^{2}\!\tfrac1\delta\,
      \varepsilon^{-3/2}
  \Bigr)
  }
\]
time $(\varepsilon,\delta)$-approximate counter for \emph{every} bounded
uSR function~$\mathcal{F}^b$.

\paragraph{Example: Maximal Clique.} We can directly apply this result to the maximal-clique bounded uSR form from \Cref{sec:maxclique} and automatically get the following running times.

\begin{center}
\renewcommand{\arraystretch}{1.1}
\begin{tabular}{@{}lccc@{}}
\toprule
Problem & classical+classical & classical+quantum & quantum+quantum  \\
\midrule
Generic & \(O^{\ast}\bigl(b(x)^{1/2}\varepsilon^{-2} \log{(\delta^{-1})}\bigr)\)(\ref{thm:fullalgo}) & \(O^{\ast}\bigl(b(x)^{1/3}\varepsilon^{-1} \log{(\delta^{-1})}\bigr)\) & \(O^{\ast}\bigl(b(x)^{1/4}\varepsilon^{-3/2} \log^2{(\delta^{-1})}\bigr)\) \\
\#\textsc{MC} & \(O^{\ast}\!\left(3^{n/6} \varepsilon^{-2} \log{(\delta^{-1})}\right)\)  (\ref{thm:MC}) & \(O^{\ast}\!\left(3^{n/9} \varepsilon^{-1} \log{(\delta^{-1})}\right)\)    & \(O^{\ast}\!\left(3^{n/12} \varepsilon^{-3/2} \log^2{(\delta^{-1})}\right)\)   \\

\bottomrule
\end{tabular}
\end{center}

\medskip
\noindent
\textbf{Take-away.}
Where the classical framework yields a generic
$b(x)^{1/2}$ speed-up, combining the Ambainis--Kokainis and standard quantum approximate counting routines
automatically improves this to a \emph{$b(x)^{1/4}$} dependence.
Compared to simply combining standard quantum approximate counting with the classical delay algorithm, we trade a cost of $\log(1/\delta)\,\varepsilon^{-1/2}$ on the error terms for an improvement from $b(x)^{\frac13}$ to $b(x)^{\frac14}$ on the running time dependence on the bounding function.

\section{Conclusion}

Unweighted self-reducibility (uSR) offers a useful point of contact between
combinatorial structure and algorithmic speed. When each recursive call
contributes the sum of its children, a counting problem becomes a search tree
whose relevant size is controlled by an explicit bound \(b(x)\). We proved that
this restriction does not shrink the class of problems at the level of Karp
closure: the resulting class \textsf{uTotP} coincides with the standard
\textsf{TotP}.

The algorithmic contribution is that such bounds can be used before the whole
tree is explored. For one bounded uSR form, the estimator enumerates only up to
\(\sqrt{b(x)}\) positive leaves and then samples from the remaining bound mass.
For a decomposition into many hard cores, the estimator takes the square root
after summing the core bounds. This is what turns recurrence or extremal upper
bounds into approximate counters for the applications above: direct bounds give
the clique, minimal-separator, and subcubic-perfect-matching rows, while
easy-plus-hard decompositions give the Independent Set and \#\textsc{2-SAT}
improvements.

The quantum consequences use the same interface. Standard quantum approximate
counting \cite{brassard2000quantum,AaronsonR20} and tree-size estimation
\cite{ambainis2017quantum} reduce the dependence on \(b(x)\) to \(b(x)^{1/3}\)
and \(b(x)^{1/4}\), respectively, whenever the corresponding bounded-uSR
oracles are available. These are consequences of exposing the recursion tree;
the headline classical decompositions remain governed by the preprocessing and
aggregation balances proved in the application sections.

This constructive viewpoint complements recent work relating approximate
counting to decision complexity. Dell and Lapinskas showed, under ETH and SETH,
that significant improvements for approximate counting would transfer to
decision~\cite{dell2021fine}. Censor-Hillel, Even, and Williams established
output-sensitive reductions aligning the fine-grained complexity of approximate
counting with decision for several problems~\cite{censor2025output}. Our
results go in the other direction: given a feasible self-reduction and a
compatible bound, they provide an explicit approximate counter.

The remaining structural question is to understand which enumeration problems
admit effective bounded-uSR forms, and which decompositions preserve enough
local structure for the aggregate theorem to apply. The applications here show
that the answer is already broad enough to improve several well-studied
counting problems; sharper problem-specific bounds or new clause-parameter
recurrences would feed directly into the same framework when they expose the
required residual recursion.

\bibliographystyle{plainnat}
\bibliography{references}

@article{goldberg2021faster,
  title={Faster exponential-time algorithms for approximately counting independent sets},
  author={Goldberg, Leslie Ann and Lapinskas, John and Richerby, David},
  journal={Theoretical Computer Science},
  volume={892},
  pages={48--84},
  year={2021},
  publisher={Elsevier}
}

@article{moon1965cliques,
  title={On cliques in graphs},
  author={Moon, John W and Moser, Leo},
  journal={Israel journal of Mathematics},
  volume={3},
  pages={23--28},
  year={1965},
  publisher={Springer}
}

@article{goldberg2016approximately,
  title={Approximately counting locally-optimal structures},
  author={Goldberg, Leslie Ann and Gysel, Rob and Lapinskas, John},
  journal={Journal of Computer and System Sciences},
  volume={82},
  number={6},
  pages={1144--1160},
  year={2016},
  publisher={Elsevier}
}

@article{valiant1979complexity,
  title={The complexity of enumeration and reliability problems},
  author={Valiant, Leslie G},
  journal={siam Journal on Computing},
  volume={8},
  number={3},
  pages={410--421},
  year={1979},
  publisher={SIAM}
}

@article{bron1973finding,
  title={Finding all cliques of an undirected graph (algorithm 457)},
  author={Bron, Coenraad and Kerbosch, Joep},
  journal={Commun. ACM},
  volume={16},
  number={9},
  pages={575--576},
  year={1973}
}

@inproceedings{makino2004new,
  title={New algorithms for enumerating all maximal cliques},
  author={Makino, Kazuhisa and Uno, Takeaki},
  booktitle={Algorithm Theory-SWAT 2004: 9th Scandinavian Workshop on Algorithm Theory, Humleb{\ae}k, Denmark, July 8-10, 2004. Proceedings 9},
  pages={260--272},
  year={2004},
  organization={Springer}
}

@article{dyer2004relative,
  title={The relative complexity of approximate counting problems},
  author={Dyer, Martin and Goldberg, Leslie Ann and Greenhill, Catherine and Jerrum, Mark},
  journal={Algorithmica},
  volume={38},
  pages={471--500},
  year={2004},
  publisher={Springer}
}

@article{tsukiyama1977new,
  title={A new algorithm for generating all the maximal independent sets},
  author={Tsukiyama, Shuji and Ide, Mikio and Ariyoshi, Hiromu and Shirakawa, Isao},
  journal={SIAM Journal on Computing},
  volume={6},
  number={3},
  pages={505--517},
  year={1977},
  publisher={SIAM}
}

@inproceedings{ambainis2017quantum,
  title={Quantum algorithm for tree size estimation, with applications to backtracking and 2-player games},
  author={Ambainis, Andris and Kokainis, Martins},
  booktitle={Proceedings of the 49th Annual ACM SIGACT Symposium on Theory of Computing},
  pages={989--1002},
  year={2017}
}

@inproceedings{AaronsonR20,
  author       = {Scott Aaronson and
                  Patrick Rall},
  editor       = {Martin Farach{-}Colton and
                  Inge Li G{\o}rtz},
  title        = {Quantum Approximate Counting, Simplified},
  booktitle    = {3rd Symposium on Simplicity in Algorithms, {SOSA} 2020, Salt Lake
                  City, UT, USA, January 6-7, 2020},
  pages        = {24--32},
  publisher    = {{SIAM}},
  year         = {2020},
  url          = {https://doi.org/10.1137/1.9781611976014.5},
  doi          = {10.1137/1.9781611976014.5},
  bibsource    = {dblp computer science bibliography, https://dblp.org}
}

@article{GaspersLee23,
  author       = {Serge Gaspers and
                  Edward J. Lee},
  title        = {Faster Graph Coloring in Polynomial Space},
  journal      = {Algorithmica},
  volume       = {85},
  number       = {2},
  pages        = {584--609},
  year         = {2023},
  url          = {https://doi.org/10.1007/s00453-022-01034-7},
  doi          = {10.1007/S00453-022-01034-7},
  bibsource    = {dblp computer science bibliography, https://dblp.org}
}

@inproceedings{pagourtzis2006complexity,
  title={The complexity of counting functions with easy decision version},
  author={Pagourtzis, Aris and Zachos, Stathis},
  booktitle={International Symposium on Mathematical Foundations of Computer Science},
  pages={741--752},
  year={2006},
  organization={Springer}
}

@article{bulatov2013expressibility,
  title={The expressibility of functions on the boolean domain, with applications to counting {CSP}s},
  author={Bulatov, Andrei A and Dyer, Martin and Goldberg, Leslie Ann and Jerrum, Mark and McQuillan, Colin},
  journal={Journal of the ACM (JACM)},
  volume={60},
  number={5},
  pages={1--36},
  year={2013},
  publisher={ACM New York, NY, USA}
}

@article{dahllof2005counting,
  title={Counting models for {2SAT} and {3SAT} formulae},
  author={Dahll{\"o}f, Vilhelm and Jonsson, Peter and Wahlstr{\"o}m, Magnus},
  journal={Theoretical Computer Science},
  volume={332},
  number={1-3},
  pages={265--291},
  year={2005},
  publisher={Elsevier}
}

@inproceedings{ge2018new,
  title={A new probabilistic algorithm for approximate model counting},
  author={Ge, Cunjing and Ma, Feifei and Liu, Tian and Zhang, Jian and Ma, Xutong},
  booktitle={Automated Reasoning: 9th International Joint Conference, IJCAR 2018, Held as Part of the Federated Logic Conference, FloC 2018, Oxford, UK, July 14-17, 2018, Proceedings 9},
  pages={312--328},
  year={2018},
  organization={Springer}
}

@inproceedings{peng2025newalgorithms,
  title={New Algorithms for \#2-{SAT} and \#3-{SAT}},
  author={Peng, Junqiang and Sheng, Zimo and Xiao, Mingyu},
  booktitle={Proceedings of the Thirty-Fourth International Joint Conference on Artificial Intelligence},
  pages={2666--2674},
  year={2025},
  doi={10.24963/ijcai.2025/297}
}

@inproceedings{wahlstrom2008tighter,
  title={A tighter bound for counting max-weight solutions to {2SAT} instances},
  author={Wahlstr{\"o}m, Magnus},
  booktitle={Parameterized and Exact Computation: Third International Workshop, IWPEC 2008, Victoria, Canada, May 14-16, 2008. Proceedings 3},
  pages={202--213},
  year={2008},
  organization={Springer}
}

@article{jerrum1986random,
  title={Random generation of combinatorial structures from a uniform distribution},
  author={Jerrum, Mark R and Valiant, Leslie G and Vazirani, Vijay V},
  journal={Theoretical computer science},
  volume={43},
  pages={169--188},
  year={1986},
  publisher={Elsevier}
}

@article{karp1989monte,
  title={{M}onte-{C}arlo approximation algorithms for enumeration problems},
  author={Karp, Richard M and Luby, Michael and Madras, Neal},
  journal={Journal of algorithms},
  volume={10},
  number={3},
  pages={429--448},
  year={1989},
  publisher={Elsevier}
}

@article{alon1997finding,
  title={Finding and counting given length cycles},
  author={Alon, Noga and Yuster, Raphael and Zwick, Uri},
  journal={Algorithmica},
  volume={17},
  number={3},
  pages={209--223},
  year={1997},
  publisher={Springer}
}

@inproceedings{vassilevska2009finding,
  title={Finding, minimizing, and counting weighted subgraphs},
  author={Vassilevska, Virginia and Williams, Ryan},
  booktitle={Proceedings of the forty-first annual ACM symposium on Theory of computing},
  pages={455--464},
  year={2009}
}

@article{goldberg2012approximating,
  title={Approximating the partition function of the ferromagnetic {P}otts model},
  author={Goldberg, Leslie Ann and Jerrum, Mark},
  journal={Journal of the ACM (JACM)},
  volume={59},
  number={5},
  pages={1--31},
  year={2012},
  publisher={ACM New York, NY, USA}
}

@article{SinclairJ89,
  author       = {Alistair Sinclair and
                  Mark Jerrum},
  title        = {Approximate Counting, Uniform Generation and Rapidly Mixing {M}arkov
                  Chains},
  journal      = {Information and Computation},
  volume       = {82},
  number       = {1},
  pages        = {93--133},
  year         = {1989}
}

@article{jerrum1993polynomial,
  title={Polynomial-time approximation algorithms for the {I}sing model},
  author={Jerrum, Mark and Sinclair, Alistair},
  journal={SIAM Journal on computing},
  volume={22},
  number={5},
  pages={1087--1116},
  year={1993},
  publisher={SIAM}
}

@article{lovasz1993random,
  title={Random walks in a convex body and an improved volume algorithm},
  author={Lov{\'a}sz, L{\'a}szl{\'o} and Simonovits, Mikl{\'o}s},
  journal={Random structures \& algorithms},
  volume={4},
  number={4},
  pages={359--412},
  year={1993},
  publisher={Wiley Online Library}
}

@article{galanis2016approximately,
  title={Approximately Counting {$H$}-Colorings is \#{BIS}-Hard},
  author={Galanis, Andreas and Goldberg, Leslie Ann and Jerrum, Mark},
  journal={SIAM Journal on Computing},
  volume={45},
  number={3},
  pages={680--711},
  year={2016},
  publisher={SIAM}
}

@inproceedings{karp1983monte,
  title={{M}onte-{C}arlo algorithms for enumeration and reliability problems},
  author={Karp, Richard M and Luby, Michael},
  booktitle={24th Annual Symposium on Foundations of Computer Science (sfcs 1983)},
  pages={56--64},
  year={1983},
  organization={IEEE Computer Society}
}

@article{dyer1991random,
  title={A random polynomial-time algorithm for approximating the volume of convex bodies},
  author={Dyer, Martin and Frieze, Alan and Kannan, Ravi},
  journal={Journal of the ACM (JACM)},
  volume={38},
  number={1},
  pages={1--17},
  year={1991},
  publisher={ACM New York, NY, USA}
}

@article{ibarra1975fast,
  title={Fast approximation algorithms for the knapsack and sum of subset problems},
  author={Ibarra, Oscar H and Kim, Chul E},
  journal={Journal of the ACM (JACM)},
  volume={22},
  number={4},
  pages={463--468},
  year={1975},
  publisher={ACM New York, NY, USA}
}

@article{lawler1982fully,
  title={A fully polynomial approximation scheme for the total tardiness problem},
  author={Lawler, Eugene L},
  journal={Operations Research Letters},
  volume={1},
  number={6},
  pages={207--208},
  year={1982},
  publisher={Elsevier}
}

@article{jerrum2004polynomial,
  title={A polynomial-time approximation algorithm for the permanent of a matrix with nonnegative entries},
  author={Jerrum, Mark and Sinclair, Alistair and Vigoda, Eric},
  journal={Journal of the ACM (JACM)},
  volume={51},
  number={4},
  pages={671--697},
  year={2004},
  publisher={ACM New York, NY, USA}
}

@article{schmitt2013exploiting,
  title={Exploiting independent subformulas: A faster approximation scheme for \#$k$-{SAT}},
  author={Schmitt, Manuel and Wanka, Rolf},
  journal={Information Processing Letters},
  volume={113},
  number={9},
  pages={337--344},
  year={2013},
  publisher={Elsevier}
}

@inproceedings{cardinal2018solving,
  title={Solving and Sampling with Many Solutions: Satisfiability and Other Hard Problems},
  author={Cardinal, Jean and Nummenpalo, Jerri and Welzl, Emo},
  booktitle={12th International Symposium on Parameterized and Exact Computation},
  year={2018}
}

@article{thurley2011approximation,
  title={An approximation algorithm for \#$k$-{SAT}},
  author={Thurley, Marc},
  journal={arXiv preprint arXiv:1107.2001},
  year={2011}
}

@article{bakali2016self,
  title={Self-reducible with easy decision version counting problems admit additive error approximation. {C}onnections to counting complexity, exponential time complexity, and circuit lower bounds},
  author={Bakali, Eleni},
  journal={arXiv preprint arXiv:1611.01706},
  year={2016}
}

@article{antonopoulos2022completeness,
  title={Completeness, approximability and exponential time results for counting problems with easy decision version},
  author={Antonopoulos, Antonis and Bakali, Eleni and Chalki, Aggeliki and Pagourtzis, Aris and Pantavos, Petros and Zachos, Stathis},
  journal={Theoretical Computer Science},
  volume={915},
  pages={55--73},
  year={2022},
  publisher={Elsevier}
}

@article{HujterT93,
  author       = {Mih{\'{a}}ly Hujter and
                  Zsolt Tuza},
  title        = {The Number of Maximal Independent Sets in Triangle-Free Graphs},
  journal      = {{SIAM} Journal on Discrete Mathematics},
  volume       = {6},
  number       = {2},
  pages        = {284--288},
  year         = {1993},
  doi          = {10.1137/0406022}
}

@inproceedings{bakali2024power,
  title={On the power of counting the total number of computation paths of {NPTMs}},
  author={Bakali, Eleni and Chalki, Aggeliki and Kanellopoulos, Sotiris and Pagourtzis, Aris and Zachos, Stathis},
  booktitle={Annual Conference on Theory and Applications of Models of Computation},
  pages={209--220},
  year={2024},
  organization={Springer}
}

@inproceedings{bakali2017completeness,
  title={Completeness results for counting problems with easy decision},
  author={Bakali, Eleni and Chalki, Aggeliki and Pagourtzis, Aris and Pantavos, Petros and Zachos, Stathis},
  booktitle={International Conference on Algorithms and Complexity},
  pages={55--66},
  year={2017},
  organization={Springer}
}

@book{mitzenmacher2017probability,
  title={Probability and Computing: Randomization and Probabilistic Techniques in Algorithms and Data Analysis},
  author={Mitzenmacher, M. and Upfal, E.},
  isbn={9781107154889},
  lccn={2016041654},
  series={Probability and Computing: Randomization and Probabilistic Techniques in Algorithms and Data Analysis},
  url={https://books.google.com/books?id=E9UlDwAAQBAJ},
  year={2017},
  publisher={Cambridge University Press}
}

@article{sinclair2017spatial,
  title={Spatial mixing and the connective constant: Optimal bounds},
  author={Sinclair, Alistair and Srivastava, Piyush and {\v{S}}tefankovi{\v{c}}, Daniel and Yin, Yitong},
  journal={Probability Theory and Related Fields},
  volume={168},
  number={1},
  pages={153--197},
  year={2017},
  publisher={Springer}
}

@inproceedings{censor2025output,
  title={Output-Sensitive Approximate Counting via a Measure-Bounded Hyperedge Oracle, or: How Asymmetry Helps Estimate-Clique Counts Faster},
  author={Censor-Hillel, Keren and Even, Tomer and Vassilevska Williams, Virginia},
  booktitle={Proceedings of the 57th Annual ACM Symposium on Theory of Computing},
  pages={1985--1996},
  year={2025}
}

@article{dell2021fine,
  title={Fine-grained reductions from approximate counting to decision},
  author={Dell, Holger and Lapinskas, John},
  journal={ACM Transactions on Computation Theory (TOCT)},
  volume={13},
  number={2},
  pages={1--24},
  year={2021},
  publisher={ACM New York, NY, USA}
}

@misc{gaspers2015numberminimalseparatorsgraphs,
      title={On the Number of Minimal Separators in Graphs},
      author={Serge Gaspers and Simon Mackenzie},
      year={2015},
      eprint={1503.01203},
      archivePrefix={arXiv},
      primaryClass={cs.DS},
      url={https://arxiv.org/abs/1503.01203},
}

@article{berry2000generating,
  title={Generating all the minimal separators of a graph},
  author={Berry, Anne and Bordat, Jean-Paul and Cogis, Olivier},
  journal={International Journal of Foundations of Computer Science},
  volume={11},
  number={03},
  pages={397--403},
  year={2000},
  publisher={World Scientific}
}

@article{takata2010space,
  title={Space-optimal, backtracking algorithms to list the minimal vertex separators of a graph},
  author={Takata, Ken},
  journal={Discrete Applied Mathematics},
  volume={158},
  number={15},
  pages={1660--1667},
  year={2010},
  publisher={Elsevier}
}

@inproceedings{furer2012counting,
  title={Counting perfect matchings in graphs of degree 3},
  author={F{\"u}rer, Martin},
  booktitle={International Conference on Fun with Algorithms},
  pages={189--197},
  year={2012},
  organization={Springer}
}

@article{kenig2023listing,
  title={Listing Small Minimal $ s, t $-separators in FPT-Delay},
  author={Kenig, Batya},
  journal={arXiv preprint arXiv:2307.00604},
  year={2023}
}

@article{brassard2000quantum,
  title={Quantum amplitude amplification and estimation},
  author={Brassard, Gilles and Hoyer, Peter and Mosca, Michele and Tapp, Alain},
  journal={arXiv preprint quant-ph/0005055},
  year={2000}
}

@inproceedings{weitz2006counting,
  title={Counting independent sets up to the tree threshold},
  author={Weitz, Dror},
  booktitle={Proceedings of the thirty-eighth annual ACM symposium on Theory of computing},
  pages={140--149},
  year={2006}
}

\appendix

\section{Reduction From TotP to uTotP}\label{sec:DeferredProofs}

\TOTPUTOTP*

\begin{proof}

We first note that $uTotP \subseteq TotP$: every unweighted self-reducible definition is a special
case of self-reducibility (take all weights $g(\cdot,\cdot)=1$), hence $\#PE_{uSR}\subseteq \#PE_{SR}$,
and taking Karp closures preserves inclusion.

For the reverse inclusion, let $f\in TotP$. By definition of Karp closure there exists
$f_0\in \#PE_{SR}$ such that $f \leq_m^P f_0$.
By transitivity of Karp reductions, it suffices to show that every $f\in \#PE_{SR}$
parsimoniously Karp-reduces to some $f'\in \#PE_{uSR}$.

Fix $f\in \#PE_{SR}$ with self-reducible presentation
\[
f(x)=t(x)+\sum_{i=1}^{r(x)} g(x,i)\cdot f(h(x,i)).
\]
The reduction below implements the standard idea ``carry weights in the instance'':
each weighted term $g(x,i)\cdot f(h(x,i))$ is unfolded into an unweighted binary tree producing
exactly $g(x,i)$ copies of the subcomputation, and the additive term $t(x)$ is handled by an extra
leaf child.

\paragraph{Tagged instances and reduction map.}
We define a new function $f':\Sigma^*\to \mathbb{N}$ on three kinds of tagged instances:
\[
\mathrm{Main}(x):=\langle \mathsf{M},x\rangle,\qquad
\mathrm{Mult}(m,x):=\langle \mathsf{W},m,x\rangle,\qquad
\mathrm{Bit}(b):=\langle \mathsf{B},b\rangle\ \ (b\in\{0,1\}).
\]
(These tags are constant-size symbols; encoding/parsing is as in Appendix~A.)

The parsimonious many-one reduction is
\[
\phi(x):=\mathrm{Main}(x)=\langle \mathsf{M},x\rangle.
\]
We now give an unweighted self-reducible definition $(f',h',t',r',q')$ such that
$f'(\phi(x))=f(x)$ for all $x$.

\paragraph{The unweighted self-reduction.}
We define $t',r':\Sigma^*\to \mathbb{N}$ and $h':\Sigma^*\times \mathbb{N}_+\to \Sigma^*$ by cases.

\emph{(1) Constant leaves.} For $b\in\{0,1\}$, $\mathrm{Bit}(b)$ is a leaf:
\[
r'(\mathrm{Bit}(b)):=0,\qquad t'(\mathrm{Bit}(b)):=b,\qquad h'(\mathrm{Bit}(b),i):=\lambda\ \ \forall i.
\]

\emph{(2) Copy gadget.} For $\mathrm{Mult}(m,x)$ we unfold $m$ copies by repeated halving:
\[
t'(\mathrm{Mult}(m,x)):=0,
\qquad
r'(\mathrm{Mult}(m,x)) :=
\begin{cases}
0 & m=0,\\
1 & m=1,\\
2 & m\ge 2,
\end{cases}
\]
and for the children,
\[
h'(\mathrm{Mult}(m,x),i):=
\begin{cases}
\mathrm{Main}(x) & (m=1 \text{ and } i=1),\\
\mathrm{Mult}(\lceil m/2\rceil,x) & (m\ge 2 \text{ and } i=1),\\
\mathrm{Mult}(\lfloor m/2\rfloor,x) & (m\ge 2 \text{ and } i=2),\\
\lambda & \text{otherwise.}
\end{cases}
\]

\emph{(3) Main instances.} For $\mathrm{Main}(x)$ we create one child for each weighted term,
plus one extra constant leaf for $t(x)$:
\[
t'(\mathrm{Main}(x)):=0,\qquad r'(\mathrm{Main}(x)):=r(x)+1,
\]
and for $j\in\mathbb{N}_+$,
\[
h'(\mathrm{Main}(x),j):=
\begin{cases}
\mathrm{Mult}(g(x,j),\,h(x,j)) & 1\le j\le r(x),\\
\mathrm{Bit}(t(x)) & j=r(x)+1,\\
\lambda & j>r(x)+1.
\end{cases}
\]

\emph{Depth bound.} Let $C$ be a sufficiently large fixed constant and set $q'(z):=|z|^C$.
(Existence of such a polynomial bound follows from: the original recursion has depth $\le q(x)$,
and each halving gadget has depth $O(\log m)$, where all numbers $m$ we manipulate have
polynomial bitlength in the current instance size.)

Finally, define $f'$ by the unweighted recursion
\[
f'(z)=t'(z)+\sum_{j=1}^{r'(z)} f'\big(h'(z,j)\big).
\]

\paragraph{Validity of the uSR definition and $\#PE$ property.}
All maps $h',t',r',q'$ are polynomial-time computable (they apply $t,r,h,g$ a constant number of
times and perform basic arithmetic on $m$).
Moreover, the definition is unweighted: if $r'(z)>0$ then $t'(z)=0$, and leaves are exactly the
$\mathrm{Bit}(b)$ nodes (and the $m=0$ case of $\mathrm{Mult}$), which contribute $0$ or $1$.

The fan-out is polynomially bounded: $\mathrm{Main}(x)$ has $r(x)+1\le |x|^{O(1)}$ children, and
$\mathrm{Mult}$ nodes have fan-out at most $2$.

The recursion has polynomial depth: the original recursion reaches a base case within $q(x)$ levels,
and each $\mathrm{Mult}(m,\cdot)$ gadget has depth $O(\log m)$ by repeated halving. Since all numbers
$m=g(y,i)$ computed along the recursion have polynomial bitlength in the current instance size,
a global polynomial depth bound $q'(z)=|z|^C$ (for a sufficiently large constant $C$) suffices.

Finally, $f'\in \#PE$: deciding $f'(z)>0$ is in polynomial time by case analysis on the tag:
$f'(\mathrm{Bit}(b))>0 \Leftrightarrow b=1$,
$f'(\mathrm{Mult}(m,x))>0 \Leftrightarrow (m>0)\wedge (f(x)>0)$,
and $f'(\mathrm{Main}(x))>0 \Leftrightarrow f(x)>0$.
Since $f\in\#PE$, this feasibility predicate is in $P$.

\paragraph{Correctness.}
We prove $f'(\phi(x))=f(x)$ via two simple claims.

\medskip
\begin{claim} For all $m\in\mathbb{N}$ and all $x\in\Sigma^*$,
\[
f'(\mathrm{Mult}(m,x))=m\cdot f'(\mathrm{Main}(x)).
\]
\end{claim}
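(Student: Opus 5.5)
We prove the claim by strong induction on $m$, holding $x$ fixed. Since $x$ stays the same throughout the $\mathrm{Mult}$ gadget, we may treat $f'(\mathrm{Main}(x))$ as a fixed number $F_x$. Unfolding the unweighted recursion $f'(z)=t'(z)+\sum_{j=1}^{r'(z)}f'(h'(z,j))$ at a node $\mathrm{Mult}(m,x)$ uses only the definitions in case (2). In particular it never needs the value of $F_x$, so no circularity arises: the induction concerns only the finite halving tree that hangs above the $\mathrm{Main}(x)$ nodes. That $f'(\mathrm{Main}(x))$ itself is well defined follows from the polynomial-depth bound $q'$ already established, which makes the whole recursion tree below any tagged instance finite.

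\emph{Base cases.} For $m=0$ we have $r'(\mathrm{Mult}(0,x))=0$ and $t'=0$, so $f'(\mathrm{Mult}(0,x))=0=0\cdot F_x$. For $m=1$ we have $r'=1$, $t'=0$ and $h'(\mathrm{Mult}(1,x),1)=\mathrm{Main}(x)$, so $f'(\mathrm{Mult}(1,x))=F_x$.

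\emph{Inductive step.} For $m\ge 2$ we have $t'=0$ and $r'=2$, with children $\mathrm{Mult}(\lceil m/2\rceil,x)$ and $\mathrm{Mult}(\lfloor m/2\rfloor,x)$. Both indices are strictly smaller than $m$ when $m\ge2$. The induction hypothesis therefore gives
\[
f'(\mathrm{Mult}(m,x))=\lceil m/2\rceil F_x+\lfloor m/2\rfloor F_x=mF_x .
\]

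The only step needing care is justifying this induction: we must check that the recursion defining $f'$ is well founded, so that $f'(\mathrm{Mult}(m,x))$ is determined by its children. This follows from the halving strictly decreasing $m$, together with the finiteness of the original self-reduction tree, which is guaranteed by the polynomial-depth condition of \Cref{def:SR}.

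After the claim, the companion claim $f'(\mathrm{Main}(x))=f(x)$ follows by induction along the original recursion, using the polynomial-depth condition of \Cref{def:SR}. The $j$-th child of $\mathrm{Main}(x)$ contributes $g(x,j)f'(\mathrm{Main}(h(x,j)))=g(x,j)f(h(x,j))$, and the $\mathrm{Bit}$ child contributes $t(x)$. Summing these reproduces the weighted recurrence for $f$.
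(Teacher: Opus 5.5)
Your proof is correct and follows the paper's argument: induction on $m$, with base cases $m=0$ (a leaf of value $0$) and $m=1$ (whose single child is $\mathrm{Main}(x)$), and the halving step closed by $\lceil m/2\rceil+\lfloor m/2\rfloor=m$. Your explicit remark that the recursion defining $f'$ is well founded is a sound addition that the paper leaves implicit.
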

\begin{proof}
By induction on $m$.
If $m=0$, then $\mathrm{Mult}(0,x)$ is a leaf with value $0$, so both sides are $0$.
If $m=1$, then $\mathrm{Mult}(1,x)$ has the single child $\mathrm{Main}(x)$, hence
$f'(\mathrm{Mult}(1,x))=f'(\mathrm{Main}(x))$.
For $m\ge 2$ we have
\begin{align*}
f'(\mathrm{Mult}(m,x))
=&f'(\mathrm{Mult}(\lceil m/2\rceil,x))+f'(\mathrm{Mult}(\lfloor m/2\rfloor,x))\\
=&(\lceil m/2\rceil+\lfloor m/2\rfloor)\,f'(\mathrm{Main}(x))\\
=&m\,f'(\mathrm{Main}(x)),
\end{align*}
using the inductive hypothesis and $\lceil m/2\rceil+\lfloor m/2\rfloor=m$.

\end{proof}

\medskip
\begin{claim}
For all $x\in\Sigma^*$, $f'(\mathrm{Main}(x))=f(x)$.
\end{claim}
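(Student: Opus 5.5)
We will prove the claim by well-founded induction along the original self-reduction of $f$. The measure is the remaining depth of the original recursion below $x$, which is at most $q(x)$ by the polynomial-depth condition of \Cref{def:SR}. Induction on this quantity is legitimate because it strictly decreases when passing from $x$ to any valid child $h(x,j)$ with $1\le j\le r(x)$. The invalid instance $\lambda$ is handled separately. Since $r(\lambda)=t(\lambda)=0$, the original definition gives $f(\lambda)=0$. Under the new definition, $\mathrm{Main}(\lambda)$ has exactly one child, $\mathrm{Bit}(t(\lambda))=\mathrm{Bit}(0)$, so $f'(\mathrm{Main}(\lambda))=0$ as well.

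For the inductive step, fix $x$ and unfold one level of the unweighted recursion at $\mathrm{Main}(x)$. Since $t'(\mathrm{Main}(x))=0$ and $r'(\mathrm{Main}(x))=r(x)+1$, we get
\[
f'(\mathrm{Main}(x))=\sum_{j=1}^{r(x)} f'\bigl(\mathrm{Mult}(g(x,j),h(x,j))\bigr)+f'\bigl(\mathrm{Bit}(t(x))\bigr).
\]
The last term equals $t(x)$, because $\mathrm{Bit}(b)$ is a leaf with value $b$. The previous claim rewrites each summand as $g(x,j)\cdot f'(\mathrm{Main}(h(x,j)))$. The induction hypothesis then applies to $h(x,j)$, which lies strictly lower in the original recursion, and replaces $f'(\mathrm{Main}(h(x,j)))$ by $f(h(x,j))$. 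The result is $t(x)+\sum_j g(x,j)f(h(x,j))$, which is exactly $f(x)$ by the recursive decomposition in \Cref{def:SR}(iii).

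The base case is contained in the same computation. When $r(x)=0$ the sum is empty, and both sides reduce to $t(x)$.

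The main subtlety is making sure this induction is well-founded and not circular. The earlier claim about $\mathrm{Mult}$ expresses $f'(\mathrm{Mult}(m,y))$ in terms of $f'(\mathrm{Main}(y))$. So $f'$ must be a well-defined finite quantity before we argue, and the induction must run on the original recursion rather than on the new tree. Both points follow from the polynomial depth bound $q'$ established above: it makes the new recursion tree finite, so $f'$ is well defined. The claim about $\mathrm{Mult}$ is a purely structural identity, valid for every $y$ and independent of the present induction. With these two facts in place, each step of the argument is a one-level unfolding, and $f'(\phi(x))=f(x)$ follows immediately.
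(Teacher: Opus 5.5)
Your proof is correct and follows the same route as the paper: you unfold $f'$ one level at $\mathrm{Main}(x)$, evaluate the $\mathrm{Bit}(t(x))$ leaf as $t(x)$, and rewrite each $\mathrm{Mult}$ term with the previous claim; you then close by induction along the original terminating self-reduction, using its recursive decomposition. Your remaining-depth measure, your separate treatment of $\lambda$, and your remark that $f'$ is well defined make explicit what the paper compresses into ``induction over the well-founded recursion tree generated by $h$''.
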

\begin{proof}
Expanding the definition of $f'$ at $\mathrm{Main}(x)$ gives
\[
f'(\mathrm{Main}(x))
=f'(\mathrm{Bit}(t(x)))+\sum_{i=1}^{r(x)} f'(\mathrm{Mult}(g(x,i),h(x,i))).
\]
The first term is $f'(\mathrm{Bit}(t(x)))=t(x)$, and by Claim~1 the sum equals
$\sum_{i=1}^{r(x)} g(x,i)\, f'(\mathrm{Main}(h(x,i)))$.
By induction over the well-founded recursion tree generated by $h$ (which is finite by the
termination property of the original self-reduction), we may assume
$f'(\mathrm{Main}(h(x,i)))=f(h(x,i))$ for all children. Substituting yields
\[
f'(\mathrm{Main}(x))=t(x)+\sum_{i=1}^{r(x)} g(x,i)\,f(h(x,i))=f(x),
\]
which is exactly the original self-reducible equation.
\end{proof}

Therefore, for every $x$, $f'(\phi(x))=f(x)$, so $f\le_m^P f'$ (parsimoniously).
Since $f'\in \#PE_{uSR}$, we obtain $\#PE_{SR}\subseteq uTotP$, and hence $TotP\subseteq uTotP$.
Combined with $uTotP\subseteq TotP$, this proves $uTotP=TotP$.
\end{proof}

\section{Improved Algorithm for \#\textsc{Perfect-Matching}}\label{sec:ImprovedPerfect}

This appendix refines the bounded-uSR form for subcubic perfect matchings from
\Cref{sec:pm-subcubic}.  We keep the same reduction operator
$\mathrm{Reduce}_{\mathrm{PM}}$ from \Cref{def:pm-reduce} (parity filter and forced
degree-$1$ rule), and we keep working on the reduced core as in
\Cref{def:coregraphs,def:pm-2path}.  The only change is the branching step: instead
of a $2$-way include/exclude branch on a single edge, we branch directly on
\emph{which neighbour a pivot vertex is matched to}.  This yields a $3$-way
branching rule when a degree-$3$ vertex exists, and it \emph{also unfolds the
degree-$2$ situation explicitly} (when no degree-$3$ vertex exists, the reduced
instance is a disjoint union of even cycles, and we branch on a degree-$2$
vertex on a cycle).

\subsection{Branching on the Matched Neighbour}

For a graph $H$, write $V_2(H)$ and $V_3(H)$ for its degree-$2$ and degree-$3$
vertices, and let $n_2(H)\coloneqq |V_2(H)|$ and $n_3(H)\coloneqq |V_3(H)|$.

\begin{definition}[Improved uSR for subcubic perfect matchings]\label{def:pm-usr-branch3}
Let $S$ be the set of all subcubic graphs together with the failure symbol
$\bot$, and let $f(G)\coloneqq \#\mathrm{PM}(G)$ as in \Cref{sec:pm-subcubic}.
Define a uSR form
\[
\mathcal{F}^{\triangle}_{\mathrm{PM}}=(S,f,r,h,t,q)
\]
as follows.  For input $G\in S$, first compute $H\coloneqq \mathrm{Reduce}_{\mathrm{PM}}(G)$.
\begin{itemize}
  \item If $H=\bot$, then $r(G)=0$ and $t(G)=0$.
  \item If $H=\varnothing$ (empty graph), then $r(G)=0$ and $t(G)=1$.
  \item Otherwise, $H$ is reduced and nonempty.  If $V_3(H)\neq \emptyset$, pick any
        $v\in V_3(H)$ (e.g., the smallest under a fixed ordering) with neighbours
        $N_H(v)=\{u_1,u_2,u_3\}$.  Set $r(G)=3$ and define children
        \[
          h(G)\;=\;\bigl(G_1,G_2,G_3\bigr),\qquad
          G_i \coloneqq \mathrm{Reduce}_{\mathrm{PM}}\bigl(H-\{v,u_i\}\bigr)\quad(i\in\{1,2,3\}).
        \]
  \item Otherwise $V_3(H)=\emptyset$, so all vertices of $H$ have degree $2$.  Pick any
        vertex $v\in V_2(H)$ with neighbours $N_H(v)=\{u_1,u_2\}$.  Set $r(G)=2$ and define
        \[
          h(G)\;=\;\bigl(G_1,G_2\bigr),\qquad
          G_i \coloneqq \mathrm{Reduce}_{\mathrm{PM}}\bigl(H-\{v,u_i\}\bigr)\quad(i\in\{1,2\}).
        \]
  \item In all non-leaf cases, set $t(G)=0$.
\end{itemize}
Finally, let $q(G)\coloneqq |V(G)|$ (or any polynomially bounded depth parameter as in
\Cref{sec:pm-subcubic}).
\end{definition}

\begin{lemma}[Correctness of the branching rule]\label{lem:pm-branch3-correct}
For every subcubic graph $G$, the recursion in \Cref{def:pm-usr-branch3} satisfies
\[
f(G)=t(G)+\sum_{i=1}^{r(G)} f\bigl(h(G)_i\bigr).
\]
\end{lemma}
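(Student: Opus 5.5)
The proof has two steps. First, $\mathrm{Reduce}_{\mathrm{PM}}$ preserves the perfect-matching count, with $\bot$ standing for count zero and $\varnothing$ for count one. Second, the neighbour-partition identity holds on a reduced core. The recursion in \Cref{def:pm-usr-branch3} is then just these two facts chained together. Throughout I read $f(\bot)=0$ and $f(\varnothing)=1$, consistent with the leaf values $t$ assigned in the definition.

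\emph{Step 1 (reduction invariance).} I would show $\#\mathrm{PM}(G)=\#\mathrm{PM}(\mathrm{Reduce}_{\mathrm{PM}}(G))$ by induction on the number of rule applications.
\begin{itemize}
\item Rule (R1): a component of odd order has no perfect matching, so returning $\bot$ is correct.
\item Rule (R2): an isolated vertex cannot be covered, so returning $\bot$ is correct.
\item Rule (R3): if $\deg(u)=1$ with unique neighbour $v$, every perfect matching contains $uv$. The map $M\mapsto M\setminus\{uv\}$ is then a bijection between perfect matchings of $G$ and those of $G-\{u,v\}$.
\end{itemize}
If all vertices are deleted, the empty matching is the unique perfect matching of the empty graph, matching $t=1$. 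This settles the two leaf cases $H=\bot$ and $H=\varnothing$, where $r(G)=0$ and the identity reads $f(G)=t(G)$.

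\emph{Step 2 (branching).} In the non-leaf cases $t(G)=0$, so it suffices to prove
\[ \#\mathrm{PM}(H)=\sum_{i} \#\mathrm{PM}\bigl(H-\{v,u_i\}\bigr). \]
Here $v$ is the chosen pivot, of degree $3$ or of degree $2$ when $V_3(H)=\emptyset$, and $u_i$ ranges over its neighbours. Every perfect matching $M$ of $H$ covers $v$ by exactly one edge $vu_i$. Since the $u_i$ are distinct (the graph is simple), this partitions the perfect matchings of $H$ into disjoint classes indexed by $i$. Class $i$ is in bijection with the perfect matchings of $H-\{v,u_i\}$ via $M\mapsto M\setminus\{vu_i\}$, with inverse $M'\mapsto M'\cup\{vu_i\}$. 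Applying Step 1 to each $H-\{v,u_i\}$, which is again subcubic, gives $\#\mathrm{PM}(H-\{v,u_i\})=f(G_i)$. Combining with $f(G)=\#\mathrm{PM}(H)$ from Step 1 yields the claimed identity.

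\emph{Main obstacle.} The only delicate point is bookkeeping. The children $G_i$ are already reduced, possibly equal to $\bot$ or $\varnothing$, and $h$ is applied to them as inputs in $S$. I must therefore check that $\mathrm{Reduce}_{\mathrm{PM}}$ is idempotent, and that the conventions $f(\bot)=0$ and $f(\varnothing)=1$ agree with how $r$ and $t$ treat these symbols when they occur as inputs. Both follow directly from the definition: a reduced graph triggers no rule, and $\bot$ and $\varnothing$ are mapped to leaves with $t=0$ and $t=1$ respectively. Once this is verified, the lemma follows.
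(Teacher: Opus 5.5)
Your proposal is correct and follows the paper's route. Like the paper, you dispose of the leaf cases $H=\bot$ and $H=\varnothing$ directly, then partition the perfect matchings of the reduced core $H$ according to which neighbour $u_i$ is matched to the pivot, and transfer each class to the child $\mathrm{Reduce}_{\mathrm{PM}}(H-\{v,u_i\})$. You are slightly more careful than the paper, which only says the reduction preserves perfect-matchability; your forced-edge bijection shows it preserves the number of perfect matchings, which is what the identity actually requires.
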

\begin{proof}
Let $H=\mathrm{Reduce}_{\mathrm{PM}}(G)$.  If \(H=\bot\), then \(t(G)=0\)
and \(r(G)=0\); if \(H=\varnothing\), then \(t(G)=1\) and \(r(G)=0\).
These are exactly the two leaf cases.

Otherwise, $H$ is reduced and nonempty.  If $v$ is the chosen pivot vertex, then every
perfect matching $M$ of $H$ matches $v$ to \emph{exactly one} neighbour $u_i\in N_H(v)$.
This partitions the perfect matchings of $H$ into $r(G)$ disjoint classes indexed by $i$,
and for each $i$ the remaining edges of $M$ form a perfect matching of
$\mathrm{Reduce}_{\mathrm{PM}}(H-\{v,u_i\})$ (the reduction only applies forced degree-$1$
choices and parity checks, and does not change the perfect-matchability of the residual
instance).  Summing over $i$ gives the claim.
\end{proof}

\subsection{A Recursion-Compatible Bound}

We now give a recursion-compatible bound for \Cref{def:pm-usr-branch3}.  Unlike the main
section, we must explicitly account for the degree-$2$ case (cycle components), hence the
bound depends on both $n_2$ and $n_3$.

\begin{definition}[Improved bound]\label{def:pm-branch3-bound}
For a reduced graph $H\not=\bot$, define the real-valued bound
\[
\widetilde b(H)\;\coloneqq\; 2^{\,n_2(H)/4}\cdot 6^{\,n_3(H)/6},
\]
and the integer-valued bound $b(H)\coloneqq \lfloor \widetilde b(H)\rfloor$.
Set \(\widetilde b(\bot)\coloneqq b(\bot)\coloneqq 0\).
\end{definition}

We prove recursion compatibility for $\widetilde b$ and then inherit it for $b$ by flooring.
The cycle case is handled first.

\begin{lemma}[Degree-$2$ case: cycle branching]\label{lem:pm-branch3-cycle}
Let $H$ be reduced, nonempty, and satisfy $V_3(H)=\emptyset$ (so all vertices have degree $2$).
In each of the two branches of \Cref{def:pm-usr-branch3}, the reduction deletes at least
$4$ degree-$2$ vertices.  Consequently, if $H_1,H_2$ are the two children then
\[
\widetilde b(H_1)+\widetilde b(H_2)\;\le\;\widetilde b(H).
\]
\end{lemma}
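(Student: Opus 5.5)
The plan is to exploit the fact that a reduced, nonempty $H$ with $V_3(H)=\emptyset$ is a disjoint union of cycles. Each component has even size, since $\mathrm{Reduce}_{\mathrm{PM}}$ applied the parity filter (R1). It is also simple, so every cycle has length at least $4$ (a simple graph has no cycles of length $\le 2$, and odd lengths are excluded). Let $C$ be the cycle containing the pivot $v$, with $|C|=2k\ge 4$, and let $u_1,u_2$ be its two cycle neighbours.

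First I would show that each branch deletes all of $C$. Take branch $i$, where $v,u_i$ are removed. What remains of $C$ is a path on $2k-2\ge 2$ vertices with two degree-$1$ endpoints. Rule (R3) then repeatedly removes a degree-$1$ vertex together with its neighbour, consuming the path two vertices at a time. Since the path has an even number of vertices, this ends with the path fully deleted and no isolated vertex left behind, so (R2) does not fire on $C$. The other components are untouched. They are still even cycles, and the reduction leaves them as they are: they contain no degree-$0$ or degree-$1$ vertices and have even size. Hence the child is $H_i=H-V(C)$, or $\varnothing$ if $C$ was the only component. In both branches exactly $|C|\ge 4$ degree-$2$ vertices are deleted, and the remaining vertices keep their degrees, so $n_3(H_i)=0$ and $n_2(H_i)=n_2(H)-|C|\le n_2(H)-4$.

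The inequality then follows by direct computation. We have $\widetilde b(H_i)=2^{n_2(H_i)/4}\le 2^{(n_2(H)-4)/4}=\tfrac12\,\widetilde b(H)$. If $H_i=\varnothing$, this reads $\widetilde b(\varnothing)=2^0=1\le 2^{(|C|-4)/4}$, which is the same bound. Summing over $i\in\{1,2\}$ gives $\widetilde b(H_1)+\widetilde b(H_2)\le \widetilde b(H)$. This matches the tight case $|C|=4$, where both children have $n_2$ reduced by exactly $4$.

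The main obstacle I expect is the bookkeeping in the propagation step. I need to confirm that (R3) really exhausts the whole path, and that no odd leftover or isolated vertex makes a child $\bot$. The $\bot$ case would in any case only help, since $\widetilde b(\bot)=0$. I also need to confirm that a $\{2,3\}$-core is simple, so that no $2$-cycles arise. Both points follow from the evenness guaranteed by (R1) and the simplicity of subcubic inputs, so I would state them briefly rather than grind through them.
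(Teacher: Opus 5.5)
Your proposal is correct and follows the paper's argument. The components are even cycles of length at least $4$. Deleting $\{v,u_i\}$ leaves an even path that (R3) consumes entirely, so $n_2$ drops by at least $4$ while $n_3$ stays $0$, which gives $\widetilde b(H_i)\le \tfrac12\,\widetilde b(H)$ in each branch. Your extra checks only make explicit what the paper leaves implicit: the even path length rules out isolated vertices and $\bot$, simplicity excludes $2$-cycles, and the $\varnothing$ child is covered.
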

\begin{proof}
Since $H$ is reduced and all degrees are $2$, every connected component of $H$ is a cycle.
Because $H$ has even order (otherwise $\mathrm{Reduce}_{\mathrm{PM}}$ would return $\bot$),
each component cycle has even length, hence length at least $4$.

Let $v\in V_2(H)$ be the chosen vertex with neighbours $u_1,u_2$.  In branch $i$, we delete
$\{v,u_i\}$.  This breaks the cycle containing $v$ into a path with degree-$1$ endpoints, so
$\mathrm{Reduce}_{\mathrm{PM}}$ forces the unique alternating completion along that path until the
entire cycle component is deleted.  Thus, at least the whole cycle of length $\ge 4$ disappears,
so $n_2$ drops by at least $4$ in each branch.  Since $n_3$ is $0$ throughout, we get
$\widetilde b(H_i)\le 2^{(n_2(H)-4)/4} = \frac12\,\widetilde b(H)$ for $i=1,2$, and the sum bound
follows.
\end{proof}

We now analyse the degree-$3$ branching.  The analysis uses maximal $2$-paths and the
alternation lemma from the main section.

\begin{lemma}[Degree-$3$ case: local progress]\label{lem:pm-branch3-degree3-local}
Let $H$ be reduced and contain a degree-$3$ vertex $v$.  Let $P_1,P_2,P_3$ be the three maximal
$2$-paths incident to $v$ (as in \Cref{def:pm-2path}), and assume that their
far endpoints are distinct.  Let
\[
E\;\coloneqq\; \bigl|\{j\in\{1,2,3\} : P_j \text{ has even length}\}\bigr|.
\]
Consider the three branches that match $v$ to $u_i\in N_H(v)$.  In branch $i$, let $k_i$ be the
number of endpoints (among the other ends of $P_1,P_2,P_3$) that are \emph{removed} by the forced
alternation on the corresponding $2$-path(s).  In the worst case (i.e., maximising $\widetilde b$
of the child), we may assume:
\begin{enumerate}
  \item each removed endpoint has two \emph{distinct} degree-$3$ neighbours outside its incident
        $2$-path(s), and
  \item each even $2$-path contributes exactly one internal degree-$2$ vertex.
\end{enumerate}
Under these worst-case assumptions, in branch $i$ we have
\begin{align*}
n_3(H)-n_3(H_i) &\;\ge\; 4+2k_i,\\
n_2(H)-n_2(H_i) &\;\ge\; E-(3+k_i).
\end{align*}
\end{lemma}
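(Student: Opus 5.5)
We will account for the change of $(n_2,n_3)$ in branch $i$ by following the deletions along each of the three maximal $2$-paths $P_1,P_2,P_3$ leaving $v$. We use the alternation lemma (\Cref{lem:pm-alternation}) and the degree-$1$ forcing rule of $\mathrm{Reduce}_{\mathrm{PM}}$.

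\textbf{Step 1: propagate the choice of $u_i$ along each path.} Matching $v$ to $u_i$ means the first edge of $P_i$ is included, while the first edges of $P_j$, $j\neq i$, are excluded. Along each path the rest of the matching is then forced, and the forcing rule of $\mathrm{Reduce}_{\mathrm{PM}}$ carries out exactly this alternation.
\begin{itemize}
\item If the alternation matches the far endpoint $w_j$ of $P_j$ inside $P_j$, then $w_j$ is deleted. These are the $k_i$ removed endpoints.
\item Otherwise $w_j$ only loses its path edge and drops from degree $3$ to degree $2$.
\end{itemize}
By the parity statement of \Cref{lem:pm-alternation}, whether $w_j$ is removed depends only on the parity of $P_j$ and on whether $j=i$.

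\textbf{Step 2: count the drop in $n_3$.} The following vertices leave $V_3$:
\begin{itemize}
\item $v$ itself;
\item every far endpoint $w_j$, which is either deleted or downgraded to degree $2$. Since the far endpoints are assumed distinct, this gives $3$ vertices.
\end{itemize}
That is $4$ so far. Each removed endpoint has two further neighbours outside its path. Under worst-case assumption~1 these are distinct degree-$3$ vertices, and each is downgraded to degree $2$. This adds $2k_i$, so $n_3(H)-n_3(H_i)\ge 4+2k_i$.

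We still have to argue that the worst case really is the one in assumption~1. If two of these outside neighbours coincide, or have degree $2$, further forced deletions occur. As in the include-branch argument of \Cref{lem:pm-branch-vectors}, such extra forcing only lowers $\widetilde b$ of the child, because $2^{1/4}<6^{1/6}$. Hence these cases are dominated.

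\textbf{Step 3: count the change in $n_2$.} We collect the gains and losses in $n_2$.
\begin{itemize}
\item Every internal vertex of each $P_j$ is deleted. Under assumption~2 each even path contributes exactly one internal vertex. This removes at least $E$ degree-$2$ vertices (odd paths may contribute zero).
\item The downgraded far endpoints become new degree-$2$ vertices, at most $3-k_i$ of them.
\item The $2k_i$ downgraded outside neighbours of removed endpoints also become new degree-$2$ vertices.
\end{itemize}
The net change is $E-(3-k_i)-2k_i=E-(3+k_i)$, which is the claimed bound.

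\textbf{Main obstacle.} The delicate point is making the worst-case reductions rigorous: coincidences among the neighbours, paths of length $1$, or cascading forcing must not make the child's $\widetilde b$ larger than the idealized count. I would handle this with a monotonicity argument. Any additional deletion or downgrade changes $(n_2,n_3)$ by a vector under which $\widetilde b$ does not increase. This holds because converting a degree-$3$ vertex to degree $2$ multiplies $\widetilde b$ by $2^{1/4}/6^{1/6}<1$, and deletions only shrink $\widetilde b$.
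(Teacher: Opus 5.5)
Your proposal is correct and follows the paper's proof step for step. You get a base drop of $4$ in $n_3$ from $v$ and the three distinct far endpoints, an extra $2k_i$ from the downgraded outside neighbours of removed endpoints, and the $n_2$ bound from $E$ deleted internal vertices against at most $(3-k_i)+2k_i$ newly created degree-$2$ vertices. Your argument that departures from the worst-case assumptions only shrink $\widetilde b$ (because $2^{1/4}<6^{1/6}$ and deletions only shrink $\widetilde b$) is the same monotonicity remark the paper makes parenthetically, stated a little more explicitly.
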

\begin{proof}
Fix a branch $i$.  The vertex $v$ is deleted, so $n_3$ drops by $1$ immediately.  For each maximal
$2$-path $P_j$ incident to $v$, the alternation lemma from \Cref{sec:pm-subcubic} implies that,
once we fix whether the first edge at $v$ is used (used for $j=i$, not used for $j\neq i$), the
matching status alternates along $P_j$.  Hence the far endpoint of $P_j$ is either removed (matched
inside $P_j$) or loses the last edge of $P_j$ and therefore drops from degree $3$ to degree $2$.
In either case that endpoint leaves $V_3$, contributing at least $3$ more to the decrease of $n_3$.
Together with removing $v$, this yields a base decrease of at least $4$ in $n_3$.

Now consider those far endpoints that are actually removed.  Each such removed degree-$3$ endpoint
has two additional incident edges not on $P_j$; deleting the endpoint deletes those edges.  In the
worst case for progress, each deleted edge hits a \emph{distinct} degree-$3$ vertex (if an edge hit
a degree-$2$ vertex or two deleted edges hit the same vertex, $\mathrm{Reduce}_{\mathrm{PM}}$ would
create degree-$1$ vertices and force additional deletions, which can only decrease $\widetilde b$).
Thus each removed endpoint causes at least two further degree-$3$ vertices to drop to degree $2$,
contributing an additional $2k_i$ to the decrease of $n_3$.  This proves
$n_3(H)-n_3(H_i)\ge 4+2k_i$.

For $n_2$, every even maximal $2$-path has at least one internal degree-$2$ vertex.  Under the
worst-case assumption (2), we only count one such internal vertex per even path, hence at least $E$
degree-$2$ vertices are deleted.  On the other hand, degree-$2$ vertices can be created only by
degree-$3$ vertices dropping to degree $2$: at most $3-k_i$ far endpoints are downgraded rather than
removed, and the $k_i$ removed endpoints downgrade at most $2k_i$ external neighbours as above.  Thus
at most $(3-k_i)+2k_i = 3+k_i$ new degree-$2$ vertices appear.  Therefore
$n_2(H)-n_2(H_i)\ge E-(3+k_i)$.
\end{proof}

\begin{lemma}[Recursion compatibility for $\widetilde b$]\label{lem:pm-branch3-reccomp-real}
Let $H$ be reduced and nonempty.  If $H_1,\dots,H_{r}$ are its children under
\Cref{def:pm-usr-branch3}, then
\[
\sum_{i=1}^{r}\widetilde b(H_i)\;\le\;\widetilde b(H).
\]
\end{lemma}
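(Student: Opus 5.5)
The plan is to split on which branching rule of \Cref{def:pm-usr-branch3} fires at $H$. If $V_3(H)=\emptyset$, then $r=2$ and the inequality is exactly \Cref{lem:pm-branch3-cycle}. So all the work is in the degree-$3$ case, where $r=3$. There I divide through by $\widetilde b(H)$ and bound each child's ratio by
\[
\frac{\widetilde b(H_i)}{\widetilde b(H)} = 2^{-(n_2(H)-n_2(H_i))/4}\,6^{-(n_3(H)-n_3(H_i))/6}\le 6^{-(4+2k_i)/6}\,2^{(3+k_i-E)/4},
\]
using \Cref{lem:pm-branch3-degree3-local}. If some child is $\bot$, its term is $0$ and can be dropped.

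The key extra step is to determine $k_i$ from the parities of $P_1,P_2,P_3$ using \Cref{lem:pm-alternation}. In branch $i$ the first edge of $P_i$ is used and the first edges of $P_j$, $j\ne i$, are not. For an odd path, using the first edge matches the far endpoint inside the path, so that endpoint is removed; not using it sends the far endpoint outside, so it is only downgraded. For an even path the situation is reversed. Hence
\[
k_i=[P_i\text{ odd}]+\bigl|\{j\neq i: P_j\text{ even}\}\bigr|.
\]
I then check the four parity patterns $E\in\{0,1,2,3\}$, using the symmetry of the ratio bound:
\begin{itemize}
\item $E=0$: $k_i=1$ for all $i$, so each term is $6^{-1}\cdot 2=1/3$ and the sum is exactly $1$. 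This is the tight case.
\item $E=1$: the ratios are about $0.43+2\cdot 0.13$.
\item $E=2$: they are about $2\cdot 0.24+0.10$.
\item $E=3$: they are about $3\cdot 0.13$.
\end{itemize}
All four sums are at most $1$.

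The main obstacle is justifying the hypotheses of \Cref{lem:pm-branch3-degree3-local} outside its stated setting, namely when the three maximal $2$-paths at $v$ do not have distinct far endpoints. This includes a path returning to $v$ (a cycle through $v$), two paths sharing a far endpoint, or a removed endpoint whose outside neighbours coincide or have degree $2$. For each such degeneracy I would argue monotonicity, as the lemma's proof already sketches. A coincidence either deletes an extra vertex or drives some vertex to degree $\le 1$, and $\mathrm{Reduce}_{\mathrm{PM}}$ then forces further deletions. Either way the child is a subgraph with more deletions, and $\widetilde b$ is monotone in both $n_2$ and $n_3$, so the ratio only decreases.

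The delicate subcase is a path that closes a cycle at $v$. Here two of the three neighbours lie on the same path, so I would redo the alternation argument directly. In each branch the whole cycle either collapses, or the third neighbour is forced, which gives a drop at least as large as in the $E=0$ pattern.

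Once the real-valued inequality is established, I will not need to revisit the integer rounding here. It follows afterwards from $\lfloor a\rfloor+\lfloor b\rfloor+\lfloor c\rfloor\le\lfloor a+b+c\rfloor$.
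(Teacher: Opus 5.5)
Your treatment of the generic configuration (three distinct far endpoints) is the paper's argument. You use the same rule $k_i=[P_i\text{ odd}]+|\{j\ne i:P_j\text{ even}\}|$, i.e.\ $k_i=E\mp1$, the same per-branch ratio from \Cref{lem:pm-branch3-degree3-local}, and the same check over $E\in\{0,1,2,3\}$. One slip: for $E=1$ the odd-branch ratio is $2\cdot 6^{-4/3}\approx 0.18$, not $0.13$, so the sum is about $0.795$, which is still below $1$.

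The gap is in the degenerate configurations. You claim that a coincidence only causes extra deletions, so each child's ratio stays below the generic per-branch bound. This is false, as two examples show.
\begin{itemize}
\item Let $v$ lie on a $4$-cycle $v,u_1,c,u_2$ whose other vertices have degree $2$, and let the third path be the single edge $vw_3$. The two branches matching $v$ along the cycle delete $v,u_1,c,u_2$ and only downgrade $w_3$. Each therefore has ratio $2^{-1/2}6^{-1/3}\approx 0.389>1/3$, a \emph{smaller} drop than in the $E=0$ pattern, contrary to your claim.
\item Take coincident far endpoints $P_1=(v,w)$, $P_2=(v,u_2,w)$, $P_3=(v,w_3)$, so $E=1$. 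Suppose the third neighbour of $w$ and the two other neighbours of $w_3$ are distinct degree-$3$ vertices. Then the branch matching $v$ to $w_3$ has ratio $\sqrt2/6\approx 0.236$. This exceeds the generic odd-branch bound $2\cdot 6^{-4/3}\approx 0.183$, because the generic count treats $w$ as two distinct far endpoints.
\end{itemize}
The same loss of a factor $6^{1/3}2^{-1/2}\approx 1.28$ also occurs with distinct endpoints whenever two removed far endpoints are adjacent, which is possible once $k_i\ge 2$. So a per-branch bound justified by ``monotonicity'' is not a valid invariant.

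In these examples the inequality survives only at the level of the \emph{sum}. Either some branch is infeasible and contributes $0$, or the $E\ge1$ sums have enough slack. In the cycle example, matching $v$ to $w_3$ leaves the odd path $u_1cu_2$. In the second example, matching $v$ to $w$ isolates $u_2$.

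What is missing is a direct computation of $\sum_i\widetilde b(H_i)/\widetilde b(H)$ for each degenerate pattern. That computation uses two facts: $\bot$-children contribute nothing, and simplicity of $H$ allows at most one single-edge path among odd paths sharing both endpoints, so every other such path has at least two internal vertices. This is what the paper's table for the $2+1$ and $3$ far-endpoint patterns does.

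Your observation that a maximal $2$-path may return to $v$ is a genuine case that the paper's proof does not treat. It closes by parity rather than monotonicity. If the cycle has $L$ vertices, the two cycle branches leave a path on $L-2$ vertices and the third branch leaves one on $L-1$. Hence exactly one of the two classes is $\bot$. The at most two surviving branches each have ratio at most $2^{-1/2}6^{-1/3}$, so the total is below $0.78$.
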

\begin{proof}
If $V_3(H)=\emptyset$, then $r=2$ and the claim is \Cref{lem:pm-branch3-cycle}.

Assume $V_3(H)\neq\emptyset$, so $r=3$ and we branch on a degree-$3$ vertex $v$ with incident maximal
$2$-paths $P_1,P_2,P_3$.  Let $E$ be the number of even $P_j$.

We first dispose of the case in which two of these paths have the same far
degree-$3$ endpoint.  Group the paths according to their far endpoint.  If two
or more paths in one group would match the common endpoint inside the paths,
the branch is infeasible at that endpoint and contributes no \(\widetilde b\)
mass.  If no path in a group of size two matches the common endpoint inside the
paths, then the endpoint loses two incident path edges and becomes degree \(1\),
so the degree-$1$ reduction deletes it and its remaining neighbour.  A group of
size three with no such matched path is infeasible.  Finally, if exactly one
path in the group matches the common endpoint, the endpoint is deleted once.
Because the original graph is simple, among parallel odd-length paths with the
same endpoints at most one can be a single edge; every additional odd path has
at least two internal degree-$2$ vertices, all of which are deleted by forced
alternation.  The worst coincident-endpoint cases are therefore the two rows
below; all other parity patterns only delete more degree-$2$ vertices or create
less bound mass:
\[
\begin{array}{c|c|c}
\text{far-endpoint pattern} &
\text{worst parity pattern} &
\text{upper bound on }\sum_i \widetilde b(H_i)/\widetilde b(H)\\
\hline
2+1 & \text{all three paths odd} &
2\cdot 6^{-2/3}+\sqrt{2}\,6^{-4/3}<0.736\\
3 & \text{all three paths odd} &
3\cdot \frac{1}{2}6^{-1/3}<0.826 .
\end{array}
\]
Thus coincident far endpoints are never the worst case for recursion
compatibility.  It remains to analyse the case in which the three far endpoints
are distinct.

For branch $i$, let $\Delta_2^{(i)}\coloneqq n_2(H)-n_2(H_i)$ and
$\Delta_3^{(i)}\coloneqq n_3(H)-n_3(H_i)$.  By \Cref{lem:pm-branch3-degree3-local} we have
\[
\widetilde b(H_i)
=2^{(n_2(H)-\Delta_2^{(i)})/4}\,6^{(n_3(H)-\Delta_3^{(i)})/6}
=\widetilde b(H)\cdot 2^{-\Delta_2^{(i)}/4}\,6^{-\Delta_3^{(i)}/6}.
\]
Write $k_i$ for the number of removed far endpoints among $P_1,P_2,P_3$ in branch $i$.  The alternation
lemma implies the following dependence on parity:
\begin{itemize}
  \item If $P_i$ is even, then its far endpoint is downgraded while each other even $P_j$ (with $j\neq i$)
        has its far endpoint removed.  Hence $k_i = E-1$.
  \item If $P_i$ is odd, then its far endpoint is removed, and every even $P_j$ (with $j\neq i$) also has
        its far endpoint removed.  Hence $k_i = E+1$.
\end{itemize}
Combining this with \Cref{lem:pm-branch3-degree3-local} yields, for an even choice,
\[
\frac{\widetilde b(H_i)}{\widetilde b(H)}
\;\le\;
2^{(3+(E-1)-E)/4}\,6^{-(4+2(E-1))/6}
\;=\;
\sqrt{2}\cdot 6^{-(E+1)/3},
\]
and for an odd choice,
\[
\frac{\widetilde b(H_i)}{\widetilde b(H)}
\;\le\;
2^{(3+(E+1)-E)/4}\,6^{-(4+2(E+1))/6}
\;=\;
2\cdot 6^{-(E+3)/3}.
\]
There are exactly $E$ even choices and $3-E$ odd choices, so
\[
\sum_{i=1}^3 \frac{\widetilde b(H_i)}{\widetilde b(H)}
\;\le\;
E\Bigl(\sqrt{2}\cdot 6^{-(E+1)/3}\Bigr) + (3-E)\Bigl(2\cdot 6^{-(E+3)/3}\Bigr).
\]
A direct check for $E\in\{0,1,2,3\}$ gives:
\[
\begin{array}{c|c}
E & \text{upper bound on }\sum_i \widetilde b(H_i)/\widetilde b(H)\\
\hline
0 & 3\cdot \tfrac13 \;=\; 1\\
1 & \sqrt{2}\cdot 6^{-2/3} + 4\cdot 6^{-4/3} \;\approx\; 0.795\\
2 & 2\sqrt{2}\cdot 6^{-1} + 2\cdot 6^{-5/3} \;\approx\; 0.572\\
3 & 3\sqrt{2}\cdot 6^{-4/3} \;\approx\; 0.389\;,
\end{array}
\]
hence the sum is always at most $1$.  This proves $\sum_i \widetilde b(H_i)\le \widetilde b(H)$.
\end{proof}

\begin{lemma}[Recursion compatibility for $b$]\label{lem:pm-branch3-reccomp-int}
The integer bound $b$ from \Cref{def:pm-branch3-bound} is recursion-compatible for
$\mathcal{F}^{\triangle}_{\mathrm{PM}}$.
\end{lemma}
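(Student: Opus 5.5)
The plan is to check the four conditions of \Cref{def:bounding} for $b=\lfloor\widetilde b\rfloor$ (with $b(\bot)=0$), using \Cref{lem:pm-branch3-reccomp-real} as the only substantive input. The leaf conditions come first. If $H=\varnothing$ then $n_2=n_3=0$, so $\widetilde b(H)=1$ and $b(H)=1=t(G)$. If $H=\bot$ then $b=0=t$. This gives both \eqref{eq:smallbound} and \eqref{eq:leafbound}.

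For recursive compatibility, let $H$ be reduced and nonempty with children $H_1,\dots,H_r$. Since each $b(H_i)$ is an integer, the sum $\sum_i b(H_i)$ is an integer. It satisfies
\[
\sum_i b(H_i)\le\sum_i \widetilde b(H_i)\le \widetilde b(H),
\]
where the first step holds because $\lfloor y\rfloor\le y$ (children equal to $\bot$ contribute $0$ on both sides) and the second is \Cref{lem:pm-branch3-reccomp-real}. An integer bounded above by a real $y$ is at most $\lfloor y\rfloor$, so $\sum_i b(H_i)\le\lfloor\widetilde b(H)\rfloor=b(H)$. This is the key observation: flooring the parent loses nothing, because the children's sum is already an integer.

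Next comes $f\le b$, by induction along the recursion tree (depth is bounded by $q$). At leaves, $f=t\le b$. At an internal node, \Cref{lem:pm-branch3-correct} gives $f(G)=\sum_i f(G_i)$. Applying the inductive hypothesis and then compatibility yields $f(G)\le\sum_i b(G_i)\le b(G)$. Polynomial-time computability holds because $\mathrm{Reduce}_{\mathrm{PM}}$ and the degree counts are polynomial, and $\lfloor 2^{n_2/4}6^{n_3/6}\rfloor$ can be computed exactly with integer arithmetic of polynomially many bits. For instance, one can find the largest integer $m$ with $m^{12}\le 2^{3n_2}6^{2n_3}$ by binary search.

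The only delicate point is notational. The bound is evaluated on the instance $G$ through $H=\mathrm{Reduce}_{\mathrm{PM}}(G)$, while the children are already reduced graphs. So I would note that $\mathrm{Reduce}_{\mathrm{PM}}$ is idempotent, which makes $b(G_i)$ equal to $\lfloor\widetilde b(H_i)\rfloor$. I do not expect any real obstacle, since all the combinatorial work is already in \Cref{lem:pm-branch3-reccomp-real}.
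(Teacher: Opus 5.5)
Your proposal is correct and follows the paper's argument. Both proofs rest entirely on \Cref{lem:pm-branch3-reccomp-real} together with the observation that $\sum_i \lfloor \widetilde b(H_i)\rfloor$ is an integer not exceeding $\widetilde b(H)$, and hence is at most $\lfloor \widetilde b(H)\rfloor$. You also spell out what the paper leaves implicit in ``leaf cases are immediate from the definition'': the leaf conditions, the induction giving $f\le b$, the idempotence of $\mathrm{Reduce}_{\mathrm{PM}}$, and the exact polynomial-time computation of the floor.
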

\begin{proof}
By \Cref{lem:pm-branch3-reccomp-real}, for every internal node we have
$\sum_i \widetilde b(H_i)\le \widetilde b(H)$.  Flooring preserves this inequality:
\[
\sum_i b(H_i)\;=\;\sum_i \lfloor \widetilde b(H_i)\rfloor
\;\le\; \Bigl\lfloor \sum_i \widetilde b(H_i)\Bigr\rfloor
\;\le\; \lfloor \widetilde b(H)\rfloor \;=\; b(H).
\]
(Here we use that $\sum_i \lfloor x_i\rfloor$ is an integer at most $\sum_i x_i$, hence at most
$\lfloor \sum_i x_i\rfloor$.)  Leaf cases are immediate from the definition.
\end{proof}

\subsection{Resulting Running Time}

Combining \Cref{lem:pm-branch3-reccomp-int} with our generic estimator (from the main text) yields an
improved approximate counting time bound, stated in the following theorem.

\CUBICMATCHING*

\subsection{Tightness}\label{sec:matchingTightness}

Since each leaf corresponds to at most a single solution, the branching algorithm and corresponding analysis from the previous sections serves as an upper bound on the number of perfect matchings in subcubic graphs.
We observe that this bound is tight, as the graph consisting of disconnected copies of $K_{3,3}$ will have $6^{\frac{n}{6}}$ perfect matchings, giving us a lower bound to match our upper bound.
What this implies is that we cannot get a faster algorithm through our interface alone.
This motivates the more sophisticated approach used in the examples of \Cref{sec:IS-basic,sec:2-SAT}, where we decompose the problem into easy and hard subproblems, only then to call our interface on the sum of hard problems.

\end{document}